\def\BibTeX{{\rm B\kern-.05em{\sc i\kern-.025em b}\kern-.08em
    T\kern-.1667em\lower.7ex\hbox{E}\kern-.125emX}}
\renewcommand{\emptyset}{\varnothing}
\renewcommand{\phi}{\varphi}
\newcommand{\dfn}{:=}
\newcommand{\bnf}{::=}
\newcommand{\oldtodo}[1]{\todo{#1}} %legacy
\newcommand{\red}[1]{{\color{red}#1}}
\newcommand{\blue}[1]{{\color{blue}#1}}
\newcommand{\orange}[1]{{\color{orange}#1}}
\definecolor{mygreen}{rgb}{0, 0.5, 0}
\newcommand{\mygreen}[1]{{\color{mygreen}#1}}
\newcommand{\cyan}[1]{{\color{cyan}#1}}
\newcommand{\comment}[1]{\blue{#1}}
\newcommand{\anupam}[1]{\todo[color=green]{A: #1}}
\newcommand{\gianluca}[1]{\todo[color=purple]{G: #1}}
\renewcommand{\todo}[1]{}
\renewcommand{\comment}[1]{}
\renewcommand{\anupam}[1]{}
\renewcommand{\gianluca}[1]{}
\newcommand{\todonew}[1]{{\red{TodoNew: #1}}}
\renewcommand{\todonew}[1]{}
\theoremstyle{plain}
\newtheorem{theorem}[thm]{Theorem}
\newtheorem{proposition}[thm]{Proposition}
\newtheorem{lemma}[thm]{Lemma}
\newtheorem{corollary}[thm]{Corollary}
\newtheorem{example}[thm]{Example}
\theoremstyle{definition} %IEEE does not like this for some reason
\newtheorem{definition}[theorem]{Definition}
\newtheorem{defn}[thm]{Definition}
\newtheorem{remark}[thm]{Remark}
\newtheorem{convention}[thm]{Convention}
\newtheorem{question}[thm]{Question}
\renewcommand{\succ}{\mathsf s}
\newcommand{\pred}{\mathsf{p}}
\newcommand{\FV}{\mathrm{FV}}
\newcommand{\fv}[1]{\FV(#1)}
\newcommand{\numeral}[1]{\underline{#1}}
\newcommand{\cod}[1]{\numeral{#1}} %legacy
\newcommand{\sem}[1]{\llbracket {#1} \rrbracket}
\newcommand{\Nat}{\mathbb{N}}
\newcommand{\Ord}{\mathrm{Ord}}
\newcommand{\Var}{\mathrm{Var}}
\newcommand{\var}{\mathrm{var}}
\newcommand{\Types}{\mathrm{Typ}}
\newcommand{\lambdaterms}{\Lambda}
\newcommand{\closure}[1]{\langle #1 \rangle}
\newcommand{\clomuLJneg}{\closure{\muLJneg}}
\newcommand{\clocmuLJneg}{\closure{\cmuLJneg}}
\newcommand{\typedclomuLJneg}{\clomuLJneg_{\N,\times,\arrow,\mu}}
\newcommand{\Sys}[1]{\mathcal E_{#1}}
\newcommand{\nat}{\mathit{N}}
\newcommand{\N}{\nat} % legacy
\newcommand{\unit}{1}
\newcommand{\arrow}{\to}
\newcommand{\imp}{\arrow} %legacy
\newcommand{\subform}{\subseteq}
\newcommand{\FL}{\mathrm{FL}}
\newcommand{\fl}[1]{\FL(#1)}
\newcommand{\flleq}{\preceq_\FL}
\newcommand{\fleq}{\approx_\FL}
\newcommand{\flnleq}{\prec_\FL}
\newcommand{\seqar}{\Rightarrow}
\newcommand{\mulunf}{\lr \mu'}
\newcommand{\nurunf}{\rr \nu'}
\newcommand{\iter}[1]{\mathsf{iter}_{#1}}
\newcommand{\inj}{\mathsf{in}}
\newcommand{\coiter}[1]{\mathsf{coiter}_{#1}}
\newcommand{\injX}[1]{\inj_{#1}}
\newcommand{\cnd}{\mathsf{cond}}
\newcommand{\Nzero}{\rr\N^0}
\newcommand{\Nsucc}{\rr\N^1}
\newcommand{\Ncnd}{\lr \N'}
\newcommand{\Niter}{\iter \N}
\newcommand{\infrule}{\mathsf{r}}
\newcommand{\rrule}{\infrule} %legacy
\newcommand{\id}{\mathsf{id}}
\newcommand{\ax}{\id}
\newcommand{\ex}{\mathsf{e}}
\newcommand{\cut}{\mathsf{cut}}
\newcommand{\wk}{\mathsf{w}}
\newcommand{\contr}{\mathsf{c}}
\newcommand{\choice}[2]{#2^{#1}}
\newcommand{\pair}[2]{\langle #1,#2\rangle}
\newcommand{\lr}[1]{#1_l}
\newcommand{\rr}[1]{#1_r}
\newcommand{\cev}[1]{#1^0} %legacy
\newcommand{\LJ}{\mathsf{LJ}}
\newcommand{\LJneg}{\LJ^-}
\newcommand{\T}{\mathsf{T}}
\newcommand{\F}{\mathsf{F}}
\newcommand{\muLJ}{\mu\LJ}
\newcommand{\muLJnorec}{\mu'\LJ}
\newcommand{\muLJneg}{\muLJ^-}
\newcommand{\muLJnegnorec}{\muLJnorec^-}
\newcommand{\circular}{\mathsf C}
\newcommand{\cmuLJ}{\circular\muLJ}
\newcommand{\cmuLJneg}{\cmuLJ^-}
\newcommand{\MALL}{\mathsf{MALL}}
\newcommand{\muMALL}{\mu\MALL}
\newcommand{\proj}[1]{\mathsf{p}_{#1}}
\newcommand{\projl}{\proj 0}
\newcommand{\projr}{\proj 1}
\newcommand{\enc}[1]{\underline{#1}}
\newcommand{\true}{\mathsf{tt}}
\newcommand{\false}{\mathsf{ff}}
\newcommand{\ifthen}[3]{\mathsf{if \ }{#1}\mathsf{\ then\ }{#2}\mathsf{\ else\ }{#3}}
\newcommand{\rewrite}{\rightsquigarrow}
\newcommand{\cutreduction}{\mathrm{cr}}
\newcommand{\cutred}{\rewrite_{\cutreduction}}
\newcommand{\cutreductionnorec}{\mathrm{cr}'}
\newcommand{\cutrednorec}{\rewrite_{\cutreductionnorec}}
\newcommand{\reduction}{\mathrm{r}}
\newcommand{\reduces}{\rewrite_{\reduction}}
\newcommand{\converts}{=_{\reduction}}
\newcommand{\reductionnorec}{\mathrm{r}'}
\newcommand{\reducesnorec}{\rewrite_{\reductionnorec}}
\newcommand{\convertsnorec}{=_{\reductionnorec}}
\newcommand{\reducesnoreceta}{\rewrite_{\reductionnorec}^\eta} %eta-version
\newcommand{\convertsnoreceta}{=_{\reductionnorec}^\eta} %eta-version
\newcommand{\extensional}{\eta} %name of the extensional rule
\newcommand{\betae}{=_{\beta}}
\newcommand{\betaeta}{=_{\beta, \eta}}
\renewcommand{\lor}{\vee}
\renewcommand{\land}{\wedge}
\newcommand{\limp}{\to}
\newcommand{\liff}{\leftrightarrow}
\newcommand{\mufo}[3]{\mu #1 \lambda #2 #3}
\newcommand{\nufo}[3]{\nu #1 \lambda #2 #3}
\newcommand{\Lang}{\mathcal L}
\newcommand{\lang}[1]{\Lang_{#1}}
\newcommand{\langarith}{\lang 1}
\newcommand{\langsoarith}{\lang 2}
\newcommand{\langmuarith}{\lang \mu}
\newcommand{\proves}{\vdash}
\newcommand{\der}{P}
\newcommand{\deri}{Q}
\newcommand{\derii}{R}
\newcommand{\derr}{\deri} %legacy
\newcommand{\derrr}{\derii} %legacy
\newcommand{\metader}{\mathcal{D}}
\newcommand{\axiom}[2]{  \vlin{\ax}{}{{#1} \seqar {#2}}{\vlhy{}}}
\newcommand{\haxiom}[2]{ \vlhy{{#1} \seqar {#2}}}
\newcommand{\uaxiom}{\vlin{\unit}{}{\seqar \unit}{\vlhy{}}}
\newcommand{\vldr}[2]{\vltr{#1}{#2}{\vlhy{\ \ \ }}{\vlhy{\ \ \ }}{\vlhy{\ \ \ }}}
\newcommand{\vldrs}[2]{\vltr{#1}{#2}{\vlhy{ \ }}{\vlhy{\ }}{\vlhy{\  }}}
\newcommand{\Sin}[2]{\Sigma^{#1}_{#2}}
\newcommand{\Pin}[2]{\Pi^{#1}_{#2}}
\newcommand{\Din}[2]{\Delta^{#1}_{#2}}
\newcommand{\posSin}[3]{\Sin{#1,+}{#2} (#3)}
\newcommand{\posPin}[3]{\Pin{#1,+}{#2} (#3)}
\newcommand{\posDin}[3]{\Din{#1,+}{#2} (#3)}
\newcommand{\posnegSin}[4]{\posSin{#1}{#2}{#3, \lnot #4}}
\newcommand{\posnegPin}[4]{\posPin{#1}{#2}{#3, \lnot #4}}
\newcommand{\posnegDin}[4]{\posDin{#1}{#2}{#3, \lnot #4}}
\newcommand{\PA}{\mathsf{PA}}
\newcommand{\HA}{\mathsf{HA}}
\newcommand{\muPA}{\mu\PA}
\newcommand{\muHA}{\mu\HA}
\newcommand{\SOPA}{\PA2}
\newcommand{\SOHA}{\HA2}
\newcommand{\preaxiom}{\mathsf{Pre}}
\newcommand{\nprezero}{\preaxiom^0_\pNat}
\newcommand{\npresucc}{\preaxiom^\succ_\pNat}
\newcommand{\indaxiom}{\mathsf{Ind}}
\newcommand{\nindaxiom}{\indaxiom_\pNat}
\newcommand{\Rel}{\mathrm{Rel}}
\newcommand{\CA}{\mathsf{CA}}
\newcommand{\AC}{\mathsf{AC}}
\newcommand{\ca}[1]{#1\text{-}\CA}
\newcommand{\ac}[1]{#1\text{-}\AC}
\newcommand{\RCA}{\mathsf R \CA_0}
\newcommand{\PCA}{\ca{\Pin 11}_0}
\newcommand{\SCA}{\ca{\Sin 1 1 }_0}
\newcommand{\PSCA}{\ca{\Pin 1 2 }_0}
\newcommand{\SPCA}{\ca{\Sin 1 2 }_0}
\newcommand{\ACA}{\mathsf{ACA}_0}
\newcommand{\ATR}{\mathsf{ATR}}
\newcommand{\WF}{\mathrm{WF}}
\newcommand{\PO}{\mathrm{PO}}
\newcommand{\TO}{\mathrm{TO}}
\newcommand{\WO}{\mathrm{WO}}
\newcommand{\ordleq}{\preceq}
\newcommand{\ordlneq}{\prec}
\newcommand{\ordgeq}{\succeq}
\newcommand{\ordeq}{\approx}
\newcommand{\upbnd}[1]{\lceil #1\rceil}
\newcommand{\oper}[3]{\Lambda #1 \lambda #2 \, #3}
\newcommand{\apprx}[4]{(\oper #1 #2 #3)^{#4}}
\newcommand{\hr}[1]{|#1|}
\newcommand{\HR}{\hr \cdot}
\newcommand{\nmod}{\mathfrak N}
\newcommand{\realises}{\mathbf{r}}
\newcommand{\type}[1]{\mathtt t (#1)}
\newcommand{\negtrans}[1]{{#1}^{\N}}
\newcommand{\unnegtrans}[1]{{#1}_{\N}}
\newcommand{\nega}[1]{\neg^{#1}}
\newcommand{\negn}{\neg}
\newcommand{\dercod}{{\der_{\mathsf{cod}}}}
\newcommand{\derdec}{{\der_{\mathsf{dec}}}}
\newcommand{\pNat}{\mathsf{N}}
\newcommand{\pnat}[1]{\pNat\hspace{.05em} #1 }
\newcommand{\nattrans}[1]{#1^{\pNat}}
\newcommand{\muHAneg}{\muHA^-}
\begin{document}
\title[Computational expressivity of (circular) proofs with fixed points]{On the computational expressivity of (circular) proofs with fixed points}

\author[G.~Curzi]{Gianluca Curzi\lmcsorcid{57216760760}}[a]
\address{University of Gothenburg and University of Birmingham}
\email{gianluca.curzi@gu.se}
\thanks{}

\author[A.~Das]{Anupam Das\lmcsorcid{0000-0002-0142-3676}}[b]
\address{University of Birmingham}
\email{a.das@bham.ac.uk}
\thanks{This work was supported by a UKRI Future Leaders Fellowship, `Structure vs Invariants in Proofs' (project reference MR/S035540/1), by the Wallenberg Academy Fellowship Prolongation 
project `Taming Jörmungandr: The Logical Foundations of Circularity' (project reference 251080003), and by the
VR starting grant ``Proofs with Cycles in Computation'' (project reference 251088801).
}

\maketitle

\begin{abstract}
We study the computational expressivity of proof systems with fixed point operators, within the `proofs-as-programs' paradigm. We start with a calculus $\muLJ$ (due to Clairambault) that extends intuitionistic propositional logic by least and greatest positive fixed points. Based in the sequent calculus, $\muLJ$ admits a standard extension to a `circular' calculus $\cmuLJ$.

Our main result is that, perhaps surprisingly, both $\muLJ$ and $\cmuLJ$ represent the same first-order functions: those provably recursive in $\PSCA$, a subsystem of second-order arithmetic beyond the `big five' of reverse mathematics and one of the strongest theories for which we have an ordinal analysis (due to Rathjen). This solves various questions in the literature on the computational strength of proof systems with fixed points.

For the lower bound we give a realisability interpretation from an extension of Peano Arithmetic by fixed points that has been shown to be arithmetically equivalent to $\PSCA$ (due to M\"ollerfeld). For the upper bound we construct a novel computability model to give a totality argument for circular proofs with fixed points. 
In fact we formalise this argument itself within $\PSCA$ in order to obtain the tight bounds we are after.
% , requiring a choice of higher computability model distinct from usual type structures for second-order systems (such as Girard-Reynolds' $\F$). 
Along the way we develop some novel reverse mathematics for the Knaster-Tarski fixed point theorem.
\end{abstract}

% \begin{IEEEkeywords}
% Proof theory,
% Fixed points,
% Curry-Howard,
% Circular proofs,
% Reverse mathematics,
% Higher-order computability,
% Realisability
% \end{IEEEkeywords}

% \todonew{Gianluca affiliation to include Birmingham too?}

\comment{Conventions/notations:
\begin{itemize}
\item not using todo-notes, instead using bespoke macros to easily distinguish different types of todos, and so they are not visually overbearing.
    \item Now using a macro for arrow-type (currently $\arrow$). \anupam{I sometimes forget and use $\to$, not a big deal anymore.} No dot after binders in grammars (it is a meta-level notation). $\lambda$ etc.\ binds as strongly as possible (unless there is a dot).
    \item application always without brackets, i.e.\ $st$ not $s(t)$. This is consistent with above convention.
    \item $\closure A$ is the closure of a set $A$ of (co)terms under untyped application. Just saying `typed' for typed version at the moment.
    \item Notation convention: everything to do with $\muLJ$ and the iteration rules have standard notations (e.g.\ $\muLJ$, $\lr \mu$, $\iter{}$); everything to do with $\cmuLJ$ and left-unfolding rules currently have `primed' notations (e.g.\ $\muLJnorec$, $\mulunf$, $\iter{}'$). \anupam{I am not necessarily sold on this notation, I welcome alternative suggestions. If we keep might be better to call $\cmuLJ$ something like $\muLJnorec_{\mathrm{reg}}$}
    \item do not say `number-theoretic function', better to say `function on natural numbers' or even `numerical function'.
    \item use $\Sigma, \Gamma$ for lists/cedents of types to avoid too many arrows.
    \item use $\seqar$ for sequent arrow, not $\vdash$ since that is reserved for provability (mathematical judgement, not syntactic delimiter).
    \item Using mathrm for meta-level sets (e.g. $\var, \Var, \Types$).
    \item using mathsf for term constants, proof rules etc.
    \item Do not use cref: does not work with this document class
\end{itemize}
}

\section{Introduction}
\label{sec:intro}
\emph{Fixed points} abound in mathematics and computer science.
In logic we may enrich languages by `positive' fixed points to perform (co)inductive reasoning, while in
programming languages positive fixed points in type systems are used to represent (co)datatypes and carry out (co)recursion.
In both settings the underlying systems may be construed as fragments of their second-order counterparts.
% However the intuitionism underlying type systems allows for alternative semantic interpretations too, not least as bona fide fixed points themselves.

In this work we investigate the computational expressivity of type systems with least and greatest (positive) fixed points.
We pay particular attention to \emph{circular proof systems}, where typing derivations are possibly non-well-founded (but regular), equipped with an $\omega$-regular `correctness criterion' at the level of infinite branches.
Such systems have their origins in modal fixed point logics, notably the seminal work of Niwi\'nski and Walukiewicz \cite{NiwWal96:games-for-mu-calc}.
Viewed as type systems under the `Curry-Howard' correspondence, circular proofs have received significant attention in recent years, notably based in systems of \emph{linear logic} \cite{BaeldeDS16Infinitaryprooftheory,EhrJaf21:cat-models-muLL,EhrJafSau21:totality-CmuMALL,BDKS22:bouncing-threads,DeSau19:infinets1,DePelSau:infinets2,DeJafSau22:phase-semantics} after foundational work on related finitary systems in \cite{BaeldeMiller07,Baelde12}.
In these settings circular proofs are known to be (at least) as expressive as their finitary counterparts, but classifying the exact expressivity of both systems has remained an open problem.
This motivates the main question of the present work:

\smallskip

\begin{question}
\label{question:main}
What functions do (circular) proof systems with fixed points represent?
\end{question}

\medskip

Circular type systems with fixed points were arguably pre-empted by foundational work of Clairambault \cite{Clairambault09}, who introduced an extension $\muLJ$ of Gentzen's sequent calculus $\LJ$ for intuitionistic propositional logic by least and greatest positive fixed points.
$\muLJ$ forms the starting point of our work and, using standard methods, admits an extension into a circular calculus, here called $\cmuLJ$, whose computational content we also investigate.

% \medskip

In parallel lines of research, fixed points have historically received considerable attention within mathematical logic.
% , not least in descriptive set theory.
The ordinal analysis of extensions of Peano Arithmetic ($\PA$) by inductive definitions has played a crucial role in giving proof theoretic treatments to (impredicative) second-order theories (see, e.g., \cite{sep-proof-theory}).
More recently, inspired by Lubarsky's work on `$\mu$-definable sets' \cite{Lubarsky93}, M\"ollerfeld has notably classified the proof theoretic strength of extensions of $\PA$ by general inductive definitions in \cite{Moellerfeld02:phd-thesis}.

In this work we somewhat bridge these two traditions, in computational logic and in mathematical logic, in order to answer our main question.
In particular we apply proof theoretic and metamathematical techniques to show that both $\muLJ$ and $\cmuLJ$ represent precisely the functions provably recursive in the subsystem $\PSCA$ of second-order arithmetic.
This theory is far beyond the `big five' of reverse mathematics, and is among the strongest theories for which we have an effective ordinal analysis (see \cite{Rathjen95:recent-advances}). 
The best known lower bound for $\muLJ$ before was G\"odel's $\T$ (see, e.g., \cite{Clairambault09}), which has the same proof theoretic strength as $\PA$. 
The best known upper bound was Girard-Reynold's $\F$, thanks to its impredicative encodings of fixed points, which has the same proof theoretic strength as second-order arithmetic $\SOPA$.

% Thus this work resolves our main Question~\ref{question:main}.

\subsection{Outline and contribution}
The structure of our overall argument is visualised in Figure~\ref{fig:grand-tour}, outlining a cycle of inclusions of `representable functions'.
Here the upper row consists of theories of arithmetic, where the representable functions of an arithmetic theory $T$ are just its provably total recursive functions; i.e.\ those functions $ f : \Nat \to \cdots \to \Nat$ with graph computed by some $\Sigma^0_1$ formula $\phi_f(\vec x, y)$ such that $T \proves \forall \vec x \exists y \phi_f (\vec x,y) $.
The lower row consists of type systems whose representable functions are just those admitting a typing derivation with conclusion $\nat \to \cdots \to \nat$ computing the function under its operational semantics (as in, e.g., Definition~\ref{defn:numerals-representability}). 

(1) is a standard embedding of finitary proofs into circular proofs (Proposition~\ref{prop:cmulj-simulates-mulj}).
(2) reduces $\cmuLJ$ to its `negative fragment', in particular free of greatest fixed points ($\nu$), via a double negation translation (Proposition~\ref{prop:cmulj-into-cmuljminus}).

(3) is one of our main contributions: we build a higher-order computability model $\HR$ that interprets $\cmuLJneg$ (Theorem~\ref{thm:prog-implies-hr}), and moreover formalise this construction itself within $\PSCA$ to obtain our upper bound (Theorem~\ref{thm:cmuLJneg-to-psca}).
The domain of this model a priori is an (untyped) term extension of $\cmuLJneg$.
It is important for logical complexity that we interpret fixed points semantically as bona fide fixed points, rather than via encoding into a second-order system.
Along the way we must also establish some novel reverse mathematics of the Knaster-Tarski fixed point theorem (Theorem~\ref{thm:muphi-is-phiwo}).

(4) is an intricate and nontrivial result established by M\"ollerfeld in \cite{Moellerfeld02:phd-thesis}, which we use as a `black box'.
(5) is again a double negation translation, morally a specialisation of the $\Pin 0 2 $-conservativity of full second-order arithmetic $\SOPA$ over its intuitionistic counterpart $\SOHA$, composed with a relativisation of quantifiers to $\Nat$ (\Cref{thm:muPA-pi02-cons-muHA,prop:muHA-to-muHAneg}). 

(6) is our second main contribution: we provide a realisability interpretation from $\muHAneg$ into $\muLJ$ (Theorem~\ref{thm:realisability-soundness}), morally by considerable specialisation of the analogous interpretation from $\SOHA$ into Girard-Reynolds' system $\F$.
Our domain of realisers is a (typed) term extension of $\muLJneg$ (the negative fragment of $\muLJ$), which is itself interpretable within $\muLJ$ (Proposition~\ref{prop:cmulj-into-cmuljminus}).

% Finally, the application of our results to characterise the representable functions of (circular) $\muMALL$ is given in Section~\ref{sec:further} (Theorem~\ref{thm:(c)mulj-(c)mumall}).

\subsection{Related work}
Fixed points have been studied extensively in type systems for programming languages.
In particular foundational work by Mendler in the late '80s \cite{Mendler87:recursive-types,Mendler91:inductive-types} already cast inductive type systems as fragments of second-order ones such as Girard-Reynolds' $\F$ \cite{girard1972systemF,Reynolds74systemF}.
Aside from works we have already mentioned, (a variant of) (5) has already been obtained by Tupailo in \cite{Tupailo04doubleneg}.
Berger and Tsuiki have also obtained a similar result to (6) in a related setting \cite{BerTsu21:ifp}, for \emph{strictly} positive fixed points, where bound variables may never occur under the left of an arrow. 
Their interpretation of fixed points is more akin to that in our type structure $\HR$ than our realisability model.

Finally the structure of our argument, cf.~Figure~\ref{fig:grand-tour}, is inspired by recent works in cyclic proof theory, notably \cite{Simpson17:cyc-arith,Das20:ca-log-comp} for (cyclic) (fragments of) $\PA$ and \cite{Das21:CT-preprint,Das21:CT-fscd,KuperbergPP21systemT} for (circular) (fragments of) G\"odel's system $\T$.

\begin{figure}
    % \centering
    \begin{tikzpicture}
% \node (clo-muLJneg) {$\closure\muLJneg$};
% \node [right = of clo-muLJneg] (muLJ) {$\muLJ^{(-)}$};
% \node [right = of muLJ] (cmuLJ) {$\cmuLJ$};
% \node [right = of cmuLJ] (cmuLJneg) {$\cmuLJneg$};
% \node [right = of cmuLJneg] (clo-cmuLJneg) {$\closure \cmuLJneg$};
% \node [above = of clo-muLJneg] (muHA) {$\muHA$};
% \node [above = of clo-cmuLJneg] (PSCA) {$\PSCA$};
% \node [above = of cmuLJ] (muPA) {$\muPA$};
% \draw[->] (clo-muLJneg) to (muLJ);
% \draw[->] (muLJ) to (cmuLJ);
% \draw[->] (cmuLJ) to (cmuLJneg);
% \draw[->] (cmuLJneg) to (clo-cmuLJneg);
% \draw[->] (clo-cmuLJneg) to (PSCA);
% \draw[->] (PSCA) to (muPA);
% \draw[->] (muPA) to (muHA);
% \draw[->] (muHA) to (clo-muLJneg);
% \node [left = of clo-muLJneg] (models) {Computational models:};
% \node [above = of models] (theories) {Arithmetic theories:};
\node (muLJ) {$\muLJ^{(-)}$};
\node [right = of muLJ] (cmuLJ) {$\cmuLJ^{\phantom{-}}$};
\node [right = of cmuLJ] (cmuLJneg) {$\cmuLJneg$};
\node [above = of muLJ] (muHA) {$\muHAneg$};
\node [above = of cmuLJneg] (PSCA) {$\PSCA$};
\node [above = of cmuLJ] (muPA) {$\muPA$};
\draw[->] (muLJ) to node[below]{\small (1)} (cmuLJ);
\draw[->] (cmuLJ) to node[below]{\small (2)} (cmuLJneg);
\draw[->] (cmuLJneg) to node[right]{\small (3)} (PSCA);
\draw[->] (PSCA) to node[above]{\small (4)} (muPA);
\draw[->] (muPA) to node[above]{\small (5)} (muHA);
\draw[->] (muHA) to node[left]{\small (6)} (muLJ);
\node [left = .5cm of muLJ, align=center] (models) {\footnotesize Type \\ \footnotesize systems:};
\node [left = .5cm of muHA, align=center] (theories) {\footnotesize Arithmetic \\ \footnotesize theories:};
    \end{tikzpicture}
    \caption{Summary of the main `grand tour' of this work. All arrows indicate inclusions of representable functions.
    }
    \label{fig:grand-tour}
\end{figure}

\subsection{Comparison to preliminary version}
This paper is an expansion of the preliminary conference version \cite{CD23:muLJ-lics}.
In this version we additionally include full proofs of all our results, as well as further examples and narrative. 

We have reformulated our realisability argument in \Cref{sec:realisability} into a form of \emph{abstract} realisability, inspired by the approach of \cite{BerTsu21:ifp}. 
This factors the approach of the preliminary version by a more careful relativisation of quantifiers to deal with an inconvenient type mismatch when realising inductive predicates.

Finally, in the preliminary version we also showed equivalence of $\muLJ$ and $\cmuLJ$ with their counterparts in linear logic from (see, e.g.,~\cite{Baelde12,BaeldeDS16Infinitaryprooftheory}) via appropriate proof interpretations.
These results will be expanded upon in a separate self-contained paper.

\subsection{Notation}
% Due to space constraints, most of our proofs and technical development is devolved to appendices. 
% 
Throughout this work we employ standard rewriting theoretic notation. 
Namely for a relation $\rightsquigarrow_a$, we denote by $\rightsquigarrow^*_a$ the reflexive and transitive closure of $\rightsquigarrow_a$, and by $=_{a}$ the relexive symmetric transitive closure of $\rightsquigarrow_a$.

We shall make use of \emph{(first-order) variables}, written $x,y$ etc., and \emph{(second-order) variables}, written $X,Y$ etc.\ throughout.
We shall use these both in the setting of type systems and arithmetic theories, as a convenient abuse of notation.

\section{Simple types with fixed points: system $\muLJ$}
\label{sec:muLJ-prelims}
In this section we recall the system $\muLJ$ from \cite{Clairambault09,Clairambault13interleaving}.
More precisely, we present the `strong' version of $\muLJ$ from \cite{Clairambault13interleaving}.
% Since much development of the system already exists in those earlier papers we shall be rather brief.

\subsection{The sequent calculus $\muLJ$}

% \begin{defn}[Pretypes and free variables]
\emph{Pretypes}, written $\sigma,\tau$ etc., are generated by the following grammar:
\[
\sigma,\tau \ \bnf\  X \ \vert \ \unit \ \vert \ \sigma + \tau \ \vert \ \sigma \times \tau \ \vert \ \sigma \arrow  \tau  \ \vert \ \mu X \sigma \ \vert \ \nu X \sigma      
\]

\emph{Free (second-order) variables} of a pretype are defined as expected, construing $\mu$ and $\nu$ as binders:
\begin{itemize}
    \item $\fv X \dfn \{X\}$
    \item $\fv \unit \dfn \emptyset$
    \item $\fv {\sigma \star \tau} \dfn \fv \sigma \cup \fv \tau$, for $\star \in \{+,\times,\arrow\}$
    \item $\fv{\kappa X \, \sigma} \dfn \fv \sigma \setminus \{X\}$, for $\kappa \in \{\mu,\nu\}$
\end{itemize}
A pretype is \emph{closed} if it has no free variables (otherwise it is \emph{open}).
% \end{defn}

Throughout this work we shall assume some standard conventions on variable binding, in particular that each occurrence of a binder $\mu$ and $\nu$ binds a variable distinct from all other binder occurrences in consideration.
This avoids having to deal with variable renaming explicitly.
We follow usual bracketing conventions, in particular writing, say, $\rho \arrow \sigma\arrow \tau$ for $(\rho \arrow (\sigma \arrow \tau))$.
Binders $\mu X $ and $\nu Y$ bind as strongly as possible but we may write, say, $\mu X, Y .\,  \sigma \to \tau$ for $\mu X \mu Y (\sigma \to \tau)$.

\begin{defn}
    [Types and polarity]
    \emph{Positive} and \emph{negative} variables in a pretype are defined as expected:
\begin{itemize}     
\item $X$ is positive in $X$.
        \item $1$ is positive and negative in $X$.
        \item if $\sigma,\tau$ are positive (negative) in $X$ then so is $\sigma \star \tau$, for $\star \in \{+,\times\}$.
        \item if $\sigma$ is negative (positive) in $X$ and $\tau $ is positive (resp., negative) in $X$, then $\sigma \to \tau$ is positive (resp., negative) in $X$.
        \item if $\sigma$ is positive (negative) in $X$ then so is $\kappa Y \sigma$ (resp.), for $\kappa \in \{\mu,\nu\}$, both when $Y=X$ and $Y \neq X$.
    \end{itemize}
A pretype is a \emph{type} (or even \emph{formula}) if, for any subexpression $\kappa X \sigma$, $\sigma$ is positive in $X$.
The notions of \emph{(type) context} and \emph{substitution} are defined as usual. 
% In particular we write $\sigma[\tau/X]$ for the type obtained by replacing every instance of $X$ in $\sigma$ by $\tau$.\anupam{do we need this?}
\end{defn}

\begin{remark}
    [Positivity vs strict positivity]
    Many authors require variables bound by fixed point operators to appear in \emph{strictly} positive position, i.e.\ never under the left-scope of $\arrow$.
    Like Clairambault \cite{Clairambault09,Clairambault13interleaving} we do not impose this stronger requirement, requiring only positivity in the usual syntactic sense.
\end{remark}

\begin{defn}[System $\muLJ$]
A \emph{cedent}, written $\Sigma, \Gamma$ etc., is just a list of types. 
A \emph{sequent} is an expression $\Sigma \seqar \sigma$.
The symbol $\seqar$ is, formally, just a syntactic delimiter (but the arrow notation is suggestive).
The system $\muLJ$ is given by the rules of Figures~\ref{fig:sequent-calculus-mulj}, \ref{fig:unfolding-rules} and \ref{fig:co-iteration-rules} (colours may be ignored for now).
The notions of \emph{derivation} (or \emph{proof}) are defined as usual. 
We write $P: \Gamma \seqar \tau$ if $P$ is a derivation of the sequent $\Gamma \seqar \tau$. 
\end{defn}

\begin{remark}
    [General identity and substitutions]
    \label{rem:gen-identity+substitution}
    Note that $\muLJ$ is equipped with a general identity rule, not only for atomic types. 
    This has the apparently simple but useful consequence that typing derivations are closed under substitution of types for free variables, i.e.\ if $\der(X) : \Gamma(X) \seqar \sigma(X)$ in $\muLJ $ (with all occurrences of $X$ indicated), then also $\der(\tau) : \Gamma(\tau) \seqar \sigma(\tau)$ in $\muLJ$ for any type $\tau$.
    Later, this will allow us to derive inductively general functors for fixed points in $\muLJ$ rather than including them natively; this will in turn become important later for verifying our realisability model for $\muLJ$ \anupam{come back to this later, need to mention eta}.
\end{remark}

\begin{remark}
    [$\muLJ$ as a fragment of second-order logic]
    \label{rem:mulj-frag-F}
    We may regard $\muLJ$ properly as a fragment of Girard-Reynolds System $\F$~\cite{girard1972systemF,Reynolds74systemF}, an extension of simple types to a second-order setting. 
    In particular, (co)inductive types may be identified with second-order formulas by:
    \[
    \begin{array}{r@{\ = \ }l}
         \mu X \sigma & \forall X ((\sigma \arrow X) \arrow X) \\
         \nu X \sigma & \exists X (X \times (X \arrow \sigma))
    \end{array}
    \]
    The rules for fixed points in $\muLJ$ are essentially inherited from this encoding, modulo some constraints on proof search strategy.
    % Indeed \emph{cut-elimination} has been obtained directly for fixed point systems by Baelde and Miller~\cite{BaeldeMiller07}.\footnote{Note that their result is stated over (multiplicative additive) linear logic.}
% 
    Later we shall use a different encoding of fixed point types into a second-order setting, namely in arithmetic, as bona fide fixed points, in order to better control logical complexity.\anupam{could say more}
\end{remark}

In proofs that follow, we shall frequently only consider the cases of \emph{least} fixed points ($\mu$-types) and not \emph{greatest} fixed points ($\nu$-types), 
% due to space constraints, 
appealing to `duality' for the latter.
The cases for $\nu$ should be deemed analogous.
% , though we provide omitted cases in appendices nonetheless \todo{make sure this is done}.
% We shall somewhat make up for this by including examples for $\nu$-types throughout \anupam{do this}.
As we shall soon see, in Subsection~\ref{subsec:reduction-to-neg-frag}, we can indeed reduce our consideration to $\nu$-free types, without loss of generality in terms of representable functions.

\begin{remark}
    [Why sequent calculus?]
    Using a sequent calculus as our underlying type system is by no means the only choice. 
    % Other approaches to (co)inductive types exist in the literature too, e.g.\ \todo{cites}.
    However, since we shall soon consider non-wellfounded and circular typing derivations, it is important to have access to a well behaved notion of \emph{formula ancestry}, in order to properly define the usual totality criterion that underlies them.
    This is why the sequent calculus is the formalism of choice in circular proof theory.
\end{remark}

\begin{remark}
    [Variations of the fixed point rules]
    \label{rem:iter-variations}
    It is common to consider context-free and `weak' specialisations of the fixed point rules, e.g.:
    \begin{equation}
    \small
        \label{eq:(co)iterators-as-rules}
            \vlinf{\iter{}}{}{\mu X \sigma(X)\seqar \tau}{\sigma(\tau)\seqar \tau}
    \quad
    \vlinf{\coiter}{}{\rho\seqar \nu X \sigma(X)}{\rho\seqar \sigma(\rho)}
    \end{equation}
    % Here we have omitted identity premisses.
    In the presence of cut the `(co)iterator' rules above are equivalent to those of $\muLJ$ (see \Cref{app:simple-types-with-fixed-points} for some further remarks).
    However since the computational model we presume is cut-reduction, as we shall soon see, it is not appropriate to take them as first-class citizens.
    When giving a semantics that interprets $\cut$ directly, e.g.\ as we do for the term calculi in Section~\ref{sec:term-calculi}, it is often simpler to work with the (co)iterators above.
    In the remainder of this work
    we shall freely use the versions above in proofs too.
\end{remark}

\begin{figure}
    \centering
    \[
    \small
    \begin{array}{c}
         \vlinf{\ax}{}{\sigma \seqar \sigma}{}
   \qquad  
   \vlinf{\ex}{}{\mygreen{\Gamma}, \red{\tau}, \blue{\sigma}, \orange{\Delta} \seqar \cyan{\gamma}}{\mygreen{\Gamma},\blue{\sigma}, \red{\tau}, \orange{\Delta} \seqar \cyan{\gamma}}
   \qquad
   \vliinf{\cut}{}{\mygreen{\Gamma}, \orange{\Delta} \seqar \cyan{\tau}}{\mygreen{\Gamma} \seqar \sigma}{\orange{\Delta}, \sigma \seqar \cyan{\tau}}
   \\ \\
   \vlinf{\wk}{}{\mygreen{\Gamma}, \sigma \seqar \cyan{\tau}}{\Gamma \seqar \cyan{\tau}}
   \qquad
   \vlinf{\contr}{}{\mygreen{\Gamma}, \red{\sigma} \seqar \cyan{\tau}}{\mygreen{\Gamma}, \red{\sigma}, \red{\sigma} \seqar \cyan{\tau}}
%   \qquad 
%   \vlinf{\zero}{}{\Gamma, \zero \seqar \sigma }{}
   \qquad 
   \vlinf{\rr \unit}{}{\seqar \unit}{}
   \qquad 
   \vlinf{\lr \unit}{}{\mygreen{\Gamma}, \unit \seqar \cyan{\sigma}}{\mygreen{\Gamma} \seqar \cyan{\sigma}}
   \\ \\
    \vlinf{\rr \arrow}{}{\mygreen{\Gamma} \seqar \red{\sigma \arrow \tau}}{\mygreen{\Gamma}, \red{\sigma} \seqar \red{\tau}}
   \qquad 
   \vliinf{\lr \arrow}{}{\mygreen{\Gamma},  \orange{\Delta}, \red{\sigma \arrow \tau} \seqar \cyan{\gamma}}{\mygreen{\Gamma} \seqar \red{\sigma}}{\orange{\Delta}, \red{\tau} \seqar \cyan{\gamma}}
   \\ \\
   \vliinf{\rr \times}{}{\mygreen{\Gamma} , \orange{\Delta} \seqar \red{\sigma \times \tau}}{\mygreen{\Gamma} \seqar \red{\sigma}}{\orange{\Delta} \seqar \red{\tau}}
   \qquad 
   \vlinf{\lr \times}{}{\mygreen{\Gamma}, \red{\sigma \times \tau} \seqar \cyan{\gamma}}{\mygreen{\Gamma},\red{\sigma},\red{\tau} \seqar \cyan{\gamma}}\\ \\
   \vlinf{\choice 0 {\rr +}}{}{\mygreen{\Gamma} \seqar \blue{\tau_0 + \tau_1}}{\mygreen{\Gamma} \seqar \blue{\tau_0}}\qquad
    \vlinf{\choice 1 {\rr +}}{}{\mygreen{\Gamma} \seqar \blue{\tau_0 + \tau_1}}{\mygreen{\Gamma} \seqar \blue{\tau_1}}
   % \qquad
   %  \vlinf{\rr +^1}{}{\Gamma \seqar \tau_0 + \tau_1}{\Gamma \seqar \tau_1}
    \qquad 
    \vliinf{\lr +}{}{\mygreen{\Gamma}, \red{\sigma + \tau} \seqar \cyan{\gamma}}{\mygreen{\Gamma}, \red{\sigma} \seqar \cyan{\gamma}}{\mygreen{\Gamma}, \red{\tau} \seqar \cyan{\gamma}}
    \end{array}
    \]
    \caption{Sequent calculus rules for $\LJ$.}
    \label{fig:sequent-calculus-mulj}
\end{figure}

\anupam{27-02-23: I corrected the colouring here, it is still incorrect in the conference version.}

\begin{figure}
 \[
 \small
   \vlinf{\rr \mu}{}{\mygreen{\Gamma}\seqar  \red{\mu X \sigma(X)}}{\mygreen{\Gamma} \seqar \red{\sigma(\mu X \sigma(X))}}
   \qquad
   \vlinf{\lr \nu }{}{\mygreen{\Gamma},  \red{\nu X\sigma(X)} \seqar \cyan{\tau}}{\mygreen{\Gamma} , \red{\sigma(\nu X \sigma(X))}\seqar \cyan{\tau}}
   \]
    \caption{Some unfolding rules for $\mu$ and $\nu$.}
    \label{fig:unfolding-rules}
\end{figure}

\begin{figure}
    \[
    \small
    \vliinf{\lr \mu}{}{\Gamma,\Delta, \mu X \sigma(X)  \seqar \tau}{\Gamma, \sigma(\rho)\seqar \rho}{\Delta, \rho \seqar \tau}
  \qquad
   \vliinf{\rr \nu }{}{\Gamma , \Delta \seqar \nu X\sigma(X)}{\Gamma \seqar  \tau }{  \Delta, \tau \seqar \sigma(\tau)}
    \]
    \caption{`(Co)iteration' rules for $\mu$ and $\nu$.}
    \label{fig:co-iteration-rules}
\end{figure}

\begin{definition}
    [Functors]
    \label{def:functors-in-muLJ}
    % Let $\sigma(X) $ and $\rho(X)$ be (possibly open) types that are positive (and negative, respectively) in $X$. 
    % For a proof $P:\Gamma, \tau \seqar \tau'$ we define $\sigma(P): \Gamma, \sigma(\tau)\seqar \sigma(\tau')$ and $\rho(P): \Gamma, \rho(\tau')\seqar \rho(\tau)$ by simultaneous induction. If $\sigma$ and/or $\rho$ are formed from smaller types by a connective of $\LJ$, then the corresponding functors are defined as usual. 
    % If $\sigma(X) = \mu Y \sigma'(X,Y)$ then $\sigma(P)$ is,
    %     \[
    %     \small
    %     \vlderivation{
    %     \vlin{\lr \mu}{}{\Gamma, \sigma(\tau) \seqar \sigma(\tau')}{
    %     \vlin{\rr \mu}{}{\Gamma, \sigma'(\tau, \sigma(\tau'))\seqar \sigma(\tau')}{
    %     \vliq{\sigma'(P,\sigma(\tau'))}{}{\Gamma, \sigma'(\tau, \sigma(\tau')) \seqar \sigma'(\tau', \sigma(\tau')) }{
    %     \vlhy{\Gamma, \tau \seqar \tau'}
    %     }
    %     }
    %     }
    %     }
    %     \]
    %     where $\sigma'(P,\sigma(\tau'))$ is obtained from the inductive hypothesis for $\sigma'(P,Y)$ under substitution of $\sigma(\tau')$ for $Y$.
    %     Dually for $\rho(P)$, and similarly for greatest fixed points. 
 Let $\sigma(X) $ and $\rho(X)$ be (possibly open) types that are positive (and negative, respectively) in $X$. 
    For a proof $P:\Gamma, \tau \seqar \tau'$ we define $\sigma(P): \Gamma, \sigma(\tau)\seqar \sigma(\tau')$ and $\rho(P): \Gamma, \rho(\tau')\seqar \rho(\tau)$ by simultaneous induction as follows:
    \begin{itemize}
    \item If $\sigma(X)=X$ then $\sigma(\der)$ is just $\der$. Notice that it is never the  case that $\rho=X$, as $X$ can only occur negatively in $\rho$.
    \item If $\sigma$ and  $\rho$ are $\unit$ or some $Y\neq X$ then $\sigma(\der)$ and ${\tau}(\der)$ are defined respectively as follows: 
    \[
    \small
    \vlderivation{
    \vliq{\wk}{}{\Gamma, \sigma \seqar \sigma}{\vlin{\ax}{}{\sigma \seqar \sigma}{\vlhy{}}}
    }
    \qquad 
    \vlderivation{
    \vliq{\wk}{}{\Gamma, \rho \seqar \rho}{\vlin{\ax}{}{\rho \seqar \rho}{\vlhy{}}}
    }
    \]
    \item If $\sigma=\sigma_1 \imp \sigma_2$ and $\rho= \rho \imp \rho$ then we define $\sigma(\der)$ and $\rho(\der)$ respectively as follows:
    \[
    \footnotesize
    \vlderivation{
    \vlin{\rr \imp}{}{\Gamma, \sigma(\tau)\seqar \sigma(\tau')  }
     {
     \vliq{\contr}{}{\Gamma, \sigma(\tau) , \sigma_1(\tau')\seqar   \sigma_2(\tau')}{
     \vliin{\lr \imp}{}{\Gamma ,\Gamma, \sigma(\tau) , \sigma_1(\tau') \seqar   \sigma_2(\tau')}
       {
      \vldr{\sigma_1(\der)}{\Gamma, \sigma_1(\tau') \seqar \sigma_1(\tau)}
       }
       {
        \vldr{\sigma_2(\der)}{\Gamma, \sigma_2(\tau) \seqar \sigma_2(\tau')}
       }
       }
     }
    } 
     \qquad 
   \vlderivation{
    \vlin{\rr \imp}{}{\Gamma, \rho(\tau')\seqar \rho(\tau)  }
     {
     \vliq{\contr}{}{\Gamma, \rho(\tau'), \rho_1(\tau)\seqar   \rho_2(\tau)}{
     \vliin{\lr \imp}{}{\Gamma, \Gamma, \rho(\tau') , \rho_1(\tau) \seqar   \rho_2(\tau)}
       {
       \vldr{\rho_1(\der)}{\Gamma, \rho_1(\tau) \seqar \rho_1(\tau')}
       }
       {
        \vldr{\rho_2(\der)}{\Gamma, \rho_2(\tau') \seqar \rho_2(\tau)}
       }
       }
     }
    }
    \]
    \item If $\sigma=\sigma_1 \times \sigma_2$ and $\rho= \rho_1 \times \rho_2$  then we define $\sigma(\der)$ and $\rho(\der)$ respectively as follows:
     \[
     \footnotesize
   \vlderivation{
     \vliq{\contr}{}{\Gamma, \sigma(\tau)\seqar \sigma(\tau')  }
      {
      \vliin{\rr \times}{}{\Gamma, \Gamma, \sigma(\tau)\seqar\sigma(\tau)  }
       {
       \vlin{\lr \times }{}{\Gamma,\sigma(\tau)\seqar \sigma_1(\tau')}
         {
         \vldr{\sigma_1(\der)}{\Gamma, \sigma_1(\tau) \seqar \sigma_1(\tau')}
         }
       }
       {
         \vlin{\lr \times }{}{\Gamma,\sigma(\tau)\seqar \sigma_2(\tau')}
          {
             \vldr{\sigma_2(\der)}{\Gamma,  \sigma_2(\tau)\seqar \sigma_2(\tau')}
          }
       }
      }
    } 
 \qquad 
    \vlderivation{
     \vliq{\contr}{}{\Gamma, \rho(\tau')\seqar \rho(\tau)  }
      {
      \vliin{\rr \times}{}{\Gamma, \Gamma, \rho(\tau') \seqar \rho(\tau)   }
       {
       \vlin{\lr \times }{}{\Gamma, \rho(\tau') \seqar \rho_1(\tau)}
         {
         \vldr{\rho_1(\der)}{\Gamma, \rho_1(\tau') \seqar \rho_1(\tau)}
         }
       }
       {
         \vlin{\lr \times }{}{\Gamma, \rho(\tau') \seqar \rho_2(\tau)}
          {
             \vldr{\rho_2(\der)}{\Gamma,  \rho_2(\tau')\seqar \rho_2(\tau)}
          }
       }
      }
    } 
    \]
    \item If $\sigma=\sigma_1 + \sigma_2$ and $\rho= \rho_1 + \rho_2$ then we define $\sigma(\der)$ and $\rho(\der)$ respectively as follows:
    \[
    \footnotesize
   \vlderivation{
    \vliin{\lr +}{}{\Gamma, \sigma(\tau)\seqar  \sigma(\tau') }
     {
     \vlin{\rr +}{}{\Gamma, \sigma_1(\tau) \seqar  \sigma(\tau')}
       {
       \vldr{\sigma_1(\der)}{\Gamma, \sigma_1(\tau) \seqar \sigma_1(\tau')}
       }
     }
     {
      \vlin{\rr +}{}{\Gamma, \sigma_2(\tau) \seqar  \sigma(\tau')}
       {
        \vldr{\sigma_2(\der)}{\Gamma, \sigma_2(\tau) \seqar \sigma_2(\tau')}
       }
     }
    }
    \qquad 
     \vlderivation{
    \vliin{\lr +}{}{\Gamma, \rho(\tau') \seqar \rho(\tau)  }
     {
     \vlin{\rr +}{}{\Gamma, \rho_1(\tau') \seqar \rho(\tau) }
       {
       \vldr{\rho_1(\der)}{\Gamma, \rho_1(\tau') \seqar \rho_1(\tau)}
       }
     }
     {
      \vlin{\rr +}{}{\Gamma, \rho_2(\tau') \seqar \rho(\tau) }
       {
        \vldr{\rho_2(\der)}{\Gamma, \rho_2(\tau') \seqar \rho_2(\tau)}
       }
     }
    }
    \]
\item if $\sigma(X) = \mu Y \sigma'(X,Y)$ and $\rho(X) = \mu Y \rho'(X,Y)$  then we define $\sigma(\der)$ and $\rho(\der)$ respectively as follows:
        \[
        \small
        \vlderivation{
        \vlin{\lr \mu}{}{\Gamma, \sigma(\tau) \seqar \sigma(\tau')}{
        \vlin{\rr \mu}{}{\Gamma, \sigma'(\tau, \sigma(\tau'))\seqar \sigma(\tau')}{
        \vldr{\colorbox{white}{$\scriptstyle\sigma'(P,\sigma(\tau'))$}}{\Gamma, \sigma'(\tau, \sigma(\tau')) \seqar \sigma'(\tau', \sigma(\tau'))}
        }
        }
        }
        \qquad         \vlderivation{
        \vlin{\lr \mu}{}{\Gamma, \rho(\tau') \seqar \rho(\tau)}{
        \vlin{\rr \mu}{}{\Gamma, \rho'(\tau', \rho(\tau))\seqar \rho(\tau)}{
        \vldr{\colorbox{white}{$\scriptstyle\rho'(P,\rho(\tau))$}}{\Gamma, \rho'(\tau', \rho(\tau)) \seqar \rho'(\tau, \rho(\tau))}
        }
        }
        }
        \]
        where $\sigma'(P,\sigma(\tau'))$ (resp., $\rho'(P,\rho(\tau))$) are obtained from the IH for $\sigma'(P,Y)$ (resp., $\rho'(P,Y)$) under substitution of $\sigma(\tau')$ for $Y$ (resp., $\rho(\tau)$), cf.~Remark~\ref{rem:gen-identity+substitution}.
        \item   if $\sigma(X) = \nu Y \sigma'(X,Y)$ and $\rho(X) = \nu Y \rho'(X,Y)$  then we define $\sigma(\der)$ and $\rho(\der)$ respectively as follows: 
            \[
        \small
        \vlderivation{
        \vlin{\rr \nu}{}{\Gamma, \sigma(\tau) \seqar \sigma(\tau')}{
        \vlin{\lr \nu}{}{\Gamma, \sigma(\tau) \seqar \sigma'(\tau', \sigma(\tau))}{
        \vldr{\colorbox{white}{$\scriptstyle\sigma'(P,\sigma(\tau))$}}{\Gamma, \sigma'(\tau, \sigma(\tau)) \seqar \sigma'(\tau', \sigma(\tau))}
        }
        }
        }
        \qquad 
        \vlderivation{
        \vlin{\rr \nu}{}{\Gamma, \rho(\tau') \seqar \rho(\tau)}{
        \vlin{\lr \nu}{}{\Gamma, \rho(\tau')\seqar \rho'(\tau, \rho(\tau'))}{
        \vldr{\colorbox{white}{$\scriptstyle\rho'(P,\rho(\tau'))$}}{\Gamma, \rho'(\tau', \rho(\tau')) \seqar \rho'(\tau, \rho(\tau'))}
        }
        }
        }
        \]
        where $\sigma'(P,\sigma(\tau))$ (resp., $\rho'(P,\rho(\tau'))$) are obtained from the IH for $\sigma'(P,Y)$ (resp., $\rho'(P,Y)$) under substitution of $\sigma(\tau)$ for $Y$ (resp., $\rho(\tau')$), cf.~Remark~\ref{rem:gen-identity+substitution}.
        \end{itemize}
        % Further details are in Appendix~\ref{subsec:functoriality-in-muLJ}.
    %     \item if $\sigma$ and/or $\rho$ are formed from smaller types by a connective of $\LJ$, then the corresponding functors are defined as usual.
    %     \item if $\sigma(X) = \mu Y \sigma'(X,Y)$ then $\sigma(P)$ is,
    %     \[
    %     \vlderivation{
    %     \vlin{\lr \mu}{}{\Gamma, \sigma(\tau) \seqar \sigma(\tau')}{
    %     \vlin{\rr \mu}{}{\Gamma, \sigma'(\tau, \sigma(\tau'))\seqar \sigma(\tau')}{
    %     \vliq{\sigma'(P,\sigma(\tau'))}{}{\Gamma, \sigma'(\tau, \sigma(\tau')) \seqar \sigma'(\tau', \sigma(\tau')) }{
    %     \vlhy{\Gamma, \tau \seqar \tau'}
    %     }
    %     }
    %     }
    %     }
    %     \]
    %     \todo{reference variations of iterator}
    %     where $\sigma'(P,\sigma(\tau'))$ is obtained from the IH for $\sigma'(P,Y)$ under substitution of $\sigma(\tau')$ for $Y$, cf.~Remark~\ref{rem:gen-identity+substitution}.
    %     Dually for $\rho(P)$.
    %     \item Similarly for greatest fixed points. 
    % \end{itemize}
\end{definition}

\begin{example}
[Post-fixed point]
\label{ex:post-fixed-in-muLJ}
It is implicit in the rules of $\muLJ$ that $\mu X \sigma(X)$ may be seen as the \emph{least} fixed point of $\sigma(\cdot)$, under a suitable semantics (e.g.\ later in Section~\ref{sec:totality}).
The $\rr \mu $ rule indicates that it is a pre-fixed point, while the $\lr \mu$ rule indicates that it is least among them. 
To see that it is also a post-fixed point $\small\vlinf{\mulunf}{}{\Gamma, \mu X \sigma(X)\seqar \tau}{\Gamma, \sigma(\mu X \sigma (X))\seqar \tau}$ we may use a derivation that mimics standard textbook-style proofs of Knaster-Tarski:
\[
\small
\vlderivation{
\vliin{\lr \mu}{}{\Gamma, \mu X \sigma(X) \seqar \tau}{
    \vliq{\sigma}{}{ \sigma(\sigma(\mu X \sigma(X)))\seqar \sigma(\mu X \sigma(X))}{
    \vlin{\rr \mu}{}{ \sigma(\mu X\sigma(X))\seqar \mu X \sigma(X)}{
    \vlin{\id}{}{ \sigma(\mu X \sigma(X))\seqar \sigma(\mu X \sigma(X))}{\vlhy{}}
    }
    }
}{
    \vlhy{\Gamma, \sigma(\mu X \sigma(X))\seqar \tau}
}
}
\]
Dually, we can derive  $\vlinf{\nurunf}{}{\Gamma \seqar \nu X \sigma(X)}{\Gamma \seqar \sigma(\nu X \sigma(X))}$ as follows: 
\[
\small
\vlderivation{
\vliin{\rr \nu}{}{\Gamma\seqar  \nu X \sigma(X) }
{
    \vlhy{\Gamma \seqar  \sigma(\nu X \sigma(X))}
}
{
    \vliq{\sigma}{}{\sigma(\nu X \sigma(X)) \seqar \sigma(\sigma(\nu X \sigma(X))) }{
    \vlin{\lr \nu}{}{ \nu X \sigma(X)\seqar \sigma(\nu X\sigma(X)) }{
    \vlin{\id}{}{ \sigma(\nu X \sigma(X))\seqar \sigma(\nu X \sigma(X))}{\vlhy{}}
    }
    }
}
}
\]
% for $\nu$-types as pre-fixed points.
% (see~\Cref{app:simple-types-with-fixed-points}, point~\ref{enum:app1}).
% $\vlinf{\nurunf}{}{\Gamma \seqar \nu X \tau(X)}{\Gamma \seqar \tau(\nu X \tau(X))}$. 
\end{example}

\subsection{Computing with derivations}
The underlying computational model for sequent calculi, with respect to the `proofs-as-programs' paradigm, is \emph{cut-reduction}.
In our case this follows a standard set of cut-reduction rules for the calculus $\LJ$.
For the fixed points, cut-reduction is inherited directly from the encoding of fixed points in system $\F$ that induces our rules, cf.~Remark~\ref{rem:mulj-frag-F}.
Following Baelde and Miller~\cite{BaeldeMiller07}, we give self-contained cut-reductions here:
\gianluca{I have put my/Farzad's definition of cut-elimination for lrmu vs rr mu. I commented the notion of application. Maybe some of the discussion above is now obsolete?}
\anupam{16-10-23: it might be worth giving the application-version of cut-reduction here too, at least as an example: it exemplifies how cut-reduction in $\muLJ$ specialises that of second-order logic.}

 \gianluca{Also, most of the following discussion above can be now found in the appendix. Anupam, please check if you are fine with it: NB: contractions above occur because we have $\Delta$ context in the step function. This seems to break cut-elimination in linear logic, hence why Baelde and Miller have context-free step function. 
        Note that it is on this point that the format of the rules must (crucially) differ from that given by F, where the presence of context in the step function does not compromise linearity.
        Under an appropriate semantics, e.g. in term calculi section, or even embedding into circular system, it makes no difference, but we might want to change the rule presentation globally. Update: in fact perhaps this is not best at all, since without a context in the step function it is not clear how to cut-free derive the iterator! it would be nice for the iterator to be cut-free derivable, not least for embedding target of realisability. Under suitable semantics, e.g. term calculi, all choices are equivalent (using cuts).}

\begin{defn}
    [Cut-reduction for fixed points]
    \label{def:appl-and-cut-red-muLJ}
\emph{Cut-reduction} on $\muLJ$-derivations, written $\cutred$, is the smallest relation on derivations including all the usual cut-reductions of $\LJ$ and the reductions in~\Cref{fig:cut-reduction-mu-nu-muLJ-paper}. {As usual, we allow these reductions to be performed on sub-derivations. I.e.\ they are `context-closed'.} 
% 
% \[
% \small
% \begin{array}{rl}
% & \vlderivation{
% \vliin{\cut}{}{\Gamma, \Delta,\Sigma \seqar \tau}{
%     \vltr{\der}{\Gamma \seqar \mu X \sigma(X)}{\vlhy{\ }}{\vlhy{}}{\vlhy{\ }}
% }{
%     \vliin{\lr \mu}{}{\Delta, \Sigma, \mu X \sigma(X) \seqar \tau}{
%         \vltr{\deri}{\Delta,\sigma(\rho ) \seqar \rho}{\vlhy{\ }}{\vlhy{}}{\vlhy{\ }}
%     }{
%         \vltr{\derii}{\Sigma, \rho \seqar \tau}{\vlhy{\ }}{\vlhy{}}{\vlhy{\ }}
%     }
% }
% } \\
% \rewrite & 
% \vlderivation{
% \vliin{\cut}{}{\Gamma, \Delta,\Sigma \seqar \tau}{
%     \vltr{\der}{\Gamma \seqar \mu X \sigma(X)}{\vlhy{\ }}{\vlhy{}}{\vlhy{\ }}
% }{
%     \vliin{\lr \mu}{}{\Delta, \Sigma, \mu X \sigma(X) \seqar \tau}{
%         \vltr{\deri}{\Delta,\sigma(\rho ) \seqar \rho}{\vlhy{\ }}{\vlhy{}}{\vlhy{\ }}
%     }{
%         \vltr{\derii}{\Sigma, \rho \seqar \tau}{\vlhy{\ }}{\vlhy{}}{\vlhy{\ }}
%     }
% }
% }
% \end{array}
% \]
\end{defn}

\begin{figure}
    \centering
    \centering
    \[
    \footnotesize
    \vlderivation{
    \vliin{\cut}{}{\Sigma, \Gamma,\Delta\seqar \tau}
    {
     \vlin{\rr \mu}{}{\Sigma \seqar \mu X \sigma(X)}{
     \vlhy{\Sigma \seqar \sigma(\mu X \sigma(X))}}
     }
     {
  \vliin{\lr \mu}{}{\Gamma,\Delta, \mu X \sigma(X)  \seqar \tau}
  {\vlhy{\Gamma, \sigma(\rho)\seqar \rho}}{\vlhy{\Delta, \rho \seqar \tau}}    
  }
    }
    \qquad
    \quad
    \rightsquigarrow
\]
\vspace{0.2cm}
\[
\footnotesize
\vlderivation{
\vliq{\contr}{}{\Sigma, \Delta,  \Gamma\seqar \tau}
{
\vliin{\cut}{}{\Sigma,  \Delta, \Gamma, \Gamma \seqar \tau}
{
\vliin{\cut}{}{\Sigma, \Gamma, \Gamma \seqar \rho}
{
\vliin{\cut}{}{\Sigma, \Gamma\seqar \sigma(\rho)}
{
\vlhy{\Sigma \seqar \sigma(\mu X \sigma(X))}
}
{
\vliq{\sigma}{}{\Gamma, \sigma(\mu X \sigma(X) ) \seqar \sigma(\rho)}
{
 \vliin{\lr \mu}{}{\Gamma, \mu X \sigma(X)  \seqar \rho}
  {\vlhy{\Gamma, \sigma(\rho)\seqar \rho}}{\vlin{\id}{}{ \rho \seqar \rho}{\vlhy{}}}    
}
}
}
{\vlhy{\Gamma, \sigma(\rho)\seqar \rho}}
}
{
\vlhy{\Delta, \rho \seqar \tau}
}
}
}
\]
\vspace{1cm}
\[
 \footnotesize
\vlderivation{
\vliin{\cut}{}{\Delta, \Gamma , \Sigma \seqar \tau }
{
 \vliin{\rr \nu }{}{\Delta,  \Gamma \seqar \nu X\sigma(X)}{\vlhy{\Delta \seqar  \rho }}{\vlhy{  \Gamma, \rho \seqar \sigma(\rho)}}
 }
 {
 \vlin{\lr \nu }{}{\Sigma,  \nu X\sigma(X) \seqar \tau}{\vlhy{\Sigma , \sigma(\nu X \sigma(X))\seqar \tau}}
 }
 }
 \qquad
 \quad
 \rightsquigarrow
\]
\vspace{0.2cm}
\[
\footnotesize
\vlderivation{
\vliq{\contr}{}{\Delta , \Gamma, \Sigma \seqar \tau}
{
\vliin{\cut}{}{\Delta , \Gamma, \Gamma, \Sigma , \Delta \seqar \tau}
{\vlhy{\Delta \seqar \rho}}
{
\vliin{\cut}{}{\Gamma, \Gamma, \Sigma , \rho \seqar \tau}
{
\vlhy{  \Gamma, \rho \seqar \sigma(\rho)}
}
{
\vliin{\cut}{}{\Gamma ,  \Sigma, \sigma(\rho) \seqar \tau}
{
\vliq{\sigma }{}{\Gamma ,  \sigma(\rho) \seqar \sigma(\nu X\sigma(X))}
{
\vliin{\rr \nu }{}{\Gamma , \rho \seqar \nu X\sigma(X)}{\vlin{\id}{}{\rho \seqar  \rho}{\vlhy{} }}{\vlhy{  \Gamma, \rho \seqar \sigma(\rho)}}
}
}
{\vlhy{\Sigma , \sigma(\nu X \sigma(X))\seqar \tau}}
}
}
}
}
\]
    \caption{Cut-reduction rules for $\mu$ and $\nu$ in $\muLJ$.}
    \label{fig:cut-reduction-mu-nu-muLJ}
    % \caption{Reduction of a key cut between $\rr \mu$ and $\lr \mu$ in $\muLJ$. \todonew{add LHS of the reduction.}}
    \label{fig:cut-reduction-mu-nu-muLJ-paper}
\end{figure}

% Let us henceforth write $P\cutred Q$ if $P$ reduces to $Q$ under a cut-reduction step, and similarly $\cutred^*$ and $\cutconv$ for the reflexive transitive and reflexive symmetric transitive closures of $\cutred$, respectively.

When speaking of (subsets of) $\muLJ$ as a computational model, we always mean with respect to the relation $\cutred^*$ unless otherwise stated.
More precisely:

\begin{defn}[Representability in $\muLJ$]\label{defn:numerals-representability}
We define the \emph{type of natural numbers} as $\N\dfn \mu X(1+ X)$. 
We also define the \emph{numeral} $\cod{n}: \N$  by induction on $n \in \Nat$: 
\[
 \small
\begin{array}{rclrcl}
 \cod 0    & \dfn&
     \vlderivation{
     \vlin{\rr \mu}{}{\seqar \N}{\vlin{ {\rr +}}{}{\seqar \unit + \N}{\uaxiom }}
     }
     &\qquad 
    \cod {n+1} & \dfn & 
    \vlderivation{
    \vlin{\rr \mu }{}{\seqar\N}{\vlin{{\rr +}}{}{\seqar \unit + \N}{
    \vliq{\numeral n}{}{\seqar \N}{\vlhy{}}
    }
    }
    }
\end{array}
\]    
We say that a {(possibly partial)} function $f: \Nat \times \overset{k}{\ldots} \times \Nat \to \Nat$ is \emph{representable} in $\muLJ$ \todonew{or any other system? maybe mention that muLJ is known to be total} if there is  a $\muLJ$-derivation $\der_f: \N , \overset{k}{\ldots} ,    \N \seqar \N  $ s.t., for any $n_1, \ldots, n_k \in \Nat$, the derivation,
\[\small
\vlderivation{
\vliin{\cut}{}{\seqar \N}{\vliq{\numeral n_k}{}{\seqar \N}{\vlhy{}}}{
\vliin{\cut}{}{\vdots}{ \vliq{\numeral n_2}{}{\seqar \N}{\vlhy{}} }{
\vliin{\cut}{}{\N,\overset{k-1}{\ldots},  \N\seqar \N}{ \vliq{\numeral n_k}{}{\seqar \N}{\vlhy{}} }{\vltr{\der_f}{\N,\overset{k}{\ldots},  \N\seqar \N}{\vlhy{\ }}{\vlhy{\ }}{\vlhy{\ }} }
}
}
}
% \qquad
% \cutelim^*
% \qquad 
% \vlderivation{
% \vldr{\cod{f(n_1, \ldots, n_k)}}{\seqar \N}
% }
\]
reduces under $\cutred^*$ to the numeral $\numeral{f(n_1, \dots, n_k)}$, whenever it is defined (otherwise it reduces to no numeral).
In this case we say that $\der_f$ \emph{represents} $f$ in $\muLJ$.
\end{defn}

\begin{figure}
    \[
    \vlinf{\Nzero}{}{\seqar \N}{}
    \qquad 
    \vlinf{\Nsucc}{}{\Gamma \seqar \N}{\Gamma \seqar \N}
    \qquad
    \vliiinf{\lr \N}{}{\Gamma,\Delta, \N \seqar \tau}{\Gamma \seqar \sigma}{\Gamma, \sigma \seqar \sigma}{\Delta, \sigma \seqar \tau}
    % \quad
    % \vliinf{\cnd}{}{\Gamma, \N \seqar \tau}{\Gamma \seqar \tau}{\Gamma, \N\seqar \tau}
    \]
    \caption{Native rules for $\N$ in $\muLJ$.}
    \label{fig:native-n-rules-mulj}
\end{figure}
\begin{example}
    [Native rules for natural number computation]
    `Native' rules for type $\N$ in $\muLJ$ are given in Figure~\ref{fig:native-n-rules-mulj}, all routinely derivable in $\muLJ$, as in Figure~\ref{fig:deriving-nat-rules-in-muLJ}.
    The corresponding `native' cut-reductions, derivable using $\cutred$ cf.~\Cref{fig:cut-reduction-mu-nu-muLJ}, are also routine.
    We shall examine this further in \Cref{sec:term-calculi}.
    \todonew{give cut reductions?}
     \gianluca{Say that recursor is iteration: we can get proper recursion using pairing.}
    Note that, from here we can recover the usual recursor of system $\T$, as shown formally by Clairambault~\cite{Clairambault13interleaving}.
\end{example}

% \subsection{Further examples}

\begin{example}\label{exmp:examples-of-derivations}
The least and greatest fixed point operators $\mu$ and $\nu$ allow us to encode inductive data (natural numbers, lists, etc) and coinductive data (streams, infinite trees, etc). We have already seen the encoding of natural numbers. The type of lists and streams (both over natural numbers) can be represented by, respectively,  $L\dfn \mu X (\unit + (\N \times X))$ and $S= \nu X (\N \times X)$.
Figure~\ref{fig:examples-derivations}, left-to-right, shows the encoding of  $\varepsilon$ and $n::l$ (i.e., the empty list and the operation appending  a natural number to a list). 
Figure~\ref{fig:list+stream-to-stream-muLJ} shows the encoding of a concatenation of a list and a stream into a stream (by recursion over the list with the invariant $S \to S$).\footnote{Both examples were originally given for $\muMALL$ in \cite{Doumane17thesis}.}
\end{example}
% \anupam{could give some example of stream computation here, since we barely deal with greatest fixed points in the formalities}

\begin{figure}
    \centering
     \[
\Nzero\ \dfn \ 
\vlderivation{
\vlin{\rr \mu}{}{\seqar \nat}{
\vlin{\rr +^0}{}{\seqar 1 + \nat}{
\vlin{\rr 1 }{}{\seqar 1}{\vlhy{}}
}
}
}
\qquad
 \Nsucc\ \dfn \ 
\vlderivation{
\vlin{\rr \mu}{}{\Gamma \seqar N}{
\vlin{\rr {+^1}}{}{\Gamma\seqar \unit + \N}
{
\vlhy{\Gamma \seqar \N}
}
}
}
\qquad 
\lr N \ \dfn \ \vlderivation{
\vliin{\lr\mu}{}{\Gamma, \Delta, \N \seqar \tau}
{
\vliin{\lr +}{}{\Gamma, 1 + \sigma \seqar \sigma }
{
\vlin{\lr \unit}{}{\Gamma, \unit \seqar \sigma }{\vlhy{\Gamma \seqar \sigma}}
}
{
\vlhy{\Gamma, \sigma \seqar \sigma}
}
}
{
\vlhy{\Delta, \sigma \seqar  \tau}
}
}
    \]
    \caption{Constructing and destructing natural numbers in $\muLJ$.}
    \label{fig:deriving-nat-rules-in-muLJ}
\end{figure}

\begin{figure}
\[
\small
\vlderivation{
\vlin{\rr \mu}{}{\seqar L}
{
\vlin{\rr {+^0}}{}{\seqar\unit + (\N \times L)}
{
\vlin{\rr \unit}{}{\seqar\unit}{\vlhy{}}
}
}
}
\qquad 
\vlderivation{
\vlin{\rr \mu}{}{\seqar L}
{
\vlin{\rr{+^1}}{}{\seqar \unit +(\N \times L) }
{
\vliin{\times}{}{\seqar \N \times L}
{
\vltr{\cod{n}}{\seqar \N}{\vlhy{}}{\vlhy{}}{\vlhy{}}
}
{
\vltr{\underline{l}}{\seqar L}{\vlhy{}}{\vlhy{}}{\vlhy{}}
}
}
}
}
\]
        \caption{Constructing lists in $\muLJ$.}
    \label{fig:examples-derivations}
\end{figure}

\begin{figure}
    \centering
\[
\small
\vlderivation{
\vlin{\lr \nu}{}{L \seqar S \to S}
{
\vliin{\lr + }{}{\unit +(N \times (S \to S))  \seqar S \to S}
{
\vlin{\lr \unit}{}{\unit \seqar S \to S}
{
\vlin{\rr \to }{}{ \seqar S \to  S}
{
\vlin{\id }{}{S \seqar S}{\vlhy{}}
}
}
}
{
\vlin{\lr \times}{}{N \times (S \to S) \seqar  S \to S}
{
\vlin{\rr \to}{}{N, S \to S \seqar  S \to S}
{
\vlin{\rr \nu}{}{N, S \to S, S \seqar S}
{
\vliin{\rr \times}{}{N, S \to S, S \seqar N \times S}
{
\vliin{\lr \to}{}{S \to S, S \seqar S}
{
\vlin{\id}{}{S \seqar S}{\vlhy{}}
}
{
\vlin{\id}{}{S \seqar S}{\vlhy{}}
}
}
{
\vlin{\id}{}{N \seqar N}{\vlhy{}}
}
}
}
}
}
}
}
\]    \caption{Concatenation of a list and a stream in $\muLJ$.}    \label{fig:list+stream-to-stream-muLJ}
\end{figure}

 \section{A circular version of $\muLJ$}
\label{sec:cmuLJ-prelims}

In this section we shall develop a variation of $\muLJ$ that does not have rules for (co)iteration, but rather devolves such work to the proof structure.
First, let us set up the basic system of rules we will work with:
\begin{definition}
[$\muLJ$ `without (co)iteration']
Write $\muLJnorec$ for the system of all rules in Figures~\ref{fig:sequent-calculus-mulj}, \ref{fig:unfolding-rules} and \ref{fig:unfolding-further-rules} (but not \ref{fig:co-iteration-rules}).
\end{definition}

% \begin{remark}
% [$\muLJnorec$ as a subsystem of $\muLJ$]
% Let us point out that, at the level of derivability, $\muLJnorec$ is certainly a subsystem of $\muLJ$, in light of Example~\ref{ex:post-fixed-in-muLJ}.
% On the other hand the relationship between them computationally is a little more subtle, as we shall see.
% \anupam{do we need this remark, esp here?}
% \end{remark}

\subsection{`Non-wellfounded' proofs over $\muLJnorec$}

`Coderivations' are generated \emph{coinductively} by the rules of a system, dually to derivations that are generated inductively.
I.e.\ they are possibly infinite proof trees generated by the rules of a system.
% We say that a coderivation is \emph{regular} (or \emph{circular}) if it has only finitely many distinct sub-coderivations.

\begin{defn}
[Coderivations]
A ($\muLJnorec$-)\emph{coderivation} $\der$ is a possibly infinite {rooted} tree (of height $\leq \omega$) generated by the rules of $\muLJnorec$.
Formally, we identify $\der$ with a prefix-closed subset of $ \{0,1\}^*$ (i.e.\ a binary tree) where each node  is labelled by an inference step from $\muLJnorec$ such that, whenever $\alpha\in \{0,1\}^*$ is labelled by a step $\vliiinf{}{}{S}{S_1}{\cdots}{S_n}$, for $n\leq 2$, $\alpha$ has $n$ children in $\der$ labelled by steps with conclusions $S_1, \dots, S_n$ respectively. 

We say that a coderivation is \emph{regular} (or \emph{circular}) if it has only finitely many distinct sub-coderivations.
\end{defn}

A regular coderivation can be represented as a finite labelled graph (possibly with cycles) in the natural way.

\begin{figure}
     \centering
 \[
 \small
   % \vlinf{\rr \mu}{}{\Gamma \seqar \mu X. \sigma}{\Gamma \seqar \sigma(\mu X. \sigma)}
   % \qquad 
   \vlinf{\mulunf}{}{\mygreen{\Gamma}, \red{\mu X. \sigma} \seqar \cyan{\tau}}{\mygreen{\Gamma}, \red{\sigma(\mu X.\sigma)} \seqar \cyan{\tau}}
   \qquad 
   \vlinf{\nurunf}{}{\mygreen{\Gamma} \seqar \red{\nu X.\sigma}}{\mygreen{\Gamma} \seqar \red{\sigma(\nu X.\sigma)}}
   % \qquad
   % \vlinf{\lr \nu}{}{\Gamma, \nu X.\sigma \seqar \tau}{\Gamma, \sigma(\nu X. \sigma)\seqar \tau}
   \]
    \caption{Further unfolding rules for $\mu$ and $\nu$.}
    \label{fig:unfolding-further-rules}
\end{figure}

\subsection{Computing with coderivations}

Just like for usual derivations, the underlying notion of computation for coderivations is cut-reduction, and the notion of representability remains the same.
However we must also adapt the theory of cut-reduction to the different fixed point rules of $\muLJnorec$.

\begin{definition}
    [Cut-reduction on coderivations]\label{defn:rrmu-vs-llmu-cmull}
    $\cutrednorec$ is the smallest relation on $\muLJnorec$-coderivations including all the usual cut-reductions of $\LJ$ and the cut-reductions in Figure~\ref{fig:cut-reduction-cmulj-least-greatest},\footnote{Again, we allow these reductions to be applied on sub-coderivations.}.
% {(see Appendix~\ref{app:simple-types-with-fixed-points}, point~\ref{enum:app2})}.
% 
When speaking of (subsets of) coderivations as computational models, we typically mean with respect to $\cutrednorec^*$.
\end{definition}

\begin{figure}
    \centering
\[
    \footnotesize
           \vlderivation{
\vliin{\cut}{}{\Gamma, \Delta \seqar \tau}{
    \vlin{\rr \mu}{}{\Gamma \seqar \mu X \sigma(X) }{\vlhy{\Gamma \seqar \sigma (\mu X \sigma(X))}}
}{
    \vlin{\mulunf }{}{\Delta , \mu X \sigma (X)\seqar \tau}{\vlhy{\Delta, \sigma(\mu X \sigma(X)) \seqar \tau}}
}
}
  \quad \rewrite \quad      
  \vliinf{\cut}{}{\Gamma, \Delta \seqar \tau}{\Gamma \seqar \sigma(\mu X \sigma(X))}{\Delta, \sigma(\mu X \sigma(X)) \seqar \tau}
  \]
  \[
  \footnotesize
   \vlderivation{
\vliin{\cut}{}{\Gamma, \Delta \seqar \tau}{
    \vlin{\nurunf }{}{\Gamma \seqar \nu X \sigma(X) }{\vlhy{\Gamma \seqar \sigma (\nu X \sigma(X))}}
}{
    \vlin{\lr \nu }{}{\Delta , \nu X \sigma (X)\seqar \tau}{\vlhy{\Delta, \sigma(\nu X \sigma(X))}}
}
}
 \quad \rewrite \quad 
 \vliinf{\cut}{}{\Gamma, \Delta \seqar \tau}{\Gamma \seqar \sigma(\nu X \sigma(X))}{\Delta, \sigma(\nu X \sigma(X)) \seqar \tau}
\]    \caption{Cut-reduction for least and greatest fixed points in $\cmuLJ$.}
    \label{fig:cut-reduction-cmulj-least-greatest}
\end{figure}

% It is implicit in all circumstances in this work that subclasses of coderivations we consider are closed under $\cutrednorec$.
% In particular, let us point out that the class of regular coderivations is closed under $\cutrednorec$. \anupam{say more?}

\begin{example}
    [Decomposing the (co)iterators]
    \label{ex:(co)iter-to-cycles}
    The `(co)iterator' rules of Figure~\ref{fig:co-iteration-rules} can be expressed by regular coderivations using only the unfolding rules for fixed points as follows:
    \begin{equation}
    \label{eq:mul-as-circ-coder}
    %     \vlderivation{
    % \vliin{\cut}{}{\Gamma, \Delta, \mu X \sigma(X) \seqar \tau}{
    % \vlin{\mulunf}{\bullet}{\Gamma, \mu X \sigma(X) \seqar \rho}{
    % \vliin{\cut}{}{\Gamma, \sigma(\mu X \sigma(X))\seqar \rho}{
    % \vliq{\sigma}{}{\Gamma, \sigma(\mu X \sigma(X)) \seqar \sigma(\rho)}{
    % \vlin{\mulunf}{\bullet}{\Gamma, \mu X \sigma(X) \seqar \rho}{\vlhy{\vdots}}
    % }
    % }{
    % \vlhy{\Gamma, \sigma(\rho)\seqar \rho}
    % }
    % }
    % }{
    % \vlhy{\Delta, \rho \seqar \tau}
    % }
    % }
    \small
       \vlderivation{
    \vliin{\cut}{}{\Gamma, \Delta, \mu X \sigma(X) \seqar \tau}{
    \vlin{\mulunf}{\bullet}{\Gamma, \mu X \sigma(X) \seqar \rho}{
    \vlin{\contr}{}{\Gamma,  \sigma(\mu X \sigma(X))\seqar \rho}
    {
    \vliin{\cut}{}{\Gamma,\Gamma,  \sigma(\mu X \sigma(X))\seqar \rho}{
    \vliq{\sigma}{}{\Gamma, \sigma(\mu X \sigma(X)) \seqar \sigma(\rho)}{
    \vlin{\mulunf}{\bullet}{\Gamma, \mu X \sigma(X) \seqar \rho}{\vlhy{\vdots}}
    }
    }{
    \vlhy{\Gamma, \sigma(\rho)\seqar \rho}
    }
    }
    }
    }{
    \vlhy{\Delta, \rho \seqar \tau}
    }
    }
    \end{equation}
    Here we mark with $\bullet$ roots of identical coderivations, a convention that we shall continue to use throughout this work.
    Dually for the coiterator:
    \[
\small
       \vlderivation{
    \vliin{\cut}{}{\Gamma, \Delta \seqar \nu X \sigma(X)}
    {
    \vlhy{\Delta \seqar \tau}
    }
    {
    \vlin{\nurunf}{\bullet}{\Gamma, \tau \seqar \nu X \sigma(X)}{
    \vliq{\contr}{}{\Gamma,  \tau\seqar \sigma(\nu X \sigma(X))}
    {
    \vliin{\cut}{}{\Gamma,\Gamma,  \tau\seqar \sigma(\nu X \sigma(X)}
    {
    \vlhy{\Gamma, \tau\seqar \sigma(\tau)}
    }
    {
    \vliq{\sigma}{}{\Gamma, \sigma(\tau) \seqar \sigma(\nu X \sigma(X))}{
    \vlin{\nurunf}{\bullet}{\Gamma, \tau \seqar \nu X \sigma(X)}{\vlhy{\vdots}}
    }
    }
    }
    }
    }
    }
\]

    Moreover, one can verify that this embedding gives rise to a bona fide simulation of $\cutred^*$ by $\cutrednorec^*$. 
    We do not cover the details at this point, but make a stronger statement later in Proposition~\ref{prop:cmulj-simulates-mulj}.
\end{example}

\begin{example}
    [Functors and $\eta$-expansion of identity]
    Thanks to the decomposition of (co)iterators above, we can derive `functors' in $\cmuLJ$, cf.~Definition~\ref{def:functors-in-muLJ}.
    This gives rise to an `$\eta$-expansion' of identity steps, reducing them to atomic form. 
The critical fixed point cases are:
    \[
    \small
    \vlderivation{
    \vlin{\mulunf}{\bullet}{\mu X \sigma(X)\seqar \mu X \sigma(X)}{
    \vlin{\rr \mu}{}{\sigma(\mu X(\sigma(X))\seqar \mu X \sigma(X)}{
    \vliq{\sigma}{}{\sigma(\mu X \sigma(X))\seqar \sigma(\mu X \sigma(X))}{
    \vlin{\lr \mu}{\bullet}{\mu X \sigma(X)\seqar \mu X \sigma(X)}{\vlhy{\vdots}}
    }
    }
    }
    }
    \qquad
    \small
    \vlderivation{
    \vlin{\lr \nu}{\bullet}{\nu X \sigma(X)\seqar \nu X \sigma(X)}{
    \vlin{\rr {\nu'}}{}{\sigma(\nu X(\sigma(X))\seqar \nu X \sigma(X)}{
    \vliq{\sigma}{}{\sigma(\nu X \sigma(X))\seqar \sigma(\nu X \sigma(X))}{
    \vlin{\lr \nu}{\bullet}{\nu X \sigma(X)\seqar \nu X \sigma(X)}{\vlhy{\vdots}}
    }
    }
    }
    }
    \]
Notice that the functors $\sigma$ indicated above will depend on smaller identities, cf.~\Cref{def:functors-in-muLJ}, calling the inductive hypothesis.
    Note that the coderivations above are `logic-independent', and indeed this reduction is common in other circular systems for fixed point logics, such as the modal $\mu$-calculus and $\muMALL$ (see, e.g.,~\cite{BaeldeDS16Infinitaryprooftheory}).
\end{example}

\subsection{A totality criterion}
We shall adapt to our setting a well-known `termination criterion' from non-wellfounded proof theory. First, let us recall some standard  proof theoretic concepts about (co)derivations, similar to those in \cite{BaeldeDS16Infinitaryprooftheory,KuperbergPP21systemT,Das21:CT-preprint,Das21:CT-fscd}.

\begin{defn}
[Ancestry]
\label{defn:ancestry}
Fix a $\muLJnorec$-coderivation $\der$. We say that a type occurrence $\sigma$ is an \emph{immediate ancestor} of a type occurrence $\tau$ in $\der$ if they are types in a premiss and conclusion (respectively) of an inference step and, as typeset in~\Cref{fig:sequent-calculus-mulj},~\Cref{fig:unfolding-rules} and~\Cref{fig:unfolding-further-rules}, have the same colour.
If $\sigma$ and $\tau$ are in some $\mygreen \Gamma$ or $\orange{ \Delta}$, then furthermore they must be in the same position in the list.
\end{defn}

Being a binary relation, immediate ancestry forms a directed graph upon which our correctness criterion is built. 
Our criterion is essentially the same as that from \cite{BaeldeDS16Infinitaryprooftheory}, only for $\muLJ$ instead of $\muMALL$.

\begin{defn}
    [Threads and progress]\label{defn:progressiveness}
    % \todo{define immediate ancestry}
    A \emph{thread} along (a branch of) $\der$ is a maximal path in $\der$'s graph of immediate ancestry.
We say a thread is \emph{progressing} if it is infinitely often principal and has a smallest infinitely often principal formula that is either a $\mu$-formula on the LHS or a $\nu$-formula on the RHS.
A coderivation $\der$ is \emph{progressing} if each of its infinite branches has a progressing thread.
\end{defn}

We shall use several properties of (progressing) threads in Section~\ref{sec:totality} 
which are relatively standard, e.g.~\cite{Koz83:results-on-mu,Studer08:mu-calc,KupMarVen22:graph-reps-mu-forms}.

\begin{defn}
    [Circular system]
    $\cmuLJ$ is the class of regular progressing $\muLJnorec$-coderivations.
\end{defn}

Referencing Example~\ref{ex:(co)iter-to-cycles}, and for later use, we shall appeal to the notion of \emph{simulation} for comparing models of computation in this work. 
Recalling that we construe $\muLJ$ as a model of computation under $\cutred^*$ and $\cmuLJ$ as a model of computation under $\cutrednorec^*$, we have:

\begin{proposition}[Simulation]
\label{prop:cmulj-simulates-mulj}
$\cmuLJ$ simulates $\muLJ$.
\end{proposition}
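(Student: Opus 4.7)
The plan is to build an explicit translation $(-)^\circ$ sending each $\muLJ$-derivation $\der$ of $\Gamma \seqar \tau$ to a $\cmuLJ$-coderivation $\der^\circ$ of the same sequent, and to show that whenever $\der$ represents a function $f$ under $\cutred^*$, so does $\der^\circ$ under $\cutrednorec^*$. On all rules common to the two systems (Figures~\ref{fig:sequent-calculus-mulj} and~\ref{fig:unfolding-rules}), the translation simply reapplies the rule on translated subderivations. On the (co)iteration steps $\lr \mu$ and $\rr \nu$, which are absent from $\muLJnorec$, I would replace each step by its circular decomposition from Example~\ref{ex:(co)iter-to-cycles}, namely Equation~\eqref{eq:mul-as-circ-coder} and its dual. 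Since $\der$ is finite, only finitely many cycles are introduced, so $\der^\circ$ is manifestly regular. Progressiveness is also immediate: every infinite branch of $\der^\circ$ must follow an introduced cycle cofinally, and each such cycle contains a $\mulunf$ step principal on the very $\mu$-formula that was the principal formula of the translated $\lr \mu$ (dually for $\rr \nu$); this yields a progressing thread, and so $\der^\circ \in \cmuLJ$.

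The substantive content is preservation of representability, i.e.\ that $\der^\circ \cutrednorec^* \numeral n$ whenever $\der \cutred^* \numeral n$. Because numerals are built only from rules shared between both systems, $(\numeral n)^\circ = \numeral n$, and the composition of $\der_f$ with input numerals by cuts also commutes with $(-)^\circ$. Hence it suffices to prove a step-wise simulation: whenever $\der \cutred \deri$ we have $\der^\circ \cutrednorec^* \deri^\circ$. All key-cases inherited from $\LJ$, together with the commutative cases, transfer mechanically because $(-)^\circ$ is the identity on the relevant rules. The interesting key-cases are the principal $\mu$/$\nu$ cut-reductions of Figure~\ref{fig:cut-reduction-mu-nu-muLJ-paper}: a $\rr \mu$ meeting a $\lr \mu$ in $\muLJ$ translates to a $\rr \mu$ meeting the circular gadget~\eqref{eq:mul-as-circ-coder}, and a single application of the $\rr \mu$--$\mulunf$ reduction from Figure~\ref{fig:cut-reduction-cmulj-least-greatest} unfolds the cycle once, exposing precisely the cut structure needed to mirror the $\muLJ$ reduct.

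The main obstacle I anticipate is matching the functor $\sigma(\der)$ and the explicit contractions that appear inside the $\muLJ$ reduct with the structure produced on the $\cmuLJ$ side after unfolding. This requires an auxiliary lemma stating that functor construction commutes with translation: $\sigma(\der)^\circ$ agrees, up to $\cutrednorec^*$, with the $\cmuLJ$-functor $\sigma(\der^\circ)$ built as in Definition~\ref{def:functors-in-muLJ} but with $\lr \mu$ and $\rr \nu$ replaced by their circular decompositions. This lemma is proved by straightforward induction on $\sigma$, the only nontrivial cases being the $\mu$ and $\nu$ subexpressions, and it is the single point where the recursive shape of functors interacts with the translation. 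Once it is in hand, the step-wise simulation closes by transitivity, delivering representability preservation and hence the simulation of $\muLJ$ by $\cmuLJ$.
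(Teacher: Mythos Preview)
Your proposal is correct and follows the same approach as the paper's proof sketch: replace each (co)iterator step by its circular decomposition from Example~\ref{ex:(co)iter-to-cycles}, verify regularity and progressiveness, and argue that cut-reduction is simulated. You go considerably further than the paper in detailing the step-wise simulation of $\cutred$ by $\cutrednorec^*$ and in isolating the functor-commutation lemma, both of which the paper leaves implicit under the phrase ``closure of $\cmuLJ$ under its rules''.
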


\begin{proof}[Proof sketch]
    Replace each instance of a (co)iterator by the corresponding regular coderivation in Example~\ref{ex:(co)iter-to-cycles}.
    Note that those coderivations are indeed progressing due to the progressing thread on $\mu X \sigma(X)$ along the unique infinite branch in the case of $\lr \mu$ (dually for $\rr \nu$).
    The statement follows by closure of $\cmuLJ$ under its rules.
\end{proof}

% \anupam{make a remark about whether $\cutrednorec^*$ is simulated by $\cutred^*$ for derivations, and this reduces to a form of extensionality. We have distinguished them by design choice, but could have chosen to eta-expand identities via explicit functor rules to collapse them. Point out that this will (unsurprisingly) not affect representability, cf.~Luckhardt.}

\begin{figure}
    \centering
      \[\small
  \vlinf{\Nzero}{}{\seqar \N}{}
  \qquad 
  \vlinf{\Nsucc}{}{\mygreen \Gamma \seqar \cyan \N}{\mygreen \Gamma \seqar \cyan \N} 
  \qquad 
  \vliinf{\Ncnd}{}{\mygreen{\Gamma}, \red{\N} \seqar \cyan \sigma}{\mygreen\Gamma \seqar \cyan \sigma}{\mygreen \Gamma, \red \N \seqar \cyan \sigma}
    \]
    \[
    \small
      \vlderivation{
      \vliin{\cut}{}{\Gamma \seqar \sigma}
       {
       \vlin{\rr\N^0}{}{\seqar \N}{\vlhy{}}
       }
      {
 \vliin{\Ncnd}{}{\Gamma, \N \seqar \sigma}{
    \vltr{\derr}{\Gamma \seqar \sigma}{\vlhy{\ }}{\vlhy{}}{\vlhy{\ }}
    }{
    \vltr{\derrr}{\Gamma, \N \seqar \sigma}{\vlhy{\ }}{\vlhy{}}{\vlhy{\ }}
    }
      }
      }
      \quad \cutrednorec \  
      \vlderivation{\vltr{\derr}{\Gamma \seqar \sigma}{\vlhy{\ }}{\vlhy{}}{\vlhy{\ }}} 
      \]
      \[
          \hspace{-1em}
      \small
      \vlderivation{
      \vliin{\cut}{}{\Gamma \seqar \sigma}
       {
       \vlin{\Nsucc}{}{\Gamma \seqar \N}{\vltr{\der}{\Gamma \seqar \N}{\vlhy{\ }}{\vlhy{}}{\vlhy{\ }} }
       }
      {
 \vliin{\Ncnd}{}{\Gamma, \N \seqar \sigma}{\vltr{\derr}{\Gamma \seqar \sigma}{\vlhy{\ }}{\vlhy{}}{\vlhy{\ }}}{\vltr{\derrr}{\Gamma, \N \seqar \sigma}{\vlhy{\ }}{\vlhy{}}{\vlhy{\ }}}
      }
      }   
      \ \cutrednorec
      \vlderivation{
      \vliin{\cut}{}{\Gamma \seqar \sigma}
        {
        \vltr{\der}{\Gamma \seqar \N}{\vlhy{\ }}{\vlhy{}}{\vlhy{\ }}
        }{
      \vltr{\derrr}{\Gamma, \N \seqar \sigma}{\vlhy{\ }}{\vlhy{}}{\vlhy{\ }}        
        }
      }
    \]
    \caption{Native inference rules and cut-reduction steps for $\N$ in $\muLJnorec$.}
    \label{fig:native-n-rules+cutreds-muljnorec}
\end{figure}

\begin{example}
    [Revisiting natural number computation]
    Just like for $\muLJ$, we give native rules for $\N$ in $\muLJnorec$, along with corresponding cut-reductions in Figure~\ref{fig:native-n-rules+cutreds-muljnorec}. We just show how to derive the conditional:
 \[
% \Ncnd\ \dfn \ 
\vlderivation
{
\vlin{\lr {\mu'}}{}{\Gamma, \N \seqar \sigma}
{
\vliin{\lr{+}}{}{\Gamma, \unit + \N \seqar \sigma}
{
\vlin{\lr \unit}{}{\Gamma, \unit \seqar \sigma}{\vlhy{\Gamma \seqar \sigma}}
}
{
\vlhy{\Gamma , \N \seqar \sigma }
}
}
}
\]
    
    As before, it is routine to show that these reductions are derivable using $\cutrednorec$.
    
Now, specialising our simulation result to recursion on $\N$, we have  the  following regular coderivation for the recursor of system $\T$ (at type $\sigma$):
    \[
    % \begin{array}{rcl}
    %        \vliinf{\rec \sigma }{}{\Gamma, \N \seqar \sigma}{\Gamma \seqar \sigma}{\Gamma, \N, \sigma \seqar \sigma}  & \dfn & 
           \vlderivation{
            \vliin{\Ncnd }{}{\Gamma, \N \seqar \sigma}
            {
            \vlhy{\Gamma \seqar \sigma}
            }{
            \vliin{\cut}{}{\Gamma, \N \seqar \sigma}
              {
              \vlin{\Ncnd}{\bullet}{\Gamma, \N \seqar \sigma}{\vlhy{\vdots}}
              }{
              \vlhy{\Gamma, \N,\sigma  \seqar \sigma}
              }
            } 
           }
    % \end{array}
    \]
Indeed it is immediate that $\cmuLJ$ contains circular versions of system $\T$ from \cite{Das21:CT-preprint,Das21:CT-fscd,KuperbergPP21systemT}.
% , with $\Nzero$ beting $0$, $\Nsucc$ being $\succ$, and $\Ncnd$ being $\cnd$.
% \todo{better justify threading too}
\end{example}

\subsection{Reduction to the negative fragment}
\label{subsec:reduction-to-neg-frag}
It is folklore that coinductive types can be eliminated using inductive types (possibly at the loss of strict positivity) using, say, a version of the G\"odel-Gentzen negative translation, without affecting the class of representable functions (as long as $\N$ is included as a primitive data type) (see, e.g.,~\cite{avigad1998godel}).
Indeed this translation can be designed to eliminate other `positive' connectives too, in particular $+$.\footnote{Note that the attribution of `positive' or `negative' to a connective is unrelated to that of positive or negative context.}

The same trick does not quite work for coderivations since it introduces cuts globally that may break the progressing criterion in the limit of the translation.
\todonew{could elaborate here}
However a version of the Kolmogorov translation, more well behaved at the level of cut-free proof theory, is well suited for this purpose.
In this section we establish such a reduction from $\cmuLJ$ to its `negative' fragment.
Not only is this of self-contained interest, being more subtle than the analogous argument for $\muLJ$ (and type systems with (co)inductive data types), this will also greatly simplify reasoning about the representable functions of $\cmuLJ$ in what follows, in particular requiring fewer cases in arguments therein.

\begin{defn}[Negative fragments] \label{defn:negative-fragments}
We define $\muLJneg$ as the subsystem of $\muLJ$ using only rules and cut-reductions over $\N, \times, \arrow, \mu$.
In particular we insist on the native rules and cut-reductions for $\N$ from \Cref{fig:native-n-rules+cutreds-muljnorec} to avoid extraneous occurrences of $+$ from $\N$ and remain internal to the fragment.
We define $\muLJnegnorec$ and $\cmuLJneg$ similarly, only as subsystems of $\muLJnorec$-coderivations and their cut-reductions.
\end{defn}

% \begin{remark}
% [Controlling sums]
% \todo{talk about inclusion of native rules for $\N$ to avoid occurrences of $+$ so that `using only $\N,\times,\arrow,\mu$' is well-defined.}
% \todo{somewhere need to say more about the $+$ and its subtlety for totality. Could foreshadow this already here.}
% \anupam{nb: in presence of proof normalisation the $+$ rule issue ceases to be a problem.}
% \end{remark}

The main result of this subsection is:
\begin{proposition}\label{prop:cmulj-into-cmuljminus}
    Any function on natural numbers representable in $\cmuLJ$ is also representable in $\cmuLJneg$.
\end{proposition}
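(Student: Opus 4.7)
The plan is to adapt a Kolmogorov-style negative translation to our circular setting. Treating $\N$ as a distinguished ``falsity type,'' set $\neg \tau := \tau \arrow \N$, and define a translation $\sigma \mapsto \sigma^\N$ on types recursively so that every subformula is wrapped in a double negation. The inductive cases go: $\N^\N := \N$, $X^\N := X$, $\unit^\N := \neg\neg \N$, $(\sigma + \tau)^\N := \neg \neg (\neg\sigma^\N \times \neg\tau^\N)$, $(\sigma \times \tau)^\N := \neg \neg (\sigma^\N \times \tau^\N)$, $(\sigma \arrow \tau)^\N := \neg\neg (\sigma^\N \arrow \tau^\N)$, $(\mu X \sigma)^\N := \neg\neg \mu X \sigma^\N$, and the crucial case $(\nu X \sigma)^\N := \neg \mu X \neg \sigma^\N$ using $\mu$ to encode $\nu$ through negation. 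The first task is to verify inductively that each $\sigma^\N$ remains a well-formed (positive) type of $\cmuLJneg$; the key observation is that an extra negation in the $\nu$-case swaps the polarity of the bound variable, so positivity of $X$ in $\sigma$ yields positivity of $X$ in $\neg \sigma^\N$ that is then bound by $\mu$.

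Next I would lift the translation to coderivations: each $\muLJnorec$-rule admits a local block-derivation in $\cmuLJneg$ on the translated sequents, essentially the usual Kolmogorov simulation of the intuitionistic rules plus a symmetric treatment that turns $\rr \nu$ and $\lr \nu$ into $\lr \mu$ and $\rr \mu$ respectively on the encoded formula $\neg \mu X \neg \sigma^\N$. Because the translation is entirely local — no global cuts or fresh cycles are introduced outside the fixed finite block assigned to each rule — regularity is preserved: if $P$ has only finitely many distinct sub-coderivations, then so does $P^\N$ after fixing, once and for all, the finite block for each rule. One then also needs coercions $c_\N : \seqar \N \arrow \N^\N$ and $d_\N : \seqar \N^\N \arrow \N$ (cut-free, by induction on numerals using $\Ncnd$ and $\Nzero/\Nsucc$) that compute the identity on closed numerals under $\cutrednorec^*$; pre- and post-composing the translated derivation of $\N,\dots,\N \seqar \N$ with these yields a $\cmuLJneg$-derivation of $\N, \dots, \N \seqar \N$ computing the same function, since cut-reduction of the translated proof tracks that of the original block by block.

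The main obstacle is step (iii): verifying that progressiveness is preserved, i.e.\ that every infinite branch of $P^\N$ carries a progressing thread. Because the translation is local, each infinite branch of $P^\N$ projects to an infinite branch of $P$ after collapsing the finite blocks. The original branch has a progressing thread on some smallest infinitely-often-principal formula $\varphi$, which is either $\mu$ on the left or $\nu$ on the right. In the first case $\varphi^\N = \neg\neg\mu X \sigma^\N$ is still a $\mu$-formula on the left of the translated thread; in the second case $\varphi = \nu X \sigma$ with $\varphi^\N = \neg \mu X \neg \sigma^\N$ — the $\rr \nu$ unfoldings become $\lr \mu$ unfoldings of $\mu X \neg \sigma^\N$ after passing through the outermost $\neg$, so the relevant thread reappears as a progressing $\mu$-left thread in $P^\N$. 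Subformula-ordering is preserved by the translation, so the ``smallest infinitely often principal'' condition transports as well. Once this case analysis is discharged, Proposition~\ref{prop:cmulj-into-cmuljminus} follows.
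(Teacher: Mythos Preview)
Your overall architecture matches the paper's: a Kolmogorov-style negative translation, local rule-by-rule gadgets, thread-tracking to preserve progressiveness, and numeral coercions at the boundary. But there is a concrete error in your treatment of $\nu$.

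You set $(\nu X \sigma)^\N := \neg \mu X \neg \sigma^\N$ and claim ``positivity of $X$ in $\sigma$ yields positivity of $X$ in $\neg \sigma^\N$.'' That is the wrong direction. Since $X^\N = X$ and every other clause only inserts double negations, your translation preserves polarity: $X$ positive in $\sigma$ implies $X$ positive in $\sigma^\N$, hence $X$ is \emph{negative} in $\neg\sigma^\N$. Thus $\mu X\,\neg\sigma^\N$ violates the positivity constraint and is not even a type of $\cmuLJneg$. (Try $\sigma = X$: you get $\mu X(X\to\N)$, which is ill-formed.) One extra negation flips polarity once, not twice. The paper's fix is to additionally substitute $\neg X$ for $X$ in the translated body, defining the $\nu$ case as (up to outer negations) $\mu X\,\neg\,\sigma^\N[\neg X/X]$; the substitution contributes the second polarity flip that restores positivity of the bound variable.

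A secondary point: the paper translates a sequent $\Sigma \seqar \tau$ to $\Sigma^\N,\ \tau_\N \seqar \N$ (moving the succedent across as a negated antecedent), which is what makes the per-rule gadgets uniform and makes the thread analysis clean---in particular, a progressing $\nu$-right thread in the source genuinely becomes a progressing $\mu$-left thread in the target. Your sketch gestures at this but does not commit to a sequent-level translation, and without it the thread-preservation argument for $\nu$-right threads is incomplete. Once you fix the $\nu$ clause and adopt the one-sided sequent translation, the remainder of your outline (locality $\Rightarrow$ regularity, block-wise simulation of $\cutrednorec$, and the $\N$ coercions) lines up with the paper's proof.
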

\begin{proof}
[Proof idea]
We give a bespoke combination of a Kolmogorov negative translation and a Friedman-Dragalin `$A$-translation' (setting $A= \N$). We define the translations $\negtrans \cdot$ and $\unnegtrans \cdot$ from arbitrary types to types over $\{\N,\times, \to , \mu\}$ as follows, where $\lnot \sigma \dfn \sigma \arrow \N$:
\[
\begin{array}{r@{\ := \ }l}
\negtrans \sigma & \neg \unnegtrans \sigma\\ 
\unnegtrans X & \negn X \\
    \unnegtrans 1 & \N \\
    \unnegtrans{(\sigma \times \tau)} & \negn (\negtrans \sigma \times \negtrans \tau) \\
    \unnegtrans{(\sigma \to \tau)} & \neg( \negtrans\sigma \to \negtrans \tau) \\
    \unnegtrans{(\sigma+\tau)} & \neg   \negtrans \sigma \times \neg \negtrans \tau \\
    \unnegtrans{(\nu X \sigma)} & \neg \neg \mu X \negn \negtrans\sigma[\negn X / X] \\
    \unnegtrans{(\mu X \sigma)} &  \neg \mu X \negtrans{\sigma}
\end{array}
\]
The translation can be extended to coderivations by mapping every inference rule $\rrule$ to a gadget $\negtrans \rrule$  preserving threads.
Further details are given in~\Cref{subsec:proofs-of-cmulj-into-cmuljminus}.
\end{proof}

% \subsection{Further examples}
% \anupam{could revisit an example of stream computation in $\muLJ$, in particular tracking it through the simulation by $\cmuLJ$ and also the reduction to $\cmuLJneg$.}

\begin{example}\label{exmp:examples-of-coderivations}
The left coderivation of Figure~\ref{fig:examples-coderivations} shows the encoding of a  stream $n_0 :: n_1 :: n_2 \ldots$ by a (not necessarily regular) coderivation. 
Note that this coderivation is regular just if the stream is ultimately periodic.
The right coderivation shows the circular presentation of the  concatenation of a list and a stream into a stream  discussed  in Example~\ref{exmp:examples-of-derivations}. 
Note that, compared to the inductive encoding of this function, the circular one has an arguably more `explicit' computational meaning. 
Both coderivations are progressing, by the  \red{red} progressing threads in their  only infinite branches.

It is worth discussing  how   computation  over streams is simulated in $\cmuLJneg$ via the double negation translation illustrated in Proposition~\ref{prop:cmulj-into-cmuljminus}. 
% Recalling that $\neg \sigma \dfn \sigma  \to \N$,  
The type $S$ of streams is translated into $\neg \neg \neg \mu X \neg \neg \neg  (\negtrans \N \times \neg \neg \neg X)$, for some appropriate translation $\negtrans \N$ of the type for natural numbers. Hence,  computation over streams is simulated by computation over a type of the form  $( \sigma\to\N)\to \N$.
Note that this resembles (and embeds) the type $\N\arrow \N$ for representing streams in system $\T$, so in some sense we can see $\negtrans\cdot$-translation as extending/adapting the embedding of $S$ into $\N \arrow \N$.
\end{example}

\begin{figure}
    \centering
    \[
    \small
    \vlderivation{
\vlin{\rr \nu}{}{\seqar \red{S}}
{
\vliin{\rr \times}{}{\seqar \red{N \times S}}
{
\vltr{\cod {n_0}}{\seqar N}{\vlhy{\ }}{\vlhy{\ }}{\vlhy{\ }}
}
{
\vlin{\rr \nu}{}{\seqar \red{S}}
{
\vliin{\rr \times}{}{\seqar \red{N \times S}}
{
\vltr{\cod {n_1}}{\seqar N}{\vlhy{\ }}{\vlhy{\ }}{\vlhy{\ }}
}
{
\vlin{\rr \nu}{}{\seqar \red{S}}
{
\vlhy{\vdots}
}
}
}
}
}
}
\qquad
\vlderivation{
\vlin{\rr \to}{}{\red{L} \seqar S \to S}
{
\vlin{\lr \mu}{\bullet}{\red{L}, S \seqar S}
{
\vliin{\lr +}{}{\red{\unit +(N \times L}), S \seqar S}
{
\vlin{\lr \unit}{}{\unit, S \seqar S}{\vlin{\id}{}{S \seqar S}{\vlhy{}}}
}
{
\vlin{\rr \nu}{}{\red{N \times L}, S \seqar S}
{
\vlin{\lr \times}{}{\red{N \times L}, S \seqar N \times S}
{
\vliin{\rr \times}{}{N, \red{L}, S \seqar N \times S}
{
\vlin{\id}{}{N \seqar N}{\vlhy{}}
}
{
\vlin{\lr \mu  }{\bullet}{\red{L}, S \seqar S}
{
\vlhy{\vdots}
}
}
}
}
}
}
}
}
    \]
    \caption{Left: pointwise computation of a stream in $\muLJnorec$. Right: concatenation of a list and a stream in $\cmuLJ$.}
    \label{fig:list+stream-to-stream-cmuLJ}
     \label{fig:examples-coderivations}
\end{figure}
 \section{Extensions to (un)typed term calculi}
\label{sec:term-calculi}
In light of the reduction to the negative fragment at the end of the previous section, we shall only consider types formed from $\N, \times, \arrow, \mu$ henceforth.

\subsection{From (co)derivations to (co)terms: rules as combinators}
It will be convenient for us to extend our computational model from just (co)derivations to a larger class of untyped (co)terms.
The main technical reason behind this is to allow the definition of a higher-order computability model necessary for our ultimate totality argument for $\cmuLJ$.
% In the presence of fixed points (or second-order quantifiers), such models require a domain of untyped terms or programs that are nonetheless computationally well behaved (giving rise to `candidates'), in order to handle the inherent impredicativity of second-order quantification in a non-circular way.
At the same time, we obtain a compressed notation for (co)derivations for notational convenience, and indeed carve out typed (conservative) extensions of the proof calculi thusfar considered.

In what follows, we use the metavariables $\rrule $ etc.\ to vary over inference steps of $\muLJneg$ and/or $\muLJnegnorec$, i.e.\ instances of any inference rules in these systems.
% For simplicity of exposition we shall henceforth construe $\muLJ$ (and $\muLJneg$) as the set of their inference steps, as only slight abuse of notation.
\begin{definition}
[(Co)terms~\cite{Das21:CT-fscd}]
 A \emph{coterm}, written $s,t$ etc., is generated coinductively by the grammar:
 \[
 s,t \ \bnf \ \rrule  \ \vert \ s\, t
 \]
I.e.\ coterms are possibly infinite expressions (of depth $\leq \omega$) generated by the grammar above.
A coterm is a \emph{term} if it is a finite expression, i.e.\ generated inductively from the grammar above.
 If all steps in a (co)term are from a system $\mathsf R$, we may refer to it as a $\mathsf R$-(co)term.
% 
% We write $\Coterms$ for the set of coterms and $\Terms$ for the set of terms. \anupam{do we need this? Maybe use minus superscript?} \todo{propose deletion}
 % We follow usual bracketing conventions, e.g.\ writing $stu$ for $((st)u)$.
\end{definition}

Our notion of (co)term is untyped, in that an application $st$ may be formed regardless of any underlying typing. 
(Co)terms will be equipped with a theory that (a) subsumes cut-reduction on (co)derivations; and (b) results in a computational model that is Turing complete.
Before that, however, let us see how (co)derivations can be seen as (co)terms.

\begin{definition}
    [(Co)derivations as (co)terms]
    We construe each $\cmuLJneg$ coderivation as a coterm (and each $\muLJneg$ derivation as a term) by identifying rule application with term application:
    % \begin{itemize}
    %     \item 
        if $\der$ ends with an inference step $\rrule$ with immediate sub-coderivations $\der_1 , \dots, \der_n$ 
        then $\der$ is $\rrule\, \der_1\, \dots \, \der_n$.
    % \end{itemize}
    
    Given a set $A$ of (co)terms, the \emph{closure} of $A$, written $\closure A$, is the smallest set of coterms containing $A$ and closed under application, i.e.\ if $s,t \in \closure A$ then also $st \in \closure A$.
\end{definition}

Of course if $\der$ is a derivation, then it is also a term.
Of particular interest to us in this work will be the class $\closure \cmuLJneg$, essentially finitary applications of progressing regular $\muLJnegnorec$-coderivations.

\begin{example}
    [Iterator coderivation as a regular coterm]
    \label{ex:iter-coder-as-coterm}
    Recalling the decomposition of the iterator as a circular coderivation in Example~\ref{ex:(co)iter-to-cycles}, let us specialise to the variation $    \vlinf{\iter{}}{}{\mu X \sigma(X)\seqar \tau}{\sigma(\tau)\seqar \tau}$.
%     We have that $\iter{}\, \der$ is expressed by the regular coderivation:
%     \[
%     \iter{}'(P):\quad 
%     \vlderivation{
% \vlin{\mulunf}{\bullet}{\mu X \sigma(X) \seqar \tau}{
% \vliin{\cut}{}{\sigma(\mu X \sigma(X))\seqar \tau}{
%     \vliq{\sigma}{}{\sigma(\mu X \sigma(X))\seqar \sigma(\tau)}{
%     \vlin{\mulunf}{\bullet}{\mu X \sigma(X)\seqar \tau}{\vlhy{\vdots}}
%     }
% }{
%     \vltr{P}{\sigma(\tau)\seqar \tau}{\vlhy{\ }}{\vlhy{}}{\vlhy{\ }}
%     % \vliin{\lr \arrow}{}{\sigma(\tau)\arrow\tau,\sigma(\tau)\seqar \tau}{
%     %     \vlin{\id}{}{\sigma(\tau)\seqar \sigma(\tau)}{\vlhy{}}
%     % }{
%     %     \vlin{\id}{}{\tau\seqar \tau}{\vlhy{}}
%     % }
% }
% }
% }
%     \]
 By following Example~\ref{ex:(co)iter-to-cycles}, we can express $\iter{}\, \der$
    by a regular coderivation, say $\iter{}'(\der)$, that, viewed as a coterm, satisfies the (syntactic) equation,
    \begin{equation}
        \label{eq:iter-coder-syntactic-eqn}
        \iter{}'(\der)\ = \ \mulunf\, (\cut\, \sigma(\iter{}'(\der))\, \der)
    \end{equation}
    % where $\sigma(\deri)$ is the functor coderivation from \todo{ref}. 
    Note that $\iter{}'(\der)$ above is indeed a \emph{regular} coterm: it has only finitely many distinct sub-coterms.
\end{example}

\anupam{commented `regular coterms' below}

% \begin{remark}[Regular coterms]\label{rem:reg-coterms}
% Just as we identified the regular coderivations as a natural finitarily representable subset of the coderivations, we may speak of the \emph{regular coterms} too, those with only finitely many distinct sub-coterms.
% For instance $\iter{}'(\der)$ in the example above is regular (as long as $\der$ is).
% Note that, since regular coterms are closed under application, we indeed have that $\closure \cmuLJ$ contains only regular coterms.

% Just as for regular coderivations \anupam{did we say this?} note that every regular coterm can be defined by a finite system of equations between coterms, e.g.\ as in \eqref{eq:iter-coder-syntactic-eqn} above.
% % of the form:
% % \[
% % \begin{array}{rcl}
% %   t_i & = &  \rrule_i\ t_1 \ldots  t_n\ \qquad  0 \leq i \leq n
% % \end{array}
% % \]
% % for some 
% % $t \dfn t_0, t_1, \ldots, t_n$ and for some rules $\rrule_0, \ldots, \rrule_n$. 
% % We will denote with $\Sys t$  the finite system of equations associated with $t$.  
% \end{remark}

% \begin{definition}[Farcs as systems of equations]\label{prop:system-of-regular-equations}
% For every $t\in \farcs$, we denote with $\Sys t$  the finite system of equations between farcs such that, for some 
% $t \dfn t_0, t_1, \ldots, t_n$ and for some rules $\rrule_0, \ldots, \rrule_n$, $\Sys t$ contains the following equations:

% since regular coterms have only finitely many distinct sub-coterms.
% \end{definition}

\subsection{Computational models: theories of (co)terms}
\anupam{need to make clear that we include `open' derivations among (co)terms too? Should check implications of this, probably just a matter of convenience.}

Let us henceforth make the following abbreviations:
\[
\pair \cdot \cdot : 
% \vlderivation{
% \vliin{\rr \times}{}{\sigma, \tau\seqar \sigma \times \tau}{
%     \vlin{\id}{}{\sigma \seqar \sigma}{\vlhy{}}
% }{
%     \vlin{\id}{}{\tau\seqar \tau}{\vlhy{}}
% }
% }
\vlderivation{
\vliin{\rr \times}{}{\seqar \sigma \times \tau}{
    \vlhy{\seqar \sigma}
}{
    \vlhy{\seqar \tau}
}
}
\qquad
\proj i:
\vlderivation{
\vlin{\lr \times}{}{\sigma_0 \times \sigma_1 \seqar \sigma_i}{
\vlin{\wk}{}{\sigma_0, \sigma_1 \seqar \sigma_i}{
\vlin{\id}{}{\sigma_i \seqar \sigma_i}{\vlhy{}}
}
}
}
\]
 % $\Niter$ for $\vliinf{\lr\N}{}{\N\seqar \sigma}{\seqar\sigma}{\sigma\seqar \sigma}$, $\iter{}$ for $\vlinf{\lr \mu}{}{\mu X \sigma(X)\seqar \tau}{\sigma(\tau)\seqar \tau}$,\anupam{need to expand these/explain}, $\injX\sigma $ for $\vlinf{\rr\mu}{}{\seqar \mu X\sigma(X)}{\seqar \sigma(\mu X \sigma(X))}$ and $\proj i$ for,
\[
% \begin{array}{r@{\ :\ }c}
    % \injX\sigma  & 
%     \vlderivation{
% \vlin{\rr\mu}{}{\sigma(\mu X \sigma(X))\seqar \mu X \sigma(X)}{
% \vlin{\id}{}{\sigma(\mu X \sigma(X))\seqar \sigma(\mu X \sigma(X))}{\vlhy{}}
% }
% }  
% \\
   % \iter{}  & 
   % \vlderivation{
   % \vlin{\lr \mu}{}{\sigma(\tau ) \arrow \tau , \mu X \sigma(X)\seqar \tau}{ 
   % \vliin{\lr \arrow}{}{\sigma(\tau) \arrow \tau, \sigma(\tau)\seqar \tau}{
   %  \vlin{\id}{}{\sigma(\tau)\seqar \sigma(\tau)}{\vlhy{}}
   % }{
   %  \vlin{\id}{}{\tau \seqar \tau}{\vlhy{}}
   % }
   % }
   % } 
   % \\
   % \Niter  & 
   % \vlderivation{
   % \vliiin{\lr \N}{}{\sigma, \sigma \arrow \sigma, \N \seqar \sigma}{ 
   %  \vlin{\wk}{}{\sigma, \sigma \arrow \sigma \seqar \sigma}{
   %  \vlin{\id}{}{\sigma \seqar \sigma}{\vlhy{}}
   %  }
   % }{
   %  \vlin{\wk}{}{\sigma, \sigma \arrow \sigma, \sigma \seqar \sigma}{
   %  \vliin{\lr \arrow}{}{\sigma \arrow \sigma, \sigma \seqar \sigma}{
   %      \vlin{\id}{}{\sigma \seqar \sigma}{\vlhy{}}
   %  }{
   %      \vlin{\id}{}{\sigma \seqar \sigma}{\vlhy{}}
   %  }
   %  }
   % }{
   %  \vlin{\id}{}{\sigma\seqar \sigma}{\vlhy{}}
   % }
   % }
% \end{array}
\injX\sigma : \vlinf{\rr\mu}{}{\seqar \mu X\sigma(X)}{\seqar \sigma(\mu X \sigma(X))}  
\qquad
\iter{\sigma} :  \vlinf{\lr \mu}{}{\mu X \sigma(X)\seqar \tau}{\sigma(\tau)\seqar \tau}
\]
\[
\numeral 0 : \Nzero \qquad \succ : \vlinf{\Nsucc}{}{\seqar \N}{\seqar \N}\qquad 
\Niter : 
   \vliinf{\lr\N}{}{\N\seqar \sigma}{\seqar\sigma}{\sigma\seqar \sigma}
\]
for $i \in \{0,1\}$.
We may omit types from subscripts when unimportant.
When writing (co)terms using the derivations above, we employ the convention that $\proj i$ and $\succ$ bind stronger than other applications, i.e.\ we write simply $t\, \proj i u$ for $t\, (\proj i u)$ and $t\, \succ u$ for $t\, (\succ\, u)$.

\todo{explain why we factor the $\times$ interactions (and even $\mu$ too) through other constants/gadgets: $\times$ to facilitate totality argument, and $\mu$ for similarity/to slightly facilitate realisability (though not really since we need bespoke iterators there).}
\todo{explain that we don't include $\inj$ for $\N$ to avoid $+$, instead this is hardcoded into the rules}
\anupam{recall that, apart from in, the actual types are not important due to untypedness.}

When referring to an arbitrary instance of a rule, the specification should be understood to be as originally typeset, unless otherwise indicated.
In particular, we follow this convention to define our notion of reduction on coterms:

\begin{figure}
    \centering
\[
\begin{array}{rcl}
  \ax \  x   &\rewrite  & x \\
  % \lr \unit \ t \ \vec x \ y  &\rewrite& t \ \vec x  \\
\ex \ t \ \vec x \ x \ y \ \vec y&\rewrite& t\  \vec x  \ y \ x \ \vec y  \\
\wk \ t \  \vec x\  x     & \rewrite & t\  \vec x\\
\contr \ t \ \vec x\ x &\rewrite&  t \ \vec x \ x \ x \\
\cut \ s \ t \ \vec x\ \vec y &\rewrite& t \ \vec y \ (s \ \vec x)
% \\
% \rr \imp \ t  \ \vec x \ x &\rewrite& t  \ \vec x \ x \\ 
\end{array}
\qquad
\begin{array}{rcl}
     {\rr \times \ s \ t \ \vec x \ \vec y }&\rewrite& \pair {  s\, \vec x}{t\,  \vec  y}
\\
 \lr \times \ t \, \vec x  \,y & \rewrite &  t\ \vec x \ \proj 0 y\ \proj 1 y 
 \\
 \proj i \pair {x_0} {x_1} & \rewrite & x_i \\
 \rr \imp \, t  \ \vec x \ x  &\rewrite&  t  \ \vec x \ x \\ 
\lr \imp \, s \ t  \ \vec x\ \vec y \ z  &\rewrite& t \ \vec y \ (z  \, (s \ \vec x )) 
%  \sem{\rr \mu} \ t \ \vec x &\betae&  t \ \vec x \\
%  \sem{\lr \mu} \ t \ \vec x &\betae&  t \ \vec x \\
%   \sem{\rr \nu} \ t \ \vec x &\betae&  t \ \vec x \\
%  \sem{\lr \nu} \ t \ \vec x &\betae&  t \ \vec x 
% \sem{\lr \mu} \ t \ \vec x \ (\sem{\rr \mu} \ u \ \vec x)  &=& t \ \vec x \ (A (\sem{ \lr \mu } )\ \vec x \ (u \ \vec y)) \\
% \sem{\lr \nu} \ t \ \vec x \ (\sem{\rr \nu} \ s \ \vec y \ z)&=& t \ \vec x\  (A(\sem{\rr \nu}) \ \vec y \ (s \ \vec y \ z)) 
\end{array}
\]
    \caption{Reduction for $\LJneg$ (both $\reduces$ and $\reducesnorec$). }
    \label{fig:reduction-ljneg}
\end{figure}

\begin{figure}
    \centering
    \[
\begin{array}{rcl}
 \rrule \ t \ \vec x &\reducesnorec&  t \ \vec x \qquad  \rrule\in \{\rr \mu, \mulunf \} 
\end{array}
\]
    \caption{Reduction for least fixed point rules in $\muLJnorec$. }
    \label{fig:reduction-mu-norec}
\end{figure}

\begin{figure}
    \[
    \begin{array}{r@{\ \reduces\ }l}
         \lr \mu \, s \, t\, \vec x\, \vec y\, z & t\, \vec y\, (\iter{}\, (s\, \vec x)\, z) \\
         \rr \mu\, t\, \vec x & \injX\sigma (t\, \vec x) \\
         \iter{}\, t\, x & x\, t \\
         \injX \sigma\,  x\, t & t\, (\sigma(\iter\, t)\, x)
         % \iter\, t\, (\injX\sigma\, x) & t\, (\sigma(\iter\, t)\, x)
    \end{array}
    \]
    \caption{Reduction for least fixed points in $\muLJ$.}
    \label{fig:reduction-mu-iter}
\end{figure}

\begin{figure}
  \[
   \begin{array}{r@{\ \reduces\ }l}
         \lr \N\, s\, t\, u\, \vec x\, \vec y\, z & u\, \vec y\, (\Niter\, (s\, \vec x)\, (t\, \vec x)\, z) \\
        { \Nsucc \, t \, \vec x  }&{ \succ \, (t\, \vec x) } \\
         \Niter\, s\, t\, \numeral 0 & s \\
         \Niter\, s\, t\, \succ x & t\, (\Niter \, s\, t\, x)
    \end{array}
  \]
    \caption{Reduction for $\N$ in $\muLJneg$}
    \label{fig:reduction-N-muLJ}
\end{figure}

\begin{figure}
   \[
   \begin{array}{r@{\ \reducesnorec\ }l}
        \Ncnd \, s\, t\, \vec x\, \numeral 0 & s\, \vec x \\
        \Ncnd \, s\, t\, \vec x\, \succ y & t\, \vec x\, y
   \end{array}
   \]
    \caption{Reduction for $\N$ in $\muLJnegnorec$.}
    \label{fig:reduction-N-muLJnorec}
\end{figure}

\begin{definition}
    [Theories]
    We define two (context-closed) reduction relations on (co)terms: 
    \begin{itemize}
        \item $\reduces$ is generated by the clauses in Figures~\ref{fig:reduction-ljneg}, \ref{fig:reduction-mu-iter} and \ref{fig:reduction-N-muLJ}.
        \item $\reducesnorec$ is generated by the rules in Figures~\ref{fig:reduction-ljneg}, \ref{fig:reduction-mu-norec} and \ref{fig:reduction-N-muLJnorec}. 
    \end{itemize}
    In all cases the lengths of vectors $\vec x$ and $\vec y$ match those of the relevant contexts $\Gamma$ and $\Delta$ from the original typesetting of the rules, i.e.\ from \Cref{fig:sequent-calculus-mulj,fig:co-iteration-rules,fig:unfolding-rules,fig:native-n-rules-mulj,fig:native-n-rules+cutreds-muljnorec}.
    Note that the use of both variables and term metavariables in these reduction rules is purely to aid parsing. All reduction rules are closed under substitution.
    
    When referring to (fragments of) $\closure \muLJneg$ as a computational model, we typically mean with respect to $\reduces$, and when referring to (fragments of) $\closure \cmuLJneg$ as a computational model, we typically mean with respect to $\reducesnorec$. {However we also consider a (weakly) extensional version of  $\convertsnorec$:}
    \begin{itemize}
        \item $\convertsnoreceta$ is the closure of $\convertsnorec$ under the rule   $\vlinf{\extensional}{}{ t\convertsnoreceta t'}{t\, x \convertsnoreceta  t'\, x}$.
    \end{itemize}
    Above $x$ must be a fresh variable, not a general (co)term.
\end{definition}

Admitting some extensionality is not necessary to reason about representability, since extensionality can be eliminated for low type levels, but simplifies some of the theorem statements.

\begin{example}
[Iteration equations]
The fundamental equation for iteration is indeed derivable by $\reduces$:
% \[
% \iter{} \, t\, (\injX \sigma \, x) \ =\ t\, (\sigma(\iter\, t)\, x)
% \]
% is indeed derivable using $\reduces$:
\[
\begin{array}{r@{\ \reduces\ }l}
     \iter{}\, P\, (\injX\sigma\, x) & \injX \sigma \, x\, P  \\
     & P\, (\sigma(\iter\, P)\, x)
\end{array}
\]
For $\reducesnorec$, recalling Example~\ref{ex:iter-coder-as-coterm} and using its notation,
% given some $P:\sigma(\tau) \seqar \tau$ let us write $\iter{}'(P)$ for the (progressing, regular) coderivation:
% \[
% \vlderivation{
% \vlin{\mulunf}{}{\mu X \sigma(X) \seqar \tau}{
% \vliin{\cut}{}{\sigma(\mu X \sigma(X))\seqar \tau}{
%     \vliq{\sigma}{}{\sigma(\mu X \sigma(X))\seqar \sigma(\tau)}{
%     \vlin{\mulunf}{\bullet}{\mu X \sigma(X)\seqar \tau}{\vlhy{\vdots}}
%     }
% }{
%     \vltr{P}{\sigma(\tau)\seqar \tau}{\vlhy{\ }}{\vlhy{}}{\vlhy{\ }}
%     % \vliin{\lr \arrow}{}{\sigma(\tau)\arrow\tau,\sigma(\tau)\seqar \tau}{
%     %     \vlin{\id}{}{\sigma(\tau)\seqar \sigma(\tau)}{\vlhy{}}
%     % }{
%     %     \vlin{\id}{}{\tau\seqar \tau}{\vlhy{}}
%     % }
% }
% }
% }
% \]
% Now 
we can simulate the iteration equation for $\iter\, P$ with $\iter{}'(P)$:
\[
\begin{array}{r@{\ \reducesnorec\ }ll}
     \iter{}'(P)\, (\injX\sigma\, x) & \iter{}'(P)\, x & \text{by $\injX\sigma$ reduction} \\
     & \cut\, \sigma(\iter{}'(P))\, P\, x & \text{by $\mulunf$ reduction} \\
     & P\, (\sigma(\iter{}'(P))\, x) & \text{by $\cut$ reduction}
\end{array}
\]
\anupam{point out that, at this juncture, $\reduces$ and $\reducesnorec$ do not agree/are not compatible, unlike $\cutred^* \subseteq \cutrednorec^*$? check this.}
\end{example}

\todo{need to mention clashing reductions and problem of confluence. we address this shortle by giving an interpretation in lambda terms.}

More importantly for us, our notion of extensional reduction on coterms subsumes that of cut-reduction on coderivations.
Since we have identified coderivations as coterms, we may state this rather succinctly, constituting the main result of this subsection:
\begin{theorem}
[Extensional reduction includes cut-reduction]
\label{thm:reduction-inludes-cut-elimination}
% Let $\der, \der'$ be coderivations of $\seqar \N$ in $\cmuLJneg$. If $\der \cutconvnorec\der'$ then $ \der \convertsnorec\der'$. 
${\cutrednorec}\subseteq {\convertsnoreceta}$.
% $\cutconvnorec \ \subseteq\  \extendedeq$.
\end{theorem}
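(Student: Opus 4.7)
The plan is to proceed by induction on the generation of $\cutrednorec$. Since cut-reduction on coderivations is defined as the context closure of a finite set of primitive cut-reduction rules, and since $\convertsnoreceta$ is automatically closed under application contexts (as $\reducesnorec$ is context-closed, and both congruence under application and symmetric-transitive closure preserve this), it suffices to verify the inclusion for each primitive cut-reduction rule $\der \cutrednorec \der'$, namely the standard $\LJ$ reductions, the principal reduction for $\rr\mu$ against $\mulunf$ (from Figure~\ref{fig:cut-reduction-cmulj-least-greatest}), and the native reductions for $\N$ (from Figure~\ref{fig:native-n-rules+cutreds-muljnorec}).

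The core tactic for each case is uniform: regard $\der$ and $\der'$ as coterms, apply both to a sequence of fresh variables $\vec x$ matching the endsequent's context, and compute reducts under $\reducesnorec$ using the rules of Figures~\ref{fig:reduction-ljneg}, \ref{fig:reduction-mu-norec} and \ref{fig:reduction-N-muLJnorec}. We expect both sides to rewrite to a common coterm, yielding $\der\, \vec x \convertsnorec \der'\, \vec x$, from which the $\extensional$ rule delivers $\der \convertsnoreceta \der'$. For the principal $\mu$-case, $\cut\,(\rr\mu\, s)\,(\mulunf\, t)\,\vec x$ rewrites via the $\cut$-rule to $(\mulunf\, t)\, \vec x\, ((\rr\mu\, s)\, \vec x)$ and then, using the two $\reducesnorec$-clauses for $\rr\mu$ and $\mulunf$, to $t\, \vec x\, (s\, \vec x)$, which is exactly the reduct of $\cut\, s\, t\, \vec x$. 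Analogous synchronous computations dispatch the $\N$-cases (keyed by $\numeral 0$ and $\succ$) and the structural $\LJ$-cases (axiom, contraction, weakening, and the principal cases for $\times$ and $\arrow$). Commutative cases, where a cut permutes past another rule, likewise converge after the outer rule and the inner cut are both evaluated on appropriate variables.

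The main subtlety will be the handling of contexts in commutative cases, where the cut formula sits on one side of a branching rule (such as $\lr\arrow$ or $\rr\times$) whose premises inherit only part of the ambient cedent. Here, a naive direct reduction on unapplied coterms does not obviously fire, because the coterm reduction rules for logical constants are tuned to fully-applied occurrences; the $\extensional$ rule is essential precisely because it lets us discharge the equality at the level of applied coterms and then lift it back. A secondary bookkeeping issue is that our reduction rules are written with explicit vectors of variables corresponding to the cedents in the original typesetting of each rule; we will need to verify that the splittings of $\vec x$ between the left and right subderivations in $\cut$ (and in $\lr\arrow$, $\rr\times$) agree on both sides of each primitive rewrite, which is a purely notational matching once the convention is fixed. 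The argument then closes by routine context-closure, establishing $\cutrednorec \subseteq \convertsnoreceta$ as required.
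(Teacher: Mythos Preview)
Your approach is essentially the paper's: apply both coderivations to fresh variables, reduce each side under $\reducesnorec$ to a common coterm, and then invoke the $\extensional$ rule to lift the equality back to the unapplied forms.

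There is one small but genuine imprecision. You say you will apply variables ``matching the endsequent's context'', but this is not always enough: in the principal $\rr\to$/$\lr\to$ case the left-hand side produces a subterm $\rr\to\,\der_1\,\vec x$, and the reduction rule for $\rr\to$ only fires on a further application, i.e.\ $(\rr\to\,\der_1\,\vec x)\,x^*\reducesnorec \der_1\,\vec x\,x^*$. Since the conclusion $\Gamma,\Delta,\Sigma\seqar\gamma$ supplies only $|\Gamma|+|\Delta|+|\Sigma|$ inputs, you need at least one extra fresh variable $x^*$ beyond the endsequent context to complete the calculation. The paper handles this by phrasing the claim as ``for some $n$ sufficiently large'' rather than ``for $n$ equal to the length of the context'', and then applying $\extensional$ that many times. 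Your discussion of subtleties locates the need for extensionality in the commutative cases, but the actual pinch point is this principal $\to$ case; once you allow over-application, the rest of your plan goes through exactly as you describe.
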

% \begin{IEEEproof} [Proof idea]
% We show that, if $\der \cutrednorec \der'$ then, for some $n \geq 0$ sufficiently large, and for any $t_1, \ldots, t_n$, it holds that $\der\, t_1 \, \ldots \,  t_n\convertsnorec \der'\, t_1 \, \ldots \,  t_n $. We then conclude by repeatedly applying the rule $(\eta)$.
% % We prove a more general statement. If $\der$ is a coderivation of $\sigma_1, \ldots, \sigma _n \seqar \gamma$ and $\der \cutrednorec  \der'$ then, for any $t_1, \ldots, t_n \in \closure \cmuLJneg$, it holds that  $\der\, t_1 \, \ldots \,  t_n\convertsnorec \der'\, t_1 \, \ldots \,  t_n $
% % \[
% % \der\, t_1 \, \ldots \,  t_n\convertsnorec \der'\, t_1 \, \ldots \,  t_n 
% % \]
% Details are repored in~\Cref{app:proofs-of-sec-term-calcili}.
%  \gianluca{Anupam. Have a look at it.}  
% \end{IEEEproof}
\begin{proof}
    We show that, if $\der \cutrednorec \der'$ then, for some $n \geq 0$ sufficiently large, and for any $x_1, \ldots, x_n$, $\der\, x_1 \, \ldots \,  x_n\convertsnorec \der'\, x_1 \, \ldots \,  x_n $. We then conclude by repeatedly applying the rule $(\extensional)$. \anupam{would be easier to not have two distinct steps and just proceed directly.}
% \anupam{I changed the $\vec t$ to variables, it is important for the extensionality rule to make sense. shold change below too.}

% Let $\der$ be a coderivation of $\sigma_1, \ldots, \sigma _n \seqar \gamma$. If t $\der \cutrednorec  \der'$ then, for any $t_1, \ldots, t_n \in \closure \cmuLJneg$: 
% \[
% \der\, t_1 \, \ldots \,  t_n\convertsnorec \der'\, t_1 \, \ldots \,  t_n 
% \]

Suppose $\der \cutrednorec  \der'$. It suffices to consider the case where  the last rule of $\der$ is the cut rule rewritten by the  cut-reduction step. Indeed, the latter implies the general statement by appealing to the context closure of $ \reducesnorec$. We only consider some relevant cases. 

If $\der$ has the form,
\[
\vlderivation{
\vliin{\cut}{}{\Gamma, \Delta, \Sigma \seqar \gamma}
   {\vliin{\rr \times}{}{\Gamma,\Delta \seqar \sigma \times \tau}{\vldrs{\der_1}{\Gamma \seqar \sigma }}{\vldrs{\der_2}{\Delta \seqar \tau}}
   }
   {
\vlin{\lr \times}{}{\Sigma , \sigma \times \tau \seqar \gamma}{\vldrs{\der_3}{\Sigma, \sigma, \tau \seqar \gamma}}
   }
   }
\]
we have:
\[
\arraycolsep=2pt
\begin{array}{rcl}
\cut \, (\rr \times \, \der_1\, \der_2)\, (\lr \times\, \der_3) \, \vec x\, \vec y\, \vec z&\reducesnorec&
\lr \times \, \der_3 \, \vec z\, (\rr \times \, \der_1 \, \der _2\, \vec x\, \vec y)\\
& \reducesnorec & \lr \times \, \der_3 \, \vec z\, \pair{\der_1 \, \vec x}{\der_2\, \vec y}\\
% & \reducesnorec &  \der_3 \, \vec u\, (\proj 0{\pair{\der_1 \, \vec t}{\der_2\, \vec s}})\, (\proj 1 {\pair{\der_1 \, \vec t}{\der_2\, \vec s}})\\
& \reducesnorec^*  &   \der_3 \, \vec z\, (\der_1\, \vec x)\, (\der_2\, \vec y)\\ \\ 
\cut \, \der_2 \, (\cut \, \der_1\, \der_3)\, \vec x\, \vec y\, \vec z  &\reducesnorec & 
   (\cut \, \der_1 \, \der_3)\, \vec x\, \vec z\, (\der_2\, \vec y)\\
  & \reducesnorec  &    \der_3\, \vec z\, (\der_1 \, \vec x)\, (\der_2\, \vec y)
\end{array}
\]

If $\der$ has the form,
\[
\vlderivation{
\vliin{\cut}{}{\Gamma, \Delta, \Sigma \seqar \gamma}
   {
   \vlin{\rr \to}{}{\Gamma \seqar \sigma \to \tau}{\vldrs{\der_1}{\Gamma, \sigma \seqar \tau}}
   }{
   \vliin{\lr \to}{}{\Delta, \sigma \to \tau, \Sigma\seqar \gamma}{\vldrs{\der_2}{\Delta \seqar \sigma}}{\vldrs{\der_3}{\Sigma, \tau \seqar \gamma}}
   }
}
\]
we have:
% \[\arraycolsep=2pt
% \begin{array}{rcl}
% \cut \, (\rr \to \der_1) \, (\lr \to  \der_2\, \der_3) \, \vec x\, \vec y \, \vec z\, x^*&\reducesnorec& \lr \to  \der _2 \, \der_3 \, \vec y \, \vec z\, (\rr \to  \der_1\, \vec x)\, x^*\\
% &\reducesnorec &  \lr \to \, \der _2 \, \der_3 \, \vec y \, \vec z\, (\der_1\, \vec x)\, x^*\\   
% &\reducesnorec &
% \der_3 \, \vec z\, (\der_1 \, \vec x\, (\der_2 \, \vec y))\, x^* \\ \\
% \cut \, (\cut \, \der_2\, \der_1)\, \der_3\, \vec x\, \vec y \, \vec z\, x^*&\reducesnorec & \der_3\, \vec z\, (\cut \, \der_2\, \der_1)\, \vec x\, \vec y\\
% &\reducesnorec & 
% \der_3 \, \vec z\, (\der_1 \, \vec x\, (\der_2 \, \vec y))\, x^*
% \end{array}
% \]
\[\arraycolsep=2pt
\begin{array}{rcl}
\cut\, (\rr \to \der_1)\, (\lr \to  \der_2\, \der_3)\, \vec x\, \vec y\, \vec z & \reducesnorec & \lr \to  \der_2\, \der_3\, \vec y\, (\rr \to \der_1\, \vec x)\, \vec z \\
    &\reducesnorec & \der_3 \, \vec z \, (\rr \to \der_1 \, \vec x \, (\der_2\, \vec y)) \\
    &\reducesnorec & \der_3 \, \vec z \, ( \der_1 \, \vec x \, (\der_2\, \vec y)) 
    \\
    \\
\cut \, (\cut \, \der_2\, \der_1)\, \der_3\, \vec x\, \vec y \, \vec z &\reducesnorec & \der_3\, \vec z\, (\cut \, \der_2\, \der_1\, \vec x\, \vec y)\\
&\reducesnorec & 
\der_3 \, \vec z\, (\der_1 \, \vec x\, (\der_2 \, \vec y))
\end{array}
\]
% Notice that the  case above is the one requiring a sufficiently large number of applications $x_1 \, \ldots \,  x_n$, as we  crucially require an extra input $x^*$ to fire the reduction step $(\rr \to \der_1\, \vec x)\, x^* \reducesnorec (\der_1\, \vec x)\, x^*$. 

If $\der$ has the form,
\[
\vlderivation{
\vliin{\cut}{}{\Gamma, \Delta \seqar \gamma}
    {
    \vlin{\rr \mu}{}{\Gamma \seqar \mu X \sigma(X)}{\vldrs{\der_1}{\Gamma \seqar \sigma (\mu X \sigma(X))}}
    }{
    \vlin{\lr \mu ' }{}{\Delta, \mu X \sigma (X) \seqar \gamma}{\vldrs{\der_2}{\Delta, \sigma (\mu X \sigma(X))\seqar \gamma}}
    }
}
\]
we have:
\[
\arraycolsep=2pt
\begin{array}[b]{rcl}
\cut \, (\rr \mu \, \der_1)\, \lr \mu' \, \der_2\, \vec x\, \vec y&\reducesnorec &  (\lr \mu' \, \der_2)\, \vec y \, (\rr \mu \, \der_1\, \vec x)\\
&\reducesnorec^* &   \der_2 \, \vec y \, ( \der_1\, \vec x)\\ 
\noalign{\medskip}
\cut \, \der_1 \, \der_2 \, \vec x\, \vec y&\reducesnorec &   \der_2 \, \vec y \, ( \der_1\, \vec x) 
\end{array}
\qedhere
\]
\end{proof}

 \subsection{An embedding into $\lambda$-terms}
 While the significant technical development of this work involves `totality' arguments, e.g.\ in Section~\ref{sec:totality} showing that the representable partial functions of $\cmuLJneg$ (under $\convertsnoreceta$) are total, we better address \emph{determinism} too.
 
 As it stands, $\convertsnorec$ and $\convertsnoreceta$ may fire distinct reductions on the same (co)terms, so there is a priori no guarantee that the output of representable functions is unique.
However
% {In particular we show `uniqueness of the output' for the number-theoretic functions represented in any system by means of any rewriting relation we have introduced so far. The `totality' of such functions is already established thanks of the totality argument using the HR predicates.  
this can  be shown by  defining a straightfoward  interpretation of coterms of $\closure \cmuLJneg$ into the (untyped) $\lambda$-calculus. We shall prove that   the (untyped) $\lambda$-calculus, under $\beta\eta$-reduction, simulates $\closure\cmuLJneg$ under $\reducesnoreceta$. This simulation relies on the fact that we can express regular coterms as a finite system of equations, which are known to always have solutions in the (untyped) $\lambda$-calculus.
As the techniques are rather standard, we shall be quite succinct in the exposition.

% \begin{theorem}
%     The (untyped) $\lambda$-calculus, under $\beta\eta$-reduction, simulates $\closure\cmuLJneg$ under $\reducesnoreceta$.
% \end{theorem}
% This simulation relies on the fact that we can express regular coterms as a finite system of equations, which are known to always have solutions in the (untyped) $\lambda$-calculus.
% % (details are in Appendix~\ref{subsec:proofs-of-uniqueness}).

% We shall work with the untyped lambda calculus $\Lambda$ as our underlying programming language.

\anupam{I commented a remark below that seems completely out of place.}
% \begin{remark}
% Let us point out that it would not make any difference to take just the natural numbers as our underlying model (inducing a Kleene-style type structure).
% However the current exposition avoids the need for bureaucratic codings and uniformity issues. 
% \end{remark}

% Moreover,  can show that the relations on natural-numbers represented by $\closure \muLJneg$ are, indeed, functions. 

% We shall assume a set $\var$ of (`first-order' or `term') \emph{variables}, written $x,y$ etc.
% We shall use the same symbols for individual variables when working in arithmetic later.
\emph{$\lambda$-terms}, written $s,t$ etc., are generated as usual by:
\[
s,t \quad ::= \quad 
x \quad | \quad (t\, s) \quad | \quad \lambda x \,  t 
\]
We write $\Lambda$ for the set of all terms.
% For this (and later) notion(s) of term we follow usual bracketing conventions, e.g.\ writing $stu$ instead of $((st)u)$, and we assume that $\lambda $ binds as strongly as possible (unless a dot is used, e.g.\ $\lambda x.\, t\, s$ should be read $\lambda x \, (t\, s)$).
The notion of `free variable' is defined as expected, and we write $\fv t $ for the set of free variables of the term $t$.

\newcommand{\doublepair}[2]{\langle \! \langle {#1}, {#2}\rangle \! \rangle}
\newcommand{\doubleproj}[2]{\pi_{#1}\,  {#2}}
\newcommand{\godel}[1]{\lceil {#1} \rceil}

\begin{figure}[t]
    \centering
 \[
 \small
\begin{array}{rcl}
%    \identity& \dfn & \lambda x.x \\
% \letid M  N & \dfn & MN   \\
 \true&\dfn& \lambda x,y. x\\
      \false &\dfn& \lambda x,y. y\\
 \ifthen s t u   & \dfn & s \, t \, u \\
      \doublepair s t & \dfn & \lambda z.z\, s\, t\\
     % \project 0 M & \dfn & \enc{\true}  \\
     % \project 1 M & \dfn & \enc{\false}  \\
      % \letpair M x y N &\dfn & M(\lambda x. \lambda y. N)\\
      \doubleproj{0}{} &\dfn&  \lambda z. z \, \true \\
         \doubleproj{1}{} &\dfn&  \lambda z. z \, \false \\
  \godel {\numeral 0} & \dfn & \doublepair \false  {\lambda x.x}   \\
    \godel {\numeral{n+1}} &\dfn  & \doublepair \true {\godel n} \\
    \succ &\dfn & \lambda x. \doublepair \true   x \\
     \pred  &\dfn & \lambda x. \doubleproj{1}{x}  \\
    \cnd &\dfn &\lambda x,y,z. \ifthen{(\doubleproj 0 z)}{x}{y\, ({\pred \, z})}  
      % \inject 0 M &\dfn& \doublepair {\enc \true} M \\
      % \inject 1 M &\dfn& \doublepair {\enc \false} M 
      \end{array}
      \begin{array}{rcl}
\ifthen {\enc{\true}} s t & \betaeta & s\\
\ifthen {\enc{\false}} s t & \betaeta & t\\
 % \letpair   {\doublepair M N} xy P  & \betaeta&P[M/x, N/y]\\
 \doubleproj i {\doublepair {s_0} {s_1}} & \betaeta & s_i\\
 \pred \, {\godel {\numeral 0}} &\betaeta & \godel 0\\
 \pred\,  {(\succ \, {\godel {\numeral n}})} &\betaeta & \godel {\numeral n}\\
  \cnd \, {\godel {\numeral 0}} \, s \, t &\betaeta & s\\
 \cnd \, {(\succ \, {\godel {\numeral n}})}\, s \, t &\betaeta & t \, \godel{n}\\
 \end{array}
\]

    \caption{Macros for $\lambda$-terms  and corresponding reductions.}
    \label{fig:abbreviation}
\end{figure}

\begin{figure}[t]
    \centering
   \[
   \small
   \begin{array}{rcl}
       \sem{\ax}& \dfn& \lambda x.x \\
        \sem{\ex}&\dfn &\lambda t. \lambda \vec u. \lambda x. \lambda y. \lambda \vec v. t\, \vec u \,y\,x\, \vec v \\
  \sem{\cut}&\dfn &\lambda t. \lambda s. \lambda \vec{u}.\lambda \vec v. s\, \vec v \, (t \,\vec u)  \\ 
  \sem{\contr}&\dfn &\lambda t. \lambda \vec u. \lambda x. t\,\vec u \,x\,x\\ 
  \sem{\wk}&\dfn &\lambda t. \lambda \vec u. \lambda x. t \,\vec u  \\
   \sem{\rr \arrow}&\dfn& \lambda t. \lambda \vec u. \lambda x. t \,\vec u\, x\\  
   \sem{\lr \arrow}&\dfn& \lambda t. \lambda s. \lambda \vec u. \lambda \vec v. \lambda x.  s\,\vec v\,(x\,(t\,\vec u))\\
 %  \sem{\rr \unit}&\dfn & \identity \\
 % \sem{\lr \unit}&\dfn &\lambda t. \lambda \vec u. \lambda x. \letid x {t \vec u}   \\
   \end{array}
   \qquad
   \begin{array}{rcl}
         \sem{\rr \times }& \dfn&  \lambda t. \lambda s. \lambda \vec{u}.\lambda \vec v . \doublepair {t\, \vec u}{s \, \vec v} \\
  \sem{\lr \times}&\dfn &\lambda t. \lambda \vec u. \lambda x. {t\,  \vec u \, (\doubleproj0 x) \, (\doubleproj1 x)}  
 %  \\
 % \sem{\cev{\rr +}}&\dfn& \lambda t.\lambda \vec u. \inject 0{t \vec u}  \\
 %  \sem{\vec{\rr +}}&\dfn& \lambda t.\lambda \vec u.\inject 1{t \vec u}    \\
 %  \sem{\lr +}&\dfn& \lambda t. \lambda s. \lambda \vec u. \lambda x. \ifthen {(\enc \true x)}{t(\enc \false x )\vec u}{s(\enc \false x) \vec u}
\\
\sem{\rr \mu}&\dfn &\lambda t. \lambda \vec{u}. t\, \vec u \\
  \sem{\lr \mu}&\dfn& \lambda t. \lambda \vec{u}. \lambda z. t \, \vec u \, z  \\ 
  \sem{\Nzero}&\dfn & \godel{0}\\
  \sem{\Nsucc}&\dfn& \lambda t.\lambda \vec x. \succ \, {(t\, \vec x)}\\
  \sem{\Ncnd} &\dfn & \lambda t. \lambda s. \lambda \vec u. \lambda x. \cnd\, {x}\, {(t \, \vec u)}\, {(s\, \vec u)}
   \end{array}
   \]
    \caption{Translation of  $\cmuLJneg$ into $\lambda$-terms.}
    \label{fig:translation-of-rules}
\end{figure}

We work with a standard equational theory on $\lambda$-terms. 
$\betaeta$ is the smallest congruence on $\lambdaterms$ satisfying:
\[
\lambda x t\,  s \betae  t[s/x] \qquad \qquad \lambda x ( tx) =_\eta t \quad (x \not \in  \FV(t))
% \quad
% \lambda x (tx) \eqter t
\]
\Cref{fig:abbreviation} displays some macros for $\lambda$-terms (and the corresponding reduction rules) we will adopt in this subsection.
 We define an interpretation of coterms in $\closure \cmuLJneg$ into $\Lambda$.  
 % In what follows, for the sake of convenience we shall assume that $\closure \cmuLJneg$ is endowed with a denumerable set of (term) variables $x, y, z, \ldots$.  
 % 
  % \gianluca{Does $ \closure \cmuLJneg$  have term variables? I remember that FARCS has variables. If not we have to say that we can add them w.l.o.g.}
 % 
 We start with an interpretation of the basic inference rules: 

\todonew{A: I think we need to interpret regular coderivations first, and then extend to terms of $\closure \cmuLJneg$}

 \begin{defn}[Interpreting rules] To each inference step $\rrule$ of $\cmuLJneg$ we associate a $\lambda$-term  $\sem \rrule$ as shown in~\Cref{fig:translation-of-rules}. 
 % We extend $\sem{\_}$ to terms (i.e., finite coterms) by inductively setting $\sem{t\, s}\dfn \sem{t}\, \sem{s}$.
 \end{defn}
 
It is easy to see that this interpretation preserves equations from~\Cref{fig:reduction-ljneg},~\ref{fig:reduction-mu-norec}, and~\ref{fig:reduction-N-muLJnorec}:

 \begin{lemma} \label{lem:rewriting-into-beta}The following equations hold in $\Lambda$:
    \[
\begin{array}{rcl}
 \sem \ax \  x   &\betaeta  & x \\
  % \lr \unit \ t \ \vec x \ y  &\rewrite& t \ \vec x  \\
 \sem\ex \ t \ \vec x \ x \ y \ \vec y&\betaeta& t\  \vec x  \ y \ x \ \vec y  \\
 \sem\wk \ t \  \vec x\  x     & \betaeta & t\  \vec x\\
 \sem\contr \ t \ \vec x\ x &\betaeta&  t \ \vec x \ x \ x \\
 \sem\cut \ s \ t \ \vec x \ \vec y &\betaeta& t \ \vec y \ (s \ \vec x)\\
% \rr \imp \ t  \ \vec x \ x &\betaeta& t  \ \vec x \ x \\ 
{ \sem{ \rr \times} \ s \ t \ \vec x \ \vec y }&\betaeta& \doublepair {  s\, \vec x}{t\,  \vec  y}
\\
  \sem{\lr \times} \ t \, \vec x  \,y & \betaeta &  t\ \vec x \ (\sem{\proj 0} y)\ (\sem{\proj 1} y )
%  \\
% \sem{ \proj 0} \doublepair {x} {y} & \betaeta & x 
\end{array}
\qquad
\begin{array}{rcl}
 \sem{\proj i }\doublepair {x_0} {x_1} & \betaeta & x_i \\
      \sem{ \rr \imp} \ t  \ \vec x \ x  &\betaeta&  t  \ \vec x \ x \\ 
 \sem{\lr \imp} \ s \ t  \ \vec x\ \vec y \ z  &\betaeta& t \ \vec y \ (z  \, (s \ \vec x )) \\
%  \sem{\rr \mu} \ t \ \vec x &\betaeta&  t \ \vec x \\
%  \sem{\lr \mu} \ t \ \vec x &\betaeta&  t \ \vec x \\
%   \sem{\rr \nu} \ t \ \vec x &\betaeta&  t \ \vec x \\
%  \sem{\lr \nu} \ t \ \vec x &\betaeta&  t \ \vec x 
% \sem{\lr \mu} \ t \ \vec x \ (\sem{\rr \mu} \ u \ \vec x)  &=& t \ \vec x \ (A (\sem{ \lr \mu } )\ \vec x \ (u \ \vec y)) \\
% \sem{\lr \nu} \ t \ \vec x \ (\sem{\rr \nu} \ s \ \vec y \ z)&=& t \ \vec x\  (A(\sem{\rr \nu}) \ \vec y \ (s \ \vec y \ z)) 
 \sem{ \rr \mu} \ t \ \vec x &\betaeta&  t \ \vec x \\
  \sem{ \lr \mu} \ t \ \vec x &\betaeta&  t \ \vec x \\
  \sem{ \Ncnd}\, s\, t\, \vec x\, \godel{0} &\betaeta& s\, \vec x \\
      \sem{   \Ncnd} \, s\, t\, \vec x\, \succ y &\betaeta&  t\, \vec x\, y
\end{array}
\]
Also, if $\sem{t}\, x \betaeta \sem{t'}\, x$, for $x\notin \fv t$, then $\sem{t}\betaeta \sem{t'}$ by $\eta$-reduction in $\Lambda$.
\end{lemma}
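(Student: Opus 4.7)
The plan is to proceed by a straightforward case analysis on the inference rule $\rrule$, verifying each equation by unfolding the definitions in Figure~\ref{fig:translation-of-rules} and computing the relevant $\beta\eta$-reductions. Since the left-hand side of each listed equation is always of the form $\sem{\rrule}\, t_1 \cdots t_k \, \vec x \cdots$ where $\sem{\rrule}$ is defined as an explicit $\lambda$-abstraction, each case reduces to a chain of $\beta$-steps that expose the right-hand side modulo the macros from Figure~\ref{fig:abbreviation}. So most of the work is routine.

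First I would dispatch the structural and propositional cases ($\ax, \ex, \cut, \wk, \contr, \rr{\arrow}, \lr{\arrow}$) by direct $\beta$-reduction. For instance, $\sem{\cut}\, s\, t\, \vec u\, \vec v = (\lambda t.\lambda s.\lambda \vec u.\lambda \vec v.\, s\, \vec v\, (t\, \vec u))\, s\, t\, \vec u\, \vec v \betaeta t\, \vec v\, (s\, \vec u)$, matching the $\cut$-equation (modulo renaming of metavariables, which one should set up carefully at the start). Next I would handle the product cases using the standard Church-style encoding of pairs: $\sem{\proj i}\doublepair{x_0}{x_1} = (\lambda z.z\, \choice{}{i})\,\doublepair{x_0}{x_1} \betaeta \doublepair{x_0}{x_1}\,\choice{}{i} \betaeta \choice{}{i}\, x_0\, x_1 \betaeta x_i$, where $\choice{}{0} = \true$ and $\choice{}{1} = \false$; the equations for $\rr{\times}$ and $\lr{\times}$ then follow similarly.

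The fixed point cases for $\rr\mu$ and $\lr\mu$ are trivial under this interpretation, since both are interpreted as essentially identity abstractions (this is precisely what makes $\reducesnorec$ on the unfolding rules compatible with the $\lambda$-translation). The natural number cases $\Nzero, \Nsucc$ are immediate from the definitions of $\godel{\numeral{n}}$ and $\succ$. For $\Ncnd$, one computes
\[
\sem{\Ncnd}\, s\, t\, \vec x\, \godel{0} \betaeta \cnd\, \godel{0}\, (s\, \vec x)\, (t\, \vec x) \betaeta s\, \vec x
\]
using the reduction $\cnd\, \godel{\numeral 0}\, a\, b \betaeta a$ already listed in Figure~\ref{fig:abbreviation}, and dually $\cnd\, (\succ\, \godel{n})\, a\, b \betaeta b\, \godel{n}$ for the successor case. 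The final clause about $\eta$ is immediate: if $\sem{t}\, x \betaeta \sem{t'}\, x$ with $x \notin \fv{t} \cup \fv{t'}$, then $\lambda x.\, \sem{t}\, x \betaeta \lambda x.\, \sem{t'}\, x$ by congruence, and both sides $\eta$-reduce to $\sem{t}$ and $\sem{t'}$ respectively.

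The only mild subtlety is bookkeeping the lengths of the vectors $\vec u, \vec v$ so that they match the cedents $\Gamma, \Delta$ specified in the original rule typesettings, and renaming bound variables in the definitions of $\sem{\rrule}$ to avoid clashes with the term metavariables $s,t,\vec x, \vec y, z$ appearing in the reduction clauses. No single case poses any real obstacle; the one requiring the most care is $\lr\arrow$, where both an input from the cut-formula side and an additional argument $z$ appear, so one must verify that the $\lambda$-abstractions in $\sem{\lr\arrow}$ consume the right arguments in the right order.
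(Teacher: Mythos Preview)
Your proposal is correct and matches the paper's treatment: the paper gives no explicit proof of this lemma, simply prefacing it with ``It is easy to see that this interpretation preserves equations\ldots'', and your case-by-case unfolding of the definitions in Figure~\ref{fig:translation-of-rules} together with the macro reductions of Figure~\ref{fig:abbreviation} is exactly the intended routine verification. Your remarks about careful bookkeeping of vector lengths and variable renaming are appropriate caveats for making the argument precise.
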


 We now show how to extend  $\sem{\_}$ to regular coderivations, by noting that any such coterm can be described by a finite system of simultaneous equations. 
 From here we rely on a well-known result that such finite systems of equations always admit solutions in the untyped lambda calculus, with respect to $\betaeta$ (see, e.g., \cite{Hindley}).

 \begin{definition}
 [Interpreting regular coderivations]
     Consider a $\cmuLJneg$ coderivation $P$ with subcoderivations $\vec P = P_1, \dots , P_n$ where $P = P_1$.
 Suppose each $P_i$ is concluded by an inference step $\rrule_i$ with immediate subcoderivations $\vec P_i$ (a list of regular coderivations among $\vec P$).
 Write $\Sys \der$ for the system of equations $\{x_i = \sem{\rrule_i} \vec x_i\}_{i=1}^n$, where we set $\vec x_i \dfn x_{i_1}, \dots, x_{i_k}$ when $\vec P_i = P_{i_1},\dots, P_{i_k}$.
We define $\sem P$ to be some/any solution to $x_1$ of $\Sys P$ in $\Lambda$, with respect to $\betaeta$.

 From here we extend the definition of $\sem\cdot $ to all coterms in $\closure \cmuLJneg$ inductively as expected, setting $\sem{ts}\dfn \sem t \sem s$.
 \end{definition}

\todonew{A: I just inlined the actual equations into a definition above and skipped the intermediate lemma here. I've glossed over the narrative too.}

Now, immediately from the definition of $\sem\cdot$ and \Cref{lem:rewriting-into-beta} we arrive at our intended interpretation:

\begin{proposition}\label{prop:cut-elimination-reduction} If $t \in \closure \cmuLJneg$ and $t \reducesnoreceta s$ then $\sem{t}\betaeta \sem{s}$. \anupam{added eta to the reduction (and removed $*$ for typesetting reasons). this is needed in the next corollary and follows from the results referenced in the proof.}
% {\ }
% \begin{itemize}
%     \item If $t \in \regularterms$ and $t \cutelim s$ then $\sem{t}\betar \sem{s}$.
%     \item If $t \in \terms$ and $t \cutelim s$ then $\sem{t}\betar \sem{s}$.
% \end{itemize}
\end{proposition}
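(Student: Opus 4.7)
The plan is to induct on the derivation of $t \reducesnoreceta s$, using the fact that $\betaeta$ on $\Lambda$ is already a congruence closed under application. This way, contextual-closure cases (where $t\, u \reducesnoreceta s\, u$ from $t \reducesnoreceta s$, and symmetrically on the right) reduce immediately to the case where a single reduction rule fires at the head of the term. The remaining work is a case analysis on which reduction rule was applied.

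The base cases fall into three groups. First, the reductions in \Cref{fig:reduction-ljneg,fig:reduction-mu-norec,fig:reduction-N-muLJnorec}, where the head symbol is an explicit basic rule $\rrule$: each such case is handled directly by the corresponding clause of \Cref{lem:rewriting-into-beta}, noting that both $\reducesnoreceta$ and $\betaeta$ are closed under substitution for the free variables $x, y, \vec x, \vec y$ appearing in those rules. Second, the extensionality clause $(\extensional)$: if $t\, x \reducesnoreceta t'\, x$ for $x$ fresh implies $t \reducesnoreceta t'$, we invoke the final clause of \Cref{lem:rewriting-into-beta}, which is precisely the $\eta$ rule at the target.

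The only genuinely new point, and the main obstacle, is when a reduction fires at the head of a \emph{regular coderivation} $P$ (rather than an explicit basic rule). Here $P$ is defined as a solution $x_1$ of a finite system $\Sys P = \{x_i = \sem{\rrule_i}\vec x_i\}_{i=1}^n$, so $P$ itself is not syntactically of the form $\rrule \vec Q$ when viewed through $\sem\cdot$. To bridge this gap, I will exploit the defining property of the solution: for each regular subcoderivation $P_i$ of $P$ whose last rule is $\rrule_i$ with immediate subcoderivations $P_{i_1},\dots,P_{i_k}$, we have $\sem{P_i} \betaeta \sem{\rrule_i}\,\sem{P_{i_1}}\cdots\sem{P_{i_k}}$. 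This lets us unfold the head of $P_i$ by one step, reducing the situation to the explicit-rule case already handled.

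Combining these ingredients, the result extends from regular coderivations to all of $\closure{\cmuLJneg}$ by induction on term structure, since $\sem{ts} \dfn \sem t\, \sem s$ and $\betaeta$ is compatible with application. The verification is essentially routine bookkeeping: the main care needed is to match the arities of the vectors $\vec x, \vec y$ in the rule schemas on both sides of each equation, which is already built into the statement of \Cref{lem:rewriting-into-beta}.
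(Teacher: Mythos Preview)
Your proposal is correct and takes essentially the same approach as the paper, which simply states that the result follows ``immediately from the definition of $\sem\cdot$ and \Cref{lem:rewriting-into-beta}''; you have fleshed out exactly what that means. Your observation that the regular-coderivation case requires first unfolding $\sem{P} \betaeta \sem{\rrule}\,\sem{P_{i_1}}\cdots\sem{P_{i_k}}$ via the defining system $\Sys P$ before applying \Cref{lem:rewriting-into-beta} is the one nontrivial point, and you handle it correctly.
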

% \begin{proof}
%     Straightforward by Lemma~\ref{lem:rewriting-into-beta} and Proposition~\ref{prop:system-of-equations}.
% \end{proof}

From here, by confluence of $\beta\eta$-reduction on $\lambda$-terms, we immediately have:
\begin{corollary}[Uniqueness]\label{thm:uniqueness} Let $t \in \closure \cmuLJneg$. If $\cod m \convertsnoreceta t \convertsnoreceta \cod{n}$ then $n=m$.
% {\ }
% \begin{itemize}
%     \item If $\der \cutelim^* \cod{n}$ and  $\der \cutelim^* \cod{m}$ in  $\muLJ$ and $\cmuLJ$, then $n=m$.
%     \item If $t \rewrite^* \cod{n}$ and  $t \rewrite^* \cod{m}$ in  $\terms$ and $\regularterms$, then $n=m$.
% \end{itemize}
\end{corollary}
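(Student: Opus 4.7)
The plan is to transport the problem across the $\sem{\cdot}$ interpretation into the untyped $\lambda$-calculus, where uniqueness is standard. First, I would observe that $\convertsnoreceta$ is the symmetric-transitive closure of $\reducesnoreceta$, and $\betaeta$ on $\Lambda$ is already an equivalence relation (indeed a congruence). Hence Proposition~\ref{prop:cut-elimination-reduction} lifts immediately from one-step reduction to the full equational theory: if $t \convertsnoreceta s$ in $\closure \cmuLJneg$, then $\sem{t} \betaeta \sem{s}$ in $\Lambda$. Applying this to the hypothesis $\cod m \convertsnoreceta t \convertsnoreceta \cod n$ yields $\sem{\cod m} \betaeta \sem{\cod n}$.

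Next I would compute $\sem{\cod n}$ explicitly by induction on $n$, using the clauses for $\Nzero$ and $\Nsucc$ in Figure~\ref{fig:translation-of-rules}: $\sem{\cod 0} = \godel 0$, and since the ambient context is empty, $\sem{\cod {n+1}} = \sem{\Nsucc}\, \sem{\cod n} \betaeta \succ\, \sem{\cod n}$, so inductively $\sem{\cod n} \betaeta \godel n$. Combining these two reductions, the hypothesis entails $\godel m \betaeta \godel n$.

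Finally, the $\godel n$ are (nested) pairs of boolean constants $\true, \false$ and $\lambda x.x$, which are closed $\beta\eta$-normal forms with pairwise distinct syntactic shape (they encode distinct natural numbers under the standard $\doublepair{\cdot}{\cdot}$-encoding). By confluence and normalisation of $\beta\eta$ on $\Lambda$, any two $\betaeta$-equivalent normal forms are identical up to $\alpha$-equivalence, so $\godel m \betaeta \godel n$ forces $\godel m = \godel n$, and hence $m = n$. The only mildly delicate step is the inductive verification that $\sem{\cod n} \betaeta \godel n$ and that the $\godel n$ are genuinely distinct normal forms — but both are routine unfoldings of the macros in Figure~\ref{fig:abbreviation}. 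The main conceptual work has already been absorbed into Proposition~\ref{prop:cut-elimination-reduction}.
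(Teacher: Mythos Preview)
Your proposal is correct and follows essentially the same route as the paper: transport along $\sem{\cdot}$ via Proposition~\ref{prop:cut-elimination-reduction}, reduce $\sem{\cod n}$ to the Scott-style numeral $\godel n$, and conclude by confluence of $\beta\eta$. One small correction: you invoke ``normalisation of $\beta\eta$ on $\Lambda$'', but the untyped $\lambda$-calculus is of course not normalising; your argument does not need this, since confluence alone guarantees that two $\beta\eta$-equivalent terms that are \emph{already} normal must coincide up to $\alpha$.
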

% \begin{IEEEproof}[Proof idea]
%     We interpret the equational theory of $\closure \cmuLJneg$ (based on $\convertsnorec$) into the standard equational theory of (untyped) $\lambda$-terms and $\beta$-equivalence. We show that  the interpretation  preserves the equations of $\convertsnorec$, and we conclude by appealing to confluence for $\lambda$-terms. Details are reported in~\Cref{subsec:proofs-of-uniqueness}.
% \end{IEEEproof}
\begin{proof}
    Clearly, by Proposition~\ref{prop:cut-elimination-reduction} we have $\sem{t} \betaeta \sem{\cod{n}}$ and $\sem{t} \betaeta \sem{\cod{m}}$. Since $\sem{\cod{n}}\betaeta \godel{\cod n}$, we have $\sem{t} \betaeta \godel{\cod n}$ and $\sem{t} \betaeta \godel{\cod m}$. Since $ \godel{\cod n}$ and $\godel{\cod m}$ are normal forms, by confluence of  $\Lambda$ it must be that  $ \godel{\cod n}=\godel{\cod m}$, and hence $n=m$.
\end{proof}

\subsection{From typed terms back to proofs}\label{subsec:from-typed-terms-back-to-proofs.}
Let us restrict our attention to $\muLJneg$-terms in this subsection. 
{In what follows, for a list of types $\vec \sigma=(\sigma_1, \ldots, \sigma_n)$,  we write  $\vec \sigma \to \tau$ for $\sigma_1 \to \ldots \to \sigma_n \to \tau$}.
In order to more easily carry out our realisability argument in Section~\ref{sec:realisability}, it will be convenient to work with a typed version of $\clomuLJneg$:

% \todo{type assignment}
\begin{definition}
    [Type assignment]
    \emph{Type assignment} is the smallest (infix) relation `$:$' from terms to types satisfying:
    \begin{itemize}
        \item for each step $\vliiinf{\rrule}{}{\vec \sigma \seqar \tau}{\vec \sigma_1\seqar \tau_1}{\cdots}{\vec \sigma_n\seqar \tau_n}$ we have $\rrule : (\vec \sigma_1 \arrow \tau_1) \arrow \cdots \arrow (\vec \sigma_n \arrow \tau_n)\arrow \vec \sigma \arrow \tau$.
        \item if $t:\sigma\arrow \tau$ and $s:\sigma$ then $ts:\tau$.
        \item if $t:\mu X \sigma(X)$ and $s:\sigma(\tau)\arrow \tau$ then $ts:\tau$.
        % \item if $t: (\sigma \arrow X)\arrow X$, and $\sigma $ positive in $X$, then $t:\mu X \sigma$.
        % \anupam{do we need this last one? check realisability again, there it was apparently necessary for subject reduction}
    \end{itemize}
  % \red{(Type assignment) derivations will range over $\metader$.}
  We write $\typedclomuLJneg$ for the class of typed $\clomuLJneg$-terms.
\end{definition}

The main result of this subsection is:
\begin{theorem}
    [Terms to derivations]\label{thm:terms-to-derivations}
  The natural number functions represented by $\typedclomuLJneg$ are already representable by $\muLJneg$.
\end{theorem}
\begin{proof}
[Proof sketch]
First, given a derivation $\der$ of ${\seqar \sigma \to \tau}$ in $\muLJneg$, we define the derivation  $\mathsf{uncurry}(\der)$ of $\sigma  \seqar \tau$ as  follows:
\[
\vlderivation{
\vliin{\cut}{}{\sigma \seqar  \tau}
{
\vldr{\der}{\seqar \sigma \to \tau}
}
{
\vliin{\lr \to}{}{\sigma, \sigma \to \tau \seqar \tau}{\vlin{\id}{}{\sigma \seqar \sigma}{\vlhy{}}}{\vlin{\id}{}{\tau \seqar \tau}{\vlhy{}}}
}
}
\]
We now define an interpretation of type assignments $t:\tau$ into derivations $\der_{\metader} $ of $ \seqar \tau$ by induction on $t:\tau$ as follows:
\begin{itemize}
     \item For each step $\rrule : (\vec \sigma_1 \arrow \tau_1) \arrow \cdots \arrow (\vec \sigma_n \arrow \tau_n)\arrow \vec \sigma \arrow \tau$, $\der_\rrule$ is the derivation of $\rrule$ in $\muLJneg$:
     \[
     \vlderivation{
     \vliq{\rr \to}{}{\seqar (\vec \sigma_1 \arrow \tau_1) \arrow \cdots \arrow (\vec \sigma_n \arrow \tau_n)\arrow \vec \sigma \arrow \tau}{
     \vliiin{\rrule}{}{\vec \sigma_1 \to \tau_1, \dots, \vec \sigma_n \to \tau_n , \vec \sigma \seqar \tau}{
        \vliq{}{}{\vec \sigma_1 \to \tau_1, \vec \sigma_1 \seqar \tau_1}{\vlhy{}}
     }{
        \vlhy{\cdots}
     }{
        \vliq{}{}{\vec \sigma_n \to \tau_n, \vec \sigma_n \seqar \tau_n}{\vlhy{}}
     }
     }
     }
     \]
    
    \item If $t:\sigma \to \tau$ and $s: \sigma$ then $\der_{ts}$ is defined as: 
    \[
    \vlderivation{
    \vliin{\cut}{}{\seqar \tau}{
        \vltr{\der_{s}}{\seqar \sigma}{\vlhy{\quad }}{\vlhy{}}{\vlhy{\quad }}
    }{
        \vltr{\colorbox{white}{$\scriptstyle\mathsf{uncurry}(\der_{t})$}}{\sigma \seqar \tau}{\vlhy{\quad }}{\vlhy{}}{\vlhy{\quad }}
    }
    }
    \]
    \item If $t: \mu X \sigma(X)$ and $s:\sigma(\tau) \to \tau$ then $\der_{ts}$ is defined as:
    \[
\vlderivation{
\vliin{\cut}{}{\seqar \tau}
{
\vldr{\der_{t}}{\seqar \mu X \sigma (X)}
}
{
\vliin{\lr \mu}{}{ \mu X \sigma (X)\seqar \tau}
{
    \vldr{\colorbox{white}{$\scriptstyle\mathsf{uncurry}(s)$}}{\sigma(\tau)\seqar  \tau}
    }
    {
    \vlin{\id}{}{\tau \seqar \tau}{\vlhy{}}
    }
    }
    }
    \]
\end{itemize}

\todonew{A: now that type assignment is deterministic, no longer need dependency on derivation. I've changed the proof reflecting that, old cases commented below.}

Now, to show that  typed-$\closure \muLJneg$  and $\muLJneg$ represent the same functions on natural numbers, it suffices to prove that if $ t: \tau$  and $t \reduces t'$ then $\der_{t}\cutred^* \der_{t'}$.
For this let us observe:
\begin{itemize}
    \item Whenever $t$ is typed, so are all its subterms by definition of type assignment.
    \item Any derivation $\der_{t(s)}$ in the form $\der_t(\der_s)$, for appropriate $\der_t(\cdot)$ (with leaf $\cdot$).
\end{itemize}
Thus the simulation of any reduction step $t(s) \reduces t(s')$, with redex $s$, is reduced to showing $s\reduces s' \implies \der_s \cutred^* \der_{s'}$.
% 
% Notice that  a type assignment derivation    $\metader' \vdash t:\tau$ exists   by Proposition~\ref{prop:type-preservation}. This requires to consider the case where the reduction fires the outermost redex, as  this allows us to conclude  by appealing to contextual closure for $\cutred$. 
This boils down to checking that the reductions in~\Cref{fig:reduction-ljneg},~\Cref{fig:reduction-mu-iter} and~\Cref{fig:reduction-N-muLJ} are simulated by a series of cut-reduction rules on $\muLJneg$, which  is routine. 
\end{proof}

% \todo{from typed $\closure \muLJ$ back into $\muLJ$}

 \section{Totality of circular proofs}
\label{sec:totality}

\comment{
Minor things:
\begin{itemize}
    \item explain difference with interpretation of fixed points in models of F
    \item mention, maybe in previous section, that untyped terms are Turing-complete.
    \item need to mention that similar semantic techniques in literature, e.g. Mendler, Barendregt... this also highlights differences with F.
    \item make a comment about (weak) extensionality: we do not take full extensionality to keep equational theory semi-recursive, avoiding extensionality-elimination techniques à la Luckhardt.
    \item notation: $\clomuLJneg$ should have a default equational theory, maybe put $\eta$ in sub/superscript
\end{itemize}
}

In this section we provide a semantics for (circular) proofs, using computability theoretic tools.
Our aim is to show that $\cmuLJ$ represents only total functions on $\Nat$ (Corollary~\ref{cor:cmuLJ-represents-only-total-fns}), by carefully extending circular proof theoretic techniques to our semantics.

Throughout this section we shall only consider types formed from $\N, \times, \arrow, \mu$, unless otherwise indicated.

\subsection{A type structure of regular coterms}
% We shall now turn to building up a (impredicative) higher-order computability model in order to obtain our totality proof for $\cmuLJneg$ (and hence $\cmuLJ$, for functions on natural numbers).
We shall define a type structure whose domain will be contained within $\closure \cmuLJneg$. 
Before that, it will be convenient to have access to a notion of a `good' set of terms.

A \emph{(totality) candidate} is some $A\subseteq \closure \cmuLJneg$ that is closed under $\convertsnoreceta$.
We henceforth expand our language of types by including each candidate $A$ as a type constant. 
% the clause:
% \begin{itemize}
% \item each candidate $A$ is a type constant.
% \end{itemize}
% (Types are still closed under $\times, \arrow, \mu$).
% 
An immediate albeit powerful observation is that the class of candidates forms a complete lattice under set inclusion.
This justifies the following definition of our type structure:
\begin{definition}
    [Type structure]
    \label{def:hr}
    For each type $\sigma$ we define $\hr\sigma \subseteq \closure \cmuLJneg$ by:
    \[
    \begin{array}{r@{\ := \ }l}
         \hr A & A \\
         \hr \N & \{ t  \ \vert \ \exists n \in \Nat .\, t\convertsnoreceta \numeral n\} \\
         \hr {\sigma \times \tau} & \{ t \ \vert\ \proj 0 t \in \hr \sigma \text{ and }  \proj 1 t \in \hr \tau\} \\
         \hr {\sigma \to \tau} & \{ t \ \vert\ \forall s \in \hr \sigma. \, ts \in \hr\tau\} \\
         \hr {\mu X\sigma(X)} & \bigcap \{ A \text{ a candidate} \ \vert\ \hr{\sigma(A)}\subseteq A\}
    \end{array}
    \]
    We write $\hr{\sigma(\cdot)}$ for the function on candidates $A\mapsto \hr{\sigma(A)}$.
\end{definition}
As we shall see, the interpretation of $\mu$-types above is indeed a least fixed point of the corresponding operation on candidates.

\begin{remark}
    [Alternative SO interpretation]
    Recalling the second-order interpretation of $\mu$-types, \Cref{rem:mulj-frag-F}, an alternative definition of $\hr {\mu X \sigma(X)}$ could be $\bigcap_A (\sigma(A)\to A) \to A$.
    This gives rise to a different type structure, indeed similar to the realisability model we give later in \Cref{sec:realisability}.
    However such a choice does not allow us to readily interpret $\clocmuLJneg$: the totality argument in this section, \Cref{thm:prog-implies-hr}, crucially exploits the fact that $\mu$-types are interpreted as bona-fide least fixed points.
\end{remark}

A routine but important property is:
\begin{proposition}
    [Closure under conversion]
    \label{prop:hr-closed-under-conversion}
    If $t \in \hr \tau$ and $t \convertsnoreceta t'$ then $t'\in \hr\tau$.
\end{proposition}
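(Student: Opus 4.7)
The plan is to proceed by straightforward induction on the type $\tau$, with the key observation being that each clause in Definition~\ref{def:hr} is designed precisely so as to preserve closure under $\convertsnoreceta$. The only ingredient needed from the underlying term theory is that $\convertsnoreceta$ is a congruence, in particular closed under application in both positions, so that $t\convertsnoreceta t'$ entails $ts\convertsnoreceta t's$, $st\convertsnoreceta st'$, $\proj i t\convertsnoreceta \proj i t'$, and hence also $\iter \sigma\, t\, x\convertsnoreceta \iter\sigma\, t'\, x$ etc. This is immediate from context-closure of the reduction rules in Section~\ref{sec:term-calculi} together with symmetric-transitive closure.

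For the base cases, the statement holds by fiat when $\tau = A$ is a candidate (candidates are defined to be closed under $\convertsnoreceta$), and for $\tau = \N$ it follows from transitivity: if $t\convertsnoreceta \numeral n$ and $t\convertsnoreceta t'$, then $t'\convertsnoreceta \numeral n$ as well. For the inductive steps $\tau = \sigma\times\tau'$ and $\tau = \sigma\arrow \tau'$, assume $t\in\hr\tau$ and $t\convertsnoreceta t'$; then by congruence we have $\proj i t\convertsnoreceta \proj i t'$ in the product case and $ts\convertsnoreceta t's$ for every $s\in\hr\sigma$ in the arrow case, so applying the induction hypothesis to $\hr\sigma$ and $\hr{\tau'}$ yields $t'\in\hr\tau$.

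The fixed-point case $\tau = \mu X\sigma(X)$ is actually the easiest: $\hr{\mu X\sigma(X)}$ is an intersection of candidates, each of which is by definition closed under $\convertsnoreceta$, and intersection trivially preserves this closure property. Hence $t'$ lies in each $A$ in the intersection, and thus in $\hr{\mu X\sigma(X)}$.

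There is no genuine obstacle: the proposition is essentially a consistency check on Definition~\ref{def:hr}, confirming that interpreting $\mu$-types as the intersection of \emph{candidates} (rather than arbitrary subsets of $\closure\cmuLJneg$) was the right design choice. The only point worth flagging is that we use congruence/context-closure of $\convertsnoreceta$, which is immediate from the way $\reducesnorec$ was defined in Section~\ref{sec:term-calculi} together with the $\extensional$ rule; no deeper rewriting-theoretic fact (such as confluence) is required.
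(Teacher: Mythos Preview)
Your proposal is correct and follows essentially the same approach as the paper: induction on the structure of $\tau$, using context-closure of $\convertsnoreceta$ for the $\times$ and $\arrow$ cases and the definition of candidate for the $\mu$ case. If anything, your treatment of the $\mu$ case is slightly cleaner than the paper's, which invokes the inductive hypothesis there even though closure of each candidate $A$ under $\convertsnoreceta$ is immediate from the definition and no IH is required.
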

% {\begin{proof}
% It suffices to show that the reduction rules in~\Cref{fig:reduction-ljneg},~\Cref{fig:reduction-mu-norec},~\Cref{fig:native-n-rules-mulj}, and the rule ($\extensional$) preserve the type structure. We just show the case of the equations for the fixed point combinators. This can be easily established by proving the following properties:
% \[
% \begin{array}{rcl}
%  \rr \mu & \dfn& \hr{(\vec \tau \to \sigma(\mu X. \sigma))\to \vec \tau \to \mu X. \sigma}\\
%     \lr{\mu'} &\dfn&  \hr{(\vec \tau\to \sigma (\mu X. \sigma)\to \gamma) \to \vec \tau \to \mu X. \sigma \to \gamma} 
% \end{array}
% \]
% where $\vec \tau= \tau_1, \ldots, \tau_n$. 
% Concerning $\rr \mu$, suppose $t \in \hr{\vec \tau \to \sigma(\mu X. \sigma)}$ and $\vec s= s_1, \ldots, s_n$ is such that $s_i \in \hr{\tau_i}$. Then $t \, \vec s \in \hr{\sigma(\mu X. \sigma)}$. By definition, $\hr{\sigma(\mu X. \sigma)} \subseteq \hr{\mu X. \sigma}$. So, $t\, \vec s\in \hr{\mu X. \sigma}$. $\rr{\mu}\, t\, \vec s\convertsnoreceta t\, \vec s$. As for the case $\rr \mu$, suppose $t \in \hr{\vec \tau\to \sigma (\mu X. \sigma)\to \gamma}$, $\vec s= s_1, \ldots, s_n$ is such that $s_i \in \hr{\tau_i}$, and $u \in \hr{\mu X. \sigma}$. By definition, $u \in \hr{\mu X. \sigma}=\hr{\sigma(\mu X. \sigma)}$, so $ t\, \vec s\, u\in \hr{\gamma}$. But $\lr {\mu'}\, t\, \vec s\, z \convertsnoreceta t\, \vec s\, u$.
% \end{proof}
% }
\begin{proof}
    By induction on the structure of $\tau$.
    The base cases when $\tau$ is a candidate $A$ or the type $\nat$ follow immediately from the definitions. For the remaining cases:
    \begin{itemize}
        \item If $\tau = \tau_0 \times \tau_1$, then $\proj i t' \convertsnoreceta \proj i t \in \hr{\tau_i}$ by IH, for $i=0,1$, so indeed $t'\in \hr \tau$.
        \item If $\tau = \tau_0 \to \tau_1$ and $s \in \tau_0$, then $t's \convertsnoreceta ts \in \hr{\tau_1}$ by IH, so indeed $t'\in \hr\tau$.
        \item If $\tau = \mu X \tau'(X)$ and $A$ is a candidate with $\hr{\sigma(A)} \subseteq A$, then $t' \in A$ by IH, so indeed $t'\in \hr\tau$.\qedhere
    \end{itemize}
\end{proof}

Let us point out that this immediately entails, by contraposition and symmetry of $\convertsnoreceta$, closure of \emph{non-elementhood} of the type structure under conversion: if $t\notin \hr \tau$ and $t\convertsnoreceta t'$ then also $t'\notin \hr\tau$.
% This simple observation is critical for our main totality argument, which proceeds by contradiction.

% We shall discuss in more detail in the next subsection.
The main result of this section is:
\begin{theorem}
[Interpretation]
\label{thm:prog-implies-hr}
For any $\cmuLJneg$-coderivation $\der:\Sigma \seqar \tau$ and $\vec s \in \hr\Sigma$ we have $\der\vec s \, \in  \hr{ \tau}$.
\end{theorem}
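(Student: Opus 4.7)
The plan is to proceed by contradiction, combining a local preservation argument with a well-founded descent along the progressing thread. First, since each $\sigma(X)$ is positive in $X$, the operator $A \mapsto \hr{\sigma(A)}$ is monotone on the complete lattice of candidates; thus by Knaster-Tarski, $\hr{\mu X\sigma(X)}$ is the \emph{least fixed point} of $\hr{\sigma(\cdot)}$, and in particular satisfies the fixed point equation $\hr{\mu X\sigma(X)} = \hr{\sigma(\hr{\mu X\sigma(X)})}$. This least fixed point can moreover be built as an increasing chain of ordinal approximants $\mu_\alpha$, with $\mu_{\alpha+1} := \hr{\sigma(\mu_\alpha)}$ and unions at limits, so that each $t \in \hr{\mu X\sigma(X)}$ admits a least rank $|t|_\mu \in \Ord$.

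Next I would establish a \emph{local preservation} lemma: for every inference step $\rrule$ with immediate sub-coderivations $\vec\der$, if each $\der_i$ sends any instantiation of its antecedent by $\hr\cdot$-elements into $\hr\cdot$ of its succedent, then so does $\rrule\,\vec\der$. Each case is a routine unwinding of the reduction clauses in Figures~\ref{fig:reduction-ljneg}, \ref{fig:reduction-mu-norec} and \ref{fig:reduction-N-muLJnorec}, combined with closure of $\hr\cdot$ under $\convertsnoreceta$ (Proposition~\ref{prop:hr-closed-under-conversion}); the only subtle cases, $\rr\mu$ and $\mulunf$, are handled by the fixed point equation above.

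Assuming the theorem fails, pick some $\der:\Sigma\seqar\tau$ and $\vec s\in\hr\Sigma$ with $\der\vec s\notin\hr\tau$. By local preservation (contrapositively), some immediate sub-coderivation of $\der$ admits such `bad' data too; iterating this choice yields an infinite branch $B$ of $\der$ equipped with a bad instantiation at every sequent along it.

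Finally I would invoke progressingness: $B$ must contain a progressing thread $\theta$, whose smallest infinitely-often-principal formula is necessarily an LHS $\mu$-formula $\mu X\sigma(X)$ (as $\nu$ is absent from $\cmuLJneg$). At each position of $\theta$ labelled by this formula, the corresponding bad witness has an ordinal rank $|t|_\mu$. The main obstacle will be showing that this rank \emph{strictly decreases} between consecutive principal $\mulunf$-steps along $\theta$: when a bad witness $t$ of minimal rank $\beta+1$ enters $\mulunf$, the fixed point equation places $t \in \hr{\sigma(\mu_\beta)}$, and tracing $\theta$ through the positive decomposition of $\sigma$ back to an $X$-position yields a witness in $\mu_\beta$, of rank $\leq \beta < \beta+1$. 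Intermediate non-principal steps preserve both the thread formula and its witness, so the rank is weakly decreasing overall and strictly decreasing across each principal $\mu$-step; the resulting infinite descending chain in $\Ord$ is the desired contradiction.
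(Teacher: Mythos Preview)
Your overall architecture—contradiction, bad branch, progressing thread, ordinal descent—is the paper's. The gap is in the step you call ``tracing $\theta$ through the positive decomposition of $\sigma$'', and it lies precisely at the $\arrow$-rules, which you dismiss as routine.

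You first construct the bad branch by iterating the contrapositive of local preservation, and only afterwards track ranks along the progressing thread. But the rank invariant is sensitive to the choices made during branch construction. Take $\lr\arrow$ with principal formula $\rho\to\sigma$ on the left, and suppose the thread sits there with input $s$ in the approximant $\hr{\rho'\to\sigma'}$ (primes denoting replacement of the critical $\mu$ by its $\beta$th approximant at the relevant polarity). Your generic contrapositive sends you to the left premise whenever $\der_0\vec r\notin\hr\rho$; but since the critical $\mu$ occurs \emph{negatively} in $\rho$, monotonicity gives $\hr{\rho'}\supseteq\hr\rho$, so $\der_0\vec r\notin\hr\rho$ does not force $\der_0\vec r\notin\hr{\rho'}$, and the invariant for the thread at $\rho$ (now on the right) is lost. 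The fix is to branch on whether $\der_0\vec r\in\hr{\rho'}$ rather than $\hr\rho$, which is exactly what the paper builds into its construction (Definition~\ref{def:refl-non-totality}), using the \emph{least} ordinal assignment for the principal formula and explicitly flagging these $\arrow$-cases as ``particularly subtle''. Symmetrically, at $\rr\arrow$ the fresh input must be chosen from the approximant, not merely from $\hr\rho$. In short, the bad-branch construction and the ordinal bookkeeping cannot be decoupled in the way you propose.

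A secondary difference: you track a single rank for the critical $\mu$, whereas the paper carries a lexicographic vector over all $\mu$-subformulas of the thread formula, ordered by priority. Since the branch choices above must be made \emph{before} the progressing thread—and hence its critical formula—is identified, the paper's device of taking least assignments over the whole vector is what lets one uniform construction serve whichever thread is eventually selected.
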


\anupam{I moved the proof of this to the end of section, as it is way too early here. We need the notions built up over the next subsections for it to make sense. Added sentence below.}

The rest of this section is devoted to proving this result, but before that let us state our desired consequence:

% Note that, by simply unfolding the definition we obtain:

% \begin{observation}
%     [Totality at type 1]
%     If $t \in \hr{\vec \N \to \N}$ then $ \{(\vec m,n) \in \vec \Nat \times \Nat \ \vert \ t \numeral{\vec m} \, \convertsnorec \numeral n\} $ is the graph of a (total) function
%     $\vec \Nat \to \Nat$.\footnote{The lengths of vectors match in the expected way.} 
% \end{observation}

% Thus, from the Interpretation Theorem we immediately obtain:

\begin{corollary}
\label{cor:cmuLJ-represents-only-total-fns}
$\cmuLJneg$ represents only total functions on $\Nat$ with respect to $\convertsnoreceta$.
\end{corollary}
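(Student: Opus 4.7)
The plan is to apply the Interpretation Theorem (\Cref{thm:prog-implies-hr}) directly, specialised to numeric inputs. Given any $\cmuLJneg$-coderivation $\der_f : \N, \ldots, \N \seqar \N$ representing a partial function $f : \Nat^k \to \Nat$ with respect to $\convertsnoreceta$, I would first observe that each numeral $\numeral n$ lies in $\hr\N$, since $\numeral n \convertsnoreceta \numeral n$ trivially witnesses the existential in the definition of $\hr\N$. Hence, for any $n_1, \ldots, n_k \in \Nat$, the tuple $(\numeral{n_1}, \ldots, \numeral{n_k})$ inhabits $\hr\Sigma$ for $\Sigma = \N, \ldots, \N$.

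Next, instantiating \Cref{thm:prog-implies-hr} with this $\Sigma$, $\tau = \N$, and $\vec s = \numeral{n_1}, \ldots, \numeral{n_k}$ yields $\der_f\, \numeral{n_1} \cdots \numeral{n_k} \in \hr\N$. Unfolding the definition of $\hr\N$ produces some $m \in \Nat$ with $\der_f\, \numeral{n_1} \cdots \numeral{n_k} \convertsnoreceta \numeral m$. This shows that $f(n_1, \ldots, n_k)$ is defined, establishing totality; that $m$ is unique (and hence that $f$ is well-defined as a function) is already guaranteed by \Cref{thm:uniqueness}.

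To cover the usual cut-reduction notion of representability for circular systems (via $\cutrednorec^*$, cf.~\Cref{defn:numerals-representability}), I would appeal to \Cref{thm:reduction-inludes-cut-elimination}, which gives $\cutrednorec^* \subseteq \convertsnoreceta$; so a $\cmuLJneg$-representation in the standard sense is \emph{a fortiori} a representation in the $\convertsnoreceta$ sense to which the argument above applies. The extension from $\cmuLJneg$ to $\clocmuLJneg$ comes essentially for free: any $t \in \closure \cmuLJneg$ arising as an iterated application of coderivations of matching types inherits membership in $\hr\N$ via the $\arrow$-clause of \Cref{def:hr} used inductively along the applications.

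All the real difficulty is packed inside \Cref{thm:prog-implies-hr}, whose proof must interpret the $\lr\mu$, $\mulunf$ and $\rr\mu$ rules against the lattice-theoretic fixed-point structure of candidates and exploit the progressing criterion to discharge infinite branches. Once that theorem is in hand, this corollary is little more than a one-step unfolding of $\hr\N$.
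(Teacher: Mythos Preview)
Your proposal is correct and follows essentially the same approach as the paper's (very brief) proof idea: apply \Cref{thm:prog-implies-hr} to numerals, unfold $\hr\N$, done. You additionally spell out uniqueness via \Cref{thm:uniqueness}, the bridge to cut-reduction representability via \Cref{thm:reduction-inludes-cut-elimination}, and the extension to $\clocmuLJneg$ by iterating the $\arrow$-clause, all of which are correct elaborations the paper leaves implicit or handles only by a parenthetical reference to closure under conversion and $\cut$-reductions.
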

\begin{proof}
[Proof idea]
Consider a $\cmuLJneg$-coderivation $P: \vec \N \seqar \N$.
    By \Cref{thm:prog-implies-hr} and closure under conversion, Proposition~\ref{prop:hr-closed-under-conversion} (namely applying $\cut$-reductions), we have for all $\vec m \in \Nat$ there is $n\in \Nat$ with  $P\numeral{\vec m} \convertsnoreceta \numeral n$.
\end{proof}

\subsection{Montonicity and transfinite types}
\label{subsec:monotonicity-and-transfinite-types}
To prove our main Interpretation Theorem, we shall need to appeal to a lot of background theory on fixed point theorems, ordinals and approximants, fixed point formulas, and cyclic proof theory. 
In fact we will go on to formalise this argument within fragments of second-order arithmetic.
% Since we will actually formalise these intermediate results in fragments of second-order arithmetic in the next section, the exposition herein will be rather brief.

Since the class of candidates forms a complete lattice under set inclusion, we can specialise the well-known Knaster-Tarski fixed point theorem:
\begin{proposition}
    [Knaster-Tarski for candidates]
    \label{prop:kt-on-candidates}
    Let $F$ be a monotone operation on candidates, with respect to $\subseteq$.
    $F$ has a least fixed point
     $\mu F = \bigcap \{A \text{ a candidate}\ \vert\ F(A)\subseteq A\}$.
\end{proposition}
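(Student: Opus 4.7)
The plan is to follow the standard textbook proof of Knaster--Tarski, adapted to the lattice of candidates. The only point that really needs checking is that the lattice structure behaves well; the rest is essentially formal. First I would verify that the collection of candidates is closed under arbitrary intersections: given a family $\{A_i\}_{i \in I}$ of candidates, the set $\bigcap_i A_i$ is a subset of $\closure{\cmuLJneg}$, and it is closed under $\convertsnoreceta$ because if $t \in \bigcap_i A_i$ and $t \convertsnoreceta t'$ then $t' \in A_i$ for each $i$ (since each $A_i$ is closed under $\convertsnoreceta$), hence $t' \in \bigcap_i A_i$. So candidates form a complete lattice under $\subseteq$, with meets given by intersection.

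Next, writing $\mu F := \bigcap \{A \text{ a candidate} \mid F(A) \subseteq A\}$ (note that this intersection is nonempty since $\closure{\cmuLJneg}$ is itself trivially a pre-fixed point), $\mu F$ is a candidate by the previous paragraph. I would then show $F(\mu F) \subseteq \mu F$ by the familiar argument: for any candidate $A$ with $F(A) \subseteq A$, we have $\mu F \subseteq A$ by definition, so by monotonicity $F(\mu F) \subseteq F(A) \subseteq A$; taking the intersection over all such $A$ yields $F(\mu F) \subseteq \mu F$.

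For the reverse inclusion $\mu F \subseteq F(\mu F)$, apply $F$ to $F(\mu F) \subseteq \mu F$ using monotonicity to obtain $F(F(\mu F)) \subseteq F(\mu F)$. Thus $F(\mu F)$ is itself a pre-fixed point candidate (it is a candidate since $F$ maps candidates to candidates by hypothesis), and hence $\mu F \subseteq F(\mu F)$ by definition of $\mu F$. Combining the two inclusions, $\mu F$ is a fixed point. Finally, leastness among fixed points is immediate: if $F(A) = A$ then in particular $F(A) \subseteq A$, so $\mu F \subseteq A$.

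No step here is really an obstacle; the only mild subtlety is ensuring $F$ indeed sends candidates to candidates, so that $F(\mu F)$ qualifies as a candidate in the final step. This is part of the hypothesis that $F$ is an operation on candidates, so it is automatic, but worth flagging explicitly. The proof is essentially a one-paragraph specialisation of Knaster--Tarski, whose only nontrivial content is the closure of candidates under intersection.
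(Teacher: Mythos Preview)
Your proposal is correct and follows exactly the approach the paper intends: the paper does not give an explicit proof of this proposition at all, but simply invokes the general Knaster--Tarski theorem after observing (just before Definition~\ref{def:hr}) that candidates form a complete lattice under inclusion. Your write-up spells out the standard textbook argument in full, which is precisely what the paper is implicitly relying on; indeed, the paper later reproduces essentially the same reasoning in the formal setting of Proposition~\ref{prop:kt-for-least-pre-fixed-point}.
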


At this point it is pertinent to observe that the positivity constraint we impose for fixed point types indeed corresponds to monotonicity of the induced operation on candidates with respect to our type structure:
\begin{lemma}
[Monotonicity]
\label{lem:mon-pos-types-in-hr}
Let $A\subseteq B$ be candidates.
\begin{enumerate}
    \item 
If $\sigma(X)$ is positive in $X$ then $\hr{\sigma(A)}\subseteq \hr{\sigma(B)}$; 
\item If $\sigma(X)$ is negative in $X$ then $\hr{\sigma(B)}\subseteq \hr{\sigma(A)}$.
\end{enumerate}
\end{lemma}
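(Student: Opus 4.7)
The plan is a simultaneous induction on the structure of $\sigma$, following the inductive definition of positive/negative occurrence. The two statements must be proved together because the $\arrow$ case swaps polarities; candidates are treated as type constants so that the induction hypothesis can be invoked on subtypes with a fixed candidate substituted for bound variables.

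For the base cases, if $\sigma(X) = X$ (only positive in $X$) then $\hr{\sigma(A)} = A \subseteq B = \hr{\sigma(B)}$ by assumption; if $\sigma$ is $\N$ or a variable $Y \neq X$, then both polarities hold and $\hr{\sigma(A)} = \hr{\sigma(B)}$ since the interpretation is independent of $X$. For $\sigma = \sigma_1 \times \sigma_2$ positive in $X$ (so both $\sigma_i$ are positive), if $t \in \hr{\sigma_1(A) \times \sigma_2(A)}$ then $\proj{i} t \in \hr{\sigma_i(A)} \subseteq \hr{\sigma_i(B)}$ by IH, so $t \in \hr{\sigma_1(B) \times \sigma_2(B)}$. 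For $\sigma = \sigma_1 \arrow \sigma_2$ positive in $X$ (so $\sigma_1$ is negative and $\sigma_2$ is positive), given $t \in \hr{\sigma_1(A) \arrow \sigma_2(A)}$ and $s \in \hr{\sigma_1(B)}$, the negative IH yields $s \in \hr{\sigma_1(A)}$, so $ts \in \hr{\sigma_2(A)} \subseteq \hr{\sigma_2(B)}$ by the positive IH. The negative cases are dual.

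The main step is the $\mu$ case, $\sigma(X) = \mu Y\, \sigma'(X,Y)$, where the interpretation is the intersection $\hr{\mu Y \sigma'(A,Y)} = \bigcap \{C \text{ candidate} \mid \hr{\sigma'(A,C)} \subseteq C\}$ provided by Proposition~\ref{prop:kt-on-candidates}. Assuming $\sigma'$ is positive in $X$, take any candidate $C$ with $\hr{\sigma'(B,C)} \subseteq C$; applying the IH to $\sigma'$ with $Y$ instantiated by the constant $C$ gives $\hr{\sigma'(A,C)} \subseteq \hr{\sigma'(B,C)} \subseteq C$, so $C$ belongs to the intersection defining $\hr{\mu Y \sigma'(A,Y)}$. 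Hence $\hr{\mu Y \sigma'(A,Y)} \subseteq C$ for every such $C$, and taking the intersection yields $\hr{\mu Y \sigma'(A,Y)} \subseteq \hr{\mu Y \sigma'(B,Y)}$. The negative case is symmetric.

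The main obstacle is conceptual rather than technical: one must ensure that the IH can be applied to $\sigma'(X,Y)$ with $Y$ replaced by an arbitrary candidate, which is why we extended the type syntax to include candidates as constants in Definition~\ref{def:hr}. Once this is in place, the $\mu$ case reduces to a clean application of Knaster--Tarski, and no approximant-based argument (such as the transfinite unfolding developed in the next subsection) is needed here.
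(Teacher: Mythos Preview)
Your proof is correct and follows exactly the approach the paper indicates: a simultaneous structural induction on $\sigma(X)$, with the $\mu$ case handled directly via the intersection definition of $\hr{\mu Y\sigma'}$. The paper itself only states that the result follows by such an induction without giving details, so your proposal in fact fills in precisely the argument the paper omits.
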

These properties are proved (simultaneously) by a straightforward induction on the structure of $\sigma(X)$. \todo{give proof?}
Note that, by the Knaster-Tarski fixed point theorem this yields:
\begin{proposition}
[``Fixed points'' are fixed points]
\label{prop:hr-mu-is-lfp}
    $\hr{\mu X \sigma(X)}$ is the least fixed point of $\hr{\sigma(\cdot)}$ on candidates.
\end{proposition}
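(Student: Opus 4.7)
The proof should be essentially a direct application of the Knaster--Tarski theorem (Proposition~\ref{prop:kt-on-candidates}) to the operation $\hr{\sigma(\cdot)}$, so the plan is mainly to check that this operation fits into the Knaster--Tarski framework and that the definition of $\hr{\mu X \sigma(X)}$ coincides with the formula given there.

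First I would verify that $\hr{\sigma(\cdot)}$ is a well-defined endo-operation on the complete lattice of candidates. Given a candidate $A \subseteq \closure{\cmuLJneg}$, an easy induction on $\sigma$ (using the shape of the type structure from \Cref{def:hr}) shows $\hr{\sigma(A)} \subseteq \closure{\cmuLJneg}$. Closure of $\hr{\sigma(A)}$ under $\convertsnoreceta$ is then precisely the content of \Cref{prop:hr-closed-under-conversion} applied to the type $\sigma(A)$ (with $A$ viewed as a type constant), so $\hr{\sigma(A)}$ is itself a candidate. Next, since $\sigma(X)$ must be positive in $X$ (as $\mu X \sigma(X)$ is a well-formed type), part (1) of the Monotonicity Lemma (\Cref{lem:mon-pos-types-in-hr}) yields $\hr{\sigma(A)} \subseteq \hr{\sigma(B)}$ whenever $A \subseteq B$, i.e.\ $\hr{\sigma(\cdot)}$ is monotone with respect to $\subseteq$.

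At this point Proposition~\ref{prop:kt-on-candidates} applies, giving a least fixed point
\[
\mu\, \hr{\sigma(\cdot)} \ = \ \bigcap \{A \text{ a candidate} \mid \hr{\sigma(A)} \subseteq A\}.
\]
By \Cref{def:hr} the right-hand side is exactly $\hr{\mu X \sigma(X)}$, so $\hr{\mu X \sigma(X)}$ is the least pre-fixed point of $\hr{\sigma(\cdot)}$; by Knaster--Tarski it is also a fixed point, and least among fixed points.

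There is no real obstacle here: the work has already been front-loaded into \Cref{prop:kt-on-candidates} and \Cref{lem:mon-pos-types-in-hr}, and the only thing to notice carefully is that in forming $\hr{\sigma(A)}$ we are implicitly extending the type syntax with $A$ as a constant and then appealing to closure under conversion (\Cref{prop:hr-closed-under-conversion}) for that extended type structure; this is legitimate because the candidate constants were introduced into the type language precisely for this purpose immediately before \Cref{def:hr}.
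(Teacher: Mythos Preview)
Your proposal is correct and matches the paper's approach exactly: the paper simply states that the result follows from the Knaster--Tarski theorem (Proposition~\ref{prop:kt-on-candidates}) once the Monotonicity Lemma~\ref{lem:mon-pos-types-in-hr} is in place, and your write-up spells out precisely those steps. The extra care you take in checking that $\hr{\sigma(A)}$ is a candidate and that the intersection in Definition~\ref{def:hr} coincides with the Knaster--Tarski formula is welcome, but there is nothing beyond what the paper intends.
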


We will need to appeal to an alternative characterisation of fixed points via an inflationary construction, yielding a notion of `approximant' that:
\begin{enumerate}[(a)]
    \item allows us to prove the Interpretation Theorem by reduction to well-foundedness of approximants (or, rather, the ordinals that index them); and 
    \item allows a logically simpler formalisation within second-order arithmetic, cf.~Section~\ref{sec:rev-math-kt}, crucial for obtaining a tight bound on representable functions,
    % in Section~\ref{sec:formalised-totality}.
\end{enumerate}

\begin{definition}
    [Approximants]
    Let $F$ be a monotone operation on candidates, with respect to $\subseteq$.
    For ordinals $\alpha$ we define $F^\alpha(A)$ by transfinite induction on $\alpha$:
    \begin{itemize}
        \item $F^0 (A) \dfn \emptyset$
        \item $F^{\succ \alpha}(A) \dfn F(F^\alpha(A))$
        \item $F^{\lambda}(A) \dfn \bigcup\limits_{\alpha<\lambda} F^\alpha(A)$, when $\lambda$ is a limit ordinal.
    \end{itemize}
\end{definition}
For our purposes we will only need the special case of the definition above when $A=\emptyset$.
Writing $\Ord$ for the class of all ordinals, the following is well known:
\begin{proposition}
    [Fixed points via approximants]
    \label{prop:fixed-points-via-approximants}
Let $\mu F$ be the least fixed point of a monotone operation $F$. 
% be as in Proposition~\ref{prop:kt-on-candidates}.
    We have
    $\mu F = \bigcup\limits_{\alpha \in \Ord} F^\alpha(\emptyset)$.
\end{proposition}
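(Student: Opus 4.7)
The plan is to establish the two inclusions separately, as this is essentially a standard fact about monotone operators on complete lattices instantiated to the lattice of candidates.

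For the inclusion $\bigcup_{\alpha \in \Ord} F^\alpha(\emptyset) \subseteq \mu F$, I would proceed by transfinite induction on $\alpha$ to show $F^\alpha(\emptyset) \subseteq \mu F$. The base case $\alpha = 0$ is immediate since $\emptyset \subseteq \mu F$. For the successor step, by monotonicity of $F$ (and the induction hypothesis), $F^{\succ\alpha}(\emptyset) = F(F^\alpha(\emptyset)) \subseteq F(\mu F) = \mu F$, where the last equality is because $\mu F$ is a fixed point (cf.\ \Cref{prop:kt-on-candidates}). The limit case is immediate from the induction hypothesis since a union of subsets of $\mu F$ is a subset of $\mu F$.

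For the reverse inclusion, first observe by a similar transfinite induction (again using monotonicity of $F$) that the sequence $\{F^\alpha(\emptyset)\}_{\alpha \in \Ord}$ is $\subseteq$-increasing: $\alpha \leq \beta$ implies $F^\alpha(\emptyset) \subseteq F^\beta(\emptyset)$. Since all $F^\alpha(\emptyset)$ are candidates, hence subsets of the fixed set $\clocmuLJneg$, the sequence cannot strictly increase through all ordinals (by a cardinality/Hartogs argument). Hence there exists some ordinal $\alpha$ such that $F^{\succ\alpha}(\emptyset) = F^\alpha(\emptyset)$, so $F^\alpha(\emptyset)$ is itself a fixed point of $F$, and in particular a pre-fixed point. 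By the characterization of $\mu F$ from \Cref{prop:kt-on-candidates}, $\mu F \subseteq F^\alpha(\emptyset) \subseteq \bigcup_{\beta \in \Ord} F^\beta(\emptyset)$.

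The main obstacle is not really mathematical at this level, since the result is classical, but rather concerns the later formalisation within fragments of second-order arithmetic (alluded to in point (b) just above the proposition). The stabilisation argument above requires iterating $F$ through ordinals large enough to exceed the cardinality of the power set of $\clocmuLJneg$, which is a move that is set-theoretically innocuous but must be handled with care in the arithmetic formalisation. For the stated proposition, however, the argument above should suffice; the delicate reverse-mathematical analysis is presumably deferred to \Cref{sec:rev-math-kt}.
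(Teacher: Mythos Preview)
Your argument is correct and is the standard textbook proof. The paper does not actually give a proof of this proposition: it introduces it with ``the following is well known'' and states it without justification, so there is nothing to compare against at the level of this statement. Your remark about the later arithmetical formalisation is also on point; that analysis is indeed carried out separately in \Cref{sec:rev-math-kt}.
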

    
From here it is convenient to admit formal type expressions representing approximants.
\begin{convention}
    [Transfinite types]
    We henceforth expand the language of types to be closed under:
\begin{itemize}
    \item for $\sigma(X)$ positive in $X$, $\alpha$ an ordinal, $\sigma^\alpha(\tau)$ is a type.
\end{itemize}
Again we shall only need the special case of $\tau = \emptyset$ for our purposes.
We call such types \emph{transfinite} when we need to distinguish them from ordinal-free types.
\end{convention}

\begin{definition}
    [Type structure, continued]
    We expand Definition~\ref{def:hr} to account for transfinite types by setting $\hr{\sigma^\alpha(\tau)} \dfn \hr{\sigma(\cdot)}^\alpha (\hr \tau)$. 
    In other words:
\begin{itemize}
    \item $\hr{\sigma^0(\tau)} := \hr\tau$
    \item $\hr{\sigma^{\succ \alpha}(\tau)}:= \hr{\sigma(\hr{\sigma^\alpha(\tau)})}$
    \item $\hr{\sigma^\lambda (\tau)}:= \bigcup\limits_{\alpha<\lambda} \hr{\sigma^\alpha(\tau)}$, when $\lambda$ is a limit ordinal.
\end{itemize}
\end{definition}

We have immediately from Proposition~\ref{prop:fixed-points-via-approximants}:
\begin{corollary}
\label{cor:hr-mu-as-limit-of-approximants}
$\hr{\mu X \sigma(X)} = \bigcup\limits_{\alpha \in \Ord} \hr{\sigma^\alpha(\emptyset)}$.
\end{corollary}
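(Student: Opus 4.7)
The plan is to assemble this as a direct consequence of the results already established in the subsection, with the only bookkeeping being the identification between the transfinite type expression $\sigma^\alpha(\emptyset)$ and the $\alpha$-th approximant of the operation $\hr{\sigma(\cdot)}$ on candidates.

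First I would note that the positivity constraint on $\sigma(X)$, which is part of the well-formedness of $\mu X\sigma(X)$, yields via Lemma~\ref{lem:mon-pos-types-in-hr}(1) that the operation $F := \hr{\sigma(\cdot)}$ is monotone on candidates with respect to $\subseteq$. Treating the empty set as a candidate (it is trivially closed under $\convertsnoreceta$), Proposition~\ref{prop:kt-on-candidates} then guarantees that $F$ has a least fixed point $\mu F$, and Proposition~\ref{prop:fixed-points-via-approximants} gives the characterisation $\mu F = \bigcup_{\alpha \in \Ord} F^\alpha(\emptyset)$.

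Next I would invoke Proposition~\ref{prop:hr-mu-is-lfp}, which states precisely that $\hr{\mu X\sigma(X)}$ is the least fixed point of $F$. Combining with the previous step yields $\hr{\mu X\sigma(X)} = \bigcup_{\alpha \in \Ord} F^\alpha(\emptyset)$.

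Finally, I would unwind the notation: by the extension of the type structure to transfinite types, for each ordinal $\alpha$ we have by definition $\hr{\sigma^\alpha(\emptyset)} = \hr{\sigma(\cdot)}^\alpha(\hr\emptyset) = F^\alpha(\emptyset)$, so taking the union over $\alpha$ gives exactly $\bigcup_{\alpha\in\Ord}\hr{\sigma^\alpha(\emptyset)}$, as required. There is no real obstacle here; the corollary is just a notational reformulation of Proposition~\ref{prop:fixed-points-via-approximants} applied to the concrete monotone operation $\hr{\sigma(\cdot)}$, once Proposition~\ref{prop:hr-mu-is-lfp} is in hand.
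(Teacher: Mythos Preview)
Your proposal is correct and follows essentially the same approach as the paper: the paper simply states that the corollary holds ``immediately from Proposition~\ref{prop:fixed-points-via-approximants}'', and your argument spells out exactly this, invoking Proposition~\ref{prop:hr-mu-is-lfp} to identify $\hr{\mu X\sigma(X)}$ with the least fixed point of $\hr{\sigma(\cdot)}$ and then the definition of the transfinite type structure to match $\hr{\sigma^\alpha(\emptyset)}$ with $F^\alpha(\emptyset)$.
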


\subsection{Ordinal assignments}

We shall write $\sigma \subform \tau$ if $\sigma$ is a subformula of $\tau$.

The \emph{Fischer-Ladner (well) preorder}, written $\flleq$,
is the smallest extension of $\subform$, restricted to closed formulas, satisfying
$\sigma(\mu X \sigma(X)) \flleq \mu X \sigma(X)$.
We write $\sigma \fleq \tau$ if $\sigma \flleq \tau \flleq \sigma$ and $\sigma \flnleq \tau$ if $\sigma \flleq \tau \not\flleq \sigma$.
Note that $\fleq$-equivalence classes are naturally (well) partially ordered by $\flleq$.

The \emph{Fischer-Ladner closure} of a type $\tau$, written $\fl \tau$, is the set $\{\sigma \mid \sigma \flleq \tau\}$.
Note that $\fl \tau$ is the smallest set of closed types closed under subformulas and, whenever $\mu X\sigma(X) \in \fl \tau$, then also $\sigma(\mu X\sigma(X)) \in \fl \tau$; this is necessarily a finite set.

\begin{definition}
[Priority]
We say that a type $\sigma$ has higher \emph{priority} than a type $\tau$, written $\sigma>\tau$, if $\tau \flnleq \sigma$ or $\sigma \fleq \tau $ and $\sigma \subform \tau$.

\end{definition}
Note that $<$ is a (strict) well partial order on types. 
The priority order is commonly used in modal fixed point logics, e.g., \cite{Studer08:mu-calc,Doumane17thesis,venema2008lectures}.
% , as it determines dependencies between fixed point formulas at the level of their bound variables:
% \begin{remark}
%     Assuming each $\mu$ binds a distinct variable, $\mu X \sigma < \mu Y \tau \flleq \mu X \sigma$ just if $Y$ occurs free in $\sigma$ (but $X$ does not occur free in $\tau$).    
% \end{remark}

\begin{convention}
In what follows, we shall assume an arbitrary extension of $<$ to a \emph{total} well order.    
\end{convention}

Let $\tau$ be a type whose $<$-greatest fixed point subformula occurring in \emph{positive} position is $\mu X \sigma(X)$.
We write $\tau^\alpha$ for $\tau[\sigma^\alpha(\emptyset) / \mu X\sigma(X)]$, i.e.\ $\tau$ with each occurrence of $\mu X\sigma(X)$ in positive position replaced by $\sigma^\alpha(\emptyset)$.
$\tau_\alpha$ is defined the same way by for the $<$-greatest fixed point subformula occurring in \emph{negative} position.

Note that, if $\tau$ has $n$ fixed point subformulas in positive position and $\vec \alpha = \alpha_1 ,\dots, \alpha_n$ then $\tau^{\vec \alpha} = (\cdots ((\tau^{\alpha_1})^{\alpha_2})\cdots)$ is a positive-$\mu$-free (transfinite) type.
Similarly for negatively occurring fixed points.
We shall call such sequences \emph{(positive or negative) assignments} (respectively).

We shall order assignments by a lexicographical product order, i.e.\ by setting $\vec \alpha < \vec \beta$ when there is some $i$ with $\alpha_i < \beta_i$ but $\alpha_j = \beta_j$ for all $j<i$. 
Note that this renders the order type of $\vec \alpha$ simply $\alpha_n \times \cdots \times \alpha_1$, but it will be easier to explicitly work with the lexicographical order on ordinal sequences.

By the Monotonicity Lemma~\ref{lem:mon-pos-types-in-hr} we have:
\begin{proposition}
[Positive and negative approximants]
\label{prop:pos-neg-approximants}
If $t \in \hr\tau$ there are (least) ordinal(s) $\vec \alpha$ s.t.\ $t \in \hr{\tau^{\vec \alpha}}$.
Dually, if $t \notin \hr \tau$ there are (least) ordinal(s) $\vec \alpha$ s.t.\ $t \notin \hr{\tau_{\vec \alpha}}$.
\end{proposition}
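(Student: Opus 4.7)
The plan is to prove both halves by induction on the number of relevant fixed-point subformulas of $\tau$ (positive ones for the first claim, negative for the dual), reducing existence to a \emph{closure ordinal} argument via \Cref{cor:hr-mu-as-limit-of-approximants} and monotonicity (\Cref{lem:mon-pos-types-in-hr}). Existence of a \emph{least} witness will then come for free from the well-ordering of the lex-product $\Ord^n$.

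For the positive case, I would let $\mu X\sigma(X)$ be the $<$-greatest positive fixed-point subformula of $\tau$. By \Cref{cor:hr-mu-as-limit-of-approximants} the candidates $\hr{\sigma^\alpha(\emptyset)}$ form a monotone increasing chain (\Cref{lem:mon-pos-types-in-hr}) whose union is $\hr{\mu X\sigma(X)}$. A standard cardinality argument then yields a closure ordinal $\kappa$ with $\hr{\sigma^\kappa(\emptyset)} = \hr{\mu X\sigma(X)}$. Since $\tau$ is monotone in its positive occurrences of $\mu X\sigma(X)$, substituting equal candidates at those positions preserves the interpretation, so $\hr\tau = \hr{\tau^\kappa}$ and in particular $t \in \hr{\tau^\kappa}$. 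The type $\tau^\kappa$ has strictly fewer positively-occurring fixed-point subformulas than $\tau$ (the $<$-maximal one having been replaced by the type constant $\sigma^\kappa(\emptyset)$), so the induction hypothesis supplies a tuple $\vec\alpha'$ with $t \in \hr{(\tau^\kappa)^{\vec\alpha'}} = \hr{\tau^{(\kappa,\vec\alpha')}}$. Minimality then follows from the well-ordering of $\Ord^n$ under the lexicographic product.

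The negative case is dual. Here I would use that $\hr{\tau_\alpha}$ is antitone in $\alpha$ at each negative occurrence, so that at a closure ordinal $\kappa$ of the $<$-greatest negatively-occurring fixed point we again obtain $\hr{\tau_\kappa} = \hr\tau$. Hence $t \notin \hr\tau$ implies $t \notin \hr{\tau_\kappa}$, and induction plus well-ordering as before yield the least $\vec\alpha$ with $t \notin \hr{\tau_{\vec\alpha}}$.

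The main subtlety I anticipate is justifying that $\tau^\kappa$ has strictly fewer positively-occurring fixed-point subformulas for the induction to terminate: because $\sigma^\kappa(\emptyset)$ enters as a type constant (its internal fixed-point structure invisible to the recursion) and $\mu X\sigma(X)$ was $<$-maximal among positive occurrences, this ought to fall out of bookkeeping on the Fischer-Ladner closure, but I would want to verify this cleanly. A secondary point is the existence of a closure ordinal for monotone operators on candidates; this is standard, but since the proof will ultimately need to be formalised within $\PSCA$, I would frame the bound in a form that is amenable to that later development, where controlling ordinals is essential.
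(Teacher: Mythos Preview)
Your proposal is correct and is essentially the natural elaboration of the paper's own justification, which is extremely terse: the paper simply introduces the proposition with ``By the Monotonicity Lemma~\ref{lem:mon-pos-types-in-hr} we have:'' and gives no further argument. Your closure-ordinal-plus-induction unpacking is exactly how one would spell this out, and the concerns you flag (bookkeeping on the remaining fixed-point subformulas after substitution, and later formalisability in $\PSCA$) are the right ones to note.
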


\anupam{commented `approximants are successors' below, now obsolete}
Since the ordinals given by the above Proposition are points of first entry for an element into a fixed point,
note that, for $\vec \alpha$ as in the Proposition above, each $\alpha_i$ must be a successor ordinal.
% \begin{remark}
% [Approximants are successors]
% \label{obs:approxs-are-successors}
% Let $\vec \alpha$ be as in the Proposition above. Then each $\alpha_i$ is a successor ordinal.
% \end{remark}

% \begin{definition}
% [Assignments]
% Let $\tau$ be a type whose positively occurring fixed points are $>$-enumerated $\mu X_1 \sigma_1(X_1) > \cdots > \mu X_n \sigma_n (X_n)$.
% A \emph{(positive) (ordinal) assignment} $\alpha$ for $\tau$ is a finite list of ordinals $\alpha = (\alpha_1, \dots , \alpha_n)$, that we shall simultaneously identify with the ordinal $\alpha_1 \times \cdots \times \alpha_n$ whenever it is convenient (so that assignments are well-ordered by ordinal comparison).
% We shall write $\tau^\alpha$ for the (transfinite) type obtained by replacing each $\mu X_i \sigma_i(X_i)$ by $\sigma_i^{\alpha_i} (\emptyset)$.
% \oldtodo{this needs to be defined a little more carefully to reflect the order of substitutions. make a definition of single substitution, then make it sequential.}

% A \emph{(negative) (ordinal) assignment} is defined the same way but for negatively occurring fixed points.
% When writing $\tau^\alpha$ we shall always make it clear in advance whether we are replacing positively or negatively occurring fixed points of $\tau$. \oldtodo{ambiguity not ideal, consider making clearer if needed}
% \end{definition}

\subsection{Reflecting non-totality in rules of $\muLJnegnorec$}

% At this juncture one natural approach to proving the Interpretation Theorem would be to directly assign ordinals to type occurrences in (co)derivations.
% However this does not seem to work as usual soundness/totality arguments for non-wellfounded proofs use an argument by contradiction to construct a `countermodel' or `non-total' branch \todo{cite arithmetic/T papers among others}, and it is only at this point that we may call on an appropriate ordinal assignment.
% For this reason we must define our non-total branch accounting for all possible ordinal assignments.
% Note that this is similar to how soundness arguments for cyclic proofs for the modal mu-calculus proceed. \todo{cites}

\anupam{commented some narrative above}

Before giving our non-total branch construction, let us first make a local definition that will facilitate our construction:
\begin{definition}
    [Reflecting non-totality]
    \label{def:refl-non-totality}
    Fix a $\muLJnegnorec$-step,
    \[
    \vliiinf{ \rrule}{}{\Sigma \seqar \tau}{\Sigma_0 \seqar \tau_0}{\cdots}{\Sigma_{n-1} \seqar \tau_{n-1}}
    \] 
    for some $n\leq 2$ and regular coderivations $\der_i:\Sigma_i\seqar \tau_i$. 
    
    For $\vec s \in \hr{\Sigma}$ s.t.\ $ \rrule {\vec \der} \vec s \notin \hr \tau$,
    we define a premiss $\Sigma' \seqar \tau'$ and $\der':\Sigma'\seqar \tau'$ and some inputs $\vec s' \in \hr{ \Sigma'}$ such that $\der'\vec s'
    \notin \hr{\tau'}$
    as follows:
\begin{itemize} 
\item $\rrule $ cannot be $\id$ since $\id s \convertsnoreceta s$.
\item If $\rrule $ is $\vlinf{\ex}{}{\Gamma, \rho, \sigma, \Delta \seqar \tau}{\Gamma , \sigma, \rho , \Delta \seqar \tau}$ and $\vec s = (\vec r, r , t,\vec t)$ with $\vec r \in \hr \Gamma, r\in \hr \rho , s \in \hr \sigma, \vec t \in \hr \Delta$, we set $(\Sigma'\seqar \tau'):= (\Gamma, \sigma, \rho, \Delta \seqar \tau)$, $\der':=\der_0$ and $\vec s':= (\vec r,t,r,\vec t)$.
\item If $\rrule $ is $\vlinf{\wk}{}{\Sigma_0,\sigma \seqar \tau}{\Sigma_0 \seqar \tau}$ and $\vec s = (\vec s_0,s) $ with $\vec s_0 \in \hr{\Sigma_0}$ and $s \in \hr \sigma$ then we set $\der':= \der_0$, $(\Sigma'\seqar \tau'):= (\Sigma_0 \seqar \tau)$ and $\vec s' := \vec s_0$. 
\item If $\rrule $ is $\vlinf{\contr}{}{\Sigma_0 , \sigma\seqar \tau}{\Sigma_0, \sigma, \sigma \seqar \tau}$ and $\vec s = (\vec s_0,s)$ with $\vec s_0 \in \hr {\Sigma_0}, s\in \hr \sigma$, then we set $(\Sigma' \seqar \tau'):= \allowbreak (\Sigma_0\seqar \tau)$, $\der':= \der_0$ and $\vec s' := (\vec s_0,s,s)$.
\item If $\rrule $ is 
\(
\vliinf{\cut}{}{\Sigma_0,\Sigma_1 \seqar \tau}{\Sigma_0 \seqar \tau_0}{\Sigma_1, \tau_0 \seqar \tau }
\)
and $\vec s_0 \in \hr{\Sigma_0}, \vec s_1 \in \hr{\Sigma_1}$ we have:
\[
\begin{array}{rll}
     & \cut \der_0 \der_1 \vec s_0 \vec s_1\, \notin \hr\tau \\
     \implies & \der_1  \vec s_1 (\der_0 \vec s_0) \, \notin \hr \tau & \text{by $\cut$ reduction}
\end{array}
\]
Now, if $\der_0\, \vec s_0 \ \notin \hr{\tau_0}$ then we set $(\Sigma'\seqar \tau') := (\Sigma_0 \seqar \tau_0)$, $\der' := \der_0$ and $\vec s' := \vec s_0$.
Otherwise $\der_0\, \vec s_0 \ \in \hr{\tau_0}$ so we set 
$(\Sigma'\seqar \tau') := (\Sigma_1 \seqar \tau_1)$, $\der' := \der_1$ and $\vec s' := (\vec s_1, \der_0\, \vec s_0)$.
\item $\rrule$ cannot be 
\(
\vlinf{\Nzero}{}{\seqar \nat}{}
\)
as $\numeral 0 \in \hr \nat$.
\item If $\rrule$ is
\(
\vlinf{\Nsucc}{}{\Sigma \seqar \nat}{\Sigma \seqar \nat}
\)
and $\vec s \in \hr{\Sigma}$,
\[
\begin{array}{rll}
     & \Nsucc \der_0  \vec s \, \notin \hr\nat \\
     \implies & \succ (\der_0 \vec s) \, \notin \hr \nat & \text{by $\Nsucc $ reduction} \\
     \implies & \der_0 \vec s \, \notin \hr \nat  & \text{by context closure of $\convertsnoreceta$}
\end{array}
\]
so we set $(\Sigma'\seqar \tau'):= (\Sigma \seqar \nat)$, $\der' := \der_0$ and $\vec s' := \vec s$.
\item If $\rrule$ is
\(
\vliinf{\Ncnd}{}{\Sigma_0,\nat \seqar \tau}{\Sigma_0 \seqar \tau}{\Sigma_0,\nat \seqar \tau}
\)
and $\vec s_0 \in \hr{\Sigma_0}$, $s \in \hr \nat$, then $s\convertsnoreceta \numeral n$ for some $n\in \Nat$ by definition of $\hr\nat$.
If $n=0$ then,
\[
\begin{array}{rll}
     &  \Ncnd \der_0 \der_1 \vec s s \, \notin \hr \tau \\
     \implies & \Ncnd \der_0\der_1 \vec s \numeral 0 \, \notin \hr \tau & \text{by closure under conversion} \\
     \implies & \der_0 \vec s \, \notin \hr \tau & \text{by $\Ncnd$ reduction}
\end{array}
\]
so we set $(\Sigma'\seqar \tau'):= (\Sigma_0 \seqar \tau)$, $\der':= \der_0$, $\vec s' := \vec s$.

Otherwise if $n= n'+1$ then,
\[
\begin{array}{rll}
     & \Ncnd \der_0 \der_1 \vec s s \, \notin \hr \tau \\
     \implies & \Ncnd \der_0  \der_1 \vec s  (\succ \numeral n') \, \notin \hr \tau & \text{by closure under conversion} \\
     \implies & \der_1  \vec s \numeral n' \, \notin \hr \tau & \text{by $\Ncnd$ reduction}
\end{array}
\]
so we set $(\Sigma'\seqar \tau'):= (\Sigma_0, \nat \seqar \tau)$, $\der':= \der_1$, $\vec s' := (\vec s,\numeral n')$.
        \item If $\rrule$ is
        \(
        \vliinf{\times r }{}{\Sigma_0, \Sigma_1 \seqar \tau_0 \times \tau_1}{\Sigma_0 \seqar \tau_0}{\Sigma_1 \seqar \tau_1}
        \)
        and $\vec s_0 \in \hr{\Sigma_0}, \vec s_1 \in \hr{\Sigma_1}$,
        \[
        \begin{array}{rll}
        & {\rr \times} {\der_0}{\der_1}\vec s_0 \vec s_1 \notin \hr{\tau_0\times \tau_1} & 
        % \text{by assumption}
        \\ 
        \implies & \pair{\der_0\vec s_0}{\der_1\vec s_1} \notin\hr{\tau_0\times \tau_1} & \text{by $\rr\times$ reduction} \\
        \implies & \proj i \pair{\der_0\vec s_0}{\der_1\vec s_1}  \notin \hr{\tau_i} & \text{by $\hr{\times}$} \\
        \implies & {\der_i}\vec s_i \notin \hr{\tau_i} & \text{by {$\proj i \pair{\cdot}{\cdot}$ reduction}}
        \end{array}
        \]
        for some $i\in \{0,1\}$,
        so we set $(\Sigma' \seqar \tau') := (\Sigma_i \seqar \tau_i)$, $\der' := \der_i$, $\vec s' := \vec s_i$.
    \item If $\rrule$ is $\vlinf{  {\lr \times}}{}{\Gamma, \sigma_0 \times \sigma_1 \seqar \tau  }{\Gamma, \sigma_0,\sigma_1 \seqar \tau}$ and $\vec r \in \hr{\Gamma}$ and $s \in \hr{\sigma_0 \times \sigma_1}$:
    \[
    \begin{array}{rll}
        & {\lr\times }\,  {\der_0}\,  \vec r \, s\ \notin\  \hr\tau & 
        % \text{by assumption} 
        \\
        \implies & {\der_0}\, \vec r \, \proj 0 s\, \proj 1 s \ \notin\  \hr\tau & \text{by $\lr \times$ reduction}
    \end{array}
    \]
    Now, by definition of $\hr{\sigma_0 \times \sigma_1}$ we have indeed $\proj i s \in \hr{\sigma_i}$ for $i\in \{0,1\}$,
    so we set $(\Sigma' \seqar \tau') = (\Gamma,\sigma_0,\sigma_1 \seqar \tau)$, $\der' = \der_0$ and $\vec s' = (\vec r, \proj i s)$.
\item If $\rrule$ is
    \(
    \vlinf{\rr\arrow}{}{\Sigma \seqar \rho \to \sigma }{\Sigma, \rho \seqar \sigma}
    \)
    and $\vec s \in \hr{\Sigma}$,
    \[
    \begin{array}{rll}
         & {\rr\arrow } {\der_0}\vec s\  \notin\  \hr{\rho\to \sigma} & \\
         \implies & {\rr \arrow} {\der_0} \vec s \  \notin\  \hr{(\rho \to \sigma)_{\vec \alpha}} & \text{for some least $\vec \alpha$}\\
         \implies & \rr \arrow  \der_0  \vec s\  \notin \ \hr{\rho^{\vec \alpha_0} \to \sigma_{\vec \alpha_1}} & \text{by definition} \\
         \implies & \rr \arrow{\der_0}  \vec s s \ \notin\  \hr{ \sigma_{\vec \alpha_1}} & \text{for some $s \in \hr{\rho^{\vec \alpha_0}}$} \\
         \implies & \der_0 \vec s  s\  \notin \ \hr{ \sigma_{\vec \alpha_1}} & \text{by $\rr\arrow$ reduction} \\
         \implies & \der_0 \vec s s \notin \hr{ \sigma} & \text{by monotonicity}
    \end{array}
    \]
    where $\vec \alpha_0$ and $\vec \alpha_1$ are appropriate subsequences of $\vec \alpha$ in case not all fixed point subformulas of $\rho \to \sigma$ occur in $\rho$ or in $\sigma$.
    So we set $(\Sigma' \seqar \tau') := (\Sigma, \rho \seqar \sigma)$, and $\der' \dfn \der_0$ and $\vec s' := (\vec s,s)$.

    \item If $\rrule$ is
    \(
    \vliinf{\lr\arrow }{}{\Gamma,\Delta, \rho \to \sigma \seqar \tau}{\Gamma \seqar \rho}{\Delta , \sigma \seqar \tau}
    \)
    let $\vec r \in \hr{\Gamma}, \vec t\in \hr{\Delta}$ and $s \in \hr{\rho \to \sigma}$.
    Like before, let
    $\vec \alpha$ be the least assignment such that $s \in \hr{(\rho \to \sigma)^{\vec \alpha}}=\hr{\rho_{\vec \alpha_0}\to \sigma^{\vec \alpha_1}}$.
    We have two cases:
    \begin{itemize}
        \item if $\der_0\vec r \notin \hr{\rho_{\vec \alpha_0}}$ then we set $(\Sigma' \seqar \tau') \dfn (\Gamma \seqar \rho) $, and $\der'\dfn \der_0$ and $\vec s' \dfn \vec r$.
        \item otherwise ${\der_0} \vec r \in \hr{\rho_{\vec \alpha_0}}$ and we have:
        \[
        \begin{array}{rll}
             & {\lr \arrow}{\der_0}{\der_1} \vec r \vec t s \notin \hr{\tau} & \\
             \implies & {\der_1} \vec t (s ({\der_0} \vec r)) \notin \hr\tau & \text{by $\lr\arrow$ reduction}
        \end{array}
        \]
        Since $\der_0 \vec r \in \hr{\rho_{\vec \alpha_0}}$ and $s \in \hr{\rho_{\vec \alpha_0} \arrow \sigma^{\vec \alpha_1}}$ 
        we have $s( \der_0 \vec r) \in \hr{\sigma^{\vec\alpha_1}}$ so we set
        $(\Sigma' \seqar \tau') \dfn (\Delta, \sigma \seqar \tau)$, and $\der' \dfn \der_1$ and $\vec s' \dfn (\vec t, s(\der_0 \vec r) )$.
    \end{itemize}
    \item 
    If $\rrule$ is 
    \(
    \vlinf{\rr \mu}{}{\Sigma \seqar \mu X \sigma(X)}{\Sigma \seqar \sigma(\mu X \sigma(X))}
    \)
    and $\vec s \in \hr{\Sigma}$,
    \[
    \begin{array}{rll}
         &  \rr\mu\der_0 \vec s\  \notin \ \hr{\mu X\sigma(X)} \\
         \implies & \der_0 \vec s \notin \hr{\mu X\sigma(X)} & \text{by $\rr\mu$ reduction } \\
         \implies & \der_0 \vec s\  \notin\ \hr{\sigma(\mu X \sigma(X))} & \text{by Proposition~\ref{prop:hr-mu-is-lfp}}
    \end{array}
    \]
    so we set $(\Sigma' \seqar \tau') \dfn (\Sigma \seqar \sigma(\mu X \sigma (X)))$, $\der' \dfn \der_0$ and $\vec s' \dfn \vec s$.
    \item if $\rrule $ is 
    \(
    \vlinf{\mulunf}{}{\Gamma, \mu X \sigma(X) \seqar \tau}{\Gamma, \sigma(\mu X \sigma(X)) \seqar \tau}
    \)
    and $\vec r \in \hr{\Gamma}$ and $s \in \hr{\mu X \sigma(X)}$ we have:
    \[
    \begin{array}{rll}
         &  \lr \mu\der_0 \vec r s \ \notin \ \hr\tau \\
        \implies & \der_0 \vec r s \ \notin\  \hr \tau  & \text{by $\lr \mu$ reduction}
    \end{array}
    \]
    Since $s \in \hr{\mu X \sigma(X)}$ then also $s \in \hr{\sigma(\mu X \sigma(X))}$ by Proposition~\ref{prop:hr-mu-is-lfp},
    so we set $(\Sigma' \seqar \tau') \dfn (\Gamma , \sigma(\mu X \sigma(X)) \seqar \tau)$, and $ \der' \dfn \der_0$ and $\vec s' \dfn (\vec r , s)$.
    \end{itemize}
\end{definition}

   Note that the $\arrow$ cases for the Definition above are particularly subtle, requiring consideration of least ordinal assignments similar to the handling of $\lor$-left in fixed point modal logics, cf.\ e.g.\ \cite{NiwWal96:games-for-mu-calc}.

\anupam{make a remark about extension to $+$ cases and first component}

\subsection{Non-total branch construction}
From here the proof of the Interpretation Theorem \ref{thm:prog-implies-hr} proceeds by contradiction, as is usual in cyclic proof theory.
The definition above is used to construct an infinite `non-total' branch, along which there must be a progressing thread.
We assign ordinals approximating the critical fixed point formula and positive formulas of higher priority, which must always be present as positive subformulas on the LHS, or negative on the RHS, along the thread.
Tracking the definition above, we note that this ordinal assignment sequence is non-increasing; moreover at any $\mulunf$ step on the critical fixed point, the corresponding ordinal must be a successor and strictly decreases. 
Thus the ordinal sequence does not converge, yielding a contradiction.

% Further details are in Appendix~\ref{subsec:app-non-total-branch-construction}.

\anupam{commented remark about fixed points approximants below, pertinent to old proof, add similar remark back if time.}
% \begin{remark}
% Note that, in the above proof, it is not at all the case that each component $\alpha_{ij}$ is monotone non-increasing: it is possible that the fixed point formula $\mu X_j \sigma_j (X_j) $ `disappears' along the thread, and it may reappear later with a larger ordinal.
% However the definition of $<$ on types ensures that this is only when a fixed point of higher priority has been unfolded.
% In particular, since  $\mu X_1\sigma_1 (X_1)$ is a subformula of each $\rho_i$, for $i\geq k$, we do have that $(\alpha_{i1})_{i\geq k}$ is a monotone non-increasing sequence, which does not converge due to 
% \end{remark}

\begin{proof}
[Proof of \Cref{thm:prog-implies-hr}]
    Let $\der : \Sigma\seqar \tau$ and $\vec s \in \hr\Sigma$ be as in the statement, and suppose for contradiction that $\der\vec s \notin \hr{ \tau}.$
Setting $\der_0 \dfn \der$ and $\vec s_0 \dfn \vec s$, we use Definition~\ref{def:refl-non-totality} to construct an infinite branch $(\der_i : \Sigma_i \seqar \tau_i)_{i < \omega}$ and associated inputs $(\vec s_i)_{i < \omega}$ by always setting $\der_{i+1} \dfn \der_i'$, $(\Sigma_{i+1}\seqar \tau_{i+1}) \dfn (\Sigma_i'\seqar \tau_i')$ and $\vec s_{i+1}\dfn \vec s_i'$.
% let $\der_i$ and $\vec t_i$ be as constructed in the previous subsection,

Now let $(\rho_i)_{i\geq k}$ be a progressing thread along $(\der_i)_{i <\omega}$, since $\der$ is progressing, and let $r_i \in \vec s_i$ be the corresponding input, for $i\geq k$, when $\rho_i$ is on the LHS.
% 
% Without loss of generality assume that the positively occurring fixed points of each $\rho_i$ on the LHS and negatively occurring fixed points of each $\rho_i$ on the RHS are among $\mu X_1 \sigma_1(X_1) > \cdots > \mu X_n \sigma_n (X_n)$, for $i\geq k$,\footnote{This is because we have necessarily $\rho_{i+1}\flleq \rho_{i}$, and so the thread must eventually stabilise within some Fischer-Ladner class. Of course, no inference step changes polarity of the subformulas occurring on the same side.}
% 
Since $(\rho_i)_{i\geq k}$ is progressing, let $\mu X_1\sigma_1(X_1) > \cdots > \mu X_n \sigma_n (X_n)$ enumerate the fixed points occurring in every $\rho_i$ positively in the LHS, and negatively in the RHS, such that $\mu X_n\sigma_n$ is the smallest infinitely often principal formula along $(\rho_i)_{i\geq k}$. 
Note that such a finite set of fixed points must exist since 
necessarily $\rho_{i+1}\flleq \rho_{i}$, and so the thread must eventually stabilise within some Fischer-Ladner class. 
% Of course, no inference step changes polarity of the subformulas occurring on the same side.}
WLoG, no $\mu X_j \sigma_j (X_j)$, for $j<n$, is principal along $(\rho_i)_i$.
% Note that we have that $\mu X \sigma(X)$ is indeed a subformula of each $\rho_i$, positively on the LHS and negatively on RHS, and WLoG has highest priority among all fixed point formulas occurring.

By Proposition~\ref{prop:pos-neg-approximants}, let $\vec \alpha_i = \alpha_{i1}, \dots, \alpha_{in}$, for $i \geq k$, be the least assignments such that:
\begin{itemize}
    \item if $\rho_i$ is on the LHS then $r_i \in \hr{\rho_i^{\vec \alpha_i}}$; 
    \item if $\rho_i$ is on the RHS then $\der_i\vec s_i \notin (\rho_i)_{\vec \alpha_i}$. 
\end{itemize}

We claim that $(\vec \alpha_i)_{i\geq k}$ is a monotone non-increasing sequence of ordinal assignments that does not converge:
\begin{itemize}
    \item By construction of $\der_i$ and $\vec s_i$, appealing to Definition~\ref{def:refl-non-totality}, note that for each step for which $\mu X_n\sigma_n(X_n)$ is not principal (along $(\rho_i)_i$ on the LHS) we have that $\vec \alpha_{i+1}\leq \vec \alpha_i$.
    Note in particular that the $\arrow$-cases of Definition~\ref{def:refl-non-totality} are designed to guarantee $\vec \alpha_{i+1}\leq \vec \alpha_i$ at $\arrow$-steps.
    \item Now, consider a $\mulunf$-step along the progressing thread on the critical fixed point formula,
    \[
    \vlinf{\mulunf}{}{\Gamma, \mu X_n \sigma_n(X_n)\seqar \pi}{\Gamma , \sigma_n(\mu X_n \sigma_n(X_n))\seqar \pi}
    \]
    with $\rho_i = \mu X_n \sigma_n (X_n)$.
    Writing $\vec \alpha_{in} = \alpha_{i,1}, \dots, \alpha_{i,n-1}$ (a prefix of $\vec \alpha_i$) we have:
    \[
    \begin{array}{rl}
        \hr{(\mu X_n \sigma_n (X_n))^{\vec \alpha_{in}\alpha_{in}}} 
        = & \hr{(\mu X_n \sigma_n^{\vec \alpha_{in}} (X_n))^{\alpha_{in}}} \\
        = & \hr{\sigma_n^{\vec \alpha_{in} \alpha_{in}}(\emptyset)} \\
        = & \hr{\sigma_n^{\vec \alpha_{in} \succ \alpha_{in}'}(\emptyset)} \\
        = & \hr{\sigma_n^{\vec \alpha_{in}} (\hr{\sigma_n^{\vec \alpha_{in} \alpha_{in}' } (\emptyset) } ) } \\
         = & \hr{\sigma_n^{\vec \alpha_{in}} (\hr{(\mu X_n \sigma_n^{\vec \alpha_{in}} (X_n))^{\alpha_{in}'}} ) } \\
           = & \hr{\sigma_n^{\vec \alpha_{in}} ( \hr{(\mu X_n \sigma_n(X_n))^{\vec \alpha_{in}\alpha_{in}'}}) } \\
    \end{array}
    \]
    % This means that $t \in \hr{(\mu X_j \sigma_j (X_j))^{}} \implies t\in \hr{}$
    for some $\alpha_{in}'$ since $\alpha_{ij}$ must be a successor ordinal.
    % \footnote{An element must enter a fixed point for the first time at a successor ordinal, since a limit approximant is just the union of smaller approximants.}
    Thus indeed $\vec \alpha_{i+1} < \vec \alpha_{i}$.
\end{itemize}
% 
% Let $(\alpha_i)_{i\geq k}$ be the least ordinals such that $t_i \in \hr{\sigma_i^{\alpha_i}}$.
% We claim that $(\alpha_i)_{i\geq k}$ is a monotone non-increasing sequence that does not converge.\oldtodo{at this point can make do with only a single ordinal, only need ordinal sequences for the definition of the branch.}
% 
% First, note that for $i \geq k$, each $\sigma_i$ has the same FL-closure, and so assignments approximate the same fixed-point subformulas, say $\mu X_1 \sigma_1(X_1) > \cdots > \mu X \sigma (X)$
% 
% First write $\alpha_i = \vec \alpha_i$ (construed as a sequence).
% Note that each $\vec \alpha_i$ and $\vec \alpha_j$ have same length for $i,j\geq k$, since such $\sigma_i, \sigma_j$ must be in the same Fischer-Ladner class.\oldtodo{justify this WLoG}
% \begin{itemize}
%     \item By the construction of $D_i$ and $\vec t_i$, we have that always $\alpha_{i+1}\leq \alpha_i$.\oldtodo{case analysis}
%     \item At a left $\mu$ step,\oldtodo{again, turn this into lemma so that notation/subscripts do not clash}
%     \[
%     \vlinf{\mu l}{}{\vec \sigma, \mu X \sigma(X)\seqar \tau}{\vec \sigma , \sigma(\mu X \sigma(X))\seqar \tau}
%     \]
%     suppose we have $\vec t \in \hr{\vec \sigma^{\vec \alpha}}$, $t\in \hr{(\mu X \sigma(X))^\alpha}$ such that $\sem{D}\vec t t \notin \hr{\tau}$.
%     We claim that there is $\alpha'<\alpha$ such that $t\in \hr{\sigma(\mu X \sigma(X))^{\alpha'}}$.
    
%     First, let $\alpha = (\vec \beta, \gamma, \vec \delta)$ such that $\gamma$ corresponds to the fixed point $\mu X \sigma(X)$.
% \end{itemize}
% \oldtodo{finish}
\noindent
This contradicts the well-foundedness of ordinals.
\end{proof}
 \section{Some reverse mathematics of Knaster-Tarski}
\label{sec:rev-math-kt}
In order to obtain sub-recursive upper bounds on the functions represented by $\cmuLJ$, we will need to formalise the totality argument of the previous section itself within fragments of `second-order' arithmetic.
To this end we will need to formalise some of the reverse mathematics about fixed point theorems on which our type structure relies.

\subsection{Language  and theories of `second-order' arithmetic}
Let us recall
$\langsoarith$, the language of `second-order' arithmetic, e.g.\ as given in \cite{Simpson99:monograph}.
It extends the language of arithmetic $\langarith$ by:
\begin{itemize}
    \item an additional sort of \emph{sets}, whose variables are written $X,Y $ etc. Individuals of $\langarith$ are considered of \emph{number} sort.
    \item an \emph{elementhood} (or \emph{application}) relation $\in$ relating the number sort to the set sort. I.e.\ there are formulas $t \in X$ (also $Xt$) when $t $ is a number term and $X$ is a set variable. (We will not consider any non-variable set terms here.)
\end{itemize}

When speaking about the `free variables' of a formula, we always include set variables as well as individual variables.
We shall assume a De Morgan basis of connectives, namely $\lor, \land, \exists, \forall$ with negation only on atomic formulas.
Hence, we say that a formula $\phi$ is \emph{positive} (or \emph{negative}) in $X$ if no (every, resp.) subformula $Xt$ occurs under a negation.
Let us note that we have an analogue of `functoriality' in predicate logic:
\begin{lemma}[Monotonicity]
\label{lem:pos-imp-mon}
Let $\phi(X,x)$ be positive in $X$. 
% We can derive in pure logic:
\[
\proves \forall x (Xx \limp Yy) \limp \forall x (\phi(X,x) \limp \phi(Y,x))
\]
\end{lemma}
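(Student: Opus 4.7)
The plan is to proceed by induction on the structure of $\phi$, but to strengthen the statement so that the induction goes through in the presence of a De Morgan basis. Specifically, I would simultaneously prove, by induction on $\phi(X,x)$, the two statements:
\begin{itemize}
\item If $\phi(X,x)$ is positive in $X$, then $\proves \forall x(Xx \limp Yx) \limp \forall x(\phi(X,x) \limp \phi(Y,x))$;
\item If $\phi(X,x)$ is negative in $X$, then $\proves \forall x(Xx \limp Yx) \limp \forall x(\phi(Y,x) \limp \phi(X,x))$.
\end{itemize}
The original statement is then the first half. The reason the simultaneous formulation is needed (rather than just the positive case) is that the paper's notion of polarity is defined in terms of De Morgan normal form (negation only on atoms), so induction on subformulas does not pass through $\limp$ or $\neg$; instead, switching polarity happens only at atomic negations.

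For the base cases, if $\phi$ is an atomic formula not mentioning $X$ (e.g.\ $s=t$, $s<t$, $Zt$, or their negations), then $\phi(X,x)$ and $\phi(Y,x)$ are literally the same formula and the implication is trivial. If $\phi = Xt$ for some number term $t$, then $\phi$ is positive in $X$ and the conclusion is obtained by instantiating the hypothesis $\forall x(Xx \limp Yx)$ at $x := t$. If $\phi = \neg Xt$, then $\phi$ is negative in $X$, and we similarly instantiate and contrapose.

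The inductive cases for $\land$ and $\lor$ are immediate since both connectives preserve polarity in both arguments, so we apply the appropriate induction hypothesis to each subformula and reassemble. For the quantifier cases $\phi = \forall y\, \psi(X,x,y)$ or $\phi = \exists y\, \psi(X,x,y)$, polarity in $X$ is again inherited by $\psi$; we apply the induction hypothesis to $\psi$ and discharge the quantifier by standard first-order rules (generalisation on a fresh variable, followed by distribution of $\forall$/$\exists$ over $\limp$ in the appropriate direction).

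There is no real obstacle here: the proof is pure first-order logic, with no arithmetical or comprehension axioms invoked, so it can be carried out in the ambient base theory regardless of which subsystem of $\langsoarith$ we work in. The only point requiring mild care is ensuring at the outset that polarity is tracked through the De Morgan basis, which is precisely what motivates the simultaneous induction above.
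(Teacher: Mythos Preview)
Your proposal is correct and matches the paper's approach, which is simply a one-line ``by (meta-level) induction on the structure of $\phi$.'' One small remark: in the De Morgan basis used here (only $\lor,\land,\exists,\forall$ with negation on atoms), positivity in $X$ is inherited by every subformula, so the simultaneous positive/negative induction you set up, while perfectly sound, is not strictly necessary---the positive case alone already goes through.
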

\begin{proof}[Proof sketch]
By (meta-level) induction on the structure of $\phi$.
\end{proof}
In what follows this will often facilitate arguments by allowing a form of `deep inference' reasoning.

We shall work with subtheories of full second-order arithmetic ($\SOPA$) such as $ \ACA $, $  \PCA$, $  \PSCA$ etc., whose definitions may be found in standard textbooks, e.g.\ \cite{Simpson99:monograph}.
We shall freely use basic facts about them.
% from, e.g., \cite{Simpson99:monograph}.
For instance:

\begin{proposition}
    [Some basic reverse mathematics, e.g.\ \cite{Simpson99:monograph}]
    \label{prop:basic-rev-math}
    We have the following:
    \begin{enumerate}
        \item\label{item:inclusions-basic-rev-math} $ \ACA \subseteq \PCA = \SCA \subseteq \PSCA = \SPCA $
        \item\label{item:pca-proves-atr-basic-rev-math} $\PCA \proves \ATR$ (\emph{arithmetical transfinite recursion}).
        \item\label{item:psca-proves-spac-basic-rev-math} $\ac{\Din 1 2}_0 \proves \ac{\Sin 1 2}$  (\emph{axiom of choice})
    \end{enumerate}
\end{proposition}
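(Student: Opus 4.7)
All three items are essentially textbook reverse mathematics, so my plan is to indicate the key reductions and defer the bookkeeping to standard references such as \cite{Simpson99:monograph}.

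For item~\ref{item:inclusions-basic-rev-math}, the two strict inclusions are immediate: every arithmetical formula is (vacuously) $\Pin 1 1$, and every $\Pin 1 1$ formula is $\Pin 1 2$, so the corresponding comprehension schemes ascend. For the two equalities $\PCA = \SCA$ and $\PSCA = \SPCA$, the plan is to use complementation: a $\Sin 1 n$ formula $\phi$ and its negation $\lnot \phi$ (which is logically $\Pin 1 n$) define complementary sets, so comprehension for one class over a fixed arithmetic matrix yields comprehension for the other by taking complements within $\Nat$. Nothing subtle arises since both schemes are unrestricted over their respective classes.

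For item~\ref{item:pca-proves-atr-basic-rev-math}, the plan is simply to invoke the standard hierarchy $\ACA_0 \subsetneq \ATR_0 \subsetneq \PCA_0$ from, e.g., \cite{Simpson99:monograph}. Concretely, $\ATR_0$ is provable in $\PCA_0$ via the usual argument that iterating an arithmetical operator $\Theta$ along a well-order $(W,<)$ can be captured by the $\Pin 1 1$-definable set of pairs $(n,w)$ lying in the $w$-th iterate of $\Theta$, whose existence is secured by $\Pin 1 1$-comprehension.

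Item~\ref{item:psca-proves-spac-basic-rev-math} is the delicate one and will be the main obstacle in this proposition. Here the plan is to reduce $\Sin 1 2$-choice to $\Din 1 2$-choice by a standard Skolemisation-plus-coding trick: given $\forall n \exists X \,\phi(n,X)$ with $\phi \in \Sin 1 2$, one introduces a witnessing second-order object $Y$ coding the slices $Y_n$, and observes that the predicate ``the initial segment of $Y$ up to $n$ is extendible to a full choice function'' is simultaneously $\Sin 1 2$ (it suffices to choose further slices) and $\Pin 1 2$ (via the uniqueness-free Kleene-style witness formulation), hence $\Din 1 2$. Applying $\Din 1 2$-AC to this reformulation produces the desired choice. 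The details of the $\Pin 1 2$ equivalent require care, and here I would follow the presentation in \cite{Moellerfeld02:phd-thesis} (or the analogous classical lemma due to Feferman/Jäger), which is precisely the form in which we need the result for our later application.
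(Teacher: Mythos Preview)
The paper does not prove this proposition at all: it is stated as a citation of standard material from \cite{Simpson99:monograph}, with no argument given. So there is no ``paper's proof'' to compare against; your plan to sketch the reductions and defer to the literature is already more than the paper does.

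Your treatment of items~\ref{item:inclusions-basic-rev-math} and~\ref{item:pca-proves-atr-basic-rev-math} is fine. For item~\ref{item:psca-proves-spac-basic-rev-math}, however, your proposed route via an ``extendibility'' predicate is unnecessarily convoluted, and as written it is not clear that the predicate you describe is actually $\Din 1 2$: saying that a finite initial segment of a choice sequence extends to a full one quantifies existentially over the extension and then universally over all $m$ with $\phi(m,Y'_m)$ inside, and with $\phi\in\Sin 1 2$ the resulting complexity is not obviously bounded by $\Pin 1 2$ without further work. The standard argument is much simpler and avoids this entirely: write $\phi(n,X)\equiv \exists Y\,\psi(n,X,Y)$ with $\psi\in\Pin 1 1$, pair $X$ and $Y$ into a single set $Z$, and observe that $\forall n\,\exists Z\,\psi(n,(Z)_0,(Z)_1)$ is now an instance of $\Pin 1 1$-AC, which is trivially a special case of $\Din 1 2$-AC. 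Applying it yields $W$ with $\forall n\,\psi(n,(W_n)_0,(W_n)_1)$, and then $n\mapsto (W_n)_0$ is the required $\Sin 1 2$-choice function. I would replace your sketch for item~\ref{item:psca-proves-spac-basic-rev-math} with this one-line reduction.
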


A simple consequence of $\Sin 1 2$-choice in $\PSCA$ is that $\Sin 1 2$ (and $\Pin 1 2 $) is provably closed under positive $\Sin 1 1 $ (resp.\ $\Pin 1 1 $) combinations (even with $\Sin 1 1 $ and $\Pin 1 1 $ parameters).
We shall actually need a refinement of this fact to take account of polarity, so we better state it here.
First let us set up some notation.

\begin{definition}
    [Polarised analytical hierarchy]
    We write $\posnegSin 1 n {\vec X}{\vec Y}$ for the class of $\Sin 1 n $ formulas positive in $\vec X$ and negative in $\vec Y$.
Similarly for $\posnegPin 1 n {\vec X}{\vec Y}$.
$\phi$ is $\posnegDin 1 n {\vec X}{\vec Y}$, in a theory $T$, if it is $T$-provably equivalent to both a $\posnegSin 1 n {\vec X}{\vec Y} $-formula and a $\posnegPin 1 n{\vec X}{\vec Y}$-formula.
\end{definition}

\begin{lemma}
[Polarised substitution lemma, $\PSCA$]
\label{lem:pol-sub}
If $\phi(X) \in \posnegSin 1 1 {X,\vec X}{\vec Y}$ and $\psi\in \posnegSin 1 2 {\vec X}{\vec Y}$ then $\phi(\psi) \in \posnegSin 1 2 {\vec X}{\vec Y}$.
Similarly for $\Pi$ in place of $\Sigma$.
\end{lemma}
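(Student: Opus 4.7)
The plan is to proceed by induction on the logical structure of $\phi$ in negation normal form, tracking both its $\Sigma^1_1$-complexity and the polarities of $X, \vec X, \vec Y$ throughout. Recall that, since $\phi \in \posnegSin 1 1 {X,\vec X}{\vec Y}$, it has the shape $\exists \vec Z\, \theta$ with $\theta$ arithmetical, and since $\psi \in \posnegSin 1 2 {\vec X}{\vec Y}$, it has the shape $\exists Z_1\, \forall Z_2\, \eta$ with $\eta$ arithmetical. The substitution $\phi(\psi)$ replaces each atomic subformula $Xt$ by $\psi(t)$, with renaming of bound variables to avoid capture.

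The base case is immediate: if $\phi = Xt$ then $\phi(\psi) = \psi(t)$ inherits the required polarities and complexity directly from $\psi$; if $\phi$ is atomic and $X$-free, then $\phi$ is already in $\Sigma^1_1 \subseteq \Sigma^1_2$ with polarities unchanged. In the inductive step, polarity preservation follows from a parity argument: $X$ occurs only positively in $\phi$, so every substituted copy of $\psi$ sits at a positive position, and thus the positive (resp.\ negative) occurrences of $\vec X$ (resp.\ $\vec Y$) inside $\psi$ compose consistently with those already present in $\phi$.

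The main work is to preserve $\Sigma^1_2$-complexity under the inductive cases. For $\land$, $\lor$, and number quantifiers, we rely on closure of $\Sigma^1_2$ under these operations, which is the announced consequence of $\Sigma^1_2$-choice available in $\PSCA$ (Proposition~\ref{prop:basic-rev-math}(\ref{item:psca-proves-spac-basic-rev-math})). The only non-trivial commutation is $\forall n\, (\exists Z_1\, \forall Z_2\, \eta)$, which is reduced to a $\Sigma^1_2$ form by pushing the inner $\exists Z_1$ past $\forall n$ via $\Sigma^1_2$-AC. The outer $\exists \vec Z$ of $\phi$ is then absorbed into the $\exists$-prefix of the resulting $\Sigma^1_2$ formula via set-pairing. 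The $\Pi$-case is handled dually: universal set quantifiers commute with $\lor$ classically, $\land$ is absorbed by pairing of universally quantified sets, and $\exists n$ commutes with the outer $\forall Z_1$ using the contrapositive form of $\Sigma^1_2$-AC applied under negation.

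The main obstacle is the careful bookkeeping of polarities combined with the application of $\Sigma^1_2$-choice to commute quantifiers when multiple positive occurrences of $\psi$, each carrying its own $\exists\forall$-prefix, appear inside the arithmetical body $\theta$ of $\phi$. A careful writeup will maintain the canonical form $\exists\,\forall\,(\text{arithmetical})$ as the invariant at each inductive step and justify each rearrangement by an explicit instance of $\Sigma^1_2$-AC in $\PSCA$.
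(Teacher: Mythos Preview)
Your proposal is correct and takes essentially the same approach as the paper: induction on the structure of $\phi$, invoking $\ac{\Sin 1 2}$ at each alternation of a number quantifier with a set quantifier to keep the result in $\Sigma^1_2$ (or $\Pi^1_2$) prenex form. The paper's own proof is just the one-line sketch ``By induction on the structure of $\phi$, using $\ac{\Sin 1 2}$ at each alternation of a FO and SO quantifier,'' so your writeup simply unpacks that sketch in more detail, including the polarity bookkeeping.
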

\begin{proof}
    [Proof sketch]
    By induction on the structure of $\phi$, using $\ac{\Sin 1 2}$ at each alternation of a FO and SO quantifier.
\end{proof}

Note that the Lemma above, in particular, allows for arbitrary substitutions of $\Sin 1 1 $ and $\Pin 1 1 $ formulas free of $\vec X,\vec Y$, which we shall rely on implicitly in the sequel.

% We shall also assume basic facts about orders and comparison within second-order arithmetic.
% We write $\WO(\alpha)$ if $\alpha$ is a (countable) well-order, $<_\alpha$ for comparison within $\alpha$, and $\ordleq$ for comparison of well-orders.

\subsection{Countable orders}
We can develop a basic theory of (countable) ordinals in even weak second-order theories, as has been done in \cite{Simpson99:monograph} and also comprehensively surveyed in \cite{hirst05:ord-arith-rm-survey}.
Let us point out that, while distinctions between natural notions of `order comparison' are pertinent for weak theories, the theories we mainly consider contain $\ATR_0$, for which order comparison is robust.

A \emph{(countable) binary relation} is a pair $ (X,\leq)$ where $X$ and $\leq$ are sets, the latter construed as ranging over pairs. 
(Sometimes we write $\Rel(X,\leq)$ to specify this.)
We say that $(X,\leq)$ is a \emph{partial order}, written $\PO(X,\leq)$, if:
\begin{itemize}
    \item $\forall x \in X \, x\leq x$
    \item $\forall x,y \in X (x\leq y \leq x \limp x=x)$
    \item $\forall x,y,z (x\leq y\leq z \limp x\leq z)$
\end{itemize}

$(X,\leq)$ is a \emph{total order}, written $\TO(X,\leq)$, if it is a partial order that is \emph{total}:
\begin{itemize}
    \item $\forall x,y (x\leq y \lor y\leq x)$
\end{itemize}

Given a relation $\leq$, we may write $<$ for its \emph{strict} version, given by
\[
x<y := x\leq y \land \lnot y\leq x
\]
We employ similar notational conventions for other similar order-theoretic binary symbols.

We say that a binary relation $(X,\leq)$ is \emph{well-founded}, written $\WF(X,\leq)$, if:
\begin{itemize}
    \item $\forall f: \Nat \to X \exists x \lnot f(x+1)<f(x)$
\end{itemize}

% It is direct from the definition that $\ACA$ admits an induction principle on well-founded relations:
% \begin{fact}
%     [Well-founded induction, $\ACA$]
%     \label{fact:wf-ind-aca}
%     Suppose $\WF(X,\leq )$. Then:
%     \[
%     \forall x (\forall y\leq x \, Xy \limp Xx) \limp \forall x Xx
%     \]
%     \anupam{consider changing inequality everywhere here to not clash with usual inequality}
% \end{fact}

$(X,\leq)$ is a \emph{well-order}, written $\WO(X,\leq)$, if it is a well-founded total order, i.e.:
\[
\WO(X,\leq) := \TO(X,\leq) \land \WF(X,\leq)
\]

Henceforth we shall write $\alpha,\beta$ etc.\ to range over countable binary relations.
If $\alpha = (X,\leq)$ we may write $x\leq_\alpha y := x,y \in X \land x\leq y$, and similarly $x<_\alpha y$ for $x,y\in X \land x<y$.
We may also write simply $x \in \alpha$ instead of $x \in X$, as abuse of notation.

It is not hard to see that $\ACA$ admits an induction principle over any provable well-order (see, e.g., \cite{Simpson99:monograph}):
\begin{fact}
[Transfinite induction, $\ACA$]
\label{fact:wo-ind-in-aca}
If $\WO(\alpha)$ then:
\[
% \left( 
\forall X (\forall x\in \alpha (\forall y<_\alpha x\,  X(y) \limp X(x)) \limp \forall x \in \alpha\,  X(x))
% \right)
\]
\end{fact}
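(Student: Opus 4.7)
The plan is a contradiction argument: assume the induction premise holds but some element of $\alpha$ fails $X$, and construct an infinite $<_\alpha$-descending sequence in $\alpha$, contradicting $\WF(\alpha)$ as defined earlier in this section.

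First, by arithmetical comprehension, available in $\ACA$, I would form the set of counterexamples
\[
B \ \dfn\ \{y \in \alpha : \lnot X(y)\},
\]
which is arithmetical in the parameters $X$ and $\alpha$ and hence exists as a set. Suppose for contradiction that $B$ is nonempty, with some $x_0 \in B$. The contrapositive of the induction premise immediately yields: whenever $x \in B$ there exists $y \in B$ with $y <_\alpha x$. In particular such witnesses are natural numbers, so they can be selected by a least-number operator.

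From here I would define $f : \Nat \to \alpha$ by primitive recursion on $\Nat$,
\[
f(0) \ \dfn\ x_0, \qquad f(n{+}1) \ \dfn\ \mu m.\, (m \in B \, \land \, m <_\alpha f(n)),
\]
with a straightforward side induction on $n$ showing that each $f(n)$ lies in $B$, so that the minimisation is always well-defined. The graph of $f$ is arithmetical in the already-formed set $B$ and the relation $<_\alpha$, and so exists by arithmetical comprehension. By construction $f(n{+}1) <_\alpha f(n)$ for all $n$, which contradicts $\WF(\alpha)$.

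The one point requiring care is to keep every stage of the construction arithmetical in the given parameters, so that the argument fits inside $\ACA$ rather than inadvertently appealing to stronger principles. Crucially, no choice principle is needed: the set $B$ is secured once and for all by arithmetical comprehension, after which $\mu$-minimisation over the arithmetical predicate ``$m \in B \land m <_\alpha f(n)$'' selects the next term deterministically. A by-product of this argument is the familiar $\ACA$-provable equivalence between the sequence-based form of $\WF$ used in the paper and its set-theoretic form (every nonempty subset of $\alpha$ has a $<_\alpha$-minimal element), which may prove convenient in subsequent formalisations.
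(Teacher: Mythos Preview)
Your argument is correct and is precisely the standard one: form the set of counterexamples, then recursively pick least witnesses to build an infinite $<_\alpha$-descending sequence, contradicting $\WF(\alpha)$. The paper does not give its own proof of this fact, merely citing Simpson's monograph, and your proof is essentially what one finds there.

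One small observation: since $X$ here is a set variable, the counterexample set $B=\{y\in\alpha:\lnot X(y)\}$ is $\Delta^0_0$ in the parameters $X,\alpha$, and the recursion defining $f$ is primitive recursive relative to $B$ and $<_\alpha$. So your argument actually goes through already in $\RCA_0$; the full strength of arithmetical comprehension is not needed for this particular statement. Where $\ACA$ becomes relevant is the paper's subsequent remark that transfinite induction extends to \emph{arithmetical formulas} in place of $X$, which does require forming the (now arithmetical) counterexample set.
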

In fact, in extensions of $\ACA$, we even have (transfinite) induction on arithmetical formulas, a fact that we shall use implicitly in what follows.
For instance, in $\PSCA$, we may admit (transfinite) induction on arithmetical combinations of $\Pin 1 2 $ formulas.
\anupam{check this again}

\subsection{Comparing orders}
Following Simpson in \cite{Simpson99:monograph},
given $\alpha,\beta \in \WO$ we write $\alpha \ordlneq \beta$ if there is an order-isomorphism from $\alpha$ onto a proper initial segment of $\beta $. We also write $\alpha \ordeq \beta$ if $\alpha $ and $\beta $ are order-isomorphic, and $\alpha \ordleq \beta $ if $\alpha \ordlneq \beta \lor \alpha \ordeq \beta$.
Thanks to the uniqueness of comparison maps, we crucially have

\begin{proposition}
[$\ATR_0$]
    $\ordlneq, \ordleq, \ordeq$ are $\Din 1 1 $.
\end{proposition}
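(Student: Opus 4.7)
The plan is to exhibit each of $\ordleq, \ordlneq, \ordeq$ as both $\Sin 1 1$ and $\Pin 1 1$ over $\ATR_0$, which yields $\Din 1 1$ by definition. The $\Sin 1 1$ direction is essentially immediate from the definitions: I would spell out each relation as an explicit second-order existential. For instance $\alpha \ordlneq \beta$ asserts $\exists f\, \phi(f,\alpha,\beta)$ where $\phi$ is arithmetical, expressing that $f$ is an order-preserving bijection from $\alpha$ onto a proper initial segment of $\beta$. Similarly $\ordeq$ demands a full isomorphism, and $\ordleq$ is their disjunction, so all three are patently $\Sin 1 1$.

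The heart of the argument is the $\Pin 1 1$ direction. Here I would invoke the comparability theorem for countable well-orders, a standard consequence of $\ATR_0$ (Simpson~\cite{Simpson99:monograph}, Theorem V.6.8): for any $\alpha,\beta \in \WO$, exactly one of $\alpha \ordlneq \beta$, $\alpha \ordeq \beta$, $\beta \ordlneq \alpha$ holds. The hint about uniqueness of comparison maps is precisely what powers this result: one uses arithmetical transfinite recursion to construct the canonical embedding between $\alpha$ and $\beta$, while transfinite induction (available at the arithmetical level under $\ATR_0$) shows that any such embedding is unique. These two ingredients together yield trichotomy.

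Granted trichotomy, the $\Pin 1 1$ formulations follow by a mechanical complementation: $\alpha \ordleq \beta \liff \lnot(\beta \ordlneq \alpha)$, $\alpha \ordlneq \beta \liff \lnot(\beta \ordleq \alpha)$, and $\alpha \ordeq \beta \liff \lnot(\alpha \ordlneq \beta) \land \lnot(\beta \ordlneq \alpha)$. Since each right-hand side is a (boolean combination of) negation(s) of $\Sin 1 1$ formulas, each is $\Pin 1 1$, and combined with the $\Sin 1 1$ definitions above the result follows.

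The main obstacle is really packaging the comparability of well-orderings, but this is already available in the literature for $\ATR_0$ and does not require re-proving here; beyond this citation, the argument is just complementation and elementary manipulation of the analytical hierarchy.
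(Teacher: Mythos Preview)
Your proposal is correct and aligns with the paper's approach: the paper does not give a detailed proof, only the remark ``thanks to the uniqueness of comparison maps'' together with a reference to Simpson~\cite{Simpson99:monograph}, and your argument is precisely the standard unpacking of that remark via trichotomy and complementation.
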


% Given $\alpha = (X,\leq) $ and $\beta = (X', \leq')$, a function $f:X\to X'$ is a \emph{homomorphism} $\alpha \to \beta$ if $\forall x,y \in X (x\leq y \limp f(x) \leq' f(y))$.
% We write $\alpha \ordleq \beta$ if there is a homomorphism $\alpha \to \beta$.
% % Note that $\ordleq$ is, a priori, $\Sin 1 1$.\footnote{Optimising the logical complexity of ordinal comparison will not be necessary for this work.}
% We also write $\alpha \ordlneq \beta := \alpha \ordleq \beta \land \lnot \beta \ordleq \alpha$.
% % $\ordlneq$ is, a priori, a Boolean combination of $\Sin 1 1$.

% \begin{remark}
% Our notion of comparison $\ordleq$ is sometimes known as \emph{strong comparison} in the literature, e.g.\ \cite{hirst05:ord-arith-rm-survey}. 
% A weaker notion of ordinal comparison (`weak comparison') requires only that the existence of a homomorphism, i.e.\ an order-preserving map.
% % that the image of a homomorphism is an initial section.
% In the presence of $\ATR$ these are, in fact, equivalent (see \cite{hirst05:ord-arith-rm-survey}), so we shall not concern ourselves further with this difference.
% \end{remark}
\anupam{commented remark above, it seems obsolete and incorrect now (since we follow Simpson's `strong comparison').}

% Given $\alpha = (X,\leq) $ and $\beta = (X', \leq')$, a function $f:X\to X'$ is an \emph{isomorphism} $\alpha\to\beta$ if $\forall x,y\in X (x\leq y \liff f(x)\leq'f(y))$.
% We write $\alpha \ordeq \beta$ if there is an isomorphism $\alpha \to \beta$. 
% Again $\ordeq$ is, a priori, $\Sin 1 1$.

We shall now state a number of well-known facts about comparison, all of which may be found in, e.g., \cite{Simpson99:monograph} or \cite{hirst05:ord-arith-rm-survey}.\footnote{Note that we have intentionally refrained from specifying the `optimal' theories for each statement, for simplicity of exposition}

\begin{proposition}
    [Facts about ordinal comparison, $\PCA$]
    \label{prop:ord-comparison-facts}
    Let $\alpha,\beta,\gamma$ be well-orders.
    We have the following:
    \begin{enumerate}
        \item\label{item:comparison-is-preorder} (Comparison is a preorder)
\begin{enumerate}
    \item $\alpha\ordleq \alpha$
    \item $\alpha \ordleq \beta \ordleq \gamma \limp \alpha \ordleq \gamma$
\end{enumerate}
\item\label{item:comparison-pseudo-antisymmetric} (Comparison is pseudo-antisymmetric) If $\alpha \ordleq \beta \ordleq \alpha $ then $\alpha \ordeq \beta$.
\item\label{item:comparison-total} (Comparison is total) 
$\alpha \ordleq \beta \lor \beta \ordleq \alpha$
\item\label{item:comparison-wf} (Comparison is well-founded)
$\forall F: \Nat\to \WO \,  \exists x\  \lnot F(x+1)\ordlneq F(x)$
    \end{enumerate}
\end{proposition}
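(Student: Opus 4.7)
All four parts follow standard reverse-mathematical arguments (see Simpson \cite{Simpson99:monograph}), but are worth unpacking to confirm they go through over $\PCA$ (equivalently, $\ATR_0$, which $\PCA$ proves by Proposition~\ref{prop:basic-rev-math}.\ref{item:pca-proves-atr-basic-rev-math}). I would handle \ref{item:comparison-is-preorder} and \ref{item:comparison-pseudo-antisymmetric} by direct manipulation of the witnessing isomorphisms, and \ref{item:comparison-total} and \ref{item:comparison-wf} by appeal to arithmetical transfinite recursion plus a descending-chain argument.

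For \ref{item:comparison-is-preorder}, reflexivity is immediate via the identity map. For transitivity, given $f$ witnessing $\alpha \ordleq \beta$ and $g$ witnessing $\beta \ordleq \gamma$, the composition $g \circ f$ is arithmetical in $f,g$ and witnesses $\alpha \ordleq \gamma$; arithmetical comprehension (included in $\PCA$) gives the set. For \ref{item:comparison-pseudo-antisymmetric}, suppose $\alpha \ordleq \beta \ordleq \alpha$ but $\alpha \ordlneq \beta$ (the symmetric case is dual); composing the witnessing maps produces an order-embedding $h:\alpha \to \alpha$ with range a proper initial segment, hence some $a \in \alpha$ with $h(x) <_\alpha a$ for all $x$. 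Iterating $h$ on $a$ yields an infinite descending sequence $a >_\alpha h(a) >_\alpha h(h(a)) >_\alpha \cdots$ in $\alpha$ (this sequence is arithmetically definable from $h$), contradicting $\WO(\alpha)$.

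For \ref{item:comparison-total}, I would use $\ATR_0$ to define, by transfinite recursion along $\alpha$, the partial function $f(x) = $ the $<_\beta$-least element of $\beta$ not in $\{f(y) : y <_\alpha x\}$, halting the recursion if no such element exists. The resulting $f$ is an order-isomorphism between an initial segment of $\alpha$ and an initial segment of $\beta$, at least one of which must exhaust its order (else both would have supremum elements yielding a contradiction by the halting condition); this gives $\alpha \ordleq \beta$ or $\beta \ordleq \alpha$. This is precisely the comparability lemma for well-orders in $\ATR_0$ (Theorem V.6.8 of \cite{Simpson99:monograph}).

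For \ref{item:comparison-wf}, suppose for contradiction that $F(n+1) \ordlneq F(n)$ for all $n \in \Nat$, witnessed by embeddings $f_n : F(n+1) \to F(n)$ onto proper initial segments. Let $a_n \in F(n)$ be (any choice of) an element outside the range of $f_n$. Then the sequence
\[
b_n \dfn (f_0 \circ f_1 \circ \cdots \circ f_{n-1})(a_n) \in F(0)
\]
is strictly $<_{F(0)}$-decreasing, because each $f_k$ is an embedding onto an initial segment and $a_n$ lies beyond that segment at stage $n$. This contradicts $\WF(F(0))$. The sequence $(b_n)$ is arithmetical in $F$ and the $a_n$; the choice of the $a_n$ is itself arithmetical (take the $<_{F(n)}$-least witness), so $\PCA$ (indeed $\ACA$) suffices to form it.

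The main obstacle is part \ref{item:comparison-total}: constructing the comparison map requires transfinite recursion along an arbitrary countable well-order, which is exactly the content of $\ATR_0$, and verifying that one of the two initial segments produced must be improper requires a careful case analysis at the halting stage. The other parts are essentially bookkeeping once the appropriate comprehension/recursion principle is in place.
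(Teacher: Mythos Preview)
The paper does not give a proof of this proposition at all: it is stated as a collection of well-known facts and simply referenced to Simpson~\cite{Simpson99:monograph} and Hirst~\cite{hirst05:ord-arith-rm-survey}. Your proof plan is correct and follows exactly the standard arguments one finds in those references, so in that sense you have supplied what the paper deliberately omits.

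One small point worth tightening in part~\ref{item:comparison-wf}: you need the sequence of witnessing embeddings $(f_n)_n$ uniformly as a set, not just their individual existence. This is unproblematic here because comparison maps between well-orders are \emph{unique} (as the paper notes just before the proposition), so each $f_n$ is arithmetically definable from $F(n)$ and $F(n+1)$ without any appeal to choice; alternatively $\Sigma^1_1\text{-}\AC$, available in $\ATR_0$ and hence in $\PCA$, would also suffice. With that clarification your argument goes through cleanly.
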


% We also have the following standard facts about ordinal arithmetic, e.g.\ \todo{cite other Hirst papers}:

\anupam{commented ordinal arithmetic proposition below.}
% \begin{proposition}
%  [Ordinal arithmetic, $\ATR_0$]
%  \label{prop:ord-arith}
% \begin{enumerate}
%     \item\label{item:succ-ord-arith} (Successor) $ \forall \alpha \in \WO \, \exists \gamma \ordgneq \alpha \, \forall \delta \ordgneq \alpha \  \delta \ordgeq \gamma$
%     \item\label{item:max-ord-arith} (Maximum) $\forall \alpha,\beta \in \WO.\,  \exists \gamma \ordgeq \alpha,\beta. \, \forall \delta \ordgeq \alpha, \beta. \ \delta \ordgeq \gamma$
%     \item\label{item:sup-ord-arith} (Supremum) $\forall F: \Wo \to \WO. \exists \gamma \in \WO . \forall x\  Fx \ordleq \gamma$ \anupam{in fact have uniqueness in $\ATR$, but this is unnecessary}
%     \anupam{need to say more about suprema here}
% \end{enumerate}
% \end{proposition}

% The ordinals $\gamma$ whose existence are asserted in the above items will be referred to as $\succ \alpha, \max(\alpha,\beta), \sup F$ respectively henceforth. \anupam{is this okay?}

We shall assume basic ordinal existence principles, in particular constructions for successor ($\succ \alpha$), addition ($\alpha + \beta$) and maximum ($\max(\alpha,\beta))$, initial segments ($\alpha_b$ for $b\in \alpha$), all definable and satisfying characteristic properties provably in $\ATR_0$.
We shall also make crucial use of a `bounding' principle:
\begin{proposition}
    [$\Sin 1 1 $-Bounding, $\ATR_0$]
    \label{prop:bounding}
    Let $\phi(\alpha) \in \Sin 1 1 $ with $\forall \alpha  (\phi(\alpha)\limp \WO(\alpha))$. 
    Then $\exists \beta \in \WO \forall \alpha (\phi(\alpha) \limp \alpha \ordleq \beta)$.
\end{proposition}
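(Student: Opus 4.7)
The strategy is to bundle all the $\alpha$ satisfying $\phi(\alpha)$ into a single well-order $\beta$ via an ordinal sum indexed by an appropriate linearisation of the witness set. Put $\phi$ in $\Sin 1 1$ normal form, $\phi(\alpha) \equiv \exists X\, \theta(\alpha, X)$ with $\theta$ arithmetical, so the witness class $W \dfn \{(\alpha, X) : \theta(\alpha, X)\}$ is arithmetical; by hypothesis, the first coordinate of any element of $W$ is a well-order. I would linearly preorder $W$ by $(\alpha, X) \preceq (\alpha', X')$ iff $\alpha \ordlneq \alpha'$, or $\alpha \ordeq \alpha'$ and $X \leq_{\mathrm{lex}} X'$. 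Since $\ordlneq$ and $\ordeq$ are $\Din 1 1$ in $\ATR_0$, this preorder is $\Din 1 1$, hence admissible.

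Next, I would use $\ATR_0$'s arithmetical transfinite recursion (Proposition~\ref{prop:basic-rev-math}.\ref{item:pca-proves-atr-basic-rev-math}) to form the ordinal sum $\gamma \dfn \sum_{(\alpha, X)\in W} \alpha$ along $\preceq$, and verify $\WO(\gamma)$. Totality is routine from Proposition~\ref{prop:ord-comparison-facts}.\ref{item:comparison-total} and totality of each summand. For well-foundedness, any descending sequence in $\gamma$ projects to a non-increasing sequence of first-coordinate $\alpha$'s, which by comparison-well-foundedness (Proposition~\ref{prop:ord-comparison-facts}.\ref{item:comparison-wf}) must eventually stabilise at some $\alpha^\star$; the tail then descends within $\alpha^\star$, contradicting $\WO(\alpha^\star)$, which holds because some $X^\star$ witnesses $\phi(\alpha^\star)$. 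Setting $\beta \dfn \gamma$, every $\alpha$ with $\phi(\alpha)$ embeds as one of its summands, yielding $\alpha \ordleq \beta$ as required.

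The main obstacle is justifying the existence of $\gamma$ within $\ATR_0$: although $\preceq$ is $\Din 1 1$ and the summands are individually well-founded, the recursion is driven along the (strict part of the) ordering of the $\Sin 1 1$-class of $\alpha$'s---precisely the class we are trying to bound. If this direct construction proves delicate to formalise, the cleaner fallback is a Kleene--Brouwer argument: consider the tree of quadruples encoding partial attempts $(\alpha, X, f)$ with $\theta(\alpha, X)$ and $f$ a purported infinite descent in $\alpha$; by the hypothesis $\phi(\alpha)\limp \WO(\alpha)$ this tree has no infinite branch, so its Kleene--Brouwer order $\beta$ is itself a well-order (available in $\ATR_0$) that majorises every $\alpha$ appearing in $W$.
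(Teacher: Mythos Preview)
The paper does not give its own proof of this proposition; it is stated as a standard fact from reverse mathematics with an implicit reference to Simpson's monograph (cf.\ the sentence preceding Proposition~\ref{prop:ord-comparison-facts}). So I assess your argument on its own terms.

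Your primary approach has a fundamental gap. The witness class $W = \{(\alpha, X) : \theta(\alpha, X)\}$ is a class of pairs of \emph{sets}, not a set of natural numbers, so you cannot form $\sum_{(\alpha,X)\in W} \alpha$ as a single countable well-order within second-order arithmetic, and $\ATR$ performs recursion along a well-founded set relation, not along a $\Din 1 1$ preorder on a proper class. There is simply no construction of your $\gamma$ in $\ATR_0$. Your fallback is on the right track and close to one standard proof, but as written it is imprecise: the ``tree of quadruples $(\alpha, X, f)$'' must be a tree of \emph{finite approximations}---triples of finite sequences $(s,t,u)$ with $u$ descending in the order coded by $s$ and the $\Pin 0 1$ matrix of $\theta$ not yet refuted by $(s,t)$---since second-order objects cannot be nodes of a tree in this setting. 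You would also still need to argue that each $\alpha$ with $\phi(\alpha)$ embeds into the Kleene--Brouwer order of this tree, via the subtree of $<_\alpha$-descending sequences determined by a witness $X$.
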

When appealing to Bounding, we use notation such as $\upbnd \phi$ for an ordinal bounding all $\alpha \in \WO$ such that $\phi(\alpha)$.

\anupam{commented positive combinations of operators section below. Now obsolete.}

\subsection{Knaster-Tarski theorem and approximants}
\label{sec:kt-thm-and-approximants-in-psca}
\anupam{commented previous setup based on operators}
% Throughout this section, let $\phi = \phi(X,x)$ be a formula that is equivalent, provably in $\PSCA$, to a $\Sin 1 2$-formula positive in $X$ and a $\Pin 1 2 $ formula positive in $X$.

% Throughout this subsection, let $\phi(X,x) \in \posrelSin 1 1 {\WO, \ordleq } X \subseteq \posrelSin 1 1 {\Sin 1 1} X \subseteq \Sin {1,+} 2 (X)$.
% We shall sometimes suppress the variables $X,x$, writing simply $\phi$ for $\phi(X,x)$, and/or identify $\phi$ with the operator $\oper{ X}{ x} {\phi(\vec X,\vec x)}$ when there is no danger of confusion, simply to lighten the syntax. \anupam{check this later}
% % In fact, all formulas we consider will remain in $\posrelSin 1 1 {\WO, \ordleq} X$.
% We work in $\PSCA$, unless otherwise stated.

% \anupam{all of above is old}

Throughout this section let $\phi(X,x) \in \posnegDin 1 2 {X,\vec X}{\vec Y}$.
We shall typically ignore/suppress the parameters $\vec X, \vec Y$, focussing primarily on $X$ and $x$, and sometimes even write $\phi$ instead of $\phi(X,x)$.
We shall work within $\PSCA$ unless otherwise stated.

% \begin{remark}
%     [Reverse mathematics of fixed point theorems]
%     Let us point out that, while previous work on the reverse mathematics of fixed point theorems exist in the literature, e.g.\ \cite{PengYamazaki17:rev-math-fixed-point-thms-no-kt}, even for the Knaster-Tarski theorem \cite{SatoYamazaki17:rev-math-fixed-point-thms-with-kt}, these results apply to situations when the lattice or space at hand is countable (a subset of $\Nat$).
% Here we require a `non-uniform' version of the theorem where sets themselves are elements of the lattice, under inclusion.
% \end{remark}

\begin{remark}
    [Reverse mathematics of fixed point theorems]
    Let us point out that, while previous work on the reverse mathematics of fixed point theorems exist in the literature, e.g.\ \cite{PengYamazaki17:rev-math-fixed-point-thms-no-kt}, even for the Knaster-Tarski theorem \cite{SatoYamazaki17:rev-math-fixed-point-thms-with-kt}, these results apply to situations when the lattice or space at hand is countable (a subset of $\Nat$).
Here we require a version of the theorem where sets themselves are elements of the lattice, under inclusion.
Our operators are specified \emph{non-uniformly}, namely via positive formulas.
This amounts to a sort of non-uniform `3rd order reverse mathematics' of Knaster-Tarski, peculiar to the powerset lattice on $\Nat$.
\end{remark}

While we can define (bounded) approximants along any well-order simply by appeal to $\ATR$, it will be helpful to retain the well-order (and other free set variables) as a parameter of an explicit formula to be bound later in comprehension instances, and so we give the constructions explicitly here.

% In what follows, let us call any $F\subseteq \Nat \times \alpha$ satisfying 
% \[\forall b \in \alpha \forall y (Fby \limp \exists c<_\alpha b \, \phi(Fc,y))\]
% \emph{$\phi$-recursive} (or \emph{recursive in $\phi$}).

\begin{definition}
[(Bounded) approximants]
If $\WO(\alpha)$ and $a \in \alpha$ we write $\apprx X x {\phi} \alpha (a,x) $ (or even $\phi^\alpha (a,x)$) for:
\[
\exists F\subseteq \alpha \times \Nat \ (\forall b \in \alpha \forall y (Fby  \limp \exists c<_\alpha b\,  \phi (Fc,y)) \land  Fax)
\]
% Notice that this formula is, a priori, $\Sin 1 1$.
We also write,
\[
\begin{array}{r@{\ := \ }l}
     \apprx X x \phi \alpha(x) & \exists a \in \alpha \, \apprx X x\phi \alpha(a,x) \\
     \apprx X x \phi \WO (x) & 
     \exists \alpha ( \WO(\alpha) \land  \apprx X x \phi \alpha(x) ) 
\end{array}
\]
sometimes written simply $\phi^\alpha(x)$ and $\phi^\WO(x)$ respectively.
% Notice that $\phi^\alpha(x)$ is still $\Sin 1 1 $ and $\phi^\WO(x)$ is now (modulo logical equivalence) $\Sin 1 2$.
\end{definition}

% The following result is important:

\begin{proposition}
\label{prop:approx-is-posD12}
If $\WO(\alpha)$ then $\phi^\alpha$ is $\posnegDin 1 2 {\vec X}{\vec Y}$.
In particular, $\phi^\alpha(a,x)$ is equivalent to:
\begin{equation}
    \label{eq:G-phi-inductive-Gax}
\forall G \subseteq \alpha \times \Nat
(\forall b \in \alpha \forall y (\exists c <_\alpha b \phi (Gc,y) \limp Gby) \limp Gax )
\end{equation}
\end{proposition}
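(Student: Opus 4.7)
The plan is to establish the displayed equivalence between the $\Sigma^1_2$ definition of $\phi^\alpha(a,x)$ and the $\Pi^1_2$ formulation, and then read off the $\posnegDin{1}{2}{\vec X}{\vec Y}$ claim by inspection of polarities: both formulations contain $\phi$ only in positive position (the $\Sigma$-matrix features $\phi$ directly as a consequent; the $\Pi$-matrix features $\phi$ under the antecedent of a nested implication, so two polarity flips restore positivity), whence each inherits $\vec X$-positivity and $\vec Y$-negativity from $\phi$, yielding a $\posnegSin{1}{2}{\vec X}{\vec Y}$ formula equivalent to a $\posnegPin{1}{2}{\vec X}{\vec Y}$ one.

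For the $(\Rightarrow)$ direction, suppose $F\subseteq \alpha\times\Nat$ witnesses the $\Sigma$ definition at $(a,x)$, and let $G$ be any set satisfying the closure hypothesis of \eqref{eq:G-phi-inductive-Gax}. I would prove $\forall b\in\alpha\,\forall y\,(Fby\limp Gby)$ by transfinite induction on $b\in \alpha$ (Fact~\ref{fact:wo-ind-in-aca}, available since $\WO(\alpha)$): assuming $Fby$, the $F$-closure gives some $c<_\alpha b$ with $\phi(Fc,y)$; the IH delivers $Fc\subseteq Gc$ pointwise; monotonicity of $\phi$ in $X$ (Lemma~\ref{lem:pos-imp-mon}) yields $\phi(Gc,y)$; and the closure hypothesis on $G$ delivers $Gby$. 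Specialising to $(a,x)$ gives $Gax$, as required.

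For the $(\Leftarrow)$ direction, write $\psi(b,y)$ for the $\Pi^1_2$ formula in \eqref{eq:G-phi-inductive-Gax} with $(b,y)$ in place of $(a,x)$. Using $\Pi^1_2$-comprehension (available in $\PSCA$, Proposition~\ref{prop:basic-rev-math}\eqref{item:inclusions-basic-rev-math}), form $F:=\{(b,y)\in\alpha\times\Nat:\psi(b,y)\}$; note that $Fax$ follows directly from the $\Pi$ hypothesis. To verify the $F$-closure $\forall b,y\,(Fby\limp\exists c<_\alpha b\,\phi(Fc,y))$, I would first observe that $F$ is a \emph{pre}-fixed point, i.e.\ $\exists c<_\alpha b'\,\phi(Fc,y')\limp Fb'y'$: for any closed $H$, $F\subseteq H$ pointwise by definition of $\psi$, so monotonicity lifts $\phi(Fc,y')$ to $\phi(Hc,y')$, giving $Hb'y'$ by closure of $H$; since $H$ was arbitrary, $\psi(b',y')$ holds, so $Fb'y'$. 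Then, arguing by contradiction, suppose $(b,y)$ witnesses failure of post-closure, and set $G:=F\setminus\{(b,y)\}$ (a set by arithmetic comprehension in $F$). Monotonicity ($G\subseteq F$) plus the pre-closure just established makes $G$ closed: for any $(b',y')$ with $\exists c<_\alpha b'\,\phi(Gc,y')$ we get $Fb'y'$; either $(b',y')\neq (b,y)$ so $Gb'y'$, or $(b',y')=(b,y)$ so by monotonicity again $\exists c<_\alpha b\,\phi(Fc,y)$, contradicting the choice of $(b,y)$. Now applying $\psi(b,y)$ (which holds since $Fby$) to this closed $G$ yields $Gby$, contradicting the construction of $G$.

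The main obstacle is the post-closure verification in the $(\Leftarrow)$ direction: the natural candidate $F$ given by $\Pi^1_2$-comprehension is manifestly a pre-fixed point, but showing it is actually \emph{generated} stage-by-stage requires the ``$G:=F\setminus\{(b,y)\}$'' trick together with crucial use of positivity of $\phi$ in $X$. Once this is in hand, the $\Delta^1_2$ complexity follows, and tracking the polarities of $\vec X,\vec Y$ through the two quantifier prefixes gives the full $\posnegDin{1}{2}{\vec X}{\vec Y}$ conclusion.
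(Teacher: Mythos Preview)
Your proof is correct, but it takes a genuinely different route from the paper's. The paper does not prove the equivalence directly: instead it invokes (a $\Delta^1_2$-operator form of) $\ATR$ to build a single set $H$ satisfying the exact recursion $Hby \liff \exists c<_\alpha b\,\phi(Hc,y)$, and then shows separately (by transfinite inductions similar to your $(\Rightarrow)$ argument) that $\phi^\alpha(a,x)$ and the $\Pi^1_2$ formulation are each equivalent to $Hax$. Your approach bypasses the recursion-theoretic construction of $H$ entirely: in the $(\Leftarrow)$ direction you instead comprehend the intersection of all closed sets via $\ca{\Pin 1 2}$ and use the removal trick to verify post-closure. This is a clean Knaster--Tarski-style argument and is arguably more elementary than appealing to transfinite recursion for $\Delta^1_2$ operators. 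Note, though, that the paper immediately reuses the object $H$ in the proof of the next result (Corollary~\ref{cor:(bdd)-recursion}, Bounded Recursion), so its detour is not wasted; your $F$ from the $(\Leftarrow)$ direction is in fact the same object (being both pre- and post-closed), so you could recover the corollary from your construction too.
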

\begin{proof}
As written $\phi^\alpha (a,x)$ is a positive $\Sin 1 1 $ combination of $\phi$, so it is certainly $\posnegSin 1 2 {\vec X}{\vec Y}$ by Lemma~\ref{lem:pol-sub}.
So it suffices to prove the `in particular' clause,
since \eqref{eq:G-phi-inductive-Gax} is already a positive $\Pin 1 1 $ combination of $\phi$.% 

For this we note that we can `merge' the two definitions into an instance of $\ATR$, obtaining (after $\phi$-comprehension) some $H$ satisfying for all $b\in \alpha $ and $y$:
\begin{equation}
    \label{eq:H-phi-atr}
    Hby \liff \exists c <_\alpha b\,  \phi(Hc,y))
\end{equation}
Now we show that in fact $Hax \liff \phi^\alpha(a,x) $. For the left-to-right implication, we simply witness the $\exists F$ quantifier in the definition of $\phi^\alpha$ by $H$.
For the right-to-left implication, let $F\subseteq \alpha \times \Nat$ such that:
\begin{equation}
    \label{eq:F-phi-recursive}
\forall b \in \alpha \forall y (Fby \limp \exists c <_\alpha b \phi (Fc,y))
\end{equation}
We show $\forall x(Fax \limp Hax)$ by $\alpha$-induction on $a$:
\[
\begin{array}{r@{\ \implies \ }ll}
     Fax & \exists b<_\alpha a \phi (Fb,x) & \text{by \eqref{eq:F-phi-recursive}} \\
        & \exists b<_{\alpha} a \phi (Hb,x) & \text{by inductive hypothesis} \\
        & Hax & \text{by \eqref{eq:H-phi-atr}}
\end{array}
\]
We may prove $Hax \liff \eqref{eq:G-phi-inductive-Gax} $ similarly, concluding the proof.
\end{proof}

Eventually we will also show that $\phi^\WO \in \posnegDin 1 2 {\vec X}{\vec Y}$ too, but we shall need to prove the fixed point theorem first, in order to appeal to a duality property.
First let us note a consequence of the above proposition:
\begin{corollary}
    [(Bounded) recursion]
    \label{cor:(bdd)-recursion}
   Let $\alpha \in \WO$. We have the following:
    \begin{enumerate}
        \item\label{item:bdd-recursion} (Bounded recursion) $\phi^\alpha(a) = \bigcup\limits_{b<_\alpha a} \phi(\phi^\alpha(b))$.
    I.e.\
    \[\phi^\alpha(a,x) \liff \exists b<_\alpha a \phi(\phi^\alpha(b),x)
    \]
    \item\label{item:recursion} (Recursion) $\phi^\alpha = \bigcup\limits_{\beta \ordlneq \alpha} \phi(\phi^\beta)$, i.e.\ 
    \[
    \phi^\alpha(x) \liff \exists \beta \ordlneq \alpha\, \phi(\phi^\beta, x)
    \]
    \end{enumerate}
\end{corollary}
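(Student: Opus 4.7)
The plan is to reduce both parts to the canonical fixed-point characterisation of $\phi^\alpha(a,x)$ furnished by Proposition~\ref{prop:approx-is-posD12}: namely $\phi^\alpha(a,x)\liff Hax$ where $H$ satisfies $Hby \liff \exists c<_\alpha b\,\phi(Hc,y)$. Unfolding this equation at $a$ immediately yields $Hax\liff \exists b<_\alpha a\,\phi(Hb,x)$. To move from $\phi(Hb,\cdot)$ to $\phi(\phi^\alpha(b),\cdot)$ I need extensionality of $\phi$ in its set argument. Since $\phi\in \posnegDin 1 2 {X,\vec X}{\vec Y}$ is only assumed positive in $X$ and negative in no argument simultaneously, I would obtain extensionality by applying the Monotonicity Lemma~\ref{lem:pos-imp-mon} twice, using both the positive and the negative $\Din 1 2$ presentations of $\phi$. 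This gives bounded recursion~\eqref{item:bdd-recursion} directly.

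For the recursion part~\eqref{item:recursion} the key intermediate fact is \emph{monotonicity of approximants}: $\forall c,c'\in\alpha\,(c\leq_\alpha c'\limp \phi^\alpha(c)\subseteq \phi^\alpha(c'))$. I would prove this by transfinite induction along $\alpha$ using bounded recursion: if $c<_\alpha c'$ and $\phi^\alpha(c,y)$ holds, then bounded recursion produces $d<_\alpha c$ with $\phi(\phi^\alpha(d),y)$, and since $d<_\alpha c'$ by transitivity, a second application of bounded recursion yields $\phi^\alpha(c',y)$. This induction is legitimate inside $\PSCA$ because $\phi^\alpha$ is $\posnegDin 1 2 {\vec X}{\vec Y}$ by Proposition~\ref{prop:approx-is-posD12}, and $\PSCA$ supports transfinite induction on such formulas.

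With monotonicity in hand, the forward direction of~\eqref{item:recursion} proceeds as follows. If $\phi^\alpha(x)$ then $\phi^\alpha(a,x)$ for some $a\in\alpha$, so by bounded recursion $\phi(\phi^\alpha(b),x)$ for some $b<_\alpha a$. Taking $\beta\dfn \alpha_{\leq b}$ (a proper initial segment of $\alpha$, witnessed by $a>_\alpha b$) gives $\beta\ordlneq\alpha$. Using uniqueness of the recursion $H$ restricted to stages $\leq_\alpha b$, together with monotonicity of approximants (which collapses $\exists c\leq_\alpha b\,\phi^\alpha(c,y)$ to $\phi^\alpha(b,y)$), one deduces $\phi^{\beta}\liff \phi^\alpha(b)$ extensionally; substituting into $\phi(\phi^\alpha(b),x)$ yields the required $\phi(\phi^\beta,x)$. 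Conversely, given $\phi(\phi^\beta,x)$ with $\beta\ordlneq \alpha$, the comparison theorem (Proposition~\ref{prop:ord-comparison-facts}) produces $a\in\alpha$ with $\beta\ordeq\alpha_a$, so $\phi^\beta\liff \phi^{\alpha_a}$; monotonicity gives $\phi^{\alpha_a}\subseteq \phi^\alpha(a)$, whence $\phi(\phi^\alpha(a),x)$, and bounded recursion at any $a'>_\alpha a$ in $\alpha$ yields $\phi^\alpha(a',x)$ and hence $\phi^\alpha(x)$.

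The main obstacles I anticipate are: (i) formalising extensionality of $\phi$ so as to preserve its $\Din 1 2$ complexity, which forces one to use the positive and negative $\Sin 1 2$/$\Pin 1 2$ presentations separately rather than appealing to a single monotonicity statement; (ii) legitimising the transfinite inductions that establish monotonicity of approximants and the extensional identity $\phi^{\alpha_{\leq b}}\liff \phi^\alpha(b)$, which require Lemma~\ref{lem:pol-sub} to keep composed formulas inside the admissible class; and (iii) the transfer between $\phi^\beta$ and $\phi^{\alpha_a}$ along an order isomorphism, which uses that the comparison isomorphisms of $\ATR_0$ can be used to transport fixed-point solutions, provable in $\PCA$.
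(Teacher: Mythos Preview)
Your proposal is correct and, for part~\ref{item:bdd-recursion}, identical to the paper's: both simply invoke the equivalence $\phi^\alpha(a,x)\liff Hax$ from Proposition~\ref{prop:approx-is-posD12} and unfold the defining equation of $H$.

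For part~\ref{item:recursion} the paper takes a slightly more direct route. Rather than first proving monotonicity of bounded approximants and then appealing to uniqueness of $H$ on initial segments, the paper isolates a single intermediate equation
\[
\forall x\,(\phi^\alpha(b,x)\ \liff\ \phi^{\alpha_b}(x))
\]
and proves it by transfinite induction on $b\in\alpha$, using bounded recursion on both sides at the inductive step. From there the chain of biimplications $\phi^\alpha(x)\liff\exists c\in\alpha\,\phi^\alpha(c,x)\liff\dots\liff\exists\beta\ordlneq\alpha\,\phi(\phi^\beta,x)$ falls out without separating forward and backward directions. Your decomposition into monotonicity plus a restriction argument reaches the same identity but with more moving parts; the paper's version packages it as one lemma. (Incidentally, the monotonicity statement you prove is exactly Proposition~\ref{prop:approximants-are-inflationary}\eqref{item:bdd-approximants-are-inflationary}, which the paper derives \emph{after} this corollary, directly from bounded recursion and without any induction---your proof sketch does not actually use the inductive hypothesis either.)

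One remark on your obstacle~(i): the extensionality concern is a non-issue. In second-order arithmetic, if $\forall y\,(Ay\liff By)$ then $\phi(A,x)\liff\phi(B,x)$ by a trivial induction on the structure of $\phi$, since $X$ occurs only in atoms $Xt$. There is no need to invoke monotonicity in both polarities or to worry about the $\Din 1 2$ presentations; complexity preservation under substitution (Lemma~\ref{lem:pol-sub}) is a separate matter and only relevant when you need the substituted formula to stay in a given class, not for the equivalence itself.
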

\begin{proof}
    \ref{item:bdd-recursion} follows immediately from the equivalence between $\phi^\alpha$ and $H$ satisfying \eqref{eq:H-phi-atr} in the preceding proof. 
    For \ref{item:recursion} we first need an intermediate result.
    For $b\in \alpha \in \WO$, let us write $\alpha_b$ for the initial segment of $\alpha$ up to (and including) $b$.
    We show,
    \begin{equation}
        \label{eq:bdd-approximant-equals-approximant-upto-initial-segment}
        \forall x (\phi^\alpha (b,x) \liff \phi^{\alpha_b}(x))
    \end{equation}
    by transfinite induction on $b\in \alpha$.\anupam{discuss induction on arithmetical combinations} 
    We have:
    \[
    \begin{array}{r@{\ \iff \ }ll}
         \phi^\alpha(b,x) & \exists c<_\alpha b\, \phi(\phi^\alpha(c),x) & \text{by \ref{item:bdd-recursion}}\\
         & \exists c<_\alpha b \, \phi (\phi^{\alpha_c},x) & \text{by IH} \\
         & \exists c<_\alpha b \, \phi (\phi^{\alpha_b}(c),x) & \text{by IH} \\
         & \exists b' \in \alpha_b \exists c <_{\alpha_b} b' \phi (\phi^{\alpha_b}(c),x) & \text{set $b'=b$} \\
         & \exists b' \in \alpha_b \phi^{\alpha_b}(b',x) & \text{by \ref{item:bdd-recursion}}\\
         & \phi^{\alpha_b}(x) &\text{by dfn.\ of $\phi^{\alpha_b}$}
    \end{array}
    \]

    Now, turning back to \ref{item:recursion}, we have:
    \[
    \begin{array}[b]{r@{\ \iff \ }ll}
         \phi^\alpha(x) & \exists c \in \alpha \, \phi^\alpha(c,x) & \text{by definition} \\
         & \exists c \in \alpha \exists d<_\alpha c\, \phi(\phi^\alpha(d),x) &\text{by \ref{item:bdd-recursion}} \\
         & \exists c \in \alpha \exists d<_\alpha c \, \phi(\phi^{\alpha_d},x) & \text{by \eqref{eq:bdd-approximant-equals-approximant-upto-initial-segment}} \\
         & \exists c \in \alpha \exists \beta \ordlneq \alpha_c \, \phi(\phi^\beta,x) & \text{by basic ordinal properties \comment{?}} \\
         & \exists \beta \ordlneq \alpha \, \phi(\phi^\beta,x) \tag*{\qed}
    \end{array}
    \]
    \anupam{in penultimate step, might need invariance of approximants under isomorphic ordinals.}
    \renewcommand{\qed}{}
\end{proof}

% Note that, as written, $ \phi^\alpha$ 
% % and $\phi^\WO$ 
% is a positive $\Sin 1 1 $ combination of $\phi$.
% % (with a $\Pin 1 1 $ parameter $\WO$ in the case of $\phi^\WO$).
% Thus $\phi^\alpha$ is $\posnegSin 1 2 {\vec X}{\vec Y}$, provably in $\PSCA$, by Substitution Lemma~\ref{lem:pol-sub}
% and we have access to $\phi^\alpha$
% % and $\phi^\WO$
% as a set under $\ca{\Sin 1 2}$.
% From here we can use the Monotonicity Lemma~\ref{lem:pos-imp-mon} to derive:
\begin{proposition}
    [(Bounded) approximants are inflationary]
\label{prop:approximants-are-inflationary}
% Let $\phi(X,x)$ be arithmetical and positive in $X$. 
Let $\alpha,\beta \in \WO$. We have the following:
\begin{enumerate}
    \item\label{item:bdd-approximants-are-inflationary} $a\leq_\alpha b \limp \forall x (\phi^\alpha(a,x) \limp \phi^\alpha(b,x)) )$
    \item\label{item:approximants-are-inflationary} $\alpha \ordleq \beta \limp \forall x (\phi^\alpha(x) \limp \phi^\beta(x))$
\end{enumerate}
\end{proposition}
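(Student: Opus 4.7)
The plan is to leverage the recursion characterisations of Corollary \ref{cor:(bdd)-recursion}, sidestepping any direct appeal to transfinite induction.

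For part \ref{item:bdd-approximants-are-inflationary}, I would fix $\alpha \in \WO$ and argue as follows. Assume $a \leq_\alpha b$; the case $a = b$ is trivial, so suppose $a <_\alpha b$. If $\phi^\alpha(a,x)$ holds then bounded recursion (Corollary \ref{cor:(bdd)-recursion}.\ref{item:bdd-recursion}) supplies some $c <_\alpha a$ with $\phi(\phi^\alpha(c),x)$. Transitivity of the underlying partial order together with $a \not\leq_\alpha c$ (coming from $c <_\alpha a$) yields $c <_\alpha b$, and applying bounded recursion in the reverse direction gives $\phi^\alpha(b,x)$.

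For part \ref{item:approximants-are-inflationary} I would carry out an analogous argument using the unbounded recursion equation (Corollary \ref{cor:(bdd)-recursion}.\ref{item:recursion}). Assume $\alpha \ordleq \beta$ and $\phi^\alpha(x)$. Unbounded recursion supplies $\gamma \ordlneq \alpha$ with $\phi(\phi^\gamma, x)$; it then suffices to derive $\gamma \ordlneq \beta$, at which point a final appeal to recursion (now for $\beta$) delivers $\phi^\beta(x)$. To combine $\gamma \ordlneq \alpha$ with $\alpha \ordleq \beta$ I would invoke transitivity and pseudo-antisymmetry of $\ordleq$ from Proposition \ref{prop:ord-comparison-facts}: transitivity immediately gives $\gamma \ordleq \beta$, and if we had $\gamma \ordeq \beta$ then $\beta \ordleq \gamma \ordlneq \alpha$ combined with $\alpha \ordleq \beta$ would force $\alpha \ordeq \beta$ by pseudo-antisymmetry, whence $\gamma \ordeq \alpha$, contradicting $\gamma \ordlneq \alpha$.

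I do not anticipate any significant obstacle here: both parts reduce to direct manipulation of the recursion equations already established, and the only even mildly subtle point, namely the interaction of $\ordlneq$ with $\ordleq$ in part \ref{item:approximants-are-inflationary}, is handled by the basic facts of Proposition \ref{prop:ord-comparison-facts}. No reverse-mathematical strength beyond what is already in force (indeed, no more than $\ATR_0$) is required.
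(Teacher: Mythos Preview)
Your proposal is correct and follows essentially the same route as the paper: both parts are derived directly from the (bounded) recursion equations of Corollary~\ref{cor:(bdd)-recursion}, with part~\ref{item:bdd-approximants-are-inflationary} using bounded recursion and part~\ref{item:approximants-are-inflationary} using unbounded recursion. Your case split $a=b$ versus $a<_\alpha b$ in part~\ref{item:bdd-approximants-are-inflationary} is harmless but unnecessary (the paper handles $a\leq_\alpha b$ uniformly), and your explicit justification that $\gamma \ordlneq \alpha \ordleq \beta$ implies $\gamma \ordlneq \beta$ in part~\ref{item:approximants-are-inflationary} is a detail the paper simply asserts.
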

\begin{proof}
    \ref{item:bdd-approximants-are-inflationary} follows directly from Bounded Recursion:
    \[
    \begin{array}{rcll}
         \phi^\alpha(a,x) & \implies & \exists c<_\alpha a \, \phi(\phi^\alpha(c),x) & \text{by Corollary~\ref{cor:(bdd)-recursion}.\ref{item:bdd-recursion}}\\
         & \implies & \exists c<_\alpha b\, \phi(\phi^\alpha(c),x) & \text{since $a\leq_\alpha b$}\\
         & \implies &\phi^\alpha(b,x) & \text{by Corollary~\ref{cor:(bdd)-recursion}.\ref{item:bdd-recursion}} 
    \end{array}
    \]
\ref{item:approximants-are-inflationary} follows directly from Recursion:
\[
\begin{array}[b]{r@{\ \implies \ }ll}
     \phi^\alpha(x) & \exists \gamma \ordlneq \alpha \, \phi(\phi^\alpha,x) & \text{by Corollary~\ref{cor:(bdd)-recursion}.\ref{item:recursion}}\\
     & \exists \gamma \ordlneq \beta \, \phi(\phi^\gamma, x) & \text{since $\alpha \ordleq \beta$} \\
     & \phi^\beta(x) & \text{by Corollary~\ref{cor:(bdd)-recursion}.\ref{item:recursion}} 
\end{array}
\]
    \anupam{commented old direct proof below}
    % For \ref{item:approximants-are-inflationary}, let $\alpha\ordleq \beta$ and let $f:\alpha \hookrightarrow \beta$ be an embedding of $\alpha$ onto an initial segment of $\beta$.\anupam{better explain this earlier.}
    % Suppose $\phi^\alpha(x)$, so let $F\subseteq \alpha \times \Nat$ satisfy \eqref{eq:F-phi-recursive} and $a \in \alpha$ such that $Fax$.
    % We define by comprehension $G\subseteq \beta \times \Nat $ by:
    % \begin{equation}
    %     \label{eq:embedding-phi-rec-seq}
    %     Gby\ \liff \ \exists b' \in \alpha \, (fb' = b \land Fb'y)
    % \end{equation}
    % Note that, by injectivity of $f$ \anupam{ref?}, we have that:
    % \begin{equation}
    %     \label{eq:embedding-rec-seq-characterisation}
    %     Fcy \liff G(fc)y
    % \end{equation}
    % We have immediately $G(fa)x$, so it suffices to show,
    % \[
    % \begin{array}{rlll}
    %      & Gby\\
    %      \implies & \exists b' \in \alpha (fb' = b \land Fb'y) & \text{by \eqref{eq:embedding-phi-rec-seq}}\\
    %      \implies & \exists b' \in \alpha (fb'= b \land \exists c'<_\alpha b' \, \phi(Fc',y)) & \text{by \eqref{eq:F-phi-recursive}} \\
    %      \implies & \exists b' \in \alpha (fb'= b \land \exists c'<_\alpha b' \, \phi(G(fc'),y)) &\text{by \eqref{eq:embedding-rec-seq-characterisation}} \\
    %      \implies & \exists c'\in \alpha (fc' <_\beta b \land \phi(G(fc'),y)) \\
    %      \implies & \exists c<_\beta b \, \phi (Gc,y) 
    % \end{array}
    % \]
    % where the final two lines follow from $f$ being an embedding $\alpha \hookrightarrow \beta$. \anupam{actually final line does not seem to rely on this, just inclusion of domains.}
    \renewcommand{\qed}{}
\end{proof}

\anupam{commented bounded approximant collection below}
% \begin{lemma}
% [Bounded approximant collection]
% Let $\psi(X)$ be arithmetical and positive in $X$, and let $\alpha \in \WO$. 
% Then:
% \[
% \psi(\phi^\alpha) \liff \exists b \in |\alpha| 
% \]
% \end{lemma}

\begin{definition}
[Least (pre)fixed points]
% Let $\phi(X,x)$ be an arithmetical formula positive in $X$.
Define $(\mufo X x \phi)(y)$ (or even $(\mu\phi)(y)$) by:
\[
(\mufo X x \phi) (y) := \forall X (\forall x ( \phi(X,x) \limp Xx ) \limp Xy)
\]
\end{definition}
% Note that the above is $\Pin {1,+}1 (\phi)$.
Note that we may treat $\mu\phi$ as a set in $\PSCA$ since $\phi\in \Din 1 2$, and so $(\mu\phi)(y) $ is $ \Pin 1 2$.
By mimicking a text-book proof of the Knaster-Tarski theorem we obtain:
\begin{proposition}
[`Knaster-Tarski']
    \label{prop:kt-for-least-pre-fixed-point}
    $\phi(\mu\phi)= \mu \phi$, i.e.\ 
    \[
    \forall x (\phi(\mu\phi,x) \limp \mu\phi\, x)
    \]
\end{proposition}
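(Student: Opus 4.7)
The plan is to mimic the standard textbook proof of Knaster--Tarski, taking a little care to check that the required instances of comprehension, monotonicity, and universal instantiation go through in $\PSCA$. Although the display reads $\phi(\mu\phi) = \mu\phi$, the ``i.e.'' clarifies that we need only establish the pre-fixed point direction $\forall x(\phi(\mu\phi,x) \limp \mu\phi\, x)$; the reverse inclusion is then a standard consequence that I will sketch at the end.

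First I would unfold the definition of $\mu\phi$: to prove $\mu\phi\, x$, assume an arbitrary $X$ with $\forall y(\phi(X,y) \limp Xy)$ and derive $Xx$. The hypothesis gives $\phi(\mu\phi, x)$, so it suffices to show $\phi(\mu\phi, x) \limp \phi(X,x)$, since then $Xx$ follows from the assumption on $X$. For this implication I would invoke the Monotonicity Lemma~\ref{lem:pos-imp-mon} applied to $\phi(X,x)$, which is positive in $X$: it suffices to verify $\forall y (\mu\phi\, y \limp X y)$. But this is just instantiation of the definition of $\mu\phi$ at the same $X$ (and the same hypothesis $\forall y (\phi(X,y) \limp Xy)$ we started with), which is an allowable second-order instantiation in $\PSCA$ since $\mu\phi$ itself is $\Pin 1 2$ (being a universal quantifier over sets above $\phi \in \posnegDin 1 2 {X,\vec X}{\vec Y}$) and by the Polarised Substitution Lemma~\ref{lem:pol-sub} the substituted formula $\phi(\mu\phi, x)$ remains in $\Din 1 2$, so $\Pin 1 2$-comprehension suffices.

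For the reverse inclusion $\mu\phi\, x \limp \phi(\mu\phi, x)$, I would define (informally) $Y := \phi(\mu\phi)$ and show that $Y$ is itself a prefixed point: by the first part $\phi(\mu\phi) \subseteq \mu\phi$, so by another application of Monotonicity $\phi(\phi(\mu\phi)) \subseteq \phi(\mu\phi)$, i.e.\ $\phi(Y) \subseteq Y$. Hence instantiating the $\forall X$ quantifier in the definition of $\mu\phi$ at $Y$ yields $\mu\phi\, x \limp Yx = \phi(\mu\phi, x)$, as required.

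The main obstacle I anticipate is not conceptual but bookkeeping: we must make sure every instantiation of the defining universal $\forall X$ in $\mu\phi$ is at a formula whose comprehension is available in $\PSCA$, and that every invocation of the Monotonicity Lemma is applied to the right polarity of $\phi$. This is why the preceding subsection set up $\posnegDin 1 2 {}{}$ and the Polarised Substitution Lemma: both instantiations above are at $\Pin 1 2 $ (or $\Din 1 2$) formulas with the correct polarity, so they fall squarely within the reach of $\ac{\Sin 1 2}$ available in $\PSCA$ by Proposition~\ref{prop:basic-rev-math}.\ref{item:psca-proves-spac-basic-rev-math}.
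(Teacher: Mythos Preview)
Your proposal is correct and follows essentially the same textbook Knaster--Tarski argument as the paper: first show $\phi(\mu\phi)\subseteq\mu\phi$ via monotonicity and the definition of $\mu\phi$, then use this to show $\phi(\mu\phi)$ is itself a pre-fixed point and instantiate. One small slip: in your final paragraph you invoke $\ac{\Sin 1 2}$, but what you actually need (and correctly identified earlier) is just $\ca{\Pin 1 2}$-comprehension to treat $\mu\phi$ and $\phi(\mu\phi)$ as sets, which $\PSCA$ has directly by definition.
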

\begin{proof}
    % Writing $\phi(X)\subseteq X$ for $\forall x (\phi(X,x) \limp Xx)$, 
    For $\phi(\mu\phi)\subseteq \mu\phi$, note that for any $X$ with $\phi(X)\subseteq X$ we have:
    \[
    \begin{array}{r@{\ \limp \ }ll}
         \mu\phi\, x & Xx & \text{by definition of $\mu\phi$} \\
         \phi(\mu\phi,x) & \phi(X,x) & \text{by Lemma~\ref{lem:pos-imp-mon}}\\ 
         \phi(\mu\phi,x) & Xx & \text{since $\phi(X)\subseteq X$}
    \end{array}
    \]
    Thus $\forall X(\phi(X)\subseteq X \limp \forall x ( \phi(\mu\phi,x) \limp Xx)) $, and so indeed $\phi(\mu\phi)\subseteq \mu\phi$.

    For $\mu\phi \subseteq \phi(\mu\phi)$, we have:
    \[
    \begin{array}[b]{rcll}
         \phi(\mu\phi) &\subseteq & \mu\phi &\text{by above}\\
         \phi(\phi(\mu\phi)) & \subseteq &\phi(\mu\phi) & \text{by Monotonicity Lemma~\ref{lem:pos-imp-mon}}\\
         \phi(\mu\phi) & \subseteq & \mu\phi & \text{by comprehension on $\phi(\mu\phi)$} \tag*{\qed}
    \end{array}
    \]
    \renewcommand{\qed}{}
\end{proof}

\begin{lemma}
    [Closure ordinals]
    \label{lem:closure-ordinals}
    $\exists \alpha \in \WO \, \phi^\WO \subseteq \phi^\alpha$.
    I.e.\ 
    \[
    \forall x (\phi^\WO(x) \limp \phi^\alpha(x))
    \]
\end{lemma}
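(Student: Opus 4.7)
The plan is to apply $\Sin 1 1 $-Bounding (Proposition~\ref{prop:bounding}) to the class of well-orders at which the approximation of $\mu\phi$ is not yet complete, having first identified $\phi^\WO$ with the set-theoretic least fixed point $\mu\phi$.

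First I would form $M \dfn \mu\phi$ as a set via $\Pin 1 2 $-comprehension, legitimate in $\PSCA$ since $\mu\phi$ is $\Pin 1 2$-definable. By Proposition~\ref{prop:kt-for-least-pre-fixed-point}, $M$ is a fixed point of $\phi$, i.e.\ $\phi(M) = M$. I would then establish $\phi^\WO \subseteq M$ by transfinite induction on $\beta \in \WO$: assuming $\phi^\gamma \subseteq M$ for every $\gamma \ordlneq \beta$, by Recursion (Corollary~\ref{cor:(bdd)-recursion}.\ref{item:recursion}) together with Monotonicity (Lemma~\ref{lem:pos-imp-mon}),
\[
\phi^\beta \ = \ \bigcup_{\gamma \ordlneq \beta} \phi(\phi^\gamma) \ \subseteq \ \phi(M) \ = \ M \ .
\]

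The main work is to produce a single $\alpha \in \WO$ with $M \subseteq \phi^\alpha$. Taking $M$ as a parameter, consider
\[
\theta(\beta) \ \dfn \ \WO(\beta) \, \land \, \exists x \, (Mx \, \land \, \lnot \phi^\beta(x)) \ ,
\]
which identifies those $\beta$ at which the approximation has not yet captured all of $M$. Using the explicit $\Pin 1 1 $-inductive characterization of approximants from Proposition~\ref{prop:approx-is-posD12} (equation~\eqref{eq:G-phi-inductive-Gax}), together with $\Sin 1 2 $-AC available in $\PSCA$ (Proposition~\ref{prop:basic-rev-math}.\ref{item:psca-proves-spac-basic-rev-math}), I would rewrite $\theta$ in a $\Sin 1 1 $-in-the-parameter-$M$ form. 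Applying $\Sin 1 1 $-Bounding yields an ordinal $\upbnd \theta$ bounding all $\beta$ with $\theta(\beta)$. Setting $\alpha \dfn \succ \upbnd\theta$, we have $\alpha \ordgneq \upbnd\theta$, so $\lnot\theta(\alpha)$, i.e.\ $\forall x (Mx \limp \phi^\alpha(x))$, so $M \subseteq \phi^\alpha$. Combined with $\phi^\WO \subseteq M$, this gives $\phi^\WO \subseteq \phi^\alpha$, as desired.

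The main obstacle is the complexity bookkeeping in the final step: the approximants $\phi^\beta$ are only $\Din 1 2 $ a priori, so $\theta$ is nominally $\Sin 1 2 $, and we cannot apply $\Sin 1 1 $-Bounding directly. The essential idea is that having $M$ available as a parameter, in combination with the $\Pin 1 1 $-inductive presentation \eqref{eq:G-phi-inductive-Gax}, lets us ``pre-commit'' to the fixed point and extract a $\Sin 1 1 (M)$-form suitable for bounding. As a fallback, if this complexity reduction proves inadequate, one can instead appeal to a $\Sin 1 2 $-bounding principle derivable in $\PSCA$ from $\Pin 1 2 $-comprehension and $\Sin 1 2 $-AC, applied to $\theta$ directly.
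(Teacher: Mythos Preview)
Your approach has a genuine complexity gap in the bounding step. The formula $\theta(\beta)$ is not $\Sin 1 1$, even with $M$ as a set parameter: the $\Pin 1 1$-shell \eqref{eq:G-phi-inductive-Gax} still contains $\phi$ (itself only $\Din 1 2$) positively, so its negation, and hence $\theta$, is only $\Din 1 2$. Your appeal to $\ac{\Sin 1 2}$ does not reduce this to $\Sin 1 1$: choice collapses $\forall$--$\exists$ quantifier alternations, but the obstruction here is the $\Din 1 2$ subformula $\phi$ embedded in the shell, not a quantifier pattern. And the fallback ``$\Sin 1 2$-bounding'' is simply false: take $\theta(\beta) = \WO(\beta)$, which is $\Pin 1 1 \subseteq \Sin 1 2$, trivially implies $\WO(\beta)$, yet has no ordinal bound.

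The paper's proof avoids $M = \mu\phi$ entirely and is more direct. By definition $\phi^\WO(x)$ unpacks as $\exists\alpha\in\WO\,\phi^\alpha(x)$, so trivially $\forall x\,\exists\alpha\in\WO\,(\phi^\WO(x)\to\phi^\alpha(x))$. One applies $\ac{\Sin 1 2}$ (with $\phi^\WO$ formed as a set by $\ca{\Sin 1 2}$) to obtain a function $F:\Nat\to\WO$ with $\forall x\,(\phi^\WO(x)\to\phi^{Fx}(x))$. Now $\Sin 1 1$-Bounding is applied not to anything involving $\phi$, but to ``$\exists x\,\forall a\,(Xa\liff Fxa)$'', i.e.\ ``$X$ is in the range of $F$'', which is arithmetical in the parameter $F$ and hence $\Sin 1 1$; this yields $\upbnd F\in\WO$ bounding the range of $F$, and the inflationary property (Proposition~\ref{prop:approximants-are-inflationary}.\ref{item:approximants-are-inflationary}) finishes. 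The key move you are missing is to apply choice \emph{first}, so that bounding is only ever needed for the range of a concrete function.
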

\begin{proof}
    [Proof sketch]
    We have:\footnote{For the penultimate step, recall that we have access to $\phi^\WO$ as a set, under $\Sin 1 2$-comprehension.}
    \[
    \begin{array}{rll}
         & \forall x (\phi^\WO(x) \limp \exists \alpha \in \WO \phi^\alpha(x)) & \text{by dfn.\ of $\phi^\WO$}\\
         \therefore & \forall x \exists \alpha \in \WO (\phi^\WO(x) \limp \phi^\alpha(x)) & \text{by pure logic}\\
         \therefore & \exists F : \Nat \to \WO \, \forall x (\phi^\WO(x) \limp \phi^{Fx}(x))  & \text{by $\Sin 1 2 $-choice}\\
         \therefore & \forall x (\phi^\WO \limp \phi^{\upbnd F }(x)) & \text{by Prop.~\ref{prop:approximants-are-inflationary}.\ref{item:approximants-are-inflationary}}
    \end{array}
    \]
    where $\upbnd F\in \WO$ is obtained by Bounding,\footnote{To be precise, we apply bounding on the formula $\exists x \forall a (Xa \liff Fxa)$.} Proposition~\ref{prop:bounding}, such that $\forall x \upbnd F \ordgeq Fx$.
\end{proof}

\anupam{commented approximant collection below}
% \begin{lemma}
% [Approximant collection]
% Let $\psi(X)$ be $\Sin 1 1(\vec A)$ and positive in $X$, for $\vec A$ some $\Sin 1 1 \cup \Pin 1 1 $ properties.
% Then:
% \[
% \psi(\apprx Xx \phi\WO) \liff \exists \beta \in \WO \psi (\apprx Xx \phi\beta)
% \]
% \end{lemma}
% \begin{proof}[Proof sketch]
% The right-to-left implication is easy \anupam{any comprehension needed? Idea is functoriality, but requires $\exists $-steps, surely fine.d}, so we focus on the left-to-right implication.
% We proceed by (meta-level) induction on the structure of $\psi(X)$.
% \begin{itemize}
%     \item For $\lor$ and $\land$ cases we simply take the maximum of ordinals obtained by the inductive hypothesis.
%     \item For $\exists$ the same ordinal will do.
%     \item Suppose $\psi(X) = \forall x \psi'(X,x)$. Then we have $\forall x \exists \beta \in \WO \psi'(\apprx X \phi\beta, x)$ by the inductive hypothesis. 
%     By (provably) $\ac{\Sin 1 2}$ we have $\exists F : \Nat \to \WO \forall x \psi'(\apprx X \phi{F(x)},x)$.
%     We can now set $\beta := \sup F$. \oldtodo{check this, might need $F$ to be itself well-ordered} \anupam{actually since $F$ is an $\omega$-sequence, this should be unimportant. Can extract a well-ordered sequence by iterated max, which is not possible for longer sequences.}
% \end{itemize}
% \end{proof}

% \oldtodo{proof}

% \begin{lemma}
% [Successor]
% \oldtodo{need a successor lemma. Could be better to prove expected things about approximants from their bounded variants.}
% \end{lemma}

The main result of this subsection is:
% We can now prove the main LFP characterisation theorem.
\begin{theorem}
[LFP dual characterisation]
\label{thm:muphi-is-phiwo}
% Let $\phi(X,x) $ be a formula positive in $X$. 
$\mu \phi =  \phi^\WO$, i.e.:
\[
% \forall x \left(
% (\mufo X x \phi) (x) \liff \apprx X x \phi\WO(x)
% \right)
\forall x \left(
\mu \phi (x) \liff \phi^\WO(x)
\right)
\]
\end{theorem}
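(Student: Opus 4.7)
The plan is to prove the two inclusions separately, using Proposition~\ref{prop:kt-for-least-pre-fixed-point} (that $\mu\phi$ is a fixed point, in particular a pre-fixed point) for one direction, and Lemma~\ref{lem:closure-ordinals} (existence of a closure ordinal) for the other. Everything should go through in $\PSCA$ since we have $\Sin 1 2$-choice, transfinite induction on (arithmetical combinations of) $\Pin 1 2$ formulas along well-orders, and Bounding.

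For $\phi^\WO \subseteq \mu\phi$, fix $\alpha \in \WO$ and an arbitrary $X$ with $\phi(X) \subseteq X$; we show $\phi^\alpha \subseteq X$ by transfinite induction on $\alpha$. Using Recursion (Corollary~\ref{cor:(bdd)-recursion}.\ref{item:recursion}), if $\phi^\alpha(x)$ then $\phi(\phi^\beta, x)$ for some $\beta \ordlneq \alpha$. By IH, $\phi^\beta \subseteq X$, so by the Monotonicity Lemma~\ref{lem:pos-imp-mon} we obtain $\phi(\phi^\beta) \subseteq \phi(X) \subseteq X$, hence $Xx$. Since $X$ was arbitrary, $\mu\phi(x)$ follows from the definition of $\mu\phi$.

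For $\mu\phi \subseteq \phi^\WO$, the idea is that $\phi^\WO$ is itself a pre-fixed point: once we know $\phi(\phi^\WO) \subseteq \phi^\WO$, the definition of $\mu\phi$ immediately gives $\mu\phi \subseteq \phi^\WO$. To establish closure, apply Lemma~\ref{lem:closure-ordinals} to obtain $\alpha \in \WO$ with $\phi^\WO \subseteq \phi^\alpha$; the reverse inclusion $\phi^\alpha \subseteq \phi^\WO$ is immediate from the definition of $\phi^\WO$, so in fact $\phi^\WO = \phi^\alpha$. Now suppose $\phi(\phi^\WO, x)$, i.e.\ $\phi(\phi^\alpha, x)$. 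By Recursion on the successor $\succ\alpha$, since $\alpha \ordlneq \succ\alpha$, we conclude $\phi^{\succ\alpha}(x)$, and so $\phi^\WO(x)$ by definition.

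The main obstacle is not the mathematical content (which mimics standard Knaster--Tarski arguments) but rather keeping the formalisation within $\PSCA$: we need $\mu\phi$ and $\phi^\WO$ to sit at definable levels of the analytical hierarchy so that the transfinite induction in the first direction, and the appeal to the closure ordinal in the second, are licit. Here we rely on Proposition~\ref{prop:approx-is-posD12} (to keep $\phi^\alpha$ at a $\posnegDin 1 2{\vec X}{\vec Y}$ level, uniformly in $\alpha$) and on $\Sin 1 2$-choice plus $\Sin 1 1$-Bounding underpinning Lemma~\ref{lem:closure-ordinals}. A secondary point to verify is that the transfinite induction invoked in the first direction is on a formula (of the form $\forall X(\phi(X)\subseteq X \to \phi^\alpha \subseteq X)$) within the class admitted by $\PSCA$; one convenient route is to use Proposition~\ref{prop:approx-is-posD12} to replace $\phi^\alpha$ by its $\Pin 1 2$ dual characterisation, so that the induction is on an arithmetical combination of $\Pin 1 2$ formulas.
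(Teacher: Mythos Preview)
Your strategy matches the paper's. The direction $\mu\phi \subseteq \phi^\WO$ is argued identically: show $\phi^\WO$ is a pre-fixed point via the closure ordinal and conclude by the definition of $\mu\phi$ (the paper notes explicitly that $\phi^\WO$ is available as a set under $\ca{\Sin 1 2}$, which is needed to instantiate the $\forall X$ in $\mu\phi$).

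For $\phi^\WO \subseteq \mu\phi$ there is one imprecision worth flagging. You phrase the argument as ``transfinite induction on $\alpha$'' with inductive hypothesis on $\phi^\beta$ for $\beta \ordlneq \alpha$, appealing to (unbounded) Recursion. Read literally this is induction over the \emph{class} of well-orders under $\ordlneq$, which is not an instance of Fact~\ref{fact:wo-ind-in-aca} (that only licenses induction along a fixed well-order on natural numbers); deriving such an induction principle would need some form of $\Sin 1 2$-dependent choice, which you have not justified. The paper sidesteps this by inducting on $a \in \alpha$ for the already-fixed $\alpha$ and using \emph{Bounded} Recursion (Corollary~\ref{cor:(bdd)-recursion}.\ref{item:bdd-recursion}) rather than Recursion: from $\phi^\alpha(a,x)$ obtain some $b<_\alpha a$ with $\phi(\phi^\alpha(b),x)$, then invoke the IH. With that adjustment your argument goes through. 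Your choice to fix an arbitrary pre-fixed point $X$ (rather than the paper's choice of $\mu\phi$ itself, available as a set by $\ca{\Pin 1 2}$) also works and keeps the induction formula at the $\Pin 1 2$ level, provided the $\forall X$ stays \emph{outside} the induction and is generalised only after; the version you sketch at the end with the outer $\forall X$ inside the induction formula would land at $\Pin 1 3$, which is why the paper's formulation with the single set $\mu\phi$ is the cleaner route.
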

\begin{proof}
For the left-right inclusion, $\mu \phi \subseteq  \phi^\WO$, we show that $\phi^\WO$ is a pre-fixed point of $\oper X x \phi$, i.e.:
\[
\forall x (\phi (\phi^\WO,x) \limp \phi^\WO (x))
\]
By Lemma~\ref{lem:closure-ordinals} let $\alpha $ such that $\phi^\WO = \phi^\alpha$.
We have:
\[
\begin{array}{r@{\ \implies \ }ll}
     \phi(\phi^\WO,x) & \phi(\phi^\alpha,x) & \text{by assumption} \\
     & \phi^{\succ \alpha}(x) & \text{by Cor.~\ref{cor:(bdd)-recursion}.\ref{item:recursion} } \\
     & \phi^\WO (x) & \text{by definition of $\phi^\WO$}
\end{array}
\]
Since we have access to $\phi^\WO$ as a set, under $\ca{\Sin 1 2}$, this indeed yields $\mu \phi \subseteq \phi^\WO$.

For the right-left inclusion, $\phi^\WO \subseteq \mu \phi$, let $\alpha \in \WO$ and we show:
\[
\forall a \in \alpha ( \phi^\alpha(a,x) \limp \mu \phi\, x)
\]
by transfinite induction on $a \in \alpha$.
For logical complexity, recall that we indeed have access to $\mu\phi$ as a set, since $\phi\in \Din 1 2$ so $\mu\phi \in \Pin 1 2$.
We have:
\[
\begin{array}[b]{r@{\ \implies \ }ll}
     \phi^\alpha(a,x) & \exists b<_\alpha a \, \phi(\phi^\alpha(b),x) & \text{by Corollary~\ref{cor:(bdd)-recursion}.\ref{item:bdd-recursion}}\\
     & \exists b<_\alpha a \, \phi(\mu\phi,x) & \text{by inductive hypothesis} \\
     & \phi(\mu\phi,x) & \text{by vacuous quantification} \tag*{\qed}
\end{array}
\]
\renewcommand{\qed}{}
\end{proof}

One of the main consequences of the above result is:
\begin{corollary}
    [$\PSCA$]
    \label{cor:lfp-of-posD12-is-posD12}
    $\mufo X x \phi$ is $\posnegDin 1 2 {\vec X}{\vec Y}$.
\end{corollary}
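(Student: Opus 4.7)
The plan is to establish the two directions of $\Din 1 2$-definability separately, reading one off the direct definition of $\mu\phi$ and the other through the alternative characterisation $\mu\phi = \phi^\WO$ given by Theorem~\ref{thm:muphi-is-phiwo}.

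First I would verify the $\posnegPin 1 2 {\vec X}{\vec Y}$ bound directly from the defining formula
\[
(\mu\phi)(x) \ := \ \forall X (\forall y (\phi(X,y) \limp Xy) \limp Xx)  .
\]
Replacing $\phi$ with an equivalent $\posnegSin 1 2 {X,\vec X}{\vec Y}$ form (which exists since $\phi \in \posnegDin 1 2 {X,\vec X}{\vec Y}$) and tracking the two negations induced by the nested implications, the formula $\phi$ appears under an even number of implications from the perspective of $\vec X$ and $\vec Y$, so the $(\vec X,\vec Y)$-polarities of $\phi$ are preserved. On logical complexity, each implication flips $\Sigma \leftrightarrow \Pi$, and the final outer $\forall X$ keeps us inside $\Pi^1_2$; here we implicitly invoke the Polarised Substitution Lemma~\ref{lem:pol-sub} to reconcile first-order quantifier alternations around $\phi$.

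Next I would establish the $\posnegSin 1 2 {\vec X}{\vec Y}$ bound via Theorem~\ref{thm:muphi-is-phiwo}: working in $\PSCA$, $\mu\phi(x) \liff \phi^\WO(x) \liff \exists \alpha (\WO(\alpha) \land \phi^\alpha(x))$. By Proposition~\ref{prop:approx-is-posD12}, for each $\WO(\alpha)$ the formula $\phi^\alpha(x)$ is $\posnegDin 1 2 {\vec X}{\vec Y}$; in particular the explicit $\exists F$-witnessed form of $\phi^\alpha$ is already a positive $\Sin 1 1$ combination of $\phi$ (the occurrence of $\phi$ sits in the consequent of the recurrence $Fby \limp \exists c <_\alpha b\, \phi(Fc,y)$), so substituting the $\posnegSin 1 2 {X,\vec X}{\vec Y}$ form of $\phi$ and applying Lemma~\ref{lem:pol-sub} yields a $\posnegSin 1 2 {\vec X}{\vec Y}$-formula. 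Prefixing the set-existential $\exists \alpha$ together with the parameter-free $\Pin 1 1$ predicate $\WO(\alpha)$ keeps us inside $\posnegSin 1 2 {\vec X}{\vec Y}$, since $\WO$ contributes no $\vec X,\vec Y$-occurrences.

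Combining the two bounds, $\mu\phi$ is provably equivalent in $\PSCA$ to both a $\posnegSin 1 2 {\vec X}{\vec Y}$ and a $\posnegPin 1 2 {\vec X}{\vec Y}$ formula, hence is $\posnegDin 1 2 {\vec X}{\vec Y}$. The main technical obstacle is the careful bookkeeping of polarities under nested implications in step one, and the appeal to $\ac{\Sin 1 2}$ implicit in Lemma~\ref{lem:pol-sub} to absorb first-order quantifiers around $\phi$; everything else is a matter of substitution. Note that the use of $\PSCA$ (rather than a weaker theory) is essential here, both to form $\mu\phi$ as a set via $\Din 1 2$-comprehension and to appeal to Theorem~\ref{thm:muphi-is-phiwo}.
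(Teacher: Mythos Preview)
Your overall strategy matches the paper's intended argument exactly: the $\Sigma^1_2$ bound comes from the dual characterisation $\mu\phi = \phi^{\WO}$ (Theorem~\ref{thm:muphi-is-phiwo}) together with Proposition~\ref{prop:approx-is-posD12}, and the $\Pi^1_2$ bound comes straight from the defining formula $(\mu\phi)(y) = \forall X(\forall x(\phi(X,x)\to Xx)\to Xy)$. The $\Sigma^1_2$ half of your argument is fine.

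There is, however, a concrete error in your $\Pi^1_2$ computation. You substitute the $\posnegSin 1 2 {X,\vec X}{\vec Y}$ form of $\phi$ and then reason ``each implication flips $\Sigma\leftrightarrow\Pi$, and the final outer $\forall X$ keeps us inside $\Pi^1_2$''. Starting from $\Sigma^1_2$, two flips bring you back to $\Sigma^1_2$ for the body $\forall x(\phi\to Xx)\to Xy$; prefixing $\forall X$ then gives $\forall X\,\exists Y\,\forall Z\,(\text{arith})$, which is $\Pi^1_3$, not $\Pi^1_2$. The Polarised Substitution Lemma~\ref{lem:pol-sub} does not help here: it says that substituting $\Pi^1_2$ into a \emph{positive} position of a $\Pi^1_1$ context yields $\Pi^1_2$, not that substituting $\Sigma^1_2$ does. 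Since the ambient context $\forall X(\forall x(\bullet\to Xx)\to Xy)$ is $\Pi^1_1$ with $\bullet$ occurring positively, you must instead plug in the $\posnegPin 1 2 {X,\vec X}{\vec Y}$ form of $\phi$. Then $\phi\to Xx$ is $\Sigma^1_2$, $\forall x(\phi\to Xx)$ remains $\Sigma^1_2$ (via $\ac{\Sin 1 2}$), the second implication yields $\Pi^1_2$, and the outer $\forall X$ merges with the leading universal to stay at $\Pi^1_2$. The $(\vec X,\vec Y)$-polarities are preserved throughout since $\phi$ sits in positive position. With this one correction the argument goes through as you intend.
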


\anupam{commented old nu stuff below. obsolete.}
% \subsection{Using duality to optimise complexity}

% \anupam{So, actually, we really do need that $\mufo X x \phi$ is $\Din 1 2 $ in $\PSCA$. need a little lemma for this, but goes through via pre/post definition and main characterisation above.}

% The final part of this section is to establish a suitable logical complexity bound on fixed points. 
% For this it is useful to reconsider the dual of a least fixed point, namely the greatest fixed point, within arithmetic (now in the classical setting).

% Given a formula $\phi(X,x)$ let us write $(\nufo X x \phi)(a) $ (or even $(\nu \phi) (a)$) for:
% \[
% \exists X (\forall x (Xx \limp \phi(X,x)) \land Xa)
% \]

% Let us furthermore write $\bar \phi(X,x)$ for $\lnot \phi(\lnot X,x)$. 
% We have the following from almost purely logical rewriting:
% \begin{proposition}
%     [$\ACA$]
%     $(\mufo X x \phi)(a) \liff \lnot (\nufo X x {\bar \phi})(a)$.
% \end{proposition}

% Now, we can repeat almost all of this section's development to characterise greatest fixed points in terms of the \emph{intersection} of co-approximants, again provably $\Sin 1 2$. \anupam{say more}
% By the characterisation theorem, we thus obtain a powerful consequence on logical complexity:

% \begin{corollary}
%     [$\PSCA$]
%     $\mufo X x \phi$ is $\posnegDin 1 2 {\vec X}{\vec Y}$.
% \end{corollary}

Let us point out that the above result amounts to a partial arithmetisation of purely descriptive characterisation of $\mu$-definable sets in Lubarsky's work \cite{Lubarsky93}.
Note also that, while we (in particular) obtain a $\Delta^1_2$ bound on fixed point formulas, we crucially required $\CA{\Pi^1_2}$ to prove this, consistent with results of \cite{Moellerfeld02:phd-thesis} that we shall exploit in the next section to fill in our `grand tour'. 

\subsection{Arithmetising the totality argument}

Thanks to the results of this section, we may duly formalise the type structure $\HR$ from Section~\ref{sec:totality} within $\PSCA$ and prove basic properties.
Throughout this section we shall identify terms with their codes. In the case of $\clomuLJneg$ and subsystems, we note that terms may still be specified finitely thanks to regularity of coderivations in $\cmuLJneg$, and that syntactic equality is verifiable already in $\RCA$ between different representations~\cite{Das21:CT-preprint,Das21:CT-fscd}.
Let us also note that the various notions of reduction, in particular $\convertsnorec$ and $\convertsnoreceta $ are $\Sin 0 1$ relations on terms. (Note here that it is convenient that $\convertsnoreceta$ is formulated using only the extensionality rule, for $\Sin 0 1$, rather than being fully extensional.)

Let us recall the $\HR$ structure from Section~\ref{sec:totality}. 
Note that, as written in Definition~\ref{def:hr},
these sets, a priori, climb up the analytical hierarchy due to alternation of $\mu$ and $\arrow$.
This is where the results of the previous section come into play and serve to adequately control the logical complexity of $\HR$.

% Thus induction principles on them (such as those required for the totality argument for circular proofs) is not, a priori, available to us within $\PSCA$.\anupam{make this remark rather the motivation for the previous section's development}

% However, we can duly represent this type structure by referring only to explicit ordinal approximants of the fixed points, using some of the technology we developed in the previous section.
% Let us make this definition explicitly here:

% \todo{rewrite introductory text above later}

\begin{definition}
[Type structure, formalised]
We define the following second-order formulas:
\begin{itemize}
    % \item $\hr \emptyset (t) := \bot$
    % \item $\hr A (t) := A(t)$\oldtodo{don't need this now that we treat fixed points differently?}
    % \item $\hr \unit (t) := t \betae \identity $\footnote{Recall that $\betae$ is $\Sin 0 1 $ so, in particular, arithmetical.}
    \item $\hr X (t) \dfn Xt$
    \item $\hr \nat (t) \dfn  \exists n\,  (t \convertsnoreceta \numeral n) $ 
    \item $\hr {\sigma \to \tau} (t) \dfn \forall s ( \hr\sigma (s) \limp \hr \tau (ts))$
    \item $\hr{\sigma \times \tau} (t) \dfn \hr \sigma (\proj 0 t) \land \hr \tau (\proj 1 t) $
    % \item $\hr{\sigma + \tau} (t) := (\project 0 t \betae \enc\false \land \hr\sigma(\project 1 t)) \lor (\project 0 t \betae \enc\true \land \hr\tau(\project 1 t))$
    % \item $\hr{\sigma^\alpha(\emptyset)}(t):= \hr{\sigma}^\alpha (t)$ (this is optional)
    % \item $\hr{\mu X \sigma}(t) \dfn \apprx X x{\hr{\sigma}(x)}\WO(t)$ 
    \item  $\hr {\mu X \sigma} (t) \dfn (\mufo X x {\hr \sigma (x)}) (t)$
\end{itemize}
\end{definition}

Weak theories such as $\RCA$ are able to verify closure of $\clocmuLJneg$ under conversion by a syntax analysis (see~\cite{Das21:CT-fscd}).
That each $\hr\sigma$ itself is closed under conversion, cf.~\ref{prop:hr-closed-under-conversion}, is also available already in $\RCA$ by (meta-level) induction on the structure of $\sigma$.
More importantly, and critically, we have as a consequence of the previous section:

\begin{corollary}
    \label{cor:hr-is-D12}
    Let $\sigma$ be positive in $\vec X$ and negative in $\vec Y$.
    $\hr \sigma$ is $\PSCA$-provably $\posnegDin 1 2 {\vec X}{\vec Y}$. 
    % equivalent to both:
    % \begin{enumerate}
    %     \item a $\Sin 1 2$ formula positive in $\vec X$ and negative in $\vec Y$; and,
    %     \item a $\Pin 1 2 $ formula positive in $\vec X$ and negative in $\vec Y$.
    % \end{enumerate}
\end{corollary}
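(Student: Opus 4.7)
The plan is to proceed by (meta-level) induction on the structure of $\sigma$, simultaneously tracking the polarity of each free variable. The base cases are essentially immediate: $\hr X(t) = Xt$ is $\posnegSin 1 0 {X}{}$ (hence $\posnegDin 1 2 {X}{}$), and $\hr \nat(t) \equiv \exists n (t \convertsnoreceta \numeral n)$ is $\Sin 0 1$, hence $\Din 1 2$ with no $\vec X, \vec Y$ dependence. A candidate appearing as a type constant is treated as a set parameter, contributing only as a positive atom.

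For the inductive cases on $\times$ and $\to$, I would invoke the polarised substitution lemma (\Cref{lem:pol-sub}) uniformly. If $\sigma$ and $\tau$ are each $\posnegDin 1 2 {\vec X}{\vec Y}$ (with appropriate polarities), then $\hr{\sigma \times \tau}(t) \equiv \hr\sigma(\proj 0 t) \land \hr\tau(\proj 1 t)$ is a positive $\Sin 1 1$ (and positive $\Pin 1 1$) combination of $\posnegDin 1 2 {\vec X}{\vec Y}$-formulas, so closure under $\Sin 1 2$- and $\Pin 1 2$-substitution yields the result. For $\sigma \to \tau$, the polarities of the free variables of $\sigma$ flip, so the inductive hypothesis applied to $\sigma$ gives that $\hr\sigma \in \posnegDin 1 2 {\vec Y}{\vec X}$, whence $\forall s (\hr\sigma(s) \to \hr\tau(ts))$ is a positive $\Pin 1 1$ combination of $\posnegDin 1 2 {\vec X}{\vec Y}$-formulas (the implication placing the negatively-parameterised $\hr\sigma$ in negative position, which restores polarity), and again \Cref{lem:pol-sub} applies.

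The main obstacle is the $\mu X \sigma$ case, as naively the definition $\hr{\mu X \sigma}(t) \equiv (\mufo X x {\hr\sigma(x)})(t)$ is only $\Pin 1 2$ (when $\hr\sigma$ is $\Din 1 2$ in $X$), and iterating this through further $\mu$s could in principle keep climbing the analytical hierarchy. This is precisely where \Cref{cor:lfp-of-posD12-is-posD12} comes into play: since by IH $\hr\sigma(X,x)$ is $\posnegDin 1 2 {X,\vec X}{\vec Y}$, applying that corollary gives that $\mufo X x {\hr\sigma}$ is itself $\posnegDin 1 2 {\vec X}{\vec Y}$, as required. Here the positivity constraint on $X$ in $\sigma$ in our type syntax matches exactly the hypothesis of \Cref{cor:lfp-of-posD12-is-posD12} (equivalently, \Cref{thm:muphi-is-phiwo}), so the induction goes through.

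Finally, I would observe that the whole argument is carried out inside $\PSCA$: each step relies only on $\ca{\Sin 1 2}$ (to treat intermediate formulas as sets when forming $\mu\phi$), $\ac{\Sin 1 2}$ (for \Cref{lem:pol-sub}), and the reverse-mathematical content of \Cref{thm:muphi-is-phiwo} at the $\mu$ step, all of which are available in $\PSCA$ by \Cref{prop:basic-rev-math}. This gives the $\PSCA$-provability claim in the statement.
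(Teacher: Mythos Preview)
Your proposal is correct and follows essentially the same approach as the paper: induction on the structure of $\sigma$, with the critical $\mu$ case handled by \Cref{cor:lfp-of-posD12-is-posD12} and the remaining cases being routine applications of the polarised substitution lemma. The paper's own proof sketch is terser (it only spells out the $\mu$ case), but your added detail on the $\times$ and $\to$ cases is accurate and matches what the paper leaves implicit.
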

\begin{proof}[Proof sketch]
    We proceed by induction on the structure of $\sigma$, for which the critical case is when $\sigma$ has the form $\mu X \tau$.
    By definition $\tau$ is positive in $X$, and so by IH $\hr{\tau} \in \posnegDin 1 2 {X,\vec X}{\vec Y}$.
    We conclude by Corollary~\ref{cor:lfp-of-posD12-is-posD12}.
\end{proof}

% In particular, by induction on type we have:

% \begin{corollary}
%     \label{cor:hr-is-D12}
%     Let $\sigma$ be positive in $\vec X$ and negative in $\vec Y$.
%     $\hr \sigma$ is $\PSCA$-provably $\posnegDin 1 2 {\vec X}{\vec Y}$. 
%     % equivalent to both:
%     % \begin{enumerate}
%     %     \item a $\Sin 1 2$ formula positive in $\vec X$ and negative in $\vec Y$; and,
%     %     \item a $\Pin 1 2 $ formula positive in $\vec X$ and negative in $\vec Y$.
%     % \end{enumerate}
% \end{corollary}
% \begin{IEEEproof}[Proof sketch]
%     We proceed by induction on the structure of $\sigma$, for which the critical case is when $\sigma$ has the form $\mu X \tau$.
%     By definition $\tau$ is positive in $X$, and so by IH $\hr{\tau} \in \posnegDin 1 2 {X,\vec X}{\vec Y}$.
%     We conclude by Corollary~\ref{cor:lfp-of-posD12-is-posD12}.
% \end{IEEEproof}

Now, recall that a function $f:\Nat^k \to \Nat$ is \emph{provably recursive} in a theory $T\subseteq \SOPA$ if there is a $\Sin 0 1 $-formula $\phi_f(\vec x,y)$ with:
\begin{itemize}
    \item $ \nmod \models \phi_f (\vec {\numeral m} , \numeral n) $ if and only if $f(\vec m) = n$; and,
    \item $T \proves \forall \vec x \exists y \phi_f (\vec x, y)$.
\end{itemize}

We may formalise the entire totality argument within $\PSCA$, in a similar fashion to analogous arguments in \cite{Das21:CT-fscd,KuperbergPP21systemT,Das21:CT-preprint}, only peculiarised to the current setting.
This requires some bespoke arithmetical arguments and properties of our type structure.
% , explained further in Appendix~\ref{sec:formalised-totality}.
As a consequence we shall obtain:

\begin{theorem}
\label{thm:cmuLJneg-to-psca}
Any $\cmuLJneg$-representable function on natural numbers is provably recursive in $\PSCA$.
\end{theorem}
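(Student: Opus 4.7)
The plan is to formalise the entire totality argument of Section \ref{sec:totality} within $\PSCA$. Given a $\cmuLJneg$-coderivation $P : \N, \overset{k}{\ldots}, \N \seqar \N$ representing a function $f$, it suffices to prove in $\PSCA$ the statement $\forall \vec{m} \in \Nat \, \exists n \in \Nat \, (P \numeral{\vec{m}} \convertsnoreceta \numeral{n})$, since this is (an instance of) a $\Sigma^0_1$-statement that holds in the standard model by \Cref{cor:cmuLJ-represents-only-total-fns} and implies provable recursiveness by the definition of $f$. The key reduction, in turn, is to prove in $\PSCA$ the formalised version of the Interpretation \Cref{thm:prog-implies-hr}, namely $\forall \vec{s}\, (\hr{\Sigma}(\vec{s}) \limp \hr{\tau}(P\vec{s}))$ for each $\cmuLJneg$ coderivation $P : \Sigma \seqar \tau$; specialising to $P : \vec\N \seqar \N$ and closed numerals yields the representability statement by definition of $\hr\N$ and closure of $\hr\N$ under $\convertsnoreceta$.

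To arithmetise the proof of \Cref{thm:prog-implies-hr}, I would proceed by contradiction exactly as in the metatheoretic version. First, \Cref{cor:hr-is-D12} tells us that every $\hr\sigma$ is $\PSCA$-provably $\Din 1 2$, so relevant comprehension, choice ($\ac{\Sin 1 2}$ via \Cref{prop:basic-rev-math}.\ref{item:psca-proves-spac-basic-rev-math}), and transfinite induction principles are available. Assuming $\vec s \in \hr\Sigma$ but $P\vec s \notin \hr\tau$, I would encode \Cref{def:refl-non-totality} as an arithmetical definition (only $\Sin 0 1$ syntactic manipulation is required at each step, plus membership tests which are $\Din 1 2$) that, given $(\der_i, \vec s_i)$ with $\der_i \vec s_i \notin \hr{\tau_i}$, produces $(\der_{i+1}, \vec s_{i+1})$ with the same property. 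Appealing to $\Sin 1 2$-choice then yields an infinite branch $(\der_i, \vec s_i)_{i<\omega}$ through the regular coderivation $P$ (finitely representable by regularity), and a progressing thread along it exists by the arithmetical definition of progress on regular coderivations, as is standard (cf.~\cite{Das21:CT-fscd,KuperbergPP21systemT,Das21:CT-preprint}).

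The heart of the formalisation is the ordinal-descent argument. Using \Cref{prop:pos-neg-approximants}, which is itself a consequence of \Cref{thm:muphi-is-phiwo} and $\Sin 1 1 $-Bounding (\Cref{prop:bounding}), I would associate to each thread element an ordinal assignment $\vec\alpha_i$ of \emph{least} ordinals witnessing membership or non-membership in the fixed-point approximants of higher priority, together with the critical infinitely-often-principal fixed point. Each $\vec\alpha_i$ exists in $\PSCA$ by $\Sin 1 2$-choice applied to bounded versions of the $\phi^\WO$ characterisation from \Cref{thm:muphi-is-phiwo}; the $\arrow$-cases of \Cref{def:refl-non-totality} are designed precisely so that the lexicographic product order on these assignments is non-increasing along the branch and strictly decreasing at every $\mulunf$-step on the critical fixed point (using \Cref{cor:(bdd)-recursion}.\ref{item:bdd-recursion} to extract the strict predecessor). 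Since the thread is progressing, the strict decrease happens infinitely often, contradicting well-foundedness of the lexicographic order on tuples of ordinals --- a fact provable in $\PSCA$ by transfinite induction on an arithmetical combination of $\Pin 1 2$ formulas.

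The principal obstacle is ensuring that every existential step in the branch construction and in the ordinal extraction admits a uniform choice function of the right logical complexity: specifically, one must check that the formula "$\vec\alpha$ is the least tuple witnessing $t \in \hr{\tau^{\vec\alpha}}$" is $\Din 1 2$ (using \Cref{cor:hr-is-D12}, the $\Din 1 1 $-definability of $\ordleq$, and closure of $\Din 1 2$ under arithmetical operations), so that $\ac{\Sin 1 2}$ (hence $\ac{\Din 1 2}$) suffices to collect these choices into sets. Once this is verified, the remaining ingredients --- syntactic manipulations, closure of $\convertsnoreceta$ on regular coterms, and the basic reverse mathematics of Knaster-Tarski established in \Cref{sec:rev-math-kt} --- combine to yield the full formalisation, whence the theorem.
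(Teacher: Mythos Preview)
Your proposal is correct and follows essentially the same route as the paper: formalise, for each fixed $P$, the non-total branch construction of \Cref{thm:prog-implies-hr} inside $\PSCA$, using \Cref{cor:hr-is-D12} to keep all relevant sets $\Din 1 2$, the reverse mathematics of Knaster--Tarski from \Cref{sec:rev-math-kt} for the approximant machinery, and the formalised cyclic-proof-checking results to extract a progressing thread. The only minor deviation is that the paper phrases the branch construction as recursion in a $\Din 1 2$ oracle (hence direct comprehension) rather than via $\ac{\Sin 1 2}$, and handles ``least ordinal witnessing'' by first fixing a closure ordinal $\gamma$ (via \Cref{lem:closure-ordinals}) and searching within $\gamma$ --- a small but useful trick you gesture at but do not make explicit.
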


We give some more details for establishing the above theorem. In particular, we apply the fixed point theorems within second-order arithmetic from Section~\ref{sec:kt-thm-and-approximants-in-psca} to arithmetise the totality argument in Section~\ref{sec:totality} within $\PSCA$.
We shall focus on explaining at a high level the important aspects of the formalisation, broadly following the structure of Section~\ref{sec:totality}.

Notice that it will not actually be possible to prove the Interpretation Theorem~\ref{thm:prog-implies-hr} uniformly, due to G\"odelian issues.
Instead we will demonstrate a non-uniform version of it:
\begin{theorem}
    [Nonuniform Interpretation, formalised]
    \label{thm:interpretation-nonuniform-formalised}
    Let $\der:\Sigma \seqar \tau$.  
    Then
    $\PSCA \proves \forall \vec s \in \hr\Sigma .\  \der \vec s \in \hr \tau $.
\end{theorem}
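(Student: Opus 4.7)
The plan is to formalise the non-total branch construction of Theorem~\ref{thm:prog-implies-hr} within $\PSCA$ for the fixed coderivation $\der$. Non-uniformity is crucial here: since $\der$ is regular, its finite graph of distinct subcoderivations can be treated externally as a fixed finite object, so only the inputs $\vec s$ need to be universally quantified in the statement to be derived in $\PSCA$. Reasoning in $\PSCA$, I would assume for contradiction that some $\vec s_0 \in \hr\Sigma$ satisfies $\der\vec s_0 \notin \hr\tau$.

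The key enabler of the formalisation is Corollary~\ref{cor:hr-is-D12}: each $\hr\sigma$, for $\sigma$ appearing in $\der$, is $\PSCA$-provably $\Din 1 2$, hence admits a set representation via $\ca{\Sin 1 2}$. Consequently, the local reflection of non-totality from Definition~\ref{def:refl-non-totality} becomes an arithmetical case analysis on these sets. Applying $\Sin 1 2$-choice (available in $\PSCA$, cf.~Proposition~\ref{prop:basic-rev-math}) one obtains an infinite sequence $(\der_i, \vec s_i)_{i<\omega}$ encoding a non-total branch through the finite graph of $\der$, starting from $(\der, \vec s_0)$.

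Next, since $\der$ is progressing and its node set is finite, the existence of a progressing thread $(\rho_i)_{i \geq k}$ along this branch reduces to a B\"uchi-style condition on a finite graph and is hence verifiable arithmetically. Let $\mu X_n \sigma_n(X_n)$ be the minimum infinitely-often principal formula of the thread, and let $\mu X_1 \sigma_1 > \cdots > \mu X_n \sigma_n$ be the fixed points occurring positively on the LHS / negatively on the RHS of every $\rho_i$ (a finite list, by Fischer-Ladner stabilisation). By the dual characterisation $\mu\phi = \phi^\WO$ (Theorem~\ref{thm:muphi-is-phiwo}), each (non-)membership in the $\hr{\cdot}$-interpretation of these $\mu$-subformulas is witnessed by a least ordinal approximant; one further application of $\Sin 1 2$-choice collects these into a sequence $(\vec\alpha_i)_{i\geq k}$ of ordinal assignments.

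The main obstacle is verifying in $\PSCA$ that $(\vec\alpha_i)_{i\geq k}$ is lexicographically non-increasing and strictly decreases at each $\mulunf$-step principal on the critical formula $\mu X_n\sigma_n(X_n)$. This requires a case analysis mirroring Definition~\ref{def:refl-non-totality}, crucially using Corollary~\ref{cor:(bdd)-recursion} (recursion for approximants) and Proposition~\ref{prop:approximants-are-inflationary} (monotonicity of approximants) to track how least witnesses propagate; the $\arrow$-cases are the subtlest, as they require invoking minimality of the ordinal assignment to the premiss. Since the thread is progressing, infinitely many strictly-decreasing steps occur, producing an infinite descending chain in the lexicographic order on ordinal sequences. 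This contradicts well-foundedness of ordinal comparison (Proposition~\ref{prop:ord-comparison-facts}.\ref{item:comparison-wf}). The entire argument is $\Pin 1 2$ in the $\Din 1 2$ predicates extracted above, and transfinite induction on such combinations along provable well-orders is available in $\PSCA$, so the whole argument formalises.
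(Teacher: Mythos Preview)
Your proposal is correct and follows essentially the same route as the paper: both arguments formalise the non-total branch construction of Theorem~\ref{thm:prog-implies-hr} inside $\PSCA$, exploiting that $\der$ has only finitely many subcoderivations so that Corollary~\ref{cor:hr-is-D12} reduces all relevant $\hr\sigma$ to $\Din 1 2$ sets, then extracting the branch and the decreasing ordinal assignments to reach a contradiction with well-foundedness. The only notable presentational difference is that where you argue directly that the progressing-thread condition is a B\"uchi-style property on a finite graph (hence arithmetically verifiable), the paper instead invokes the formalised cyclic-proof-checking result of~\cite{Das20:ca-log-comp} (building on \cite{KMPS19:buchi-rev-math}) as a black box; and where you use $\ac{\Sin 1 2}$ to build the branch and the ordinal-assignment sequence, the paper observes that the step function is already deterministic relative to $\Din 1 2$ oracles (including finding closure ordinals via Lemma~\ref{lem:closure-ordinals} and witnesses for nonempty $\hr\sigma$), so plain $\Din 1 2$-comprehension suffices---but either packaging works.
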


Note that, from Theorem~\ref{thm:interpretation-nonuniform-formalised} above, Theorem~\ref{thm:cmuLJneg-to-psca} will follow directly via a version of Corollary~\ref{cor:cmuLJ-represents-only-total-fns} internal to $\PSCA$.
In particular, when $\Sigma$ contains only $\N$s and $\tau$ is $\N$, note that the statement of Theorem~\ref{thm:interpretation-nonuniform-formalised}, $\forall \vec s \in \hr\Sigma .\  \der \vec s \in \hr \tau $, is indeed $\Pin 0 2$.

The rest of this section is devoted to justifying \Cref{thm:interpretation-nonuniform-formalised} above.
Let us fix $P:\Sigma \seqar \tau$ for the remainder of this section.

\subsubsection*{Formalising monotonicity and transfinite types}
All the technology built up in Subsection~\ref{subsec:monotonicity-and-transfinite-types} has been appropriately formalised in the previous Section~\ref{sec:kt-thm-and-approximants-in-psca}.

\subsubsection*{Formalising closures and priorities}
All the notions about closures and priorities in the totality argument involve only finitary combinatorics and are readily formalised within $\RCA$.

\subsubsection*{Formalising ordinal assignments}
We define ordinal assignments within $\PSCA$ relative to some $\vec \alpha$ varying over $\WO$. 
% 
% to obtain a form of `approximant collection':
% \[
% \forall x \phi^\WO(x) \limp \exists \alpha \in \WO \forall x \phi^\alpha(x)
% \]
% Ordinal assignments are easily defined by simply witnessing existential quantifiers in $\phi^\WO$ by appropriate well-orders.
What is important is to establish the `positive and negative approximants' Proposition~\ref{prop:pos-neg-approximants}. 
However this follows directly from the Closure Ordinal Lemma~\ref{lem:closure-ordinals} and the Monotonicity Lemma~\ref{lem:pos-imp-mon}.
% However this follows directly from approximant collection and well-foundedness of well-order-comparison.
% \oldtodo{important here that we have $\notin$ for the negative assignments, so that Approximant Collection is duly applicable.}

\subsubsection*{Formalising reflection of non-totality.} 

Since $P$ has only finitely many distinct lines, and so only finitely many distinct formulas, we will need to consider only finitely many distinct $\hr\sigma$ henceforth which we combine to describe to establish the reflection of non-totality Definition~\ref{def:refl-non-totality} all at once within SO arithmetic.
% In particular using ordinal existence principles, we can take some master ordinal that exceeds the closure ordinals, cf.~\ref{lem:closure-ordinals}, of all 
Working in $\PSCA$, let us point out that the description of $P', \Sigma', \tau',\vec s'$ from $\rrule$ and $P$'s (finitely many) sub-coderivations $\vec P$ is recursive in the following oracles:
\begin{itemize}
    \item $\hr\sigma$, for each $\sigma $ occurring in $P$, which is $\Din 1 2 $ by Corollary~\ref{cor:hr-is-D12}; and,
    \item finding the least $\alpha \in \WO$ such that $ \hr{\sigma^\alpha}(t)$; this is $\Din 1 2 $ by fixing the closure ordinal, cf.~\ref{lem:closure-ordinals}, say $\gamma \in \WO$ of $\mu \hr\sigma$, and searching for the least appropriate $a\in \gamma$ instead; again this is $\Din 1 2$;
    \item finding an inhabitant $s$ of some nonempty $\hr\sigma$, for which we can simply take the `least' (seen as a natural number coding it), and so is $\Din 1 2$.
\end{itemize}

Consequently we have that Definition~\ref{def:refl-non-totality}, the description of $P', \Sigma', \tau',\vec s'$ from $\rrule, \vec P$, is indeed a $\PSCA$-provably $\Din 1 2$ formula in $P', \Sigma', \tau',\vec s',\rrule, \vec P$.

\anupam{point out we need KT theorem for $\mu$-right?}

\subsubsection*{Formalising the non-total branch construction}

The `non-total' branch constructed in the proof of the Interpretation Theorem~\ref{thm:prog-implies-hr} is recursive in the `reflecting non-totality' definition, and so we have access to it as a set within $\PSCA$.

One subtlety at this point is that $\PSCA$ needs to `know' that $\der$ is indeed progressing.
However earlier work on the cyclic proof theory of arithmetic~\cite{Simpson17:cyc-arith,Das20:ca-log-comp}, building on the reverse mathematics of $\omega$-automaton theory \cite{KMPS19:buchi-rev-math}, means that this poses us no problem at all:

\begin{proposition}
    [Formalised cyclic proof checking~\cite{Das20:ca-log-comp}]
    $\RCA$ proves that $P$ is progressing, i.e.\ that every infinite branch has a progressing thread.
\end{proposition}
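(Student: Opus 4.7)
The plan is to follow the general methodology of Das~\cite{Das20:ca-log-comp} for arithmetising cyclic proof-checking: reduce the progressing condition to a decidable property of a finite $\omega$-automaton constructed from $P$, and then invoke the reverse mathematics of $\omega$-automaton theory developed in~\cite{KMPS19:buchi-rev-math}.

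First, working in $\RCA$, I would build a finite directed graph $G_P$ whose vertices are pairs $(P', \phi)$, where $P'$ ranges over the (finitely many) distinct sub-coderivations of $P$ and $\phi$ ranges over type occurrences in the conclusion of $P'$. Edges of $G_P$ encode (i) the ``root'' transitions between sub-coderivations of $P$, modded out by regularity, and (ii) the immediate-ancestry relation of \Cref{defn:ancestry}, lifted accordingly. Since $P$ is regular, $G_P$ is finite and can be coded by a natural number. Infinite branches of $P$ correspond, via unraveling, to infinite paths in the sub-coderivation projection of $G_P$, and threads along such branches correspond to infinite paths in $G_P$ itself.

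Next, I would equip $G_P$ with a parity acceptance condition that encodes the progress criterion from \Cref{defn:progressiveness}: namely, an infinite path is accepting iff its smallest infinitely-often-principal formula is a $\mu$-formula on the LHS or a $\nu$-formula on the RHS. This is expressible by assigning to each vertex a priority based on the Fischer-Ladner class and polarity of its formula. The progressing condition on $P$ then becomes the statement that the resulting parity automaton accepts \emph{every} infinite path through the sub-coderivation projection; equivalently, that a certain complementary (non-deterministic Büchi) automaton is empty. By the reverse-mathematical analysis of~\cite{KMPS19:buchi-rev-math}, emptiness and universality of such finite-state $\omega$-automata are equivalent, provably in $\RCA$, to arithmetical properties of their transition graphs (e.g.\ the absence of reachable ``bad'' lassos), and the existence of accepting runs along infinite paths is $\RCA$-provable.

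The main obstacle is the bookkeeping required to relate the infinitary statement (every infinite branch of $P$ has a progressing thread) with the finitary automaton-theoretic statement (absence of certain lassos in $G_P$) inside $\RCA$, since threads a priori live on infinite branches and quantification over them must be coded carefully. This step is precisely what is handled in~\cite{Das20:ca-log-comp} via a lasso decomposition: quantification over infinite branches reduces to quantification over cycles in the finite graph $G_P$, which is arithmetical and available in $\RCA$. The proposition then follows by direct specialisation of that construction to the present setting, using that the formula-ancestry of $\cmuLJneg$ behaves in the standard way.
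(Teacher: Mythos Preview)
Your proposal is correct and follows essentially the same approach as the cited works: the paper itself does not give a proof of this proposition but simply imports it as a black-box result from~\cite{Das20:ca-log-comp}, building on the reverse mathematics of $\omega$-automaton theory from~\cite{KMPS19:buchi-rev-math}. Your sketch accurately reconstructs the methodology of those references—encoding the regular coderivation as a finite graph, expressing progress as an $\omega$-regular acceptance condition, and reducing the infinitary branch quantification to a finitary lasso condition decidable in $\RCA$—so there is nothing to add or correct.
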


From here we readily have that there is a progressing thread $(\rho_i)_{i\geq k}$ and inputs $r_i$ along along the non-total branch.
The assignment of ordinals $(\vec\alpha_i)_{i\geq k}$ along the branch follows by the `positive and negative approximants' result (formalisation of \Cref{prop:pos-neg-approximants}).
The verification that $\vec \alpha_{i+1}\leq \vec \alpha_i$ for non-critical steps follows by inspection of the `reflecting non-totality' definition (formalisation of \Cref{def:refl-non-totality}).
Finally, for the critical fixed point unfolding, we need that $\alpha_{in}$ is a successor. 
For this we need that the approximant at a limit ordinal is a union of smaller approximants, for which we use Recursion, cf.~Corollary~\ref{cor:(bdd)-recursion}.

\medskip

This concludes the argument for Theorem~\ref{thm:interpretation-nonuniform-formalised}, and so also of Theorem~\ref{thm:cmuLJneg-to-psca}.

\section{Realisability with fixed points}
\label{sec:realisability}
\todonew{
\begin{itemize}
    \item include proof of $\Pi_2$ conservativity of $\muPA$ over $\muHA$ in main text?
    \item switch typesetting of predicate $\pNat$ and type $\nat$ macros?
    \item could make further remarks about abstract realisability just after definition of judgement (some commented copy there).
    \item check and try improve remark about comparison to system F.
    \item at same time can simplify the typed version of $\typedclomuLJneg$ from earlier to be more intuitive.
    \item for functoriality lemma can omit all noncriticial cases?
    \item add comments about abstract realisability inspired by Berger and Tsuiki's appropach, but point out that our realisability model is different as we do not interpret fixed points as fixed points, but are rather inspired by SO encodings.
\end{itemize}
}

So far we have shown an \emph{upper bound} on the representable functions of $\muLJ$ and $\cmuLJ$, namely that they are all provably total in $\PSCA$. 
In this section we turn our attention to proving the analogous \emph{lower bound}, namely by giving a \emph{realisability} interpretation into $\muLJ$ (in fact typed-$\closure \muLJneg$) from a theory over which $\PSCA$ is conservative.

In particular, we consider a version of first-order arithmetic, $\muPA$, with native fixed point operators, (essentially) introduced by M\"ollerfeld in \cite{Moellerfeld02:phd-thesis}.
Unlike that work, we shall formulate $\muPA$ in a purely first-order fashion in order to facilitate the ultimate realisability interpretation.

% Let us write $\langarith (\vec X)$ for the language of arithmetic augmented by predicate symbols $\vec X$, which we shall refer to as `variables'.
% % 
% $\langmuarith$-formulas are generated just like $\langarith$-formulas with the additional clause:

% \begin{itemize}
%     \item if $\phi$ is a formula with free variables $X,\vec X, x, \vec x$, and in which $X$ occurs positively, and $t $ is a (number) term with free variables $\vec y$ then $t \in \mufo X x \phi$ \oldtodo{notation can be optimised, this is a macro} (sometimes $\mufo X x \phi \, t$) is a formula with free variables $\vec X,\vec x,\vec y$.
% \end{itemize}

% Semantically we construe $\mufo X x \phi$ in the standard model as a bona fide least fixed point of the operator defined by $\phi$. 

\subsection{Language of arithmetic with fixed points}
We consider an extension of $\PA$ whose language is closed under (parameterised) least (and greatest) fixed points.
Throughout this section we shall only use logical symbols among $\land, \limp, \exists , \forall$, without loss of generality.

\begin{convention}
\label{conv:langarith-has-all-pr-symbols}
We shall henceforth assume, without loss of generality, that the language of arithmetic $\langarith$ contains a function symbol for each primitive recursive function definition.
All arithmetic theories we consider, like $\PA$, $\HA$ and their extensions, will contain the defining equational axioms for each of these function symbols.
Note that these formulations are just \emph{definitional extensions} of the usual versions of $\PA$ and $\HA$.
However, they allow us to construe $\Delta_0$ formulas as just equations, simplifying some of the metamathematics herein.
\end{convention}

Let us extend the language of arithmetic $\langarith$ by countably many predicate symbols, written $X,Y$ etc., that we shall refer to as `set variables'. 
In this way we shall identify $\langarith$-formulas with the arithmetical formulas of the language of second-order arithmetic $\langsoarith$, but we shall remain in the first-order setting for self-containment.
As such, when referring to the `free variables' of a formula, we include both set variables and number variables.

$\langmuarith$-formulas are generated just like $\langarith$-formulas with the additional clause:

\begin{itemize}
    \item if $\phi$ is a formula with free variables $X,\vec X, x, \vec x$, and in which $X$ occurs positively, and $t $ is a (number) term with free variables $\vec y$ then $t \in \mufo X x \phi$ \oldtodo{notation can be optimised, this is a macro} (or even $\mu \phi\, t$ when $X,x$ are clear from context) is a formula with free variables $\vec X,\vec x,\vec y$.
\end{itemize}

We also write $t \in \nufo X x \phi(X)$ for $t \notin \mufo X x \lnot \phi(\lnot X)$, a suggestive notation witnessing the De Morgan duality between $\mu$ and $\nu$ (in classical logic).

\begin{remark}
[$\langmuarith$ as a proper extension of $\langsoarith$]
\label{rem:langmuarith-extends-langsoarith}
Again referring to the identification of $\langarith$ formulas with arithmetical $\langsoarith$ formulas, we may construe $\langmuarith$ as a formal extension of $\langsoarith$ by certain relation symbols.
Namely
$\langmuarith$ may be identified with the closure of $\langsoarith$ under:
    \begin{itemize}
        \item if $\phi$ is an \emph{arithmetical} formula with free variables among $\vec X, X, \vec x, x$, and in which $X$ occurs positively, then there is a relation symbol $\mufo X x \phi$  taking $|\vec X|$ set inputs and $|\vec x,x|$ number inputs.
    \end{itemize}
In this case, for arithmetical $\phi(\vec X, X, \vec x, x)$, we simply write, say, $t \in \mufo X x {\phi(\vec A, X, \vec t, x)} $ instead of $\mufo X x {\phi(X,\vec X,x,\vec x)} \vec A \,  \vec t\, t$.
Let us note that this is indeed the route taken by M\"ollerfeld.
Instead we choose to treat the construct $\mu \cdot \lambda \cdot $ syntactically as `binder'.
\end{remark}

Semantically we construe $\mufo X x \phi$ in the intended model as a bona fide least fixed point of the (parametrised) operator defined by $\phi$. 
Namely, in the sense of the remark above, we can set $(\mufo X x \phi)^\nmod (\vec A,\vec a)$ to be the least fixed point of the operator $(\Lambda X \lambda x \phi(\vec A, X, \vec a, x))^\nmod$, for all sets $\vec A$ and numbers $\vec a$.
Note that by simple algebraic reasoning this forces $\nufo X x \phi$ to be interpreted as the analogous \emph{greatest} fixed point in $\nmod$.
\anupam{this is a bit brief, but is fine for now}

% \begin{convention}
% [Suppressing variables]
As in earlier parts of this work, we shall frequently suppress variables in formulas to denote abstractions, e.g.\ for formulas $\phi(X)$ and $\psi(x)$, with $X,x$ clear from context, we may write $\phi(\psi)$ for the formula obtained by replacing each subformula $Xt$ of $\phi(X)$ by $\psi(t)$.
% \end{convention}

\subsection{Theories $\muPA$ and $\muHA$}
Let us expand Peano Arithmetic ($\PA$) to the language $\langmuarith$, i.e.\ by including induction instances for all $\langmuarith$-formulas.
The theory we consider here is equivalent to (the first-order part of) M\"ollerfeld's $\ACA(\langmuarith)$.

\begin{definition}
    [Theory]
    \label{def:muPA}
    The theory $\muPA$ is the extension of $\PA$ by the following axioms for formulas $\phi(X,x)$ and $\psi(x)$:
    \begin{itemize}
        \item $\preaxiom_\phi$: $\forall x (\phi(\mu \phi, y) \limp \mu \phi\,  x)$
        \item $\indaxiom_{\phi,\psi}$: $\forall x (\phi(\psi, x) \limp \psi(x)) \limp \forall x (\mu \phi\, x  \limp \psi(x))$
    \end{itemize}
\end{definition}

We sometimes omit the subscripts of the axiom names above.
We may construe $\muPA$ as a proper fragment of full second-order arithmetic $\SOPA$ by the interpretation:
\begin{equation}
    \label{eq:mufo-as-so-int-pre}
    t \in \mufo X x \phi 
    \quad := \quad
    \forall X (\forall x (\phi(X,x) \limp Xx) \limp Xt)
\end{equation}
The axioms above are readily verified by mimicking a standard textbook algebraic proof of Knaster-Tarski in the logical setting, cf.~Proposition~\ref{prop:kt-for-least-pre-fixed-point}.

M\"ollerfeld's main result was that $\PSCA$ is in fact $\Pin 1 1 $-conservative over his theory $\ACA(\langmuarith)$, and so we have:

\begin{theorem}
    [Implied by \cite{Moellerfeld02:phd-thesis}]
    \label{thm:moellerfeld-psca-cons-over-muPA}
    $\PSCA$ is arithmetically conservative over $\muPA$.
\end{theorem}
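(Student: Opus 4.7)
The plan is to establish the theorem in two directions, the second of which is the substantive content of M\"ollerfeld's proof-theoretic analysis in \cite{Moellerfeld02:phd-thesis}. First I would show that $\muPA$ is a subtheory of $\PSCA$ up to the second-order interpretation of \eqref{eq:mufo-as-so-int-pre}. Second, I would appeal to M\"ollerfeld's reduction of $\PSCA$ to a $\mu$-arithmetical theory, which I would identify with (the first-order part of) $\muPA$, to descend arithmetical theorems back from $\PSCA$ to $\muPA$.

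For the first step, by meta-level induction on $\langmuarith$-formulas and using Corollary \ref{cor:lfp-of-posD12-is-posD12}, I would check that the translation of \eqref{eq:mufo-as-so-int-pre} sends every $\langmuarith$-formula to a $\PSCA$-provably $\Din 1 2$ $\langsoarith$-formula. The $\muPA$-axiom $\preaxiom$ translates to the Knaster-Tarski statement already established as Proposition \ref{prop:kt-for-least-pre-fixed-point} in $\PSCA$, while $\indaxiom$ follows from the translation by instantiating the universal set quantifier in $\mu\phi\,t$ with the set $\{x:\psi(x)\}$, whose existence is guaranteed by $\Din 1 2$-comprehension applied to the translated $\psi$. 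The induction schema of $\muPA$ likewise becomes an instance of $\Pin 1 2$-induction in $\PSCA$ after translation. This part is essentially routine and proceeds in the same manner as the arithmetisation carried out in Section \ref{sec:rev-math-kt}.

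The converse direction --- the arithmetical conservativity itself --- is the black box from \cite{Moellerfeld02:phd-thesis}. M\"ollerfeld gives an ordinal analysis via an infinitary calculus into which both $\PSCA$ (via an $\ac{\Sin 1 2}$-enhanced semi-formal system) and his $\mu$-arithmetical theory $\ACA(\langmuarith)$ embed, and shows that on purely arithmetical sentences the two embeddings produce the same set of derivable sequents; in particular $\PSCA$ is $\Pi^1_1$-conservative over $\ACA(\langmuarith)$. The main obstacle I anticipate is reconciling the precise formulation of $\muPA$ in Definition \ref{def:muPA} with M\"ollerfeld's $\ACA(\langmuarith)$ (cf.~Remark \ref{rem:langmuarith-extends-langsoarith}), since the latter admits set variables governed by arithmetical comprehension. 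However the two agree on $\langmuarith$-sentences: the first-direction embedding above factors through $\ACA(\langmuarith)$, and conversely any free set parameter in a purely first-order consequence of $\ACA(\langmuarith)$ is inessential and can be witnessed by a $\langmuarith$-definable predicate, so that its arithmetical theorems collapse to those of $\muPA$. Composing the two directions yields that every arithmetical $\langmuarith$-sentence provable in $\PSCA$ is already provable in $\muPA$, as required.
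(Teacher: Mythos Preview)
Your proposal is correct and matches the paper's treatment: the theorem is stated without proof and attributed entirely to M\"ollerfeld's $\Pin 1 1$-conservativity of $\PSCA$ over $\ACA(\langmuarith)$, with the paper having already remarked that $\muPA$ is equivalent to the first-order part of $\ACA(\langmuarith)$. You supply more detail than the paper does---in particular the embedding direction via \eqref{eq:mufo-as-so-int-pre} and the reconciliation of $\muPA$ with $\ACA(\langmuarith)$---but the core move, treating M\"ollerfeld's analysis as a black box, is the same.
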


\anupam{commented fo-parts below}
% \begin{remark}
%     [On first-order parts]
%     % Note that $\muPA$ is axiomatised in a strictly first-order manner to facilitate our ultimate realisability interpretation, whereas 
%     M\"ollerfeld's theory $\ACA(\langmuarith)$ is defined in the same way as $\muPA$ above, only over $\ACA$ instead of $\PA$, cf.~\ref{rem:langmuarith-extends-langsoarith}.
%     The fact that $\muPA$ exhausts the first-order part of M\"ollerfeld's system, i.e.\ that $\ACA(\langmuarith) $ is arithmetically conservative over $\muPA$, follows from standard techniques.
%     In particular any model of $\muPA$ can be expanded to one of $\ACA(\langmuarith)$ by including just the $\mu$-definable sets.
% \end{remark}

% \subsection{Theory $\muHA$ and a negative translation}
We set $\muHA$ to be the intuitionistic counterpart of $\muPA$.
I.e.\ $\muHA$ is axiomatised by Heyting Arithmetic $\HA$ (extended to the language $\langmuarith$) and the schemes $\preaxiom$ and $\indaxiom$ in Definition~\ref{def:muPA} above.

Once again, we may construe $\muHA$ as a proper fragment of full second-order Heyting Arithmetic $\SOHA$ by \eqref{eq:mufo-as-so-int-pre} above.
By essentially specialising known conservativity results for second-order arithmetic, we may thus show that $\muPA$ and $\muHA$ provably define the same recursive functions on natural numbers.
% Indeed essentially the same result was obtained for M\"ollerfeld's system $\ACA{\langmuarith}$ by Tupailo in \cite{Tupailo04doubleneg}, though we include a self contained argument here for the current setting.
\begin{proposition}
[Implied by \cite{Tupailo04doubleneg}]
\label{thm:muPA-pi02-cons-muHA}
    $\muPA$ (so also $\PSCA$) is $\Pin 0 2$-conservative over $\muHA$.
\end{proposition}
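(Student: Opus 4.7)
The plan is to combine a suitably adapted G\"odel-Gentzen negative translation on $\langmuarith$ with Friedman's $A$-translation, broadly following Tupailo's treatment \cite{Tupailo04doubleneg}.

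First I would define $\phi \mapsto \phi^N$ by double-negating atoms, commuting with $\land, \to, \forall$, setting $(\exists x \phi)^N := \neg \forall x \neg \phi^N$, and --- critically --- defining $(\mufo X x \phi)^N := \mufo X x \phi^N$. Since $X$ occurs positively in $\phi$, each atom $Xt$ becomes $\neg \neg Xt$ in $\phi^N$, which is still positive in $X$; hence the translated fixed point is well-formed in $\langmuarith$. As usual, $\phi^N$ is $\neg\neg$-stable (i.e.\ $\muHA \proves \neg \neg \phi^N \to \phi^N$), proved by induction on formula structure; for the fixed point case one applies the $\muHA$-induction axiom with $\psi(x) := \neg\neg \mufo X x \phi^N \, x$ to show that least fixed points of $\neg\neg$-stable operators are themselves $\neg\neg$-stable.

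Next I would verify by induction on $\muPA$-derivations that $\muPA \proves \phi$ implies $\muHA \proves \phi^N$. The classical logic and ordinary induction cases are routine. The translation of $\preaxiom_\phi$ is $\forall x (\phi^N(\mu \phi^N, x) \to \mu \phi^N \, x)$, which is the $\muHA$ axiom $\preaxiom_{\phi^N}$. The translation of $\indaxiom_{\phi, \psi}$ is (up to $\neg \neg$-stability of $\psi^N$ established above) an instance of $\indaxiom_{\phi^N, \psi^N}$.

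For the final step, let $\forall x \exists y \phi(x, y)$ be a $\Pin 0 2$-sentence provable in $\muPA$. By Convention \ref{conv:langarith-has-all-pr-symbols}, $\phi$ is an equation and hence decidable intuitionistically, so $\phi^N \liff \phi$ in $\muHA$. Thus $\muHA \proves \forall x \neg \neg \exists y \phi(x, y)$. I would then apply Friedman's $A$-translation with $A := \exists y \phi(x, y)$ for $x$ fixed, $\alpha$-renaming bound variables as needed to avoid capture: this substitutes $A$ for $\bot$ throughout, transforming $\neg \neg \exists y \phi$ into $(\exists y \phi \to \exists y \phi) \to \exists y \phi$, whence $\exists y \phi(x, y)$ follows immediately. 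Universal generalisation then yields $\muHA \proves \forall x \exists y \phi(x, y)$.

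The main obstacle is the fixed point case of the $\neg\neg$-stability lemma used to verify the translated $\indaxiom$ axiom; establishing stability of $\mufo X x \phi^N$ itself crucially uses both positivity of $X$ in $\phi^N$ and the $\muHA$-induction axiom, and is the key step that distinguishes this argument from the purely arithmetic $\PA$-to-$\HA$ case. A minor additional subtlety is ensuring Friedman's substitution $\bot \mapsto A$ preserves well-formedness of fixed point formulas, which is immediate once bound variables of $A$ are renamed to be disjoint from those of the formula being translated.
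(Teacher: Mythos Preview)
Your overall strategy matches the paper's: a G\"odel--Gentzen negative translation followed by Friedman's $A$-translation. The one structural difference is that you set $(t \in \mufo X x \phi)^N := t \in \mufo X x {\phi^N}$ without an outer double negation, whereas the paper defines $(t \in \mufo X x \phi)^g := \neg\neg(t \in \mufo X x {\phi^g})$, which makes $\neg\neg$-stability of the $\mu$-clause immediate (it is already a negated formula).

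Under your definition stability must be argued separately, and the argument you sketch does not work: applying $\indaxiom_{\phi^N,\psi}$ with $\psi(x) := \neg\neg\,\mu\phi^N\, x$ yields only the trivial direction $\forall x\,(\mu\phi^N\, x \to \neg\neg\,\mu\phi^N\, x)$, since $\indaxiom$ lets you reason \emph{from} membership in the least fixed point, not \emph{to} it. Your variant can nonetheless be repaired. First derive the full fixed-point equation $t\in\mu\phi^N \leftrightarrow \phi^N(\mu\phi^N,t)$ in $\muHA$ (the post-fixed-point direction via $\indaxiom$ with invariant $\phi^N(\mu\phi^N,\cdot)$, using monotonicity from positivity of $X$ in $\phi^N$). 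Then the structural inductive hypothesis gives stability of $\phi^N(X,x)$ with $X$ a free predicate symbol; meta-substituting $\mu\phi^N$ for $X$ (valid since the $\muHA$ axiom schemes are closed under such substitution) yields $\neg\neg\,\phi^N(\mu\phi^N,t) \to \phi^N(\mu\phi^N,t)$, and the fixed-point equation then transports this to stability of $t\in\mu\phi^N$. Alternatively, simply adopting the paper's outer $\neg\neg$ sidesteps the problem and also streamlines the substitution property needed when verifying the translated $\preaxiom$ and $\indaxiom$ axioms.
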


We give a self-contained argument for this result in \Cref{sec:conservativity-via-double-negation}, essentially by composing the Friedman-Dragalin $A$-translation and G\"odel-Gentzen negative translation.

\anupam{commented proof idea below}

\anupam{commented `target of realisability' above}

\subsection{Relativisation to $\Nat$}
It will be convenient for our realisability argument to work with a notion of \emph{abstract realisability}, where realisability commutes with quantifiers in favour of explicit relativisation of quantifiers to suitable domains \cite{Troelstra98:realizability-handbook}.
The reason for this is that, a priori, all quantifiers of arithmetic are relativised to $\Nat$, but construing so for the fixed point axioms leads to type mismatch during realisability.
For instance, one would naturally like to realise induction axiom for natural numbers by $\Niter$, for which it is natural to consider the $\forall$ of the inductive step unrelativised, whereas the $\forall$ of the conclusion should certainly be relativised to natural numbers.
The same phenomenon presents for the $\indaxiom$ axioms more generally. 
Thus we shall include such relativisations explicitly to handle this distinction.

We introduce a new unary predicate symbol $\pNat $ and write $\langmuarith^\pNat \dfn \langmuarith \cup \{\pNat\}$.\footnote{We use $\pNat$ to avoid confusion with our earlier type $\nat $ for natural numbers.}
$\pNat$ will morally stand for the fixed point $ \mu X\lambda x (x=0 \lor \exists y (Xy \land x=\succ y))$, computing the natural numbers.
However we shall rather take logically equivalent formulations of the prefix and induction axioms for $\pNat$ that are \emph{negative}, in fact ultimately realised by our analogous specialisations of the $\iter{}$ and $\inj{}$ rules for $\nat$ earlier:
\begin{itemize}
    \item $\preaxiom_\pNat^0$: $\pnat \numeral 0$
    \item $\preaxiom_\pNat^\succ$: $\forall x (\pnat x \limp \pnat{ \succ x})$
    \item $\indaxiom_{\pNat,\phi}$:
    $\phi(0) \limp \forall x (\phi(x) \limp \phi(\succ x )) \limp \forall x (\pnat x \limp \phi(x))$
\end{itemize}

We sometimes write $\forall x^\pNat \phi \dfn \forall x (\pnat x \limp \phi)$ indicating that the universal quantifier is \emph{relativised} to $\pNat$.
We similarly write $\exists x^{\pNat} \phi \dfn \exists x (\pnat x \land \phi)$.
In arithmetic all quantifiers are implicitly relativised in this way, by virtue of the induction axiom schema.
However note that the `correct' formulation of induction above, induced by the fixed point definition of $\pNat$, has a non-relativised universal quantifier for the step case. 
This will turn out to be an important refinement in order to avoid type mismatches when conducting realisability. 
For this reason, our realisability argument must work from a theory in which this distinction is native:

\begin{definition}
    Write $\muHAneg$ for the theory defined like $\muHA(\langmuarith^\pNat)$ but, instead of the numerical induction axioms, we include the axioms $\preaxiom_\pNat^0$, $\preaxiom_\pNat^\succ$ and $\indaxiom_\pNat$ defined earlier.
    I.e.\ $\muHAneg$ is the intuitionistic theory over $\langmuarith^\pNat$ including all the axioms of Robinson Arithmetic, all the defining equations of primitive recursive function symbols, all the $\preaxiom_\phi$ and $\indaxiom_{\phi,\psi}$ axioms,  and all the axioms $\preaxiom_\pNat^0$, $\preaxiom_\pNat^\succ$ and $\indaxiom_{\pNat,\phi}$.
\end{definition}

The notation here is suggestive of our analogous notation for negative fragments of $\muLJ$ and $\cmuLJ$ earlier.
Indeed we shall soon see that our notion of `realising type' for $\muHAneg$ will have image over $\nat, \times, \limp,\mu$.
First, let us verify that $\muHAneg$ indeed interprets $\muHA$.

\begin{definition}
    For formulas $\phi$ of $\langmuarith$ we define $\nattrans \phi$ a formula of $\langmuarith^\pNat$ by:
    \begin{itemize}
        \item $ \nattrans{(s=t)} \dfn s=t $
        % \item $\nattrans{(\phi \lor \psi)} \dfn \exists x ((x=0 \limp \nattrans \phi) \land (x = 1 \limp \nattrans \psi))$
        \item $\nattrans {(\phi \star \psi)} \dfn \nattrans \phi \star \nattrans \psi $ for $\star \in\{ \land, \limp\}$
        \item $\nattrans {(\exists x \phi)} \dfn \exists x (\pnat x \land \phi)$
        \item $\nattrans{(\forall x \phi)}\dfn \forall x (\pnat x \limp \phi)$
          \item $ \nattrans{(t \in\mu X \lambda x \phi)} \dfn t \in \mu X \lambda x (\pnat x \land \nattrans\phi)$
    \end{itemize}
\end{definition}

We extend this translation to (definable) predicates, e.g.\ writing $\nattrans \phi(\vec X,\vec x)\dfn \nattrans{\phi(\vec X,\vec x)}$ and, in particular, $\nattrans{(\mu\phi)}x\dfn \nattrans{(x \in \mu\phi)}$.
% We shall employ standard abbreviates for relativised quantifiers, in particular writing $\exists x \in \pNat \, \phi \dfn \exists x (\pnat x \land \phi)$ and $\forall x \in \pNat \, \phi \dfn \forall x (\pnat x \limp \phi) $. 
% 
% 
% Under this interpretation $\muHA$ can be embedded into its induction-free part, a modest extension of pure intuitionistic logic with fixed points. Let us make this clear.
% 
% Write $\muIFOL$ for the intuitionistic theory given by the basic Robinson axioms of arithmetic \todonew{for simplicity. can also just add all true equations} and the $(\preaxiom)$ and $(\indaxiom)$ axioms from \cref{def:muPA}.
% \todonew{can write, say, $\muHA^-$ instead of $\muIFOL$?}
Specialising a well-known reduction from second-order arithmetic to pure second-order logic, we have:

\begin{proposition}
\label{prop:muHA-to-muHAneg}
    If $\muHA \proves \phi$ then $\muHAneg \proves \nattrans \phi$.
\end{proposition}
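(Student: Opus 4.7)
The plan is to proceed by induction on the length of a $\muHA$-derivation of $\phi$, verifying that the translation $\nattrans{\cdot}$ commutes with the inference rules of intuitionistic predicate logic and sends each axiom of $\muHA$ to a $\muHAneg$-theorem. The propositional and equational cases are straightforward, since $\nattrans{\cdot}$ is a homomorphism on $\land, \limp$ and fixes atomic formulas. For the quantifier rules, one checks the standard relativisation lemma: if $\vec x^\pNat$ enumerates the free variables of $\phi$ in scope, then $\muHAneg$-derivability is closed under the relativised rules for $\forall x^\pNat$ and $\exists x^\pNat$. Here one needs that $\pNat$ is closed under the primitive recursive function symbols of Convention~\ref{conv:langarith-has-all-pr-symbols}; this is proved once and for all by invoking $\indaxiom_\pNat$ on each such symbol, using $\preaxiom_\pNat^0$ and $\preaxiom_\pNat^\succ$ as base and step clauses. (Note that $\nattrans{\cdot}$ preserves positivity of $X$, since the inserted conjunct $\pnat x$ is free of $X$.)

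The three genuinely interesting cases are the induction schema of $\muHA$, $\preaxiom_\phi$, and $\indaxiom_{\phi,\psi}$. For the induction schema $\phi(0) \land \forall x (\phi(x) \limp \phi(\succ x)) \limp \forall x \phi(x)$, whose translation has its step and conclusion quantifiers relativised to $\pNat$, I would apply $\indaxiom_{\pNat, \chi}$ with auxiliary predicate $\chi(x) \dfn \pnat x \land \nattrans\phi(x)$. The base case $\chi(0)$ follows from $\preaxiom_\pNat^0$ and $\nattrans\phi(0)$; the (unrelativised) step case $\chi(x)\limp\chi(\succ x)$ is obtained by combining the $\pnat x$ conjunct in the hypothesis with $\preaxiom_\pNat^\succ$ and the relativised step of the translated induction axiom. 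Stripping the $\pnat x$ conjunct from the conclusion yields $\forall x^\pNat \nattrans\phi(x)$.

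For $\preaxiom_\phi$, observe that $\nattrans{(\mu \phi)}$ is literally $\mu Y \lambda y (\pnat y \land \nattrans\phi(Y,y))$, so $\muHAneg$'s own $\preaxiom$ for the operator $\lambda Y \lambda y (\pnat y \land \nattrans\phi(Y,y))$ gives $\forall y ((\pnat y \land \nattrans\phi(\nattrans{(\mu\phi)},y)) \limp \nattrans{(\mu\phi)}\, y)$, which is exactly $\nattrans{\preaxiom_\phi}$. For $\indaxiom_{\phi,\psi}$, whose translation reads
\[
\forall x^\pNat (\nattrans\phi(\nattrans\psi,x) \limp \nattrans\psi(x)) \;\limp\; \forall x^\pNat (\nattrans{(\mu\phi)}\, x \limp \nattrans\psi(x)),
\]
I would instantiate $\muHAneg$'s native $\indaxiom$ for the operator $\lambda Y \lambda y (\pnat y \land \nattrans\phi(Y,y))$, with target predicate $\nattrans\psi$. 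Its hypothesis $\forall y ((\pnat y \land \nattrans\phi(\nattrans\psi,y)) \limp \nattrans\psi(y))$ is manifestly equivalent to the relativised premiss above, and its conclusion $\forall y (\nattrans{(\mu\phi)}\, y \limp \nattrans\psi(y))$ implies the relativised goal by weakening.

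I expect the main obstacle to be purely bookkeeping: verifying that substitution of $\nattrans\psi$ for a bound second-order variable inside $\nattrans\phi$ behaves as expected, i.e.\ that $\nattrans{\phi[\psi/X]} \equiv \nattrans\phi[\nattrans\psi/X]$ up to equivalence in $\muHAneg$. This commutation of $\nattrans{\cdot}$ with substitution is what makes the $\preaxiom_\phi$ and $\indaxiom_{\phi,\psi}$ cases work, and follows by a routine induction on $\phi$, but one must be careful that the extra $\pnat y$ guards inserted inside $\mu$ do not interact with free substitutions (they do not, since they involve only the bound variable).
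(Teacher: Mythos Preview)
Your proposal is correct and follows essentially the same approach as the paper: both handle the number induction schema via $\indaxiom_{\pNat}$ with the strengthened invariant $\pnat x \land \nattrans\phi(x)$, and both reduce the translated $\preaxiom_\phi$ and $\indaxiom_{\phi,\psi}$ to the corresponding native axioms for the operator $\lambda Y\lambda y(\pnat y \land \nattrans\phi(Y,y))$. You are somewhat more explicit than the paper about the bookkeeping (closure of $\pNat$ under the primitive recursive symbols, and commutation of $\nattrans\cdot$ with second-order substitution), which the paper leaves implicit.
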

\begin{proof}
    The Robinson axioms are already part of $\muHAneg$ and, since they are universal statements, so too their relativisations.
    The same argument applies for the defining equations of primitive recursive function symbols.
    
    Each number induction axiom,
    \begin{equation}
        \label{eq:n-ind-instance-for-relativisation}
        \psi(\numeral 0) \limp \forall x (\psi(x) \limp \psi(\succ x)) \limp \forall x \psi(x)
    \end{equation}
    has $\pNat$-translation,
    \[
    \nattrans\psi(\numeral 0)
    \limp 
    \forall x^\pNat (\nattrans\psi(x) \limp \nattrans\psi(\succ x))
    \limp 
    \forall x^\pNat (\nattrans\psi(x))
    \]
    which, by the $\preaxiom_\pNat$ axioms and pure logic, is equivalent to:
    \[
    (\pnat {\numeral 0} \land \nattrans\psi(\numeral 0)
    \limp
    \forall x ((\pnat x \land \nattrans\psi(x)) \limp (\pnat{\succ x } \land \nattrans \psi(\succ x)))
    \limp
    \forall x^\pNat (\pnat x \land \nattrans \psi(x))
    \]
    This is just an instance of $\indaxiom_\pNat$ with invariant $\pnat x \land \nattrans \psi(x)$.

    Finally let us consider the fixed point axioms.
    First notice that a $\preaxiom$ axiom,
    \[
    \forall y (\phi(\mu \phi,y) \limp y \in \mu \phi)
    \]
    has $\pNat$-relativisation,
    \[
    \forall y \in \pNat ( \nattrans \phi (\nattrans{(\mu\phi)},y) \limp y \in \nattrans{(\mu\phi)})
    \]
    which is logically equivalent to the $\preaxiom$ axiom for $\nattrans {(\mu\phi)}$.
    
    Next, for an $\indaxiom_\phi$ axiom,
    \begin{equation}
    \label{eq:indaxiom-for-relativisation}
        \forall x (\phi(\psi,x) \limp \psi(x)) \limp \forall x (x \in \mu\phi \limp \psi(x))
    \end{equation}
    we derive its $\pNat$-relativisation as follows,
    \[
    \begin{array}{rll}
         &\forall x ((\pnat x \land \nattrans \phi (\nattrans \psi,x) ) \limp \nattrans \psi(x)) \limp \forall y (y \in \nattrans{(\mu\phi)} \limp \nattrans \psi (y)) & \text{by $\indaxiom$ for $\nattrans{(\mu\phi)}$}   \\
         \implies & \forall x \in \pNat ( \nattrans \phi (\nattrans \psi,x)  \limp \nattrans \psi(x)) \limp \forall y \in \pNat (y \in \nattrans{(\mu\phi)} \limp \nattrans \psi (y)) & \text{by pure logic}
    \end{array}
    \]
    where the last line is just the $\pNat$-relativisation of \eqref{eq:indaxiom-for-relativisation}.
\end{proof}

\subsection{An abstract realisability judgement}
In what follows we shall work with the untyped calculus $\clomuLJneg$ and its typed version $\typedclomuLJneg$ from Section~\ref{sec:term-calculi}.
For convenience we shall employ the following convention:
\begin{convention}
We henceforth identify terms of $\langarith$ with their corresponding definitions in $\muLJneg$. 
We shall furthermore simply identify closed terms of arithmetic with the numerals they reduce to under $\converts$.
Note that, for $s,t$ terms of $\langarith$, we indeed have $s\converts t \implies \muHAneg \proves s=t$.
\end{convention}
At least one benefit of the convention above
is that we avoid any confusion arising from metavariable clash between terms of arithmetic and terms of $\clomuLJneg$. 
This convention is motivated by the clauses of our realisability judgement for atomic formulas.

\medskip

A \emph{(realisability) candidate} is an (infix) relation $\cdot A \cdot $ relating a (untyped) term to a natural number such that $\cdot A n$ is always closed under $\converts$. 
Let us expand $\langmuarith^\pNat $ by construing each realisability candidate as a unary predicate symbol.
The idea is that each $\cdot An $ consists of the just the realisers of the sentence $A\numeral n$.

\anupam{could formulate all of below with closed terms, by defining a valuation function. Need different metavariables, e.g.\ $u,v,w$, reserving $r,s,t$ for terms of $\clomuLJneg$. Do this later if there is time.}
\begin{definition}
    [Realisability judgement]
    For each term $t\in \clomuLJneg$ and closed formula $\phi$ we define the (meta-level) judgement $t \, \realises\, \phi$ as follows:
\begin{itemize}
    % \item $t\, \realises\,  \numeral m = \numeral n$ if $m=n$.\oldtodo{could set to be $\numeral m = t =\numeral n$}
    \item $t\, \realises \, (\numeral m = \numeral n)$ if $\numeral m \converts t \converts \numeral n$.
    \item $t \, \realises \, \pnat {\numeral n}$ if $t\converts \numeral n$.
     \item $t\, \realises\, A\numeral n$ if $t A n$.
    \item $t\, \realises\, \phi_0 \land \phi_1$ if $\projl t \, \realises \phi_0$ and $\projr t \, \realises\, \phi_1$.
    \item $t \, \realises\,  \phi \limp \psi $ if, whenever $s\, \realises \, \phi$, we have $ts \, \realises\, \psi$.
    \item $t\, \realises\, \exists x \phi(x)$ if for some $n\in \Nat$ we have $t\, \realises \, \phi(\numeral n)$.
    \item $t\, \realises\,  \forall x \phi(x)$ if for all $n \in \Nat $ we have $t\, \realises\, \phi(\numeral n)$.
    \item $t\, \realises\, \numeral n \hspace{-.15em}\in \hspace{-.2em} \mu X \lambda x \phi(X,x)$ if $t\, \realises \, \forall x (\phi(A,x)\limp Ax) \limp A\numeral n$ for all candidates $A$.
\end{itemize}
\end{definition}

% \todonew{clean up this remark later}
% \begin{remark}
%     [More about abstract realisability]
    
%     \begin{itemize}
%         \item to avoid type mismatches, and to expose our moral specialisation of HA2 to F.
%         \item note that Berger and Tsuiki employ a similar approach, albeit with a different realisability model interpreting fixed points as fixed points, akin to our earlier type structure $\hr\cdot$.
%         \item note also that the choice of abstract realisability is consistent, morally, with our treatment of $\mu$ formulas under a second-order encoding, where we simply commute with the universal quantification over candidates under type erasure.
%     \end{itemize}
% \end{remark}

\begin{definition}
    [Realising type]
    The \emph{realising type} of a (possibly open) formula $\phi$ without candidate symbols, written $\type \phi$, is given by:
\begin{itemize}
    \item $\type{s=t} \dfn \N$
    \item $\type{\pnat t} \dfn \N$
    \item $\type {Xt} \dfn X$
    \item $\type{\phi\land \psi} \dfn \type \phi \times \type \psi$
    \item $\type{\phi \limp \psi} \dfn \type \phi \to \type \psi$
    \item $\type{\exists x \phi}\dfn  \type \phi$
     \item $\type{\forall x \phi} \dfn \type\phi$
    \item $\type {t \in \mufo X x \phi} \dfn \mu X  \type \phi$
    % \item $\type{t \in \mu X \lambda x \phi} \dfn \muN X\, \type \phi $
\end{itemize}
\end{definition}

\subsection{Closure under $\converts$ and realising induction}
Our realisability model is compatible with our notion of conversion from $\muLJ$:

\begin{lemma}
    [Closure under $\converts$]
    \label{lem:realises-closed-under-converts}
    If $t \, \realises\, \phi$ and $t\converts t'$ then $t'\, \realises \, \phi$.
\end{lemma}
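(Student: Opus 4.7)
The plan is to proceed by induction on the structure of the formula $\phi$, using in each case either (i) the transitivity and symmetry of $\converts$, (ii) closure of realisability candidates under $\converts$ (by definition), or (iii) the context closure of $\converts$ under application combined with the inductive hypothesis.

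For the atomic base cases this is immediate. If $\phi$ is $\numeral m = \numeral n$ then $t\realises \phi$ means $\numeral m \converts t \converts \numeral n$, and since $t \converts t'$, transitivity of $\converts$ (composed with symmetry) gives $\numeral m \converts t' \converts \numeral n$. The case of $\pnat{\numeral n}$ is identical. The case $\phi = A\numeral n$ is exactly the defining closure property of a realisability candidate: $tAn$ and $t \converts t'$ yield $t'An$.

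For the compound cases I would lean on the fact that application is a congruence for $\converts$, i.e.\ $t \converts t'$ implies $ts \converts t's$ and $\proj i t \converts \proj i t'$. Thus for $\phi = \phi_0 \land \phi_1$: from $\projl t \realises \phi_0$ and $\projr t \realises \phi_1$, IH applied to $\projl t \converts \projl t'$ and $\projr t \converts \projr t'$ gives $\projl t' \realises \phi_0$ and $\projr t' \realises \phi_1$. For $\phi = \psi \limp \chi$: given $s \realises \psi$, $ts \realises \chi$ by assumption, and $ts \converts t's$, so IH at $\chi$ gives $t's \realises \chi$. The quantifier cases $\exists x \psi(x)$ and $\forall x \psi(x)$ reduce immediately to the IH at $\psi(\numeral n)$, since the realisability clauses make no further demand on $t$ beyond what is inherited from $\psi(\numeral n)$.

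The only case that requires a little thought is the fixed point clause $\phi = \numeral n \in \mufo X x \psi$. Here $t \realises \phi$ means that for every candidate $A$, $t \realises \forall x(\psi(A,x) \limp Ax) \limp A\numeral n$, which is just an instance of the implication case (with $A\numeral n$ an atomic candidate formula on the right). Since both the implication case and the candidate case have already been handled, the result follows by reusing them uniformly in $A$. No genuine obstacle arises; the proof is entirely routine provided one has set up $\converts$ to be context-closed under application and $\proj i$, which is part of Definition of $\reduces$ earlier in the paper.
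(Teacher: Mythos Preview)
Your proof is correct and follows essentially the same approach as the paper's own proof: a straightforward structural induction on $\phi$, using symmetry/transitivity of $\converts$ in the atomic cases, context-closure of $\converts$ plus the inductive hypothesis in the compound cases, and reducing the $\mu$-case to the already-handled implication case uniformly in the candidate $A$. The paper's version is slightly terser but makes exactly the same moves.
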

\begin{proof}
    By induction on the structure of $\phi$:
    \begin{itemize}
        \item if $\phi$ is $\numeral m=\numeral n$ then still $\numeral m \converts t' \converts \numeral n$ by symmetry and transitivity of $\converts$.
        \item if $\phi$ is $\pnat {\numeral n}$ then still $t' \converts \numeral n$ by symmetry and transitivity of $\converts$.
        \item if $\phi$ is $A\numeral n $ then still $t' A n$ by definition of candidate.
         \item if $\phi$ is $\phi_0 \land \phi_1$ then we have $\projl t \, \realises\, \phi_0$ and $\projr t\, \realises\, \phi_1$, and so by IH and context-closure of $\converts$ we have $\projl t' \, \realises\, \phi_0$ and $\projr t'\, \realises\, \phi_1$, thus indeed $t'\, \realises\, \phi$.
        \item if $\phi$ is $\phi_0 \limp \phi_1$ then for any $s\, \realises\, \phi_0$ we have $ts \, \realises \, \phi_1$, and so also $t's \, \realises \, \phi_1$ by IH and context-closure of $\converts$. 
        Since choice of $s$ was arbitrary we are done.
        \item if $\phi $ is $\exists x\phi'(x)$ then there is some $n\in \Nat $ with $ t \, \realises\, \phi'(\numeral n)$. 
        So by IH we have $ t' \, \realises\, \phi'(\numeral n)$, and so also $t'\, \realises\, \phi $.
        \item if $\phi$ is $\forall x \phi'(x)$ then for each $n\in \Nat$ we have $t\, \realises\, \phi'(\numeral n)$, and so also $t' \, \realises\, \phi'(\numeral n)$ by IH. Since choice of $n$ was arbitrary we are done.
        \item if $\phi$ is $\mu\phi'\, \numeral n$ then, for all candidates $A$ we have $t \, \realises \, \forall x (\phi'(A,x) \limp Ax) \limp A\numeral n$, and so also $t' \, \realises \, \forall x (\phi'(A,x) \limp Ax) \limp A\numeral n$ by IH.
        Since choice of $A$ was arbitrary we are done. \qedhere
    \end{itemize}
\end{proof}

Note that, despite its simplicity, the above lemma has the following consequence, by consideration of the candidate $\{(t,n) : t \, \realises\, \psi(n)\}$:
\begin{lemma}
\label{prop:mu-realiser-realises-formula-invariants}
    If $t \, \realises\, \mu \phi\, \numeral n$ then, for any formula $\psi(x)$, $t \, \realises\, \forall x (\phi(\psi,x) \limp \psi(x)) \limp \psi(\numeral n)$.
\end{lemma}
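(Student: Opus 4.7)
The plan is to directly apply the definition of $t\, \realises\, \mu \phi\, \numeral n$ to a carefully chosen candidate built from $\psi$. First, I would set $A_\psi$ to be the infix relation defined by $t \, A_\psi\, n $ iff $t \, \realises\, \psi(\numeral n)$. By \Cref{lem:realises-closed-under-converts}, the set $\{t : t \, \realises\, \psi(\numeral n)\}$ is closed under $\converts$ for every $n$, so $A_\psi$ is indeed a realisability candidate, and by unwinding the atomic clause we have $t \, \realises\, A_\psi \numeral n \iff t\, \realises\, \psi(\numeral n)$.

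Next I would prove a \emph{substitution} (or coincidence) property: for every subformula $\chi(X)$ of $\phi$ (more generally, for every formula $\chi(X)$ positive in $X$, with parameters substituted by closed terms) and every numeral $\numeral m$, we have $t \, \realises\, \chi(A_\psi, \numeral m) \iff t \, \realises\, \chi(\psi, \numeral m)$. This goes by a routine induction on the structure of $\chi$: the atomic equation and $\pNat$ cases do not mention $X$; the $Xt$ case is exactly the atomic candidate clause together with the definition of $A_\psi$; the $\land, \limp, \exists, \forall$ cases follow directly from the inductive hypothesis by unwinding the corresponding clause of the realisability judgement; and the $\mu$ case $\chi(X) = \mu Y \lambda y\, \chi'(X,Y,y)$ follows from the IH applied to $\chi'(X,B,y)$ for every candidate $B$, since the realisers of $\mu$-formulas are defined by a universal quantification over all candidates.

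Finally, I would invoke the hypothesis $t\, \realises\, \mu\phi\,\numeral n$ with the specific candidate $A_\psi$ to obtain
\[
t\, \realises\, \forall x (\phi(A_\psi,x) \limp A_\psi x) \limp A_\psi \numeral n.
\]
By the substitution property, the realisers of $\forall x(\phi(A_\psi,x)\limp A_\psi x)\limp A_\psi \numeral n$ coincide with those of $\forall x(\phi(\psi,x)\limp \psi(x))\limp \psi(\numeral n)$, yielding the statement.

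The only delicate point is the substitution property, and in particular its $\mu$-subcase; but the inductive hypothesis is applied \emph{uniformly over all candidates} $B$, which is exactly what the definition of the $\mu$-clause requires, so no complication arises. Everything else is a mechanical unwinding of the realisability clauses.
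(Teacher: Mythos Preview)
Your proof is correct and follows exactly the paper's approach: the paper's entire justification is ``by consideration of the candidate $\{(t,n) : t \, \realises\, \psi(n)\}$'', and you have usefully spelled out the coincidence step (realisers of $\chi(A_\psi)$ and $\chi(\psi)$ agree) that this leaves implicit. One minor remark: for the $\mu$-subcase of that coincidence lemma to go through, state it for formulas that may already contain candidate symbols, so that the inductive hypothesis applies to $\chi'(X,B,y)$; positivity in $X$ plays no role there.
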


This allows us to realise all the induction axioms for fixed points:
\begin{proposition}
    [Realising Induction]
    \label{prop:iter-realises-ind}
    Let $\sigma(X) = \type {\phi(X,x)}$ and $\tau = \type{\psi(x)}$. 
    Then  $\iter{\sigma,\tau}\, \realises\, \indaxiom_{\phi,\psi}$.
\end{proposition}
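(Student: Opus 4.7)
The plan is to unfold the realisability judgement of $\indaxiom_{\phi,\psi}$ and exploit the `abstract' character of the $\mu$-clause by instantiating its candidate quantification with a suitable realisability predicate for $\psi$.

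Explicitly, I would introduce the candidate
\[
A_\psi \ \dfn \ \{(u,m) \ :\ u \, \realises \, \psi(\numeral m)\},
\]
which is indeed closed under $\convertsnoreceta$ (so a bona fide candidate) by Lemma~\ref{lem:realises-closed-under-converts}. The main intermediate lemma I need is a substitution principle: for any formula $\chi(X,x)$ (positive in $X$), any $u \in \typedclomuLJneg$ and any $n \in \Nat$,
\[
u \, \realises \, \chi(\psi,\numeral n) \quad \iff \quad u \, \realises \, \chi(A_\psi,\numeral n).
\]
This is proved by a routine (meta-level) induction on the structure of $\chi$: the base case $\chi = Xt$ holds by the very definition of $A_\psi$ (using that $t$ reduces to a unique numeral), the propositional and quantifier cases follow from the inductive hypothesis, and the nested $\mu$-case reduces directly to the inductive hypothesis since the $\mu$-clause of the realisability judgement is uniform in candidates. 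Polarity plays no essential role in this lemma, since we are merely identifying two realisability judgements on the nose; positivity of $X$ in $\phi$ is only needed to ensure that $\mu\phi$ is well-formed so that $\indaxiom_{\phi,\psi}$ makes sense.

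To conclude, given $s \, \realises \, \forall x(\phi(\psi,x) \limp \psi(x))$ and $n \in \Nat$ and $t \, \realises \, \mu\phi\, \numeral n$, I aim to show $\iter{\sigma,\tau}\, s\, t \, \realises \, \psi(\numeral n)$. Instantiating Lemma~\ref{prop:mu-realiser-realises-formula-invariants} (or directly the $\mu$-clause) with the candidate $A_\psi$ yields
\[
t \, \realises \, \forall x(\phi(A_\psi,x) \limp A_\psi\, x) \limp A_\psi\, \numeral n.
\]
By the substitution lemma applied to $\phi$, the hypothesis on $s$ gives $s \, \realises \, \forall x(\phi(A_\psi,x) \limp A_\psi\, x)$, whence $t\, s \, \realises \, A_\psi\, \numeral n$, i.e.\ $t\, s \, \realises \, \psi(\numeral n)$. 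Finally, the reduction rule $\iter{}\, s\, t \reduces t\, s$ from Figure~\ref{fig:reduction-mu-iter}, combined with closure of realisability under conversion (Lemma~\ref{lem:realises-closed-under-converts}), gives $\iter{\sigma,\tau}\, s\, t \, \realises \, \psi(\numeral n)$, as required.

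The substitution lemma is the only real content; the remainder is unfolding definitions. The main subtlety I anticipate is bookkeeping free variables correctly in the substitution lemma when $\chi$ itself contains nested fixed points, so that the inductive hypothesis applies to the body with $X$ still in scope alongside the newly-bound candidate variable. This should however go through cleanly precisely because our realisability clause for $\mu$ quantifies uniformly over all candidates, which is the abstract-realisability design choice motivated earlier in the section.
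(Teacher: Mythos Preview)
Your proposal is correct and follows essentially the same route as the paper: the paper invokes Lemma~\ref{prop:mu-realiser-realises-formula-invariants} (which is itself obtained precisely by taking the candidate $A_\psi$ you define), then applies the reduction $\iter{}\, u\, v \converts v\, u$ and closure under conversion. Your explicit substitution lemma just unpacks what the paper leaves implicit in the phrase ``by consideration of the candidate $\{(t,n):t\,\realises\,\psi(\numeral n)\}$''. One minor slip: realisability candidates in this section are required to be closed under $\converts$ (the $\muLJ$ conversion), not $\convertsnoreceta$; the lemma you cite is stated for $\converts$.
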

\begin{proof}
    We shall omit subscripts to lighten the syntax.
    Let $u\, \realises\, \forall x (\phi(\psi,x) \limp \psi(x))$ and $v \, \realises\, \mu\phi\, \numeral n$.
    We need to show that $\iter{}\, u\, v\, \realises\, \psi(n)$:
    \[
    \begin{array}[b]{r@{\ \realises\ }ll}
         v & \forall x (\phi(\psi,x) \limp \psi(x)) \limp \psi(\numeral n) & \text{since $v\, \realises\, \mu\phi\, \numeral n$ and by \Cref{prop:mu-realiser-realises-formula-invariants}} \\
         v\, u & \psi(\numeral n) & \text{by $\realises\limp$ since $u\, \realises \, \forall x (\phi(\psi,x) \limp \psi(x))$}\\
         \iter\, u\, v & \psi(\numeral n) & \text{by $\converts$ and \Cref{lem:realises-closed-under-converts}}  \tag*{\qed}
    \end{array}
    \]
\renewcommand{\qed}{}
\end{proof}

\subsection{Functoriality and realising prefix axioms}

As expected we will realise $\preaxiom$ by $\inj$.
For this we need a `functoriality lemma' establishing a semantics for functors of $\muLJ$ within our realisability model.
This is because the reductions for $\inj$ introduce functors.

In fact, due to the fact that we admit inductive types as primitive, with native $\lr\mu$ and $\rr \mu$ rules, we shall have to establish the realisability of $\preaxiom$ and the functoriality properties \emph{simultaneously}.

\begin{lemma}
    [Functoriality and realising $\preaxiom$]
    \label{lem:funct-and-pre-realisation}
    Let $\type{\phi(\vec Y,Z,z)} = \sigma(\vec Y,Z)$.
    \todonew{G: Write simply $\mu\phi(\vec Y)$ for $\mu Z\lambda z \phi(\vec Y,Z,z)$}
    Then for all $\vec \tau = \type{\vec \psi(x)}$ we have:
    \begin{enumerate}
        \item\label{item:in-realises-pre} $\injX{\sigma(\vec \tau)} \, \realises\, \preaxiom_{\phi(\vec \psi)} $
        \item\label{item:funct-lem} Let $t \, \realises\, \forall z (\chi(z) \limp \chi'(z))$ \todonew{G: with $\type{\chi(z)} = \gamma$ and $\type{\chi'(z)} = \gamma'$}. Then:
        \todonew{G: THE RED TEXT CAN BE MOVED IN THE PROOF}
        \begin{enumerate}
            \item\label{item:funct-lem-pos} if $\phi(\vec Y,Z,z)$ is positive in $Z$ then $\sigma(\vec \tau,t)\, \realises\, \forall z (\phi(\vec \psi,\chi,z) \limp \phi (\vec\psi , \chi',z))$.
            \item\label{item:funct-lem-neg}
            if $\phi(\vec Y,Z,z)$ is negative in $Z$ then $\sigma(\vec \tau,t)\, \realises\, \forall z (\phi(\vec \psi,\chi',z) \limp \phi (\vec\psi , \chi,z))$.
        \end{enumerate}
    \end{enumerate}
\end{lemma}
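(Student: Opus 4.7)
The plan is to prove items \ref{item:in-realises-pre} and \ref{item:funct-lem} simultaneously by (meta-level) induction on the structure of the formula $\phi(\vec Y, Z, z)$. The interdependence is one-directional at each level: item \ref{item:in-realises-pre} for $\phi$ will invoke item \ref{item:funct-lem} for the very same $\phi$, and item \ref{item:funct-lem} for $\phi = \mu Y\lambda y\, \phi'$ will invoke item \ref{item:in-realises-pre} for the subformula $\phi'$. So the induction is well-founded.

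First I would handle item \ref{item:in-realises-pre} granted item \ref{item:funct-lem} for the same $\phi$. Fix a candidate $A$, a realiser $u \, \realises \, \forall z (\phi(\vec\psi, A, z) \limp Az)$, a numeral $\numeral n$, and a realiser $v \, \realises \, \phi(\vec\psi, \mu\phi(\vec\psi), \numeral n)$. I must show $\injX{\sigma(\vec\tau)}\, v\, u \, \realises\, A\numeral n$. By the reduction rule in \Cref{fig:reduction-mu-iter}, $\injX{\sigma(\vec\tau)}\, v\, u \converts u\, (\sigma(\vec\tau, \iter{}\, u)\, v)$. By \Cref{prop:iter-realises-ind} (Realising Induction), $\iter{}\, u \, \realises\, \forall z(\mu\phi(\vec\psi)\, z \limp A z)$ (treating $A$ as the invariant). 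Since $\phi$ is positive in $Z$, item \ref{item:funct-lem-pos} applied to $\iter{}\, u$ (with $\chi \dfn \mu\phi(\vec\psi)$ and $\chi' \dfn A$) yields $\sigma(\vec\tau, \iter{}\, u) \, \realises\, \forall z(\phi(\vec\psi, \mu\phi(\vec\psi), z) \limp \phi(\vec\psi, A, z))$. Applying this to $v$ gives a realiser of $\phi(\vec\psi, A, \numeral n)$, and then applying $u$ yields a realiser of $A\numeral n$. \Cref{lem:realises-closed-under-converts} transfers this back along $\converts$ to $\injX{\sigma(\vec\tau)}\, v\, u$.

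Next I would treat item \ref{item:funct-lem} by cases on $\phi$, proving \ref{item:funct-lem-pos} and \ref{item:funct-lem-neg} simultaneously so polarity flips on the left of $\limp$ can cross-recur. The atomic cases $s=t$, $\pnat t$ and $Yt$ with $Y \neq Z$ are trivial as $Z$ does not occur: the realising type is independent of the third argument and one takes identities. The case $\phi = Zt$ is exactly the hypothesis on $t$. The conjunction case $\phi = \phi_0 \land \phi_1$ uses the pairing structure of $\times$ componentwise, and the implication case $\phi = \phi_0 \limp \phi_1$ uses that $\phi_0$ flips polarity in $Z$, calling the dual induction hypothesis on $\phi_0$ and the homopolar one on $\phi_1$. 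The quantifier cases are immediate as $\type{\exists x\phi} = \type{\forall x \phi} = \type\phi$. The critical case is $\phi = \mu Y \lambda y\, \phi'(\vec Y, Z, Y, y)$, where $Z$ must occur positively in $\phi'$: here $\sigma(\vec\tau, t)$ is the iteration-built functor, defined by $\sigma(\vec\tau, t) \dfn \iter{}\, (\lambda x.\, \injX{\sigma'(\vec\tau,?)}\, (\sigma'(\vec\tau, t, \mathsf{id})\, x))$ following \Cref{def:functors-in-muLJ}. Given $v\, \realises\, \mu Y \phi'(\vec\psi,\chi, Y, y)\, \numeral n$, one iterates with invariant $A \dfn \mu Y \phi'(\vec\psi, \chi', Y, y)$; the step realiser is obtained by composing: the IH \ref{item:funct-lem-pos} for $\phi'$ applied to $t$ (transporting $\chi$ to $\chi'$ at the $Z$ slot while keeping $Y$ fixed at $A$) and the IH \ref{item:in-realises-pre} for $\phi'$ which realises the prefix for $\mu Y \phi'(\vec\psi, \chi', Y, y)$, i.e.\ gives $\injX{\sigma'(\vec\tau,?)}$. \Cref{prop:iter-realises-ind} then delivers the required transport. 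The negative-polarity case \ref{item:funct-lem-neg} is symmetric, swapping $\chi$ and $\chi'$ in the conclusion.

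The main obstacle will be the fixed-point case of item \ref{item:funct-lem}: we have to build the monotonicity map on fixed points out of the iterator, and verify that the composition of the inner functor (from IH \ref{item:funct-lem}) with the inner prefix (from IH \ref{item:in-realises-pre}) really realises the correct step formula. Some bookkeeping is needed because the inner $\phi'$ has both a $Z$-position (varying) and a $Y$-position (held at $A$), so one must apply the polarised induction hypothesis with $Y \dfn A$ and check that the realising type matches $\sigma'(\vec\tau, t, \mathsf{id})$ up to the reductions of \Cref{fig:reduction-mu-iter}. Throughout, \Cref{lem:realises-closed-under-converts} is used freely to absorb the reductions introduced by the functor construction.
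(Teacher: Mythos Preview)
Your plan is correct and matches the paper's proof essentially step for step: the same simultaneous induction on $\phi$, the same use of item~\ref{item:funct-lem} for the \emph{same} $\phi$ to establish item~\ref{item:in-realises-pre} (via $\injX\sigma\, v\, u \converts u\,(\sigma(\vec\tau,\iter{}\, u)\, v)$ and \Cref{prop:iter-realises-ind}), and the same treatment of the critical $\mu$-case of item~\ref{item:funct-lem} by iterating with invariant $\mu Y\phi'(\vec\psi,\chi',Y,y)$ using the IH for~\ref{item:in-realises-pre} on $\phi'$ as the step's prefix realiser composed with the IH for~\ref{item:funct-lem} on $\phi'$ to transport $\chi$ to $\chi'$. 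The only cosmetic differences are swapped names ($u\leftrightarrow v$ in part~\ref{item:in-realises-pre}) and that the paper unfolds the $\realises\mu$ clause to an arbitrary candidate rather than invoking \Cref{prop:iter-realises-ind} directly, which is equivalent; your notation $\sigma'(\vec\tau,t,\mathsf{id})$ for the inner functor is slightly off (the third slot is a type substitution, not a term), but the intent is clear and matches \Cref{def:functors-in-muLJ}.
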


Before proving this, let us make some pertinent remarks:
\begin{remark}
    [Comparison to $\F$]
    For the reader used to fixed points via second-order encodings, it is perhaps surprising that functoriality and realisability of $\preaxiom$ are mutually dependent. 
    To recall, in system $\F$, we usually set $\mu X\sigma(X) \dfn \forall X ((\sigma(X) \to X) \to X)$. 
    From here we might set the appropriate functor as,
    \[
        \mu X \sigma(X,t) \dfn \Lambda X ((\sigma(X,t)\to X)\to X)
    \]
    whence the appropriate functoriality properties, analogous to \eqref{item:funct-lem} of \Cref{lem:funct-and-pre-realisation}, are readily established directly, without appeal to the realisation of $\preaxiom$.
    However the above is not (explicitly) how $\muLJ$ defines the fixed point functors.
    
    While a second-order calculus introduces $\forall$ by way of an eigenvariable representing an \emph{arbitrary} prefixed point, our $\rr\mu$ rule works precisely because $\mu X\sigma(X)$ is the \emph{least} fixed point, and so is already sufficiently general.
    This can be realised by appropriate proof transformations, themselves relying on the $\rr\mu$ rule.
    It is for this reason that the mutual dependency above presents in our setting.
\end{remark}

\begin{proof}
[Proof of \Cref{lem:funct-and-pre-realisation}]
    We proceed by induction on the structure of $\phi(\vec Y,Z,z)$, proving both \eqref{item:in-realises-pre} and \eqref{item:funct-lem} simultaneously.
    More precisely, \eqref{item:in-realises-pre} will rely on instances of \eqref{item:funct-lem} for the same $\phi$, and \eqref{item:funct-lem} will rely on smaller instances of both \eqref{item:in-realises-pre} and \eqref{item:funct-lem}.

    % Let us simply write $\mu\phi(\vec \psi)$ for $\mu Z\lambda z \phi(\vec \psi,Z,z)$, so that 
    We have $\preaxiom_{\phi(\vec \psi)} = \forall z (\phi(\vec \psi,\mu\phi(\vec \psi),z) \limp \mu\phi(\vec \psi)\, z)$.
    So to prove \eqref{item:in-realises-pre} let, 
    \begin{equation}
        \label{eq:in-realises-pre-lfp-unfolded-realiser}
        u\, \realises \, \phi(\vec \psi,\mu\phi(\vec \psi),\numeral n)
    \end{equation}
    and, by $\realises\forall $ and $\realises\limp$, we need to show $\injX{\sigma(\vec \tau)}\, u\, \realises\, \mu\phi(\vec \psi)\, \numeral n$.
    Now, by $\realises\mu$ and $\realises\limp$, let $C$ be a candidate and, 
    \begin{equation}
    \label{eq:in-realises-pre-candidate-prefix-realiser}
        v\, \realises \, \forall z (\phi(\vec \psi,C,z) \limp Cz)
    \end{equation} 
    so that it suffices to show $ \injX{\sigma(\vec \tau)}\, u\, v\, \realises\, C\numeral n$.
    We have:\footnote{The types for $\iter{}$ are omitted but determined by context.}
    \[
    \begin{array}{r@{\ \realises \ }ll}
         \iter{}\, v & \forall z (\mu\phi(\vec \psi)\, z \limp Cz) & \text{by \Cref{prop:iter-realises-ind}, \eqref{eq:in-realises-pre-candidate-prefix-realiser} and $\realises\limp$} \\
         \sigma (\vec \tau,\iter{} \, v ) & \phi(\vec \psi,\mu\phi(\vec \psi),\numeral n) \limp \phi(\vec \psi,C,\numeral n) & \text{by IH for \cref{item:funct-lem} and $\realises \forall$} \\
         \sigma (\vec \tau,\iter{} \, v ) \, u & \phi(\vec \psi,C,\numeral n) & \text{by \eqref{eq:in-realises-pre-lfp-unfolded-realiser} and $\realises \limp$} \\
         v (\sigma (\vec \tau,\iter{} \, v ) \, u) & C\numeral n & \text{by \eqref{eq:in-realises-pre-candidate-prefix-realiser} and $\realises \forall $ and $\realises\limp$} \\
         \injX{\sigma(\vec \tau)}\, u\, v & C\numeral n & \text{by $\converts \inj$ and \Cref{lem:realises-closed-under-converts}}
    \end{array}
    \]

    To prove \eqref{item:funct-lem} the critical case is when $\phi(\vec Y,Z,z) $ is a fixed point formula $ \numeral m \in \mu X \lambda x \phi'(\vec Y,\allowbreak Z,z,X,x)$.
    We consider only the case when $Z$ is positive, \eqref{item:funct-lem-pos}.
    As before we shall simply write, say $\mu\phi'(\vec Y,Z,z)$ for $\mu X\lambda x \phi'(\vec \psi, Z,z,X,x)$.
    To prove \eqref{item:funct-lem-pos} let $n \in \Nat$ and
    \begin{equation}
    \label{eq:funct-lem-mu-antecedent}
        u\ \realises\ \phi(\vec \psi,\chi,\numeral n)
        \qquad\qquad
        \text{(i.e.\ $u\, \realises\, \mu\phi'(\vec \psi,\chi,\numeral n)\, \numeral m$)}
    \end{equation}
    so, by $\realises\forall$ and $\realises\limp$ we need to show $\sigma(\vec \tau,t)\, u\,  \realises\, \phi(\vec \psi,\chi',\numeral n)$, i.e.\ $ \sigma(\vec \tau,t)\, u\,  \realises\,  \mu\phi'(\vec \psi,\chi',\numeral n)\, \numeral m$.
    So, by $\realises\mu$, let $A$ be a candidate and:
    \begin{equation}
    \label{eq:funct-lem-mu-succedent-input}
        v\ \realises\ \forall x (\phi'(\vec \psi,\chi',\numeral n, A,x) \limp Ax)
    \end{equation}
    We need to show that $\sigma(\vec \tau,t)\, u\, v\, \realises\, A\numeral m$.

    Now let $\sigma'(\vec Y,Z,X) = \type {\phi'(\vec Y,Z,z,X,x)}$. 
    Again we may similarly write simply $\mu\sigma'(\vec Y,Z)$ for $\mu X \sigma'(\vec Y,Z,X)$ (which is just $\sigma(\vec Y,Z)$).
    First, by IH\eqref{item:in-realises-pre} for $\sigma'(\vec Y,Z,X)$, we have:
    \begin{equation}
    \label{eq:in-realises-pre-ih}
        \injX{\sigma'(\vec \tau,\gamma')} \, \realises\, \forall x (\phi'(\vec \psi, \chi',\numeral n, \mu\phi'(\vec \psi,\chi',\numeral n),x) \limp \mu\phi'(\vec \psi,\chi',\numeral n)\, x)
    \end{equation}
    We claim that:
    \begin{equation}
    \label{eq:mur-realises-pre-with-context}
    \rr\mu\, \sigma'(\vec \tau,t,\sigma(\vec \tau,\gamma'))\ \realises\ 
    \forall x (\phi'(\vec \psi,\chi,\numeral n,\mu\phi'(\vec \psi,\chi',\numeral n),x) \limp \mu\phi'(\vec \psi,\chi',\numeral n)\, x)
    \end{equation}
    To prove this let $w\, \realises\, \phi'(\vec \psi, \chi,\numeral n,\mu\phi'(\vec \psi,\chi',\numeral n),\numeral k)$ and we show $ \rr\mu\, \sigma'(\vec \tau,t,\sigma(\vec \tau,\gamma'))\, w \, \realises\, \mu\phi'(\vec \psi,\allowbreak \chi',\numeral n)\, \numeral k$:
    \[
    \begin{array}{r@{\ \realises\ }ll}
         % \sigma'(\vec \tau,t,\mu X \sigma(\vec \tau,\gamma')) & \forall z (\phi'(\vec \psi,\chi,z, \mu\phi'(\vec \psi,\chi',\numeral n), \numeral k) \limp \phi'(\vec \psi, \chi', z, \mu\phi'(\vec \psi,\chi',\numeral n) , \numeral k))
         % & \text{by IH\eqref{item:funct-lem} for $\sigma'(\vec Y,Z,X)$}
         % \\
         \sigma'(\vec \tau,t,\mu X \sigma(\vec \tau,\gamma')) \, w & \phi'(\vec \psi,\chi',\numeral n, \mu\phi'(\vec \psi, \chi',\numeral n), \numeral k) &\text{by IH\eqref{item:funct-lem} for $\phi'$, $\realises\forall $ and $\realises\limp$} \\
         \injX{\sigma'(\vec \tau,\gamma')} (\sigma'(\vec \tau,t,\mu X \sigma(\vec \tau,\gamma')) \, w) & \mu\phi'(\vec \psi,\chi',\numeral n)\, \numeral k & \text{by \eqref{eq:in-realises-pre-ih} and $\realises\forall$ and $\realises\limp$} \\
         \mu_r\, \sigma'(\vec \tau, t, \sigma(\vec \tau,\gamma'))\, w & \mu\phi'(\vec \psi, \chi', \numeral n)\, \numeral k & \text{by $\converts$}
    \end{array}
    \]
    Now by \eqref{eq:mur-realises-pre-with-context}, $\realises\mu$ and \Cref{prop:mu-realiser-realises-formula-invariants} we have,
    \begin{equation}
        u\, \realises\, \forall x (\phi'(\vec \psi,\chi,\numeral n,\mu\phi'(\vec \psi,\chi',\numeral n),x) \limp \mu\phi'(\vec \psi,\chi',\numeral n)\, x) \limp \mu\phi'(\vec \psi,\chi',\numeral n)\, \numeral m
    \end{equation}

    Now we prove $\sigma(\vec \tau,t)\, u\, v\, \realises\, A\numeral m$ as follows:
    \[
    \begin{array}{r@{\ \realises \ }ll}
          % \rr\mu\, \sigma'(\vec \tau,t,\sigma(\vec \tau,\gamma')) & \forall x (\phi'(\vec \psi,\chi,\numeral n,\mu\phi'(\vec \psi,\chi',\numeral n),x) \limp \mu\phi'(\vec \psi,\chi',\numeral n)\, x) & \text{by \eqref{eq:mur-realises-pre-with-context}} \\
          u\, (\rr\mu\, \sigma'(\vec \tau,t,\sigma(\vec \tau,\gamma'))) & \mu\phi'(\vec \psi,\chi',\numeral n)\, \numeral m & \text{by \eqref{eq:mur-realises-pre-with-context} and $\realises\limp$}\\
          \iter{}\, (\rr\mu\, \sigma'(\vec \tau,t,\sigma(\vec \tau,\gamma')))\, u & \mu\phi'(\vec \psi,\chi',\numeral n)\, \numeral m & \text{by $\converts$} \\
          \iter{}\, (\rr\mu\, \sigma'(\vec \tau,t,\sigma(\vec \tau,\gamma')))\, u\, v & A \numeral m & \text{by $\realises \mu$ and \eqref{eq:funct-lem-mu-succedent-input}} \\
          \mu \sigma'(\vec \tau,t) \, u \, v & A\numeral m & \text{by defintion of $\mu$-functor} \\
          \sigma(\vec \tau,t)\, u\, v & A\numeral m
    \end{array}
    \]

    The remaining steps for functoriality, \eqref{item:funct-lem}, are routine, and for these we shall simply suppress the parameters $\vec \psi,\vec \tau$ henceforth, writing simply $\phi(Z,z)$ instead of $\phi(\vec Y,Z,z)$ and $\sigma(Z)$ instead of $\sigma(\vec Y,Z)$.
    This is because we shall never vary the parameters $\vec \psi,\vec \tau$ in the remaining cases.
    We give only the cases for \cref{item:funct-lem-pos}, the ones for \cref{item:funct-lem-neg} being similar.
     We need to show that $\sigma(t) \, \realises\, \forall z (\phi(\chi,z) \limp \phi(\chi',z))$ so let $n\in \Nat$ and $u\, \realises\, \phi(\chi,\numeral n)$ and let us show that $\sigma(t)\, u\,  \realises\, \phi(\chi',\numeral n)$, under $\realises\forall$ and $\realises\limp$.

    \begin{itemize}
        \item If $\phi(Z,z)=Zs(z)$ then $\sigma(t) = t $. So we need $t\, \realises\, \forall z(\chi(s(z)) \limp \chi'(s(z))) $. Let $n\in \Nat$ and, since already $t\, \realises\, \forall z (\chi(z) \limp \chi'(z))$ by assumption, we have $t\, \realises \chi(\numeral{s(n)}) \limp \chi(\numeral {s(n)})$ as required.
        \item If $\phi(Z,z)$ is any other atomic formula $\theta$ then $\forall z (\phi(\chi,z)\limp \phi(\chi',z))$ is just $\forall z (\theta \limp \theta)$ and $\sigma(t)$ is just $\id\, \realises \, \forall z (\theta \limp \theta)$ by $\realises\forall$ and $\id\converts$.
        \item Suppose $\phi(Z,z)$ is $\phi_0(Z,z) \land \phi_1(Z,z)$. 
        Write $\sigma_i(Z) \dfn \type {\phi_i(Z,z)}$ for $i\in \{0,1\}$ so that $\sigma(Z) = \sigma_0(Z) \times \sigma_1(Z)$.
        % We have by IH that:
        % \begin{equation}
        % \label{eq:funct-lem-ih-conj}
        %     \sigma_i(t)\, \realises\, \forall z (\phi_i (\chi,z)\limp \phi_i (\chi',z))
        % \end{equation}
        % for $i\in \{0,1\}$.
        We need to show that $\sigma(t)\, \realises\, \forall z (\phi(\chi,z) \limp \phi(\chi',z))$, so let $n\in \Nat$ and $u\, \realises\, \phi(\chi,\numeral n)$ and let us show that $\sigma(t)\, u \, \realises \, \phi(\chi',\numeral n)$:
        \[
        \begin{array}{r@{\ \realises\ }ll}
             \proj i u & \phi_i(\chi,\numeral n) & \text{for $i\in \{0,1\}$ by $\realises\land$} \\
             \sigma_i(t) (\proj i u) & \phi_i(\chi', \numeral n) & \text{for $i\in \{0,1\}$ by IH and $\realises\forall$ and $\realises\limp$} \\
             \proj i (\sigma(t)\, u) & \phi_i(\chi',\numeral n) &\text{for $i\in \{0,1\}$ by form of $\sigma(t)$ and $\converts$} \\
             \sigma(t)\, u  & \phi(\chi',\numeral n) & \text{by $\realises\land$}
        \end{array}
        \]
        \item Suppose $\phi(Z,z)$ is $\phi_0(Z,z) \limp \phi_1(Z,z)$ with $\phi_0(Z,z)$ negative in $Z$ and $\phi_1(Z,z)$ positive in $Z$. 
        Write $\sigma_i(Z)\dfn \type{\phi_i(Z,z)}$ for $i\in \{0,1\}$ so that $\sigma(Z) = \sigma_0(Z) \limp \sigma_1(Z)$.
        We need to show that $\sigma(t)\, \realises\, \forall z (\phi(\chi,z) \limp \phi(\chi',z))$ so let $n\in \Nat $ and $u\, \realises\, \phi(\chi,\numeral n)$ and let us show that $\sigma(t)\, u \, \realises\, \phi(\chi',\numeral n)$.
        By the form of $\phi(Z,z)$, and by $\realises\limp$, let $v\, \realises\, \phi_0(\chi',\numeral n)$ so that it suffices to show $\sigma(t)\, u\, v\, \realises\, \phi_1(\chi', \numeral n)$:
        \[
        \begin{array}{r@{\ \realises\ }ll}
             \sigma_0(t)\, v & \phi_0 (\chi,\numeral n) & \text{by IH and $\realises\forall $ and $\realises\limp$} \\
             u\, (\sigma_0(t)\, v) & \phi_1(\chi,\numeral n) & \text{by $\realises\limp$} \\
             \sigma_1(t)\, (u\, (\sigma_0(t)\, v)) & \phi_1(\chi',\numeral n) & \text{by IH and $\realises\forall$ and $\realises\limp$} \\
             \sigma(t)\, u\, v & \phi_1(\chi',\numeral n) & \text{by form of $\sigma(t)$ and $\converts$}
        \end{array}
        \]
        
        \item Suppose $\phi(Z,z) $ is $\exists y \phi'(Z,z,y)$. Write $\sigma'(Z)\dfn \type{\phi'(Z,z,y)}$ so that $\sigma(Z) = \sigma'(Z)$.
        We need to show that $\sigma(t) \, \realises\, \forall z (\phi(\chi,z) \limp \phi(\chi',z))$ so let $n\in \Nat$ and $u\, \realises\, \phi(\chi,\numeral n)$ and let us show that $\sigma(t)\, u\,  \realises\, \phi(\chi',\numeral n)$, under $\realises\forall$ and $\realises\limp$. 
        \[
        \begin{array}{r@{\ \realises\ }ll}
             u & \exists y \phi'(\chi,\numeral n,y) & \text{by assumption and form of $\phi(Z,z)$} \\
             u & \phi' (\chi,\numeral n, \numeral k) & \text{for some $k\in \Nat$ by $\realises\exists$} \\
             \sigma'(t)\, u & \phi'(\chi',\numeral n, \numeral k) & \text{for some $k\in \Nat$ by IH and $\realises\forall$ and $\realises\limp$} \\
             \sigma'(t)\, u & \exists y \phi'(\chi', \numeral n, y) & \text{by $\realises\exists$} \\
             \sigma(t)\, u & \phi(\chi',\numeral n) & \text{by forms of $\sigma(Z)$ and $\phi(Z,z)$}             
        \end{array}
        \]
        \item Suppose $\phi(Z,z) $ is $\forall y \phi'(Z,z,y)$. Write $\sigma'(Z) \dfn \type{\phi'(Z,z,y)}$ so that $\sigma(Z) = \sigma'(Z)$.
        We need to show that $\sigma(t)\, \realises\, \forall z (\phi(\chi,z) \limp \phi(\chi',z))$ so let $n\in \Nat $ and $u\, \realises\, \phi(\chi,\numeral n)$ and let us show that $\sigma(t)\, u\, \realises\, \phi(\chi',\numeral n)$ under $\realises\forall$ and $\realises\limp$.
        \[
        \begin{array}[b]{r@{\ \realises\ }ll}
             u & \forall y \phi'(\chi,\numeral n,y) & \text{by assumption and form of $\phi(Z,z)$} \\
             u & \phi'(\chi,\numeral n, \numeral k) & \text{for all $k\in \Nat$ by $\realises \forall$} \\
             \sigma'(t)\, u & \phi'(\chi',\numeral n, \numeral k) & \text{for all $k\in \Nat$ by IH and $\realises\forall$ and $\realises\limp$} \\
             \sigma'(t)\, u & \forall y \phi'(\chi',\numeral n, y) & \text{by $\realises\forall$} \\
             \sigma(t)\, u & \phi(\chi', \numeral n) & \text{by forms of $\sigma(Z)$ and $\phi(Z,z)$} \tag*{\qed}
        \end{array}
        \]
    \end{itemize}
    \renewcommand{\qed}{}
\end{proof}

\subsection{Putting it all together}

\begin{theorem}
    [Soundness]
    \label{thm:realisability-soundness}
    If $\muHAneg \proves \phi$ then there is a typed $\typedclomuLJneg$ term $t:\type \phi$ such that $t\, \realises\, \phi$.
\end{theorem}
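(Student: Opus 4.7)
The plan is to proceed by induction on the derivation $\muHAneg \proves \phi$ in a natural deduction system for intuitionistic predicate logic. To handle open proofs with hypotheses, I would first strengthen the statement to the sequent form: if $\phi_1, \dots, \phi_k \proves \psi$ in $\muHAneg$, then there is a $\typedclomuLJneg$-term $t(y_1{:}\type{\phi_1}, \dots, y_k{:}\type{\phi_k}) : \type{\psi}$ such that whenever $r_i \realises \phi_i$ for each $i$, we have $t(\vec r) \realises \psi$. The theorem is the special case $k=0$.

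For the logical rules, the construction follows the standard propositions-as-types correspondence: $\land$-introduction and elimination are realised by $\pair{\cdot}{\cdot}$ and $\proj i$; $\limp$-introduction and elimination by $\rr\limp$ (abstraction) and application; $\exists$ and $\forall$ introduction/elimination require no change of realiser since $\type{\exists x \phi} = \type\phi = \type{\forall x \phi}$. Each case is routine, using closure under $\converts$ (Lemma~\ref{lem:realises-closed-under-converts}) to discharge $\beta$-conversions introduced by the combinators, and type-correctness is immediate from the definitions. For the equational/Robinson axioms and defining equations of primitive recursive symbols, all these are decidable universal $\Delta_0$-statements whose closed instances reduce under $\converts$ to $\numeral n = \numeral n$, realised (at type $\N$) by the common numeral.

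For the $\pNat$ axioms: $\preaxiom^0_\pNat$ is realised by $\numeral 0 : \N$ since $\numeral 0 \converts \numeral 0$; $\preaxiom^\succ_\pNat$ is realised by $\succ : \N \arrow \N$ since $\succ\, \numeral n \converts \numeral{n+1}$; and $\indaxiom_{\pNat,\phi}$ is realised by $\Niter_{\type\phi} : \type\phi \to (\type\phi \to \type\phi) \to \N \to \type\phi$, verified by a straightforward meta-induction on $n$ using the reduction rules for $\Niter$ in Figure~\ref{fig:reduction-N-muLJ}. For the fixed point axioms, the two hardest cases have already been dispatched: $\preaxiom_{\phi}$ is realised by $\injX{\sigma}$ by Lemma~\ref{lem:funct-and-pre-realisation}\eqref{item:in-realises-pre}, and $\indaxiom_{\phi,\psi}$ is realised by $\iter{\sigma,\tau}$ by Proposition~\ref{prop:iter-realises-ind}.

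The main technical obstacle---the mutual interplay between functoriality and realisation of $\preaxiom$, arising from the fact that $\muLJ$ treats inductive types as primitive rather than via second-order encodings---was already surmounted in Lemma~\ref{lem:funct-and-pre-realisation}. What remains for this theorem is therefore essentially bookkeeping: carefully matching realising types to the types of combinators in $\typedclomuLJneg$, and ensuring free-variable/substitution conventions are handled uniformly in the strengthened sequent-form statement. In particular, one should note that $\type{\phi[\numeral n/x]} = \type\phi$ since numeral substitution does not affect the realising type, so instantiating first-order quantifiers is type-transparent, which is what makes the $\forall$ and $\exists$ cases go through without any modification of the realiser.
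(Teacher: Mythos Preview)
Your proposal is correct and follows essentially the same approach as the paper: the logical rules are dispatched by the standard propositions-as-types correspondence, the arithmetical axioms by the equational realisers, the $\pNat$ axioms by $\numeral 0$, $\succ$, and $\Niter$ respectively, and the fixed point axioms by appeal to Lemma~\ref{lem:funct-and-pre-realisation} and Proposition~\ref{prop:iter-realises-ind}. The only addition you make explicit that the paper leaves implicit is the strengthening to a sequent-form statement with hypotheses, which is indeed the standard way to handle the induction on derivations and is what the paper means by ``realised as usual''.
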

\begin{proof}
The axioms and rules of intuitionistic predicate logic are realised as usual, not using the fixed point rules. 
The basic arithmetical axioms of $\muHAneg$ are just universal closures of true equations between primitive recursive terms, which are realised by the corresponding derivations of those terms in $\muLJneg$.\footnote{Recall that $\muLJneg$ represents at least all the (even higher-order) primitive recursive functions.}
The axiom schemes $\preaxiom$ and $\indaxiom$ are realised by $\inj$ and $\iter{}$ of appropriate types, by \Cref{lem:funct-and-pre-realisation} and \Cref{prop:iter-realises-ind} respectively.
    It remains to consider the $\pNat$-specific axioms,  which are realised by the analogous $\nat$-specific rules of $\muLJneg$:
    \begin{itemize}
        \item $\numeral 0 \, \realises\, \nprezero$, as this is just $\numeral 0\, \realises\, \pnat {\numeral 0}$, which follows by definition of $\realises\pNat$.
        \item $\succ \, \realises\, \npresucc$. Let $n\in \Nat $ and $u\, \realises\, \pnat {\numeral n}$, i.e.\ $u\converts \numeral n$, so that, by $\realises\forall $ and $\realises\limp$, it suffices to show that $\succ u\, \realises\, \pnat{\succ \numeral n}$.
        We have that $\succ u \converts \numeral{n+1} = \succ \numeral n $, and so indeed $\succ u\, \realises\, \pnat{\succ \numeral n}$ as required.
        \item $\Niter\, \realises \, \nindaxiom$.
        Let $u\, \realises\, \phi(\numeral 0)$ and $v\, \realises\, \forall x (\phi(x)\limp \phi(\succ x ))$ so that it suffices, by $\realises\limp$, to show that $\Niter u\, v\, \realises \, \forall x^\pNat \phi(x)$.
        For this let $n\in \Nat $ and $w\, \realises\, \pnat {\numeral n}$, and we show $\Niter u\, v\, w\, \realises\, \phi(\numeral n)$ by induction on $n$:
        \begin{itemize}
            \item If $n=0$ then $w\converts \numeral 0$ so $\Niter u\, v\, w\, \converts\, \Niter u\, v\, 0\, \converts u\, \realises \, \phi(\numeral 0)$ by assumption.
            \item If $n = n'+1$ then we have:
            \[
            \begin{array}[b]{r@{\ \realises\ }ll}
                 \Niter u\, v\, \numeral n' & \phi(\numeral n') & \text{by IH} \\
                 v\, (\Niter u\, v\, \numeral n') & \phi(\succ{\numeral n'}) & \text{by $\realises\forall$ and $\realises\limp$} \\
                 \Niter u\, v\, w & \phi(\numeral n) & \text{by $\converts$} \tag*{\qed}
            \end{array}
            \]
        \end{itemize}
    \end{itemize}
    \renewcommand{\qed}{}
\end{proof}

\begin{corollary}
    Any provably total recursive function of $\muPA$ is representable in $\muLJneg$.
\end{corollary}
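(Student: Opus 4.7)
The plan is to chain together the four main ingredients developed in this section. Suppose $f:\Nat^k \to \Nat$ is provably recursive in $\muPA$, witnessed by a $\Sin 0 1$-formula $\phi_f(\vec x, y)$ so that $\muPA \proves \forall \vec x \exists y\, \phi_f(\vec x, y)$. Bundling the quantifier-free matrix with the existential witness inside $\phi_f$, this statement is $\Pin 0 2$. Hence \Cref{thm:muPA-pi02-cons-muHA} yields $\muHA \proves \forall \vec x \exists y\, \phi_f(\vec x, y)$, and \Cref{prop:muHA-to-muHAneg} then gives
\[
\muHAneg \proves \forall \vec x^\pNat \exists y^\pNat\, \nattrans{\phi_f}(\vec x, y).
\]

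Next I would invoke the Realisability Soundness \Cref{thm:realisability-soundness} to obtain a $\typedclomuLJneg$-term $t$ of type $\tau \dfn \type{\forall \vec x^\pNat \exists y^\pNat \nattrans{\phi_f}(\vec x, y)}$ with $t \realises \forall \vec x^\pNat \exists y^\pNat\, \nattrans{\phi_f}(\vec x, y)$. Unfolding the definition of $\type\cdot$ and using that $\type{\pnat z} = \N$, we see that $\tau$ has the shape $\N \arrow \cdots \arrow \N \arrow (\N \times \rho)$ for $\rho$ the realising type of the matrix. Now I would unpack the realisability clauses for $\forall x^\pNat$, $\exists y^\pNat$ and $\land$: given any $n_1,\dots,n_k \in \Nat$, since $\cod{n_i} \realises \pnat{\cod{n_i}}$ by definition of $\realises\pNat$, the application $t\, \cod{n_1}\cdots\cod{n_k}$ realises $\exists y^\pNat \nattrans{\phi_f}(\cod{\vec n}, y)$. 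Hence there is an $m$ with $\projl(t\,\cod{n_1}\cdots\cod{n_k}) \realises \pnat{\cod m}$, i.e.\ it converts to $\cod m$, and $\projr$ realising $\nattrans{\phi_f}(\cod{\vec n}, \cod m)$. The latter, read in the standard model, forces $\phi_f(\cod{\vec n}, \cod m)$ to hold and thus $m = f(\vec n)$.

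From here, a routine composition of $t$ with a $\projl$ gadget produces a $\typedclomuLJneg$-term of type $\N, \overset{k}{\ldots}, \N \seqar \N$ (equivalently, $\N \arrow \cdots \arrow \N$) which converts $\cod{\vec n}$ to $\cod{f(\vec n)}$ and therefore represents $f$ in the sense of \Cref{defn:numerals-representability}. A final appeal to \Cref{thm:terms-to-derivations} transfers this representation from typed $\clomuLJneg$ back to $\muLJneg$, concluding the argument.

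The work is largely bookkeeping, and no single step is a conceptual obstacle. The subtlest point is ensuring that the $\pNat$-relativisation interacts correctly with the realising type construction: it is precisely the choice $\type{\pnat x} = \N$, together with the clause that $t \realises \pnat{\cod m}$ iff $t \converts \cod m$, that forces the realiser of $\exists y^\pNat$ to actually expose a numeral at its $\projl$-component. This in turn hinges on closure of realisability under conversion, \Cref{lem:realises-closed-under-converts}, so that the numeral we extract is genuinely the value $\cod{f(\vec n)}$.
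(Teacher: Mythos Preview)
Your proposal is correct and follows essentially the same route as the paper: conservativity over $\muHA$, relativisation to $\muHAneg$, realisability soundness, unpacking the $\forall^\pNat/\exists^\pNat$ clauses to extract a numeral via $\projl$, and finally \Cref{thm:terms-to-derivations}. The only cosmetic difference is that the paper invokes \Cref{conv:langarith-has-all-pr-symbols} to take the $\Sin 0 1$ witness as a single equation $s(\vec x,y)=t(\vec x,y)$, which makes your step ``read in the standard model, forces $\phi_f$ to hold'' immediate from the clause $\realises{=}$ rather than requiring a further unwinding of $\exists$ and $\land$ inside $\nattrans{\phi_f}$.
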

\begin{proof}
    Assume $\muPA \proves \forall \vec x \exists ! y \, s(\vec x, y) = t(\vec x, y)$, without loss of generality.\footnote{Recall we have included each primitive recursive function definition as a symbol of $\langarith$ and defining equations among the axioms of $\PA$, $\HA$ and extensions.} 
    So we have,
    \[
    \begin{array}{rll}
          & \muPA \proves \forall \vec x \exists  y \, s(\vec x,y) = t(\vec x,y) \\
          \implies & \muHA \proves \forall \vec x \exists  y \, s(\vec x,y) = t(\vec x,y) & \text{by \Cref{thm:muPA-pi02-cons-muHA}} \\
           \implies  & \muHAneg \proves \forall \vec x^\pNat \exists y^\pNat s(\vec x,y) = t(\vec x,y) & \text{by \Cref{prop:muHA-to-muHAneg}} \\
         %    \implies  & \muHAneg \proves \forall \vec x^\pNat \exists y\,  s(\vec x,y) = t(\vec x,y) & \text{by pure logic} \\
         % \implies  &  u\, \realises \forall \vec x^\pNat \exists y\,  s(\vec x,y) = t(\vec x,y)  &  \text{for some $ u:\vec \nat \to \nat $ in $ \typedclomuLJneg$ by \Cref{thm:realisability-soundness}} \\
         %  \implies  & u\, \realises \forall \vec x^\pNat \exists y\,  s(\vec x,y) = t(\vec x,y) & \text{for some $u:\vec \nat \to \nat $ in $ \muLJneg$ by \Cref{thm:terms-to-derivations}} \\    \implies & \forall \vec m \in \Nat\, u\, \vec {\numeral m}\, \realises \, \exists y\, s(\vec {\numeral m},y) = t(\vec {\numeral m},y) & \text{by $\realises \forall$, $\realises \limp$ and $\realises\pNat$} \\
         %   \implies & \forall \vec m \in \Nat \exists n \in \Nat\,  u\, \vec {\numeral m}\, \realises \,  s(\vec {\numeral m},\numeral n) = t(\vec {\numeral m},\numeral n) & \text{by $\realises \exists$} \\
          \implies  &  u\, \realises\,  \forall \vec x^\pNat \exists y^\pNat s(\vec x,y) = t(\vec x,y) 
        \end{array}
          \]
          for some $u$ of $ \typedclomuLJneg$ (of appropriate type) by \Cref{thm:realisability-soundness}.
                    Thus:
          % &  \text{for some $ u $ of $ \typedclomuLJneg$ by \Cref{thm:realisability-soundness}} \\
          % \implies  & u\, \realises \forall \vec x^\pNat \exists y^\pNat s(\vec x,y) = t(\vec x,y) & \text{for some $u $ of $ \muLJneg$ by \Cref{thm:terms-to-derivations}} \\
          \[
              \begin{array}{rll}
          & \forall \vec m \in \Nat\ u\, \vec {\numeral m}\, \realises \, \exists y^\pNat s(\vec {\numeral m},y) = t(\vec {\numeral m},y) & \text{by $\realises \forall$, $\realises \limp$ and $\realises\pNat$} \\
          \implies & \forall \vec{ m} \in \Nat \exists n \in \Nat\, (\projl (u\, \vec{\numeral m})\, \realises\, \pnat {\numeral n} \text{ and } \projr (u\, \vec {\numeral m}) \, \realises \, s(\vec {\numeral m}, \numeral n) = t(\vec {\numeral m}, \numeral n) ) & \text{by $\realises \exists$, $\realises\land$ and $\realises\pNat$} \\
          \implies & \forall \vec{ m} \in \Nat \exists n \in \Nat\, (\projl (u\, \vec {\numeral m}) \converts \numeral n \text{ and } s(\vec {\numeral m},\numeral n) \converts t(\vec {\numeral m}, \numeral n)) &\text{by $\realises\pNat$ and $\realises=$}
    \end{array}
    \]
    Thus, as the graph of $s(\vec x,y) = t(\vec x,y)$, in the standard model $\nmod$ is functional, by assumption, it is computed by $ \lambda \vec x^{\vec \nat} (\projl (u\, \vec x))$.
By \Cref{thm:terms-to-derivations} this is equivalent to some $\muLJneg$ derivation too.
\todonew{G: WE COULD REPLACE $t(\vec x, y)$ with $0$.}
\end{proof}
 \section{Conclusions}
\label{sec:concs}

In this work we investigated the computational expressivity of fixed point logics.
Our main contribution is a characterisation of the functions representable in the systems $\muLJ$ and $\cmuLJ$ as the functions provably recursive in the second-order theory $\PSCA$.
This extends the tradition of such correspondences between arithmetic theories an type systems, in particular including between $\PA$ and system $\T$ due to G\"odel \cite{goedel1958bisher}, and between $\SOPA$ and $\F$ due to Girard \cite{girard1972systemF}.
In a sense ours is a somewhat intuitive result in light of M\"ollerfeld's work \cite{Moellerfeld02:phd-thesis}, identifying the latter's arithmetical theorems with the extension of $\ACA$ by least and greatest general fixed points.

Our characterisation also applies to aforementioned (circular) systems of linear logic, namely $\muMALL$ and its circular counterpart, from \cite{BaeldeMiller07,Baelde12,BaeldeDS16Infinitaryprooftheory,EhrJaf21:cat-models-muLL,EhrJafSau21:totality-CmuMALL,BDKS22:bouncing-threads,DeSau19:infinets1,DePelSau:infinets2,DeJafSau22:phase-semantics}.
This result was given the preliminary conference version of this paper \cite{CD23:muLJ-lics}, but will be expanded upon in a separate full paper.

Referring to Rathjen's ordinal notation system in \cite{Rathjen95:recent-advances}, this means that all these systems represent just the functions computable by recursion on ordinals of $\mathfrak T[\theta = \omega]$.
To this end we used a range of techniques from proof theory, reverse mathematics, higher-order computability and metamathematics.
This contribution settles the question of computational expressivity of (circular) systems with fixed points, cf.~Question~\ref{question:main}.

In future work it would be interesting to investigate the computational expressivity of systems with only \emph{strictly} positive fixed points, where $\mu,\nu$ may only bind variables that are \emph{never} under the left of $\arrow$.
We suspect that such systems are computationally weaker than those we have considered here.
On the arithmetical side, it would be interesting to investigate the \emph{higher-order} reverse mathematics of the Knaster-Tarski theorem, cf.~\cite{SatoYamazaki17:rev-math-fixed-point-thms-with-kt} and the present work.

\comment{
\begin{itemize}
    \item what about strict positivity? conjecture weaker.
    \item we could first translate into a well-founded system, but this will not (immediately) give us effective ordinal bounds for formalisation. (Actually this is an interesting idea)
\end{itemize}
}

\section*{Acknowledgements}

% The bibliography uses alpha.bst where references are built from the 
% authors' initials and the year of publication.  The use of bibtex
% for creating the bibliography is strongly encouraged.  Then the
% end of the paper takes the form
The authors would like to thank the anonymous referees for their diligent work in reviewing this paper, which has surely improved its presentation.
The authors would like to thank Igor Walukiewicz, Alexis Saurin, Graham Leigh, Pierre Clairambault, Colin Riba, Ulrich Berger and Paul Levy for several insightful conversations around this work.

\bibliographystyle{alphaurl}
\bibliography{main}
% \begin{thebibliography}{key}
% \end{thebibliography}

\clearpage
\appendix

\section{Weak and strong (co)iteration rules}\label{app:simple-types-with-fixed-points}

% \subsection{Further preliminary content}

% \subsection{Some derivations}

% \begin{enumerate}
%     \item\label{enum:app1}  

%     % \item  \label{enum:app4}  Functoriality in $\cmuLJ$.  The case of the greatest fixed point case is:
%     % \[
%     % \small
%     % \vlderivation{
%     % \vlin{\lr \nu}{\bullet}{\nu X \sigma(X)\seqar \nu X \sigma(X)}{
%     % \vlin{\rr {\nu'}}{}{\sigma(\nu X(\sigma(X))\seqar \nu X \sigma(X)}{
%     % \vliq{\sigma}{}{\sigma(\nu X \sigma(X))\seqar \sigma(\nu X \sigma(X))}{
%     % \vlin{\lr \nu}{\bullet}{\nu X \sigma(X)\seqar \nu X \sigma(X)}{\vlhy{\vdots}}
%     % }
%     % }
%     % }
%     % }
%     % \]

%        \item\label{subsec:encoding-operations-natural-numbers}
%   Encoding of $\lr \N$:
%    \[
% \begin{array}{rcl}
% \lr \N &\dfn&
% \vlderivation{
% \vliin{\lr\mu}{}{\Gamma, \N \seqar \tau}
% {
% \vliin{\lr +}{}{\Gamma, 1 + \sigma \seqar \sigma }
% {
% \vlin{\lr \unit}{}{\Gamma, \unit \seqar \sigma }{\vlhy{\Gamma \seqar \sigma}}
% }
% {
% \vlhy{\Gamma, \sigma \seqar \sigma}
% }
% }
% {
% \vlhy{\Delta, \sigma \seqar  \tau}
% }
% }
% \end{array}
% \]

% \end{enumerate}

Our presentation of the iteration rule (i.e., $\lr\mu$)  and the coiteration rule (i.e., $\rr \nu$) for $\muLJ$ is the most permissive one, in that contexts are allowed to appear in both premises:
 \begin{equation}\label{eqn:strong-2-premises}
      \small
    \vliinf{\lr \mu}{}{\Gamma,\Delta, \mu X \sigma(X)  \seqar \tau}{\Gamma, \sigma(\rho)\seqar \rho}{\Delta, \rho \seqar \tau}
  \qquad
   \vliinf{\rr \nu }{}{ \Delta, \Gamma  \seqar \nu X\sigma(X)}{\Delta \seqar  \tau }{  \Gamma, \tau \seqar \sigma(\tau)}
 \end{equation}

 Their cut-reduction rules are as in~\Cref{fig:cut-reduction-mu-nu-muLJ}. The reader should notice that the context $\Gamma$ in both the rules of~\eqref{eqn:strong-2-premises} is contracted during cut-reduction. For this reason, alternative presentations of $\lr\mu $ and $\rr \nu$ without  the context $\Gamma$ have been studied in the literature, especially in the setting of linear logic, where contraction cannot be freely used (see, e.g.,~\cite{BaeldeMiller07})
 
  \begin{equation}\label{eqn:weak-2-premises}
      \small
    \vliinf{\lr \mu}{}{\Delta, \mu X \sigma(X)  \seqar \tau}{ \sigma(\rho)\seqar \rho}{\Delta, \rho \seqar \tau}
  \qquad
   \vliinf{\rr \nu }{}{ \Delta  \seqar \nu X\sigma(X)}{\Delta \seqar  \tau }{   \tau \seqar \sigma(\tau)}
 \end{equation}

Single premise formulations of $\lr \mu$ and $\rr \nu$ for both~\eqref{eqn:strong-2-premises} and~\eqref{eqn:weak-2-premises} have been investigated in the literature as well (see, e.g.,~\cite{Clairambault09}):
\begin{equation}\label{eqn:strong-1-premises}
      \small
    \vlinf{\lr \mu}{}{\Gamma, \mu X \sigma(X)  \seqar \tau}{\Gamma, \sigma(\tau)\seqar \tau}
  \qquad
   \vlinf{\rr \nu }{}{  \Gamma  \seqar \nu X\sigma(X)}{  \Gamma \seqar \sigma(\tau)}
 \end{equation}
 \begin{equation}\label{eqn:weak-1-premises}
      \small
    \vlinf{\lr \mu}{}{ \mu X \sigma(X)  \seqar \tau}{ \sigma(\tau)\seqar \tau}
  \qquad
   \vlinf{\rr \nu }{}{ \tau  \seqar \nu X\sigma(X)}{   \tau \seqar \sigma(\tau)}
 \end{equation}

 The corresponding cut-reduction rules for~\eqref{eqn:weak-2-premises},~\eqref{eqn:strong-1-premises}, and~\eqref{eqn:weak-1-premises} can be easily extracted from the ones in~\Cref{fig:cut-reduction-mu-nu-muLJ}. 

 It is worth mentioning, however, that cut-elimination fails in presence of the rules \eqref{eqn:strong-1-premises} and~\eqref{eqn:weak-1-premises}. By contrast, being essentially endowed with a built-in cut, the rules  \eqref{eqn:strong-2-premises} and~\eqref{eqn:weak-2-premises} allow cut-elimination results. 

Perhaps surprisingly, all  formulations of $\lr \mu$ and $\rr \nu$ are equivalent, meaning they can derive each other  crucially using  the cut rule. 

\begin{proposition}\label{prop:weak-derives-strong}
    In $\LJ$ endowed with the rules $\rr \mu$ and $\lr \nu$ (\Cref{fig:unfolding-rules}), the  rules~\eqref{eqn:strong-2-premises}, \eqref{eqn:weak-2-premises},~\eqref{eqn:strong-1-premises}, and~\eqref{eqn:weak-1-premises} and their cut-reduction are inter-derivable.
\end{proposition}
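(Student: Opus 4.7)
My plan is to prove inter-derivability via a cycle, where one direction is trivial (the strong two-premise formulation clearly subsumes all the others, by specialising the context $\Gamma$ to be empty and/or setting the invariant $\rho$ to $\tau$ with the second premise discharged by $\id$ and $\wk$), and the other direction takes the weakest rule and builds back up to the strongest. Since derivability of the converse \eqref{eqn:weak-1-premises} $\Rightarrow$ \eqref{eqn:strong-2-premises} implies all other directions by composition, this reduces the work to a single nontrivial implication.

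I will first handle the easy steps that use only $\cut$. From \eqref{eqn:weak-1-premises} I derive \eqref{eqn:weak-2-premises}: given $\sigma(\rho)\seqar\rho$ and $\Delta,\rho\seqar\tau$, apply weak one-premise $\lr \mu$ to obtain $\mu X\sigma(X)\seqar\rho$ and then $\cut$ against the second premise. Analogously, from \eqref{eqn:strong-1-premises} I obtain \eqref{eqn:strong-2-premises}: apply strong one-premise $\lr\mu$ (with $\tau$ in its statement instantiated by $\rho$) to $\Gamma,\sigma(\rho)\seqar\rho$, yielding $\Gamma,\mu X\sigma(X)\seqar\rho$, then $\cut$ against $\Delta,\rho\seqar\tau$.

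The key step is \eqref{eqn:weak-1-premises} $\Rightarrow$ \eqref{eqn:strong-1-premises}, i.e., internalising the context $\Gamma$ into the invariant. Given $\Gamma,\sigma(\tau)\seqar\tau$, set the new invariant $\rho:=\Gamma\arrow\tau$ (iterated implication over the list $\Gamma$). First, by iterated $\lr\arrow$ and $\id$, I derive $\Gamma,\Gamma\arrow\tau\seqar\tau$; then by the functoriality construction of Definition~\ref{def:functors-in-muLJ} (which essentially uses $\arrow$ and contraction, and is available in $\LJ$) I lift this to $\Gamma,\sigma(\Gamma\arrow\tau)\seqar\sigma(\tau)$. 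Composing by $\cut$ with the assumed $\Gamma,\sigma(\tau)\seqar\tau$ and contracting duplicate $\Gamma$'s, I obtain $\Gamma,\sigma(\Gamma\arrow\tau)\seqar\tau$, and then currying via iterated $\rr\arrow$ yields $\sigma(\Gamma\arrow\tau)\seqar\Gamma\arrow\tau$. Now weak one-premise $\lr\mu$ applies, giving $\mu X\sigma(X)\seqar\Gamma\arrow\tau$, and uncurrying recovers $\Gamma,\mu X\sigma(X)\seqar\tau$, as required. The dual argument for $\rr\nu$, using the pre-unfolding $\lr\nu$ in place of $\rr\mu$ and exchanging roles of premiss/conclusion, is symmetric.

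The main obstacle is this last step: it crucially exploits the presence of $\arrow$, contraction, and the functoriality lemma to ``simulate'' context by invariants, and moreover requires the use of $\cut$ both for composing the functor with the premise and for discharging the auxiliary $\Gamma$. For the claim about cut-reductions, once the derivability is established these simulations are straightforward to verify by inspecting the shapes of the canonical cut on a principal $\rr\mu/\lr\mu$ pair: each reduction in Figure~\ref{fig:cut-reduction-mu-nu-muLJ} is simulated (up to $\cut$-reduction in $\LJ$ with the unfolding rules) by the corresponding reductions in the other formulations, because in every direction the encoding is made solely of $\cut$, $\id$, $\lr\arrow$, $\rr\arrow$, $\contr$ and $\wk$ together with an application of the target $\lr\mu$-variant, all of which have standard cut-reduction behaviour.
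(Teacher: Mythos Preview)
Your proof is correct and, like the paper, reduces everything to showing that the weak one-premise rule~\eqref{eqn:weak-1-premises} derives the strong one-premise rule~\eqref{eqn:strong-1-premises}; but the key step differs. You internalise the context $\Gamma$ into the invariant via implication, setting $\rho:=\Gamma\arrow\tau$ and using functoriality plus (un)currying. The paper instead internalises via product, passing through an auxiliary fixed point type $\mu X(\Gamma\times\sigma(X))$: it first derives $\Gamma,\mu X\sigma(X)\seqar\mu X(\Gamma\times\sigma(X))$ (itself using weak $\lr\mu$ with invariant $\Gamma\arrow\sigma(\mu X(\Gamma\times\sigma(X)))$, so the $\arrow$-trick appears inside anyway), then applies weak $\lr\mu$ on this new fixed point with premise $\Gamma\times\sigma(\tau)\seqar\tau$ obtained from the given premise by $\lr\times$. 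Your route is more direct and avoids the detour through a second fixed point; both proofs leave the verification of cut-reduction simulation as ``routine''. One point you (and the paper) gloss over is worth making explicit: the functoriality construction of Definition~\ref{def:functors-in-muLJ} itself uses $\lr\mu$ with a context in the nested-$\mu$ case, so strictly speaking the whole argument proceeds by induction on the structure of the type.
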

\begin{proof}
It suffices to show that the rules in~\eqref{eqn:weak-1-premises} can derive the rules in~\eqref{eqn:strong-1-premises}. We only show the case of $\lr \mu$, as the case of $\rr \nu$ is symmetric. W.l.o.g., we will treat  $\Gamma$ in~\eqref{eqn:strong-1-premises} as a single formula. The case where $\Gamma$ is an arbitrary context can be recovered using $\lr \times$ and $\rr \times$. The derivation  is the following:
\[
\vlderivation{
  \vliin{\cut}{}{\Gamma, \mu X. \sigma \seqar \tau}
      {
      \vltr{\der}{\Gamma, \mu X. \sigma \seqar \mu X. (\Gamma \times \sigma)}{\vlhy{\ }}{\vlhy{\ }}{\vlhy{\ }}
      }{
      \vlin{\lr \mu}{}{\mu X. (\Gamma \times \sigma)\seqar \tau}
        {
        \vlin{\lr \times}{}{\Gamma \times \sigma (\tau)\seqar \tau}
          {
             \vlhy{\Gamma, \sigma (\tau)\seqar \tau}
          }
        }
      }
}
\]
where $\der$ is the derivation in~\Cref{fig:deriving-strong-from-weak}. It is tedious but routine to show that the above derivation of $\lr \mu$ allows us to derive the cut-reduction rule for~\eqref{eqn:weak-1-premises}.
\end{proof}

\begin{figure*}
    \centering
   \[
\footnotesize
\vlderivation{
\vlin{\rr \mu}{}{\Gamma, \mu X. \sigma \seqar \mu X. (\Gamma \times \sigma)}
   {
   \vliq{\contr}{}{\Gamma, \mu X. \sigma \seqar \Gamma \times \sigma (\mu X. (\Gamma \times \sigma))}{
   \vliin{\rr \times}{}{\Gamma, \Gamma, \mu X. \sigma \seqar \Gamma \times \sigma (\mu X. (\Gamma \times \sigma))  }
     {
     \vlin{\id}{}{\Gamma \seqar \Gamma}{\vlhy{}}
     }{
     \vliin{\cut}{}{\Gamma, \mu X. \sigma \seqar  \sigma (\mu X. (\Gamma \times \sigma)) }
       {
      \vlin{\lr \mu}{}{\mu X. \sigma \seqar \Gamma \to \sigma (\mu X. (\Gamma \times \sigma))}
        {
        \vlin{\rr \to }{}{\sigma ( \Gamma \to \sigma (\mu X. (\Gamma \times \sigma))) \seqar \Gamma \to \sigma (\mu X. (\Gamma \times \sigma))}
          {
          \vliq{\sigma}{}{\Gamma,\sigma ( \Gamma \to \sigma (\mu X. (\Gamma \times \sigma))) \seqar  \sigma (\mu X. (\Gamma \times \sigma))}
            {
            \vlin{\rr \mu}{}{\Gamma, \Gamma \to \sigma (\mu X. (\Gamma \times \sigma)) \seqar \mu X. (\Gamma \times \sigma)}
              {
              \vliq{\contr}{}{\Gamma, \Gamma \to \sigma (\mu X. (\Gamma \times \sigma)) \seqar \Gamma \times \sigma(\mu X. (\Gamma \times \sigma))}
              {
              \vliin{\rr \times }{}{\Gamma ,\Gamma, \Gamma \to \sigma (\mu X. (\Gamma \times \sigma)) \seqar \Gamma \times \sigma(\mu X. (\Gamma \times \sigma))}
                {
                \vlin{\id}{}{\Gamma \seqar \Gamma}{\vlhy{}}
                }{
                \vliin{\lr \to}{}{\Gamma, \Gamma \to \sigma (\mu X. (\Gamma \times \sigma)) \seqar  \sigma(\mu X. (\Gamma \times \sigma))}
                  {
                   \vlin{\id}{}{\Gamma \seqar \Gamma}{\vlhy{}}
                  }{
                  \vlhy{\sigma (\mu X. (\Gamma \times \sigma)) \seqar  \sigma(\mu X. (\Gamma \times \sigma))}
                  }
                }
              }
              }
            }
          }
        }
       }{
       \vliin{\lr \to}{}{\Gamma \to  \sigma (\mu X. (\Gamma \times \sigma)), \Gamma \seqar \sigma (\mu X. (\Gamma \times \sigma)) }{ \vlin{\id}{}{\Gamma \seqar \Gamma}{\vlhy{}}}{\vlhy{\sigma (\mu X. (\Gamma \times \sigma)) \seqar \sigma (\mu X. (\Gamma \times \sigma))}}
       }
     }
    } 
   }
}
\]
    \caption{Deriving~\eqref{eqn:strong-1-premises} from~\eqref{eqn:weak-1-premises}.}
    \label{fig:deriving-strong-from-weak}
\end{figure*}

 \section{Proof of~\Cref{prop:cmulj-into-cmuljminus}}\label{app:proofs-of-sec-circular-mulj}
% 
% \subsection{Proof of~\Cref{prop:cmulj-into-cmuljminus}}
\label{subsec:proofs-of-cmulj-into-cmuljminus}

We give a bespoke combination of a negative translation and a Friedman-Dragalin `A translation' suitable for our purposes.
Writing $\N:= \mu X (1+X)$ and $\nega \N \sigma := \sigma \to \N$.
At the risk of ambiguity but to reduce syntax, we shall simply write $\neg \sigma$ for $\nega \N \sigma$ henceforth.\footnote{
Any confusion will be minimal as we shall not parametrise negation by any other A-translation in this work.} Also, we shall include the following derivable rules of $\cmuLJ$:
    \begin{equation}\label{eqn:negation-rules}
         \vlinf{\rr{\neg} }{}{\Gamma \seqar  \neg \sigma}{\Gamma, \sigma \seqar \N} 
 % & \dfn &  \vlinf{\rr{\to} }{}{\Gamma \seqar  \neg \sigma}{\Gamma \sigma \seqar \N}& 
 \qquad 
 \vlinf{\lr{\neg}}{}{\Gamma, \neg \sigma \seqar \N}{\Gamma \seqar \sigma}
%  & \dfn & \vlderivation{\vliin{\lr \to}{}{\Gamma, \neg \sigma \seqar \N}{\haxiom{\Gamma }{ \sigma}}{\axiom{\N}{\N}}}
% \\ \\ 
    \end{equation}
\[
     \vlinf{\rr{\neg \neg} }{}{\Gamma \seqar \neg \neg \sigma}{\Gamma \seqar \sigma}   
     % & \dfn  &
     % \vlderivation{
     % \vlin{\rr {\neg}}{}{\Gamma \seqar \neg \neg \sigma}
     %  {
     %  \vlin{\lr {\neg}}{}{\Gamma, \neg \sigma \seqar \N} 
     %   {
     %   \haxiom{\Gamma}{\sigma}
     %   }
     %  }
     % }
     % & \quad 
     \qquad 
     \vlinf{\lr {\neg \neg }}{}{\Gamma, \neg\neg \sigma \seqar \N}{\Gamma, \sigma \seqar \N }
     % & \dfn &
     % \vlderivation{
     % \vlin{\lr {\neg}}{}{\Gamma, \neg \neg \sigma \seqar \N}
     %   {
     %   \vlin{\rr  {\neg }}{}{\Gamma\seqar \neg \sigma}{\haxiom{\Gamma, \sigma}{\N}}
     %   }
     % }
\]

We define the translations $\negtrans \cdot$ and $\unnegtrans \cdot$ from arbitrary types to types over $\{\N,\times, \to , \mu\}$ as follows:
\[
% \begin{array}{r@{\ := \ }l}
%     \negtrans X & \neg \unnegtrans X\\
%     \negtrans 1 & \negn \unnegtrans 1 \quad \\
%     \negtrans{(\sigma \times \tau)} &\neg \unnegtrans{(  \sigma \times  \tau )}\\
%         \negtrans{(\sigma \to \tau)} &  \neg \unnegtrans{( \sigma \to  \tau)} \\
%     \negtrans{(\sigma + \tau)} & \neg \unnegtrans{(\sigma + \tau)}  \\
%      \negtrans{(\nu X \sigma)} &   \neg \unnegtrans{(\nu X. \sigma)}\\
%      \negtrans{(\mu X \sigma)} & \neg \unnegtrans{(\mu X. \sigma)}
% \end{array}
% \quad
\begin{array}{r@{\ := \ }l}
\negtrans \sigma & \neg \unnegtrans \sigma\\ 
\unnegtrans X & \negn X \\
    \unnegtrans 1 & \N \\
    \unnegtrans{(\sigma \times \tau)} & \negn (\negtrans \sigma \times \negtrans \tau) \\
    \unnegtrans{(\sigma \to \tau)} & \neg( \negtrans\sigma \to \negtrans \tau) \\
    \unnegtrans{(\sigma+\tau)} & \neg   \negtrans \sigma \times \neg \negtrans \tau \\
    \unnegtrans{(\nu X \sigma)} & \neg \neg \mu X \negn \negtrans\sigma[\negn X / X] \\
    \unnegtrans{(\mu X \sigma)} &  \neg \mu X \negtrans{\sigma}
\end{array}
\]

\begin{proposition}
    [Substitution]{\ }
    \todonew{what are the invariants? also mixed notations below, be consistent.}
    \begin{itemize}
    \item $\negtrans{(\sigma (\mu X. \sigma  ))}=\negtrans{\sigma}(\mu X. \negtrans{\sigma}) $
    \item $\negtrans{(\sigma (\nu X. \sigma))} = \negtrans{\sigma}(\neg \mu X. \neg \negtrans{\sigma}[\neg X/X])$
     \item $\unnegtrans{(\sigma (\nu X. \sigma))} = \unnegtrans{\sigma}(\neg \mu X. \neg \negtrans{\sigma}[\neg X/X])$.
        % \item  $\negtrans{\sigma(\tau)} = \negtrans \sigma (\negtrans \tau)$
        % \item $\unnegtrans {\sigma(\tau)}= \unnegtrans {\sigma} ({\neg \unnegtrans \tau} )$
        % \item $ \unnegtrans{ (\mu X. \sigma)} =  (\mu X. \negtrans \sigma) $
        % \item $\unnegtrans{(\nu X. \sigma)}= \neg \mu X.\neg {\negtrans \sigma(\neg X)}$
        % \item $\negtrans \sigma = \neg \unnegtrans \sigma$
    \end{itemize}
% .\oldtodo{clarify statement in obvious way}
\end{proposition}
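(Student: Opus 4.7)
The plan is to derive all three items from a single, slightly more general substitution lemma. Namely, I would state:

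\smallskip
\noindent\emph{Claim.} Let $\sigma(X)$ be a (source-language) type, $\tau$ a source type with no free $X$, and $T$ a target type with no free $X$. If $\negtrans{\tau}=\neg\neg T$ and $\unnegtrans{\tau}=\neg T$, then
\[
\negtrans{\sigma[\tau/X]}=\negtrans{\sigma}[T/X]\qquad\text{and}\qquad \unnegtrans{\sigma[\tau/X]}=\unnegtrans{\sigma}[T/X].
\]

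\smallskip
I would prove this claim by straightforward simultaneous induction on $\sigma$. The only case that uses the hypothesis is $\sigma=X$: then $\negtrans{X[\tau/X]}=\negtrans{\tau}=\neg\neg T=\negtrans{X}[T/X]$, and similarly for $\unnegtrans{\cdot}$. The cases $\sigma=Y\neq X$ and $\sigma=\unit$ are trivial. The cases for $\times,\to,+$ are immediate by unfolding $\negtrans{\cdot}$ and $\unnegtrans{\cdot}$ and invoking the IH on each immediate subformula (noting that the translation commutes with the connectives up to a fixed shape). For $\sigma=\kappa Y.\sigma_0$ ($\kappa\in\{\mu,\nu\}$) I would first rename bound variables to avoid clashes with $X$, then unfold the translation of a fixed point and apply the IH to $\sigma_0$; here the fact that $T$ has no free $Y$ (and vice versa) means substitutions commute.

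With the claim in hand, the three items of the proposition reduce to instances, once I verify the respective hypotheses. For item~(1), take $\tau=\mu X.\sigma$ and $T=\mu X.\negtrans{\sigma}$; computing directly from the definition gives $\negtrans{\mu X.\sigma}=\neg\neg\mu X.\negtrans{\sigma}=\neg\neg T$ and $\unnegtrans{\mu X.\sigma}=\neg\mu X.\negtrans{\sigma}=\neg T$. For items~(2) and~(3), take $\tau=\nu X.\sigma$ and $T=\neg\mu X.\neg\negtrans{\sigma}[\neg X/X]$; then the definition gives $\unnegtrans{\nu X.\sigma}=\neg\neg\mu X.\neg\negtrans{\sigma}[\neg X/X]=\neg T$ and $\negtrans{\nu X.\sigma}=\neg\unnegtrans{\nu X.\sigma}=\neg\neg T$. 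Applying the claim to each pair $(\tau,T)$ yields the desired equalities.

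The only real bookkeeping concern, and where I expect most of the care to go, is the mutual interaction between the positivity discipline for fixed points and the $\neg X$-substitution inside the $\nu$-translation: one needs a preliminary observation that $\negtrans{\cdot}$ preserves the polarity of each free variable (since $\unnegtrans{X}=\neg X$ flips it twice between $\unnegtrans{\cdot}$ and $\negtrans{\cdot}=\neg\unnegtrans{\cdot}$), which guarantees that $\mu X.\negtrans{\sigma}$ and $\mu X.\neg\negtrans{\sigma}[\neg X/X]$ are well-formed whenever $\mu X.\sigma$ and $\nu X.\sigma$ are. Beyond this, the argument is essentially a routine verification and does not require any further fixed-point reasoning.
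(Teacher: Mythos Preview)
The paper states this proposition without proof (it is used implicitly in the gadget definitions of Figure~\ref{fig:translation-inference-rules}), so there is no argument in the paper to compare against. Your approach via the auxiliary claim is correct and is the natural way to proceed: the key observation, which you identify, is that both $\mu X.\sigma$ and $\nu X.\sigma$ have translations of the shape $\negtrans{\tau}=\neg\neg T$ and $\unnegtrans{\tau}=\neg T$ for an appropriate $T$, matching exactly what happens at a variable since $\negtrans X=\neg\neg X$ and $\unnegtrans X=\neg X$. The inductive cases go through as you describe; the only point needing care is the $\nu Y.\sigma_0$ case of the induction, where one must commute the substitutions $[T/X]$ and $[\neg Y/Y]$, and this is justified by the usual freshness conventions (as you note). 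Your polarity remark is also pertinent and correct: one checks simultaneously that $\negtrans\sigma$ has the same polarity in $X$ as $\sigma$, while $\unnegtrans\sigma$ has the opposite, which ensures the target fixed points are well-formed.
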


We shall extend the notations $\negtrans \cdot$ and $\unnegtrans \cdot$ to cedents, e.g.\ writing $\negtrans {\Sigma}$ and $\unnegtrans {\Sigma}$, by distributing the translation over the list.
For a sequent $\Sigma \seqar \tau$, we also write $\negtrans{(\Sigma \seqar \tau)}$ for $\negtrans{\Sigma} , \unnegtrans \tau \seqar \N$

\begin{definition}
    [$N$-translation of steps]
    For each inference step \[\vliiinf{\rrule}{}{\Sigma \seqar \tau}{\Sigma_1 \seqar \tau_1}{\cdots }{\Sigma_n \seqar \tau_n}\] we define a gadget,
    \[
    \toks0={0.5}
    \vlderivation{
    \vltrf{\negtrans \rrule}{\negtrans{(\Sigma \seqar \tau)}}{\vlhy{\negtrans{(\Sigma_1 \seqar \tau_1)}}}{\vlhy{\cdots}}{\vlhy{\negtrans{(\Sigma_n \seqar \tau_n)}}}{\the\toks0}
    }
    \]
    as in~\Cref{fig:translation-inference-rules}.
    % \begin{itemize}
    %     \item \red{TODO: give definition}
    % \end{itemize}
       We lift this to a translation on coderivations $\der \mapsto \negtrans \der$ in the obvious way.
\end{definition}

\begin{figure*}
\centering
\[\small
    \begin{array}{rclrcl}
       \negtrans \ax  &\dfn & 
         \vlderivation{
         \vlin{\lr \neg }{}{\negtrans \sigma, \unnegtrans \sigma \seqar \N}{
         \axiom{\unnegtrans \sigma}{\unnegtrans \sigma}
           }
         }  
         &\qquad
         \negtrans \cut &\dfn &\vlderivation{
    \vliin{\cut}{}{\negtrans \Gamma, \negtrans \Delta, \unnegtrans \tau \seqar \N}
    {
    \vlin{\rr \neg }{}{\negtrans \Gamma \seqar \negtrans \sigma}{\haxiom{\negtrans \Gamma, \unnegtrans \sigma}{\N}}
    }
    {
    \haxiom{\negtrans \Delta, \negtrans \sigma, \unnegtrans \tau}{\N}
    }
    }
    \end{array}
    \]
    \bigskip
      \[\small
    \begin{array}{rclrcl}
    \negtrans {\rr \unit }  &  \dfn    & 
    \vlderivation{
    \vlin{\id}{}{\unnegtrans \unit \seqar \N}{\vlhy{}}
    }& \qquad 
\negtrans{\lr \unit}    &\dfn & 
\vlderivation{
\vlin{\wk}{}{\negtrans \Gamma, \negtrans \unit, \unnegtrans \sigma \seqar \N}
  {
\haxiom{\negtrans \Gamma, \unnegtrans \sigma}{\N}
  }
}
\end{array}
    \]
  \bigskip
      \[\small
    \begin{array}{rclrcl}
 \negtrans{\rr \to} & \dfn  &
    \vlderivation{
    \vlin{\lr \neg}{}{\negtrans \Gamma, \unnegtrans{(\sigma \to \tau)}\seqar \N}
     {
     \vlin{\rr \to}{}{\negtrans \Gamma \seqar \negtrans \sigma \to \negtrans \tau}
       {
       \vlin{\rr \neg}{}{\negtrans \Gamma, \negtrans \sigma \seqar \negtrans \tau}{\haxiom{\negtrans \Gamma, \negtrans \sigma, \unnegtrans \tau}{\N}}
       }
     }
    }
    &\qquad
\negtrans {\lr \to}    &\dfn & \vlderivation{
\vlin{\lr {\neg \neg}}{}{\negtrans \Gamma, \negtrans \Delta, \negtrans{(\sigma \to \tau)}, \unnegtrans \gamma \seqar \N}
    {
    \vliin{\lr \to}{}{\negtrans \Gamma, \negtrans \Delta, \negtrans \sigma \to \negtrans \tau , \unnegtrans \gamma \seqar \N}
   {
   \vlin{\rr \neg}{}{\negtrans \Gamma \seqar \negtrans \sigma}{\haxiom{\negtrans \Gamma, \unnegtrans \sigma}{\N}}
   }{
   \haxiom{\negtrans \Delta, \negtrans \tau, \unnegtrans \gamma}{\N}
   }
    }

}
\end{array}
    \]  
    
      \[\small
    \begin{array}{rclrcl}
\negtrans{\rr \times}&\dfn & \vlderivation{
\vlin{\lr \neg}{}{\negtrans \Gamma, \unnegtrans{(\sigma \times  \tau)}\seqar \N}
  {
  \vliin{\rr \times}{}{\negtrans \Gamma, \negtrans \Delta \seqar \negtrans \sigma \times \negtrans \tau}
    {
    \vlin{\rr \neg}{}{\negtrans \Gamma \seqar \negtrans \sigma}{\haxiom{\negtrans \Gamma, \unnegtrans \sigma}{\N}}
    }
    {
    \vlin{\rr \neg}{}{\negtrans \Delta \seqar \negtrans \tau}{\haxiom{\negtrans \Delta, \unnegtrans \tau}{\N}}
    }
  }
} &\qquad
\negtrans {\lr \times}&\dfn&\vlderivation{
\vlin{\lr {\neg \neg}}{}{\negtrans \Gamma, \negtrans{(\sigma \times \tau)}, \unnegtrans \gamma \seqar \N} 
    {
    \vlin{\lr \times }{}{\negtrans{\Gamma}, \negtrans \sigma \times \negtrans \tau , \unnegtrans \gamma \seqar \N} 
    {
    \haxiom{\negtrans \Gamma, \negtrans \sigma, \negtrans \tau, \unnegtrans \gamma}{\N}
    }
    }
  }
  \end{array}
    \]
      \bigskip
      \[\small
    \begin{array}{rclrcl}
\negtrans{\cev{\rr +}}&\dfn & \vlderivation{
\vlin{\lr \times }{}{\negtrans \Gamma, \unnegtrans{(\sigma + \tau)} \seqar \N}
  {
  \vlin{\wk}{}{\negtrans \Gamma, \neg  \negtrans \sigma  ,  \neg \negtrans \tau \seqar \N}
    {
    \vlin{\lr {\neg \neg}}{}{\negtrans \Gamma, \neg  \negtrans \sigma  \seqar \N}{\haxiom{\negtrans \Gamma, \unnegtrans \sigma}{\N}}
    }
  }
}
&
\qquad 
\negtrans{{+^1_r}}&\dfn & \vlderivation{
\vlin{\lr \times }{}{\negtrans \Gamma, \unnegtrans{(\sigma + \tau)} \seqar \N}
  {
  \vlin{\wk}{}{\negtrans \Gamma, \neg \negtrans \sigma  ,  \neg  \negtrans \tau \seqar \N}
    {
    \vlin{\lr {\neg \neg}}{}{\negtrans \Gamma, \neg  \negtrans \tau  \seqar \N}{\haxiom{\negtrans \Gamma, \unnegtrans \tau}{\N}}
    }
  }
}
\end{array}
\]
  \bigskip
\[
\negtrans {\lr +} \dfn\vlderivation{
\vlin{\lr {\neg  }}{}{\negtrans \Gamma, \negtrans{(\sigma + \tau)}, \unnegtrans \gamma \seqar \N} 
    {
    \vliq{\contr}{}{\negtrans{\Gamma}, \unnegtrans \gamma \seqar \neg  \negtrans \sigma \times \neg  \negtrans \tau}{
    \vliin{\lr \times }{}{\negtrans{\Gamma},\negtrans{\Gamma}, \unnegtrans \gamma \seqar \neg  \negtrans \sigma \times \neg  \negtrans \tau} 
    {
    \vlin{\lr \neg}{}{\negtrans \Gamma, \unnegtrans \gamma\seqar \neg \negtrans \sigma}{
    \haxiom{\negtrans \Gamma, \negtrans \sigma, \unnegtrans \gamma}{\N}
    }
    }{
    \vlin{\lr \neg}{}{\negtrans \Gamma, \unnegtrans \gamma\seqar \neg \negtrans \tau}{
    \haxiom{\negtrans \Gamma, \negtrans \tau, \unnegtrans \gamma}{\N}
    }
    }
    }
    }
}
    \]
      \bigskip
     \[\small
    \begin{array}{rclrcl}
    \negtrans {\rr \mu} & \dfn & 
    \vlderivation{
    \vlin{\lr {\neg }}{}{\negtrans \Gamma, \unnegtrans{(\mu X. \sigma)} \seqar \N}
    {
    \vlin{\rr \mu}{}{\negtrans \Gamma\seqar \mu X. \negtrans{\sigma}} 
    {
    \vlid{=}{}{\negtrans \Gamma\seqar  \negtrans{\sigma}(\mu X. \negtrans{\sigma})} 
      {
      \vlin{\rr {\neg }}{}{\negtrans \Gamma \seqar \negtrans{(\sigma (\mu X. \sigma))}}{
      \haxiom{\negtrans \Gamma, \unnegtrans{(\sigma (\mu X. \sigma))}}{\N}
      }
      }
    }
    }
    }
      & \qquad 
      \negtrans{\lr \mu}& \dfn & \vlderivation{
   \vlin{\lr {\neg \neg}}{}{\negtrans \Gamma, \negtrans{(\mu X. \sigma)} , \unnegtrans \gamma \seqar \N} 
       {
       \vlin{\rr \mu}{}{\negtrans \Gamma, \mu X. \negtrans{\sigma} , \unnegtrans \gamma \seqar \N}
         {
         \vlid{=}{}{\negtrans \Gamma,  \negtrans{\sigma}(\mu X. \negtrans{\sigma}) ,\unnegtrans \gamma \seqar \N}{
         \haxiom {\negtrans \Gamma,  \negtrans{(\sigma(\mu X.\sigma))} ,\unnegtrans \gamma}{  \N}
         }
         }
       }
   }
    \end{array}
    \]
       \bigskip
     \[\small
    \begin{array}{rclrcl}
    \negtrans{\rr \nu} & \dfn & 
    \vlderivation{
    \vlin{\lr {\neg \neg}}{}{\negtrans \Gamma, \unnegtrans{(\nu X. \sigma)} \seqar \N}{
    \vlin{\lr {\mu }}{}{\negtrans \Gamma,  \mu X. \neg {\negtrans \sigma(\neg X)}\seqar \N}
    {
    \vlin{\lr {\neg\neg }}{}{\negtrans \Gamma, \neg {\negtrans \sigma(\neg \mu X. \neg {\negtrans \sigma(\neg X)})} \seqar \N} 
    {
    \vlid{=}{}{\negtrans \Gamma,  {\unnegtrans \sigma(\neg \mu X. \neg {\negtrans \sigma(\neg X)})} \seqar \N} 
      {
      \haxiom{\negtrans \Gamma, \unnegtrans{(\sigma(\nu X. \sigma))}}{\N}
      }
    }
    }
    }
    }& \qquad
   \negtrans{\lr \nu}&\dfn &  \vlderivation{
   \vlin{\lr {\neg \neg }}{}{\negtrans \Gamma, \negtrans{(\nu X. \sigma )}, \unnegtrans \gamma \seqar \N} 
   {
   \vlin{\lr {\neg}}{}{\negtrans \Gamma,  \neg \mu X. \neg {\negtrans \sigma(\neg X)}, \unnegtrans \gamma \seqar \N} 
   {
   \vlin{\lr {\mu}}{}{\negtrans \Gamma, \unnegtrans \gamma \seqar  \mu X. \neg {\negtrans \sigma(\neg X)}} 
       {
       \vlin{\rr {\neg } }{}{\negtrans \Gamma, \unnegtrans \gamma \seqar  \neg {\negtrans \sigma(\neg \mu X. \neg {\negtrans \sigma(\neg X)})}}
         {
         \vlid{=}{}{\negtrans \Gamma,  \negtrans \sigma(\neg \mu X. \neg {\negtrans \sigma(\neg X)}) ,\unnegtrans \gamma \seqar \N}{
         \haxiom{\negtrans \Gamma,  \negtrans{(\sigma (\nu X. \sigma ))} ,\unnegtrans \gamma}{  \N}
         }
         }
       }
       }
       }
   }
    \end{array}
    \]
    \caption{Translation $\negtrans{(\_)}$ on inference rules.}
    \label{fig:translation-inference-rules}
\end{figure*}

\begin{proposition}\label{prop:translation-preserves-progressiveness}
        If $\der$ is progressing and regular then so is $\negtrans \der$.
\end{proposition}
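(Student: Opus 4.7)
The plan is to establish both properties by analysing how the translation $\negtrans{(\_)}$ acts on sub-coderivations and on branches/threads. Regularity will follow from the compositional nature of the translation, while progressiveness requires a careful thread-tracing argument through the gadgets in \Cref{fig:translation-inference-rules}.

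For regularity, observe that the translation is defined uniformly, rule by rule: each inference step $\rrule$ in $\der$ is replaced by a fixed finite gadget $\negtrans{\rrule}$ depending only on $\rrule$ (and the formulas appearing in its conclusion). Thus the set of distinct sub-coderivations of $\negtrans{\der}$ is contained in the finite union, over each of the finitely many distinct sub-coderivations $\der_i$ of $\der$, of the nodes internal to the gadget $\negtrans{\rrule_i}$ sitting above $\negtrans{\der_{i1}}, \ldots, \negtrans{\der_{ik}}$. In particular this is finite, so $\negtrans{\der}$ is regular.

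For progressiveness, let $\beta$ be an infinite branch of $\negtrans{\der}$. Since every gadget $\negtrans{\rrule}$ is finite, $\beta$ passes through gadget boundaries infinitely often, and by recording the step $\rrule$ at each boundary it visits we obtain an infinite branch $\beta'$ of $\der$. Since $\der$ is progressing, $\beta'$ carries a progressing thread $\theta'$ with smallest infinitely often principal formula some $\mu X \sigma$ on the LHS (the $\nu$-right case is dual). The key step is then to lift $\theta'$ to a thread $\theta$ along $\beta$: the translation of each rule on $\Sigma\seqar\tau$ preserves the immediate ancestry of formulas between adjacent sequents $\negtrans{\Sigma},\unnegtrans{\tau}\seqar \N$ in the sense that each LHS formula $\rho$ (respectively, the RHS $\tau$) in $\der$ corresponds to the translated formula $\negtrans{\rho}$ (resp.\ $\unnegtrans{\tau}$) at the matching boundary in $\negtrans{\der}$. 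Thus, following $\theta'$ step by step and choosing, at each gadget, the appropriate path through the ancestral wire, yields a well-defined thread $\theta$ in $\beta$.

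It remains to check that $\theta$ is progressing. When $\theta'$ is principal at a $\lr\mu$ step on $\mu X \sigma$, the gadget $\negtrans{\lr\mu}$ unfolds $\negtrans{(\mu X \sigma)} = \neg \neg \mu X \negtrans{\sigma}$ via $\lr{\neg\neg}$ followed by a genuine $\lr{\mu}$ (i.e.\ $\mulunf$) unfolding of $\mu X \negtrans{\sigma}$; this is precisely the progress point for $\theta$. Conversely, when $\theta'$ is \emph{not} principal on its critical formula, the corresponding gadget does not invoke $\mulunf$ on the translated fixed point, and the formula is merely carried upward through axiom links. The main obstacle, and the content that requires verification by inspection of \Cref{fig:translation-inference-rules} case by case, is that the priority ordering is preserved: the smallest infinitely often principal formula along $\theta$ is $\mu X \negtrans{\sigma}$, with no smaller $\mu$-formula on the LHS becoming infinitely often principal through auxiliary unfoldings introduced by the translation. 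Dually, a progressing thread on a $\nu$-formula $\nu X \sigma$ on the RHS of $\der$ is translated into a thread on the $\mu$-formula $\mu X \neg \negtrans{\sigma}[\neg X/X]$ appearing on the LHS of $\negtrans{\der}$ (via the double-negation unfolding in $\negtrans{\rr\nu}$), which is again progressing by the same analysis. In both cases the translation introduces no new $\mu$-subformulas on a translated thread beyond those already tracked by $\theta'$, so the priority order is faithfully reflected and $\theta$ is indeed progressing.
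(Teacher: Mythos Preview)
Your proposal is correct and follows essentially the same approach as the paper: establish regularity from the locally bounded size of each gadget $\negtrans{\rrule}$, then for progressiveness pull back an infinite branch of $\negtrans{\der}$ to one of $\der$, take a progressing thread there, and push it forward through the gadgets. The paper's own proof is considerably terser (it just asserts the branch correspondence and that threads with progress points lift to threads with progress points), whereas you additionally spell out the $\mu$-LHS versus $\nu$-RHS cases and flag the priority-preservation issue; this extra care is appropriate but does not constitute a different method.
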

\begin{proof}
    As for regularity, we observe that the translation maps any inference rule $\rrule$ to a gadget $\negtrans \rrule$ with constantly many rules. Concerning progressiveness, it is easy to check that:
    \begin{itemize}
        \item  for any infinite branch $(\Delta^j \seqar \tau^j)_j$ of $\der'$ there is an infinite branch $(\Sigma^i \seqar \sigma^i)_i$ of $\der$ such that $\Delta^{j_i} \seqar \tau^{j_i}= \negtrans{(\Sigma^i \seqar \sigma^i)}$ for some sequence  $j_0<j_1<\ldots<j_n<\ldots$
        \item  for any thread  in $\der$ connecting a formula $\gamma$ in $\Sigma^i \seqar \tau^i$ with a formula $\gamma'$ in $\Sigma^{i+1} \seqar \tau^{i+1}$ there is a thread  in $\der'$ connecting $\negtrans \gamma$ in $\negtrans{(\Sigma^i \seqar \tau^i)}$ with a formula $\negtrans{\gamma'}$ in $\negtrans{(\Sigma^{i+1} \seqar \tau^{i+1})}$; moreover, if the former thread has progressing points then the latter has. \qedhere
    \end{itemize}
\end{proof}

\begin{proposition}
\label{lem:computational-soundness}
If $\der_1 \cutrednorec \der_2$ then $\negtrans{\der}_1\cutrednorec^* \negtrans{\der}_2$ 
    % $\negtrans{\rrule}$ simulates $\rrule$ with respect to the following theory: \red{TODO: make this explicit here.}
\end{proposition}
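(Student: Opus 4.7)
The plan is to proceed by case analysis on the cut-reduction step $\der_1 \cutrednorec \der_2$. Since $\cutrednorec$ is context-closed and $\negtrans{\cdot}$ is compositional (it replaces each inference step by its gadget $\negtrans \rrule$ and then pieces them back together), it suffices to consider the case where the reduction occurs at the root of $\der_1$. This reduces the problem to checking each individual cut-reduction rule: the standard $\LJ$ reductions (matched introductions on $\unit, \times, \to, +$), the structural reductions ($\ax, \ex, \wk, \contr$ commutations), and the four fixed-point reductions (namely $\rr \mu / \lr \mu$ in Figure~\ref{fig:cut-reduction-mu-nu-muLJ} and $\rr \mu / \mulunf$, $\nurunf / \lr \nu$ in Figure~\ref{fig:cut-reduction-cmulj-least-greatest}).

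For each case, I would first write out $\negtrans{\der_1}$ by substituting the relevant gadgets from Figure~\ref{fig:translation-inference-rules} around the outermost $\cut$, and then $\negtrans{\der_2}$ similarly. The standard $\LJ$ cases are routine: the translation of a matched pair like $\rr \times / \lr \times$ produces a sequence of cuts on the translated components protected by a layer of double-negation; using the $\lr{\neg\neg}$ and $\lr\neg$ gadgets together with the standard $\LJneg$ cut-reductions (as in Figure~\ref{fig:reduction-ljneg} and the discussion around Theorem~\ref{thm:reduction-inludes-cut-elimination}) these reduce in finitely many $\cutrednorec$ steps to $\negtrans{\der_2}$. The key micro-lemma that makes these go through is that an instance of $\rr\neg$ cut against an instance of $\lr\neg$ reduces to the obvious cut between their premisses; this is just a special case of the $\rr\arrow / \lr\arrow$ reduction at type $\neg\sigma = \sigma \to \N$.

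The critical cases are the fixed point ones. For the $\rr\mu / \lr\mu$ reduction (the `strong' iterator case of Figure~\ref{fig:cut-reduction-mu-nu-muLJ}), note that $\lr\mu$ in the source is derivable from the circular $\mulunf$ together with $\cut$ and functoriality, cf.~Example~\ref{ex:(co)iter-to-cycles}. I would therefore first verify the simpler $\rr \mu / \mulunf$ case: here $\negtrans{\rr\mu}$ builds an $\rr\mu$ underneath a $\lr\neg$, while $\negtrans{\mulunf}$ consumes an $\rr\mu$ underneath a $\lr{\neg\neg}$, so cutting them together produces a $\lr\neg / \rr\neg $ redex above an $\rr\mu / \mulunf$ redex on the translated types; both reduce in one step each, landing on $\negtrans{\der_2}$. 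The `strong' $\lr\mu$ case then follows by simulating through the circular decomposition together with the $\reductionnorec$-reductions of the functor gadgets.

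The hardest case will be $\nurunf / \lr \nu$, since $\negtrans{(\nu X \sigma)} = \neg\neg \mu X \neg \negtrans\sigma[\neg X/X]$ involves a triple layer of negation together with a polarity flip under the $\mu$. Here $\negtrans{\nurunf}$ uses $\lr\mu$ (really $\mulunf$) to unfold the $\mu$ on the left of a sequent while $\negtrans{\lr\nu}$ uses $\rr\mu$ on the right, and the matching cut-reduction on the translated $\mu$-formulas has to be aligned with the various negation cuts. My plan is to trace through the reduction carefully, using the substitution identities for $\negtrans{\cdot}$ (in particular $\unnegtrans{(\sigma(\nu X \sigma))} = \unnegtrans\sigma(\neg \mu X \neg \negtrans\sigma[\neg X/X])$ from the Substitution Proposition) to verify that the two gadgets, once cut together, reduce via the $\rr\mu / \mulunf$ reduction on the translated $\mu$-type together with negation cancellations to exactly $\negtrans{\der_2}$. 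I expect this bookkeeping — especially matching up the various $\neg\neg$ insertions with the eventual $\lr{\neg\neg}$ reductions — to be the main source of technical friction, but each step is a finite rearrangement so the simulation stays within $\cutrednorec^*$.
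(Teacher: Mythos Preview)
Your overall approach is correct and matches the paper's: reduce to the root case by compositionality of $\negtrans{\cdot}$, then verify each cut-reduction rule gadget-by-gadget, with the $\nurunf/\lr\nu$ case being the one requiring careful bookkeeping (and indeed this is exactly the case the paper spells out).

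One point of confusion to clean up: $\cutrednorec$ is the cut-reduction relation for $\muLJnorec$-coderivations (Definition~\ref{defn:rrmu-vs-llmu-cmull}), and $\muLJnorec$ does \emph{not} contain the (co)iteration rules $\lr\mu,\rr\nu$ of Figure~\ref{fig:co-iteration-rules}. So the reductions of Figure~\ref{fig:cut-reduction-mu-nu-muLJ} are not cases of $\cutrednorec$ at all; the only fixed-point cases you need are the two in Figure~\ref{fig:cut-reduction-cmulj-least-greatest}, namely $\rr\mu/\mulunf$ and $\nurunf/\lr\nu$. Your paragraph on handling the ``strong'' $\lr\mu$ via the circular decomposition is therefore unnecessary and can be dropped. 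With that correction, your plan for the remaining cases (the $\rr\neg/\lr\neg$ micro-lemma for unwinding the double-negation layers, and the substitution identities for the $\nu$ case) is exactly what is needed.
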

\begin{proof}
The proof is routine. We just consider the cut-reduction step eliminating a cut $\rrule$ of the form $\rr\nu$ vs $\lr \nu$. W.l.o.g., we assume $\der$ ends with this cut. Then, $\der$ has the following shape:
\[
\vlderivation{
\vliin{\cut}{}{\Gamma, \Delta \seqar \tau}{\vlin{\rr \nu}{}{\Gamma \seqar \nu X. \sigma}{\vltr{\der_1}{\Gamma \seqar \sigma (\nu X. \sigma)}{\vlhy{\ \ }}{\vlhy{\ \ }}{\vlhy{\ \ }}}}{\vlin{\lr \nu}{}{\Delta, \nu X. \sigma  \seqar \tau}{\vltr{\der_2}{\Delta, \sigma (\nu X. \sigma)\seqar \tau}{\vlhy{\ \ }}{\vlhy{\ \ }}{\vlhy{\ \ }}}}
}
\]
Which is translated into a cut between the following two coderivations:
\[
\small
\vlderivation{
% \vliin{\cut}{}{\negtrans{\Gamma}, \negtrans \Delta \seqar \N }
% {
\vlin{\rr \neg}{}{\negtrans{\Gamma}\seqar \negtrans{(\nu X. \sigma)} }{
    \vlin{\lr {\neg \neg}}{}{\negtrans \Gamma, \unnegtrans{(\nu X. \sigma)} \seqar \N}{
    \vlin{\lr {\mu }}{}{\negtrans \Gamma,  \mu X. \neg {\negtrans \sigma(\neg X)}\seqar \N}
    {
    \vlin{\lr {\neg\neg }}{}{\negtrans \Gamma, \neg {\negtrans \sigma(\neg \mu X. \neg {\negtrans \sigma(\neg X)})} \seqar \N} 
    {
    \vlid{=}{}{\negtrans \Gamma,  {\unnegtrans \sigma(\neg \mu X. \neg {\negtrans \sigma(\neg X)})} \seqar \N} 
      {
      \vltr{\negtrans{\der_1}}{\negtrans \Gamma, \unnegtrans{(\sigma(\nu X. \sigma))}\seqar \N}{\vlhy{\ \ }}{\vlhy{\ \ }}{\vlhy{\ \ }}
      }
    }
    }
    }
    }
    }
    % }
     % {
     \]
     \[
     \vlderivation{
     \small
   \vlin{\lr {\neg \neg }}{}{\negtrans \Delta , \negtrans{(\nu X. \sigma )}, \unnegtrans \tau \seqar \N} 
   {
   \vlin{\lr {\neg}}{}{\negtrans \Delta,  \neg \mu X. \neg {\negtrans \sigma(\neg X)}, \unnegtrans \tau \seqar \N} 
   {
   \vlin{\lr {\mu}}{}{\negtrans \Delta, \unnegtrans \tau \seqar  \mu X. \neg {\negtrans \sigma(\neg X)}} 
       {
       \vlin{\rr {\neg } }{}{\negtrans \Delta, \unnegtrans \tau \seqar  \neg {\negtrans \sigma(\neg \mu X. \neg {\negtrans \sigma(\neg X)})}}
         {
         \vlid{=}{}{\negtrans \Delta,  \negtrans \sigma(\neg \mu X. \neg {\negtrans \sigma(\neg X)}) ,\unnegtrans \tau \seqar \N}{
         \vltr{\negtrans{\der_2}}{\negtrans \Delta,  \negtrans{(\sigma (\nu X. \sigma ))} ,\unnegtrans \tau\seqar  \N}{\vlhy{\ \ }}{\vlhy{\ \ }}{\vlhy{\ \ }}
         }
         }
       }
       }
       }
   % }
}
\]
  by applying a few cut-reduction  steps we obtain the following:
  \[
\vlderivation{
\vliin{\cut}{}{\negtrans \Gamma, \negtrans \Delta, \unnegtrans \tau \seqar \N}{
\vlin{\rr \neg}{}{\negtrans\Gamma\seqar  \negtrans{(\nu X. \sigma)} }{
\vltr{\negtrans{\der_1}}{\negtrans \Gamma, \unnegtrans{(\sigma (\nu X. \sigma))}\seqar \N}{\vlhy{\ \ }}{\vlhy{\ \ }}{\vlhy{\ \ }}
}
}
{
\vltr{\negtrans{\der_2}}{\negtrans \Delta , \negtrans{(\sigma (\nu X. \sigma))}, \unnegtrans \tau \seqar \N }{\vlhy{\ \ }}{\vlhy{\ \ }}{\vlhy{\ \ }}}
}
\]
Which is the translation of the following:
\[
\vlderivation{
\vliin{\cut}{}{\Gamma, \Delta \seqar \tau}
{\vltr{\der_1}{\Gamma \seqar \sigma (\nu X. \sigma)}{\vlhy{\ \ }}{\vlhy{\ \ }}{\vlhy{\ \ }}}
{\vltr{\der_2}{\Delta, \sigma (\nu X. \sigma)\seqar \tau}{\vlhy{\ \ }}{\vlhy{\ \ }}{\vlhy{\ \ }}}
} \qedhere
\]
\end{proof}

\begin{proposition}[(De)coding for $\N$]\label{prop:coding-and-decoding}
  There are  coderivations $\dercod: \N \seqar \negtrans \N$ and $\derdec: \negtrans \N \seqar \N$  over $\{\N,\to, \times, \mu\}$ in  $\cmuLJ$  such that, for any $n \in \Nat$:
  \[\small
\vlderivation{
\vliin{\cut}{}{\seqar \negtrans \N}{\vldr{\cod{n}}{\seqar \N}}{\vldr{\dercod}{\N \seqar \negtrans \N} }
}
 \cutrednorec^*
\vlderivation{\vldr{\negtrans{\cod n}}{\seqar \negtrans \N}}
\]
\[
\small
 \vlderivation{
\vliin{\cut}{}{\seqar \N}{\vldr{\negtrans{\cod n}}{\seqar \negtrans \N}}{\vldr{\derdec}{\negtrans \N \seqar  \N} }
}
 \cutrednorec^* 
\vlderivation{\vldr{\cod n}{\seqar \N}}
\]
\end{proposition}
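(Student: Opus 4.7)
The plan is to construct the two coderivations explicitly by exploiting the fact that, after unfolding the translation, $\negtrans{\N}$ has the shape $((U \to \N) \to \N)$ where $U = \mu X ((\neg\neg\N \times \neg\neg\neg X) \to \N)$. In particular, $\negtrans{\N}$ is essentially a CPS-style encoding of $\N$, while $\cod n$ is the usual Scott-style encoding via $\mu X (1+X)$; the bidirectional translation between these two encodings can be built by (circular) recursion.

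First I would construct $\dercod: \N \seqar \negtrans{\N}$ as a regular coderivation performing primitive recursion on $\N$, using the native circular recursor schema for $\N$ from the example following \Cref{fig:native-n-rules+cutreds-muljnorec}. Concretely, $\dercod$ has the form
\[
\dercod \ = \ \lr\N \, (\negtrans{\cod 0}) \, (\cut \, \dercod\text{-back-ref} \, \dersucc)
\]
where $\dersucc: \negtrans{\N} \seqar \negtrans{\N}$ is the (finite) derivation obtained by composing $\negtrans{\rr\mu}$ with $\negtrans{\rr{+^1}}$, extracted directly from \Cref{fig:translation-inference-rules}; by construction $\cut \, \negtrans{\cod n} \, \dersucc$ cut-reduces to $\negtrans{\cod{n+1}}$. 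Progressiveness holds via the usual $\N$-thread along the unique infinite branch.

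Next I would construct $\derdec: \negtrans{\N} \seqar \N$ by applying the input (which has type $(U \to \N) \to \N$) to a continuation $U \to \N$ that decodes the internal CPS encoding. Formally, $\derdec = \lr\arrow \, (\rr\arrow \, \dercontn) \, \id$, where $\dercontn: U \seqar \N$ is itself built by circular recursion on $U = \mu X((\neg\neg\N \times \neg\neg\neg X) \to \N)$: unfolding $U$ via $\mulunf$ gives a function that one feeds with the pair $\pair{z}{s}$ where $z: \neg\neg\N$ returns its $\N \to \N$ argument applied to $\cod 0$, and $s: \neg\neg\neg X$ applies its $\neg\neg X$ argument to the ``successor-followed-by-recursive-decoding'' continuation (i.e.\ to the back-reference to $\dercontn$ postcomposed with $\Nsucc$). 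Again a standard $\mu$-thread along the unique infinite branch ensures progressiveness.

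Finally, the two reduction identities are verified by (meta-level) induction on $n$. For $\dercod$: the base case uses the $\lr\N$--$\Nzero$ cut-reduction to output $\negtrans{\cod 0}$ directly, and the step case uses the $\lr\N$--$\Nsucc$ reduction to expose the cut $\cut \, (\cut \, \cod n \, \dercod) \, \dersucc$, then applies the IH and the defining property of $\dersucc$. For $\derdec$: one shows by induction that $\cut \, \negtrans{\cod n} \, \derdec$ first fires the $\rr\mu$--$\mulunf$ key case within $\negtrans{\cod n}$ against $\dercontn$, and then depending on whether we're in the ``zero'' or ``successor'' branch of $\negtrans{\cod n}$ (i.e.\ whether $n=0$ or $n>0$), the cut-reduction selects the appropriate projection from the $\pair{z}{s}$ pair supplied by $\dercontn$, yielding either $\cod 0$ outright or $\Nsucc$ applied to the IH on $n-1$. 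The main obstacle is the bookkeeping in this last step: tracing the cut-reduction through the $\neg\neg$- and $\neg\neg\neg$-wrappers inside $\negtrans{\cod n}$ while avoiding any stray cuts that would prevent the result from syntactically matching $\cod n$. This can be handled by carefully choosing $\dercontn$ so that every intermediate cut fires a key case, rather than a commutative one, so that the reduction reaches a literal numeral.
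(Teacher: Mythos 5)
Your proposal is correct in substance and, for $\derdec$, essentially coincides with the paper's construction: the paper's $\derdec$ opens with the derived rule $\lr{\neg\neg}$ (which is exactly your ``apply the CPS-encoded input to a continuation''), then unfolds the inner $\mu$-type, flips with $\lr\neg$, and builds the pair via $\rr\times$ whose first component discards its argument and emits $\cod 0$ and whose second component loops back to the decoder postcomposed with $\Nsucc$; your version only differs in where $\Nsucc$ is threaded through the continuation, which is immaterial. Where you genuinely diverge is $\dercod$: the paper does \emph{not} iterate a translated-successor map from $\negtrans{\cod 0}$, but instead transcribes the input numeral directly, unfolding the left-hand $\N=\mu X(1+X)$ with $\mulunf$ and $\lr+$ and emitting the corresponding layer of $\negtrans\N$ ($\rr{\neg\neg}$, $\rr\mu$, $\rr\neg$, $\lr\times$) before looping back, so that the reduct of $\cut\,\cod n\,\dercod$ is \emph{syntactically} $\negtrans{\cod n}$ with no residual cuts. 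Your route instead needs the auxiliary claim $\cut\,\negtrans{\cod n}\,\dersucc\cutrednorec^*\negtrans{\cod{n+1}}$, which you assert ``by construction''; this is plausible (cutting against the gadget with an axiom at its premiss and firing the cut-against-axiom reduction does substitute $\negtrans{\cod n}$ into the hole), but it is precisely the kind of syntactic-exactness bookkeeping that can leave stray cuts, and it is the one step you should actually check. A second small point: your $\dercod$ case-splits with the native $\Ncnd$ recursor, whereas the numerals $\cod n$ of Definition~\ref{defn:numerals-representability} are built from $\rr\mu$ and $\rr{+}$, so the native $\Ncnd$-reductions do not fire against them directly — you either need to justify that those reductions are derivable for the $\rr\mu/\rr+$-numerals (as the paper claims elsewhere) or, as the paper does here, case-split with $\mulunf$ and $\lr+$ directly. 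Neither issue is fatal, but both deserve explicit treatment.
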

\begin{proof}
$\dercod$ and $\derdec$ are defined, respectively, as follows:
    \[
% \vlderivation{ \vldr{\dercod}{\N \seqar \negtrans \N} } \qquad  \dfn \qquad 
\small
\vlderivation{
\vlin{\rr{\neg \neg}}{\bullet}{\N \seqar \negtrans{\N}}
  {
  \vlin{\mu r}{}{\N \seqar \mu X.(\neg  (\neg  \negtrans{\unit} \times  \neg \negtrans{ X}) )}
    {
    \vlin{\rr \neg}{}{\N \seqar \neg  (\neg  \negtrans{\unit} \times  \neg \negtrans{ \N}) ) }
      {
      \vlin{\lr \times}{}{\N ,  \neg  \negtrans{\unit} \times  \neg \negtrans{ \N} \seqar \N}{
      \vlin{\lr \mu}{}{\N ,  \neg  \negtrans{\unit} ,  \neg \negtrans{ \N} \seqar \N}
        {
        \vliin{\lr +}{}{ 1+     \N ,  \neg  \negtrans{\unit} ,  \neg \negtrans{ \N} \seqar \N}
          {
          \vliq{\wk}{}{ 1,  \neg  \negtrans{\unit} ,  \neg \negtrans{ \N} \seqar \N}
            {
            \vlin{\lr{\neg }}{}{   \neg \negtrans{1} \seqar \N}{\vlin{\rr \neg}{}{\seqar \negtrans 1}{\axiom{\N}{\N}}}
            }
          }
          {
          \vlin{\wk}{}{ \N ,  \neg  \negtrans{\unit} ,  \neg \negtrans{ \N} \seqar \N }
            {
            \vlin{\lr \neg}{}{\N, \neg \negtrans{\N}\seqar \N }{
            \vlin{\rr{\neg \neg}}{\bullet}{\N \seqar \negtrans \N}{\vlhy{\vdots}}
            }
            }
          }
        }
        }
      }
    }
  }
}
\]
\[
\small
% \vlderivation{\vldr{\derdec}{\negtrans \N \seqar \N}} \qquad  \dfn \qquad
\vlderivation{
\vlin{\lr {\neg \neg}}{\bullet}{\negtrans{\N} \seqar \N}
  {
  \vlin{\lr \mu}{}{ \mu X.(\neg  (\neg  \negtrans{\unit} \times  \neg \negtrans{ X}) )\seqar \N}
   {
   \vlin{\lr{\neg }}{}{\neg  (\neg  \negtrans{\unit} \times  \neg \neg \negtrans{\N}) \seqar \N}
      {
      \vliin{\rr \times}{}{\seqar \neg  \negtrans{\unit} \times  \neg \neg \negtrans{\N}}
        {
        \vlin{\rr {\neg}}{}{\seqar \neg \negtrans{1}}
         {
         \vlin{\wk}{}{\negtrans{1} \seqar \N}{\vlin{\rr N}{}{\seqar \N}{\vlhy{}}}
         }
        }{
        \vlin{\rr {\neg}}{}{\seqar \neg \negtrans{\N}}
         {
         \vliin{\cut}{}{ \negtrans{\N}\seqar \N}
         {
         \vlin{\lr{\neg \neg }}{\bullet}{\negtrans \N \seqar \N}{\vlhy{\vdots}}
         }
         {
         \vlin{\Nsucc}{}{\N \seqar \N}{\vlin{\id}{}{\N \seqar \N}{\vlhy{}}}
         }
         }
        }
      }
   }
  }
}
\]
where $0$ and $\succ$ are derivations as in~\Cref{fig:native-n-rules-mulj}.
\end{proof}

\begin{theorem}\label{thm:cmullminus-simulates-cmulj}
    If $\der: \vec \N \seqar \N$ in $\cmuLJ$ then there is  coderivation  $\der':\vec \N \seqar \N$ over $\{\N,\to, \times, \mu\}$ in $\cmuLJ$ such that $\der$ and $\der'$ represent the same functions on natural numbers.
\end{theorem}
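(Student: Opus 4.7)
The plan is to leverage the translation $\negtrans{\cdot}$ from \Cref{fig:translation-inference-rules} together with the coding and decoding coderivations $\dercod, \derdec$ of \Cref{prop:coding-and-decoding}. Given $\der: \vec \N \seqar \N$ in $\cmuLJ$, first apply the translation to obtain the coderivation $\negtrans \der: \negtrans{\vec \N}, \unnegtrans \N \seqar \N$, which lives entirely in the fragment over $\{\N, \to, \times, \mu\}$. Since $\negtrans \N = \unnegtrans \N \to \N$, a single $\rr{\to}$ step yields $Q: \negtrans{\vec \N} \seqar \negtrans \N$. Cutting $Q$ on the right with $\derdec: \negtrans \N \seqar \N$ produces $Q': \negtrans{\vec \N} \seqar \N$, and cutting each input of $Q'$ on the left with $\dercod: \N \seqar \negtrans \N$ gives the desired $\der': \vec \N \seqar \N$, by construction in the right fragment.

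Two things then need to be checked. First, that $\der'$ is actually a $\cmuLJ$-coderivation: all ingredients are regular and progressing---$\negtrans \der$ by \Cref{prop:translation-preserves-progressiveness}, and $\dercod, \derdec$ by inspection (each has a unique infinite branch cycling through a $\lr{\mu}$ step on $\N$, respectively on $\mu X(\neg (\neg \negtrans 1 \times \neg \negtrans X))$, which supplies a progressing thread)---and both properties propagate through the finite number of $\cut$ and $\rr{\to}$ steps used in the construction.

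Second, that $\der'$ represents the same function $f:\Nat^k \to \Nat$ as $\der$. Fix $\vec n \in \Nat^k$ and assume $\der \cdot \numeral{\vec n} \cutrednorec^* \numeral{f(\vec n)}$. The computation of $\der' \cdot \numeral{\vec n}$ proceeds in three phases. (i)~The $k$ outer cuts reduce, via the first clause of \Cref{prop:coding-and-decoding}, turning each $\dercod \cdot \numeral{n_i}$ into $\negtrans{\numeral{n_i}}$; together with a small cut-permutation absorbing the $\rr{\to}$ step of $Q$, this leaves a coderivation equivalent to $\derdec$ cut with $\negtrans{(\der \cdot \numeral{\vec n})}$. (ii)~By \Cref{lem:computational-soundness} applied to $\der \cdot \numeral{\vec n} \cutrednorec^* \numeral{f(\vec n)}$, this reduces to $\derdec$ cut with $\negtrans{\numeral{f(\vec n)}}$. (iii)~The second clause of \Cref{prop:coding-and-decoding} then reduces this last cut to $\numeral{f(\vec n)}$.

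The main obstacle is the bookkeeping in phase (i): the $\rr{\to}$ step introduced to form $Q$ must be absorbed so that the composition really matches $\negtrans{(\der \cdot \numeral{\vec n})}$, allowing the invocation of \Cref{lem:computational-soundness}. This is handled by a short preliminary observation that a $\cut$ with an $\rr{\to}$-successor step is $\cutrednorec$-equivalent to a $\cut$ on the premiss followed by $\rr{\to}$, i.e.\ the standard commuting reduction for $\rr{\to}$ against $\cut$; once this is in place the rest of the argument is routine.
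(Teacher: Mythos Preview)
Your proposal is correct and follows essentially the same approach as the paper: build $\der'$ by sandwiching $\rr\to$-wrapped $\negtrans\der$ between $\dercod$ and $\derdec$ via cuts, then verify computational equivalence using \Cref{prop:coding-and-decoding} at the endpoints and \Cref{lem:computational-soundness} in the middle. Your treatment is in fact slightly more careful than the paper's: you explicitly address the commuting conversion needed to absorb the $\rr\to$ step so that the intermediate coderivation literally matches $\negtrans{(\der\cdot\numeral{\vec n})}$, and you explicitly argue regularity and progressiveness of $\der'$, both of which the paper leaves implicit (the paper also restricts to the unary case w.l.o.g., whereas you handle arbitrary arity directly).
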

\begin{proof}
    % \red{TODO: idea is to apply coding and decoding before and after $N$-simulation.}
Let $f$ be a numerical function of $\cmuLJ$, w.l.o.g.~we can assume $f$ is unary. This means that there is a derivation $\der$ of  $\N \seqar  \N$ such that 
\[\vlderivation{
\vliin{\cut}{}{\seqar \N}{\vldr{\cod{n}}{\seqar \N}}{\vldr{\der}{\N \seqar  \N} }
}
 \qquad  \cutrednorec^* \qquad 
\vlderivation{\vldr{\cod{f(n)}}{\seqar \N}}
\]
By Proposition~\ref{lem:computational-soundness} we have that, for any $n \in \Nat$:
 \[
 \begin{array}{rcl}
 \vlderivation{
\vliin{\cut}{}{\seqar \negtrans \N}{\vldr{\negtrans {\cod{n}}}{\seqar \negtrans \N}}{
\vlin{\rr \neg}{}{\negtrans\N \seqar   \negtrans\N}{
\vldr{ \negtrans \der}{ \negtrans\N, \unnegtrans \N \seqar   \N}
}
}
}
 &  \cutrednorec^* & 
\vlderivation{\vldr{ \negtrans{\cod{f(n)}}}{\seqar  \negtrans\N}}
\end{array}
\]
% \red{
% Since $\sem{\cut \ \dern{n} \ \der }= \sem{\cut} \ \sem{\dern n}\ \sem{\der}  \cutelime \sem{\der} \ \sem{\dern n}  \cutelime \sem{\dern {f(n)}}$, by applying~\Cref{lem:computational-soundness}, for all $\der': \seqar \semi{\N}$:
% \[
% \sem{\kol{(\cut \ \dern{n}\ \der)}}\sem{\der'} \cutelime  \sem{\kol{(\dern{f(n)})}}\sem{\der'}
% \]
% so  $\sem{\Kol{(\cut \ \dern{n}\ \der)}}\sem{\der'} \cutelime  \sem{\Kol{(\dern{f(n)})}}\sem{\der'}$ by~\Cref{eqn:big-k-small-k}. By \red{extensionality} this means that $\sem{\Kol{(\cut \ \dern{n}\ \der)}} \cutelime  \sem{\Kol{(\dern{f(n)})}}$,  but by definition:
% \[
% \sem{\Kol{(\cut \ \dern{n}\ \der)}}\cutelime  \sem{\cut \ \Kol{(\dern{f(n)})} \ \Kol{\der} } \cutelime \sem{\Kol \der} \sem{\Kol{(\dern n)}}
% \]
% }
Then, $f$ can be represented by the following coderivation $\der'$ over $\{\N,\to, \times, \mu\}$:
\[
\begin{array}{rcl}
    \vlderivation{
    \vliin{\cut}{}{\N \seqar \N}
    {
    \vliin{\cut}{}{\N \seqar \negtrans \N}{\vldr{\dercod}{\N \seqar \negtrans\N}}
    {
  \vlin{\rr \neg}{}{\negtrans\N \seqar   \negtrans\N}{
\vldr{ \negtrans \der}{ \negtrans\N, \unnegtrans \N \seqar   \N}
}
    }
    }
    {\vldr{\derdec}{\negtrans \N \seqar \N}}
    } \tag*{\qed}
\end{array}
\]
\renewcommand{\qed}{}
\end{proof}
 \section{Proof of \Cref{thm:muPA-pi02-cons-muHA}}

% \subsection{Conservativity via double negation translation}
\label{sec:conservativity-via-double-negation}

\newcommand{\godelgentz}[1]{{#1}^{g}}
\newcommand{\friedman}[1]{{#1}^\rho}

In what follows we prove~\Cref{thm:muPA-pi02-cons-muHA}.
% \footnote{A different proof can be found in~\cite{Tupailo04doubleneg}.} 
% Henceforth, with little abuse of notation, with $\muPA$ (resp., $\muHA$) we denote the conservative extensions obtained by endowing the language  with new function symbols representing arbitrary primitive recursive functions, and by adding    equations defining the new identities as axioms. The following is a standard property:

% \begin{lemma}\label{lem:delta0-to-prim-rec}
%     Let  $\phi$ be quanfier-free. Then, there is a primitive recursive function symbol such that   $\vdash \phi(x,y) \leftrightarrow f(x,y)=0$, in both $\muPA$ and $\muHA$.
% \end{lemma}
% 
% 
We start with a definition a G\"{o}del-Gentzen-style translation of formulas of $\muPA$ into formulas of $\muHA$:
\[
\def\arraystretch{1.2}
\begin{array}{rcl}
    \godelgentz {(t=u)}  & \dfn & t=u \\
  \godelgentz {(t<u)}  & \dfn & \neg \neg t<u\\
  \godelgentz {(t \in X)} & \dfn & \neg \neg t \in X\\
  \godelgentz {(t \in \mu X\lambda x\phi)} & \dfn & \neg \neg t \in \mu X\lambda x\godelgentz{\phi} \\
\godelgentz {(\neg \phi)} & \dfn & \neg \godelgentz{\phi}\\
 \godelgentz {(\phi \wedge \psi)} & \dfn &  \godelgentz{\phi}\wedge \godelgentz{\psi}\\
  \godelgentz {(\phi \vee \psi)} & \dfn & \neg(\neg \godelgentz{\phi}\wedge \neg \godelgentz{\psi})\\
\godelgentz{(\forall x\phi)}  & \dfn & \forall x\godelgentz{\phi}\\
 \godelgentz{(\exists x\phi)} & \dfn & \neg (\forall x\neg \godelgentz{\phi})
\end{array}
\]

Some straightforward properties of the translation:

\begin{lemma}{\ }\label{lem:godel}
\begin{enumerate}
 \item \label{lem:godel1}$\godelgentz{(\phi(\psi))}= \godelgentz{\phi}(\godelgentz{\psi})$
    \item\label{lem:godel2} $\muHA \vdash \neg \neg \godelgentz {\phi} \to \godelgentz {\phi}$
\end{enumerate}
\end{lemma}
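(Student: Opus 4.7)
My plan is to prove both parts by structural induction on $\phi$, handling them in a coordinated fashion since the interesting atomic case of (1) interacts with (2).

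For (1), the translation $\godelgentz{(\cdot)}$ is essentially compositional, so substitution of $\psi$ for the predicate variable $X$ commutes with $\godelgentz{(\cdot)}$ in every clause except for atomic $Xt$. In that case, unfolding gives $\godelgentz{(Xt)[\psi/X]} = \godelgentz{\psi}[t/x]$ on the left and $\neg\neg\godelgentz{\psi}[t/x]$ on the right, so strict syntactic equality is recovered only modulo $\muHA$-provable equivalence via the stability supplied by (2); the other cases (the remaining atomic formulas, the propositional connectives, the quantifiers, and $\mu X\lambda x\chi$) follow by a mechanical unfolding of both the translation and the substitution combined with the IH. Because the clauses for $\mu$ and for the $\in X$-atoms explicitly put the substitution inside the new $\mu$-operator and under the $\neg\neg$-wrapper, the bookkeeping here is routine.

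For (2), the key structural observation is that every formula in the image of $\godelgentz{(\cdot)}$ is either an equality (stable in $\muHA$ by $\Delta_0$-decidability of equality on $\Nat$), or of the form $\neg\chi$ or $\neg\neg\alpha$ (trivially stable in intuitionistic logic via $\neg\neg\neg\chi \to \neg\chi$ and $\neg\neg\neg\neg\alpha \to \neg\neg\alpha$). This already discharges the atomic cases, the disjunction case (since $\godelgentz{(\phi\lor\psi)} = \neg(\neg\godelgentz{\phi}\land\neg\godelgentz{\psi})$), the existential case (since $\godelgentz{(\exists x\phi)} = \neg\forall x\neg\godelgentz{\phi}$), and the $\mu$-case (whose translation starts with $\neg\neg$). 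The genuinely inductive cases are conjunction, negation, implication, and universal quantification: for conjunction use $\neg\neg(A\land B) \to \neg\neg A \land \neg\neg B$ and invoke IH on both conjuncts; for implication $\godelgentz{(\phi\to\psi)} = \godelgentz{\phi}\to\godelgentz{\psi}$ assume $\neg\neg(\godelgentz{\phi}\to\godelgentz{\psi})$ and $\godelgentz{\phi}$, derive $\neg\neg\godelgentz{\psi}$ by standard intuitionistic reasoning, and close via the IH for $\psi$; for universal quantification combine the intuitionistically valid $\neg\neg\forall x\alpha \to \forall x\neg\neg\alpha$ with the IH; negation is immediate from $\neg\neg\neg\alpha \to \neg\alpha$.

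The main obstacle I anticipate is the interdependence of (1) and (2) at the atomic $Xt$ clause: literal syntactic equality in (1) fails, while intuitionistic equivalence in $\muHA$ holds precisely because $\godelgentz{\psi}$ is $\neg\neg$-stable by (2). The cleanest way to manage this is to read the equality in (1) as $\muHA$-provable equivalence and prove (1) and (2) simultaneously, so that the IH for (2) is available at the $Xt$ step of (1); alternatively, one can fix a convention making $\godelgentz{(Xt)}$ coincide with the form taken by $\godelgentz{\psi}$ after substitution. Once this coherence point is settled, the remainder of the development is routine intuitionistic bookkeeping.
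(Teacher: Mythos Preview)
The paper does not actually prove this lemma; it merely introduces it with the phrase ``Some straightforward properties of the translation'' and states the two items without argument. Your proposal supplies exactly the kind of routine structural induction the paper is gesturing at, and your treatment of part~(2) --- observing that every formula in the image of $\godelgentz{(\cdot)}$ is an equality (stable by decidability of $=$ in $\muHA$), a negation, a double negation, or built from stable formulas via the stability-preserving connectives $\land$, $\to$, $\forall$ --- is the standard and correct approach.

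You are also right to flag the issue with part~(1) at the atomic case $t\in X$: literally, $\godelgentz{(\phi(\psi))}$ places $\godelgentz{\psi}(t)$ where $\godelgentz{\phi}(\godelgentz{\psi})$ places $\neg\neg\,\godelgentz{\psi}(t)$, so syntactic equality fails (take $\phi = (x\in X)$ and $\psi = (x=0)$). The paper glosses over this; your proposal to read~(1) as $\muHA$-provable equivalence and to prove it simultaneously with~(2), so that stability of $\godelgentz{\psi}$ is available at the atomic step, is the clean fix. This is entirely adequate for the only use the paper makes of~(1), namely in the proof of the next lemma, where provable equivalence is all that is needed.
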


\begin{lemma}\label{lem:muPA-in-muHA}
    If $\muPA \vdash \phi$ then $\muHA \vdash \godelgentz{\phi}$.
\end{lemma}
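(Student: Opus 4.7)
The plan is to proceed by induction on the length of the $\muPA$-derivation of $\phi$, checking that each axiom and inference rule of $\muPA$ translates into something derivable in $\muHA$. I split the cases into (a) pure logic, (b) arithmetic and induction, and (c) the fixed point axioms $\preaxiom$ and $\indaxiom$. The translation is essentially the usual G\"odel--Gentzen one with two extra clauses to handle set-variables and $\mu$-formulas, and the overall strategy mirrors the standard negative translation proof, with Lemma~\ref{lem:godel}(\ref{lem:godel2}) (stability under $\neg\neg$) doing most of the work for classical reasoning and Lemma~\ref{lem:godel}(\ref{lem:godel1}) (substitution) governing how the translation interacts with the fixed point operator.

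For pure logic, the intuitionistic rules translate directly; the only classical principle needed over $\muHA$ is excluded middle, which is handled by verifying $\muHA \vdash \godelgentz{(\phi \lor \neg\phi)}$, an instance of $\neg\neg$-stability. Implications and conjunctions translate compositionally, while existentials and disjunctions translate via their classical De~Morgan definitions, whose provability is again standard. For the arithmetic axioms of $\muPA$, being (universal closures of) equational or $\Delta_0$ assertions of $\PA$, their $\godelgentz{(\cdot)}$-translations are either literally themselves or are $\muHA$-provably equivalent to themselves; in particular, defining equations for primitive recursive symbols and the Robinson axioms translate without friction. For each instance of induction in $\muPA$ on invariant $\psi$, note that $\godelgentz{\psi}$ is a legitimate $\langmuarith$-formula (of $\muHA$), so the translated induction instance is itself an induction instance for $\godelgentz{\psi}$ in $\muHA$, which suffices.

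The critical cases are the two fixed point axiom schemes. Write $\mu^* \dfn \mu X\lambda x\,\godelgentz{\phi}$. For $\preaxiom_\phi$, using Lemma~\ref{lem:godel}(\ref{lem:godel1}) twice, the translated antecedent $\godelgentz{(\phi(\mu\phi,y))}$ is provably equivalent to $\godelgentz{\phi}(\mu^*,y)$, since $\godelgentz{(\mu\phi)} = \lambda y.\,\neg\neg\mu^* y$ and set-variable occurrences in $\godelgentz{\phi}$ lie under $\neg\neg$, collapsing $\neg\neg\neg\neg$ to $\neg\neg$ at each atomic position. Then the $\preaxiom$ axiom of $\muHA$ applied to $\mu^*$ gives $\godelgentz{\phi}(\mu^*, y) \to \mu^*(y)$, and we conclude by $\mu^*(y) \to \neg\neg\mu^*(y)$, matching the translated succedent. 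For $\indaxiom_{\phi,\psi}$, the translated hypothesis $\forall x(\godelgentz{\phi}(\godelgentz{\psi},x) \to \godelgentz{\psi}(x))$ is directly an invariance condition for $\godelgentz{\psi}$, so the $\indaxiom$ axiom of $\muHA$ for $\mu^*$ with invariant $\godelgentz{\psi}$ yields $\forall x(\mu^*(x) \to \godelgentz{\psi}(x))$; combining with Lemma~\ref{lem:godel}(\ref{lem:godel2}) (stability of $\godelgentz{\psi}$) gives $\forall x(\neg\neg\mu^*(x) \to \godelgentz{\psi}(x))$, which is the translated conclusion.

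The main obstacle is the bookkeeping in the fixed point cases, in particular verifying that the syntactic shape of $\godelgentz{\phi}$ (with set variables always occurring under $\neg\neg$) interacts correctly with substitution so that Lemma~\ref{lem:godel}(\ref{lem:godel1}) really does identify the translated antecedent of $\preaxiom_\phi$ with a pre-fixed-point assertion for $\mu^*$. Once this is confirmed, the rest of the argument is routine. Note also that positivity of $X$ in $\phi$ is preserved by the translation (inserting $\neg\neg$ wraps atoms in an even number of negations), so $\mu^*$ is a well-formed $\langmuarith$-term of $\muHA$ and the axioms $\preaxiom$, $\indaxiom$ of $\muHA$ are indeed applicable to it.
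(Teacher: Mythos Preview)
Your proposal is correct and follows essentially the same route as the paper's proof: induction on the $\muPA$ derivation, with the fixed point cases handled by instantiating $\preaxiom$ and $\indaxiom$ in $\muHA$ for $\mu^* = \mu X\lambda x\,\godelgentz{\phi}$ and then bridging the double-negation mismatch via Lemma~\ref{lem:godel}(\ref{lem:godel2}). If anything you are more explicit than the paper about the $\neg\neg\neg\neg \to \neg\neg$ collapse needed to identify $\godelgentz{(\phi(\mu\phi,y))}$ with $\godelgentz{\phi}(\mu^*,y)$, and about positivity being preserved so that $\mu^*$ is well-formed; the paper leaves both points implicit.
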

\begin{proof}
We show that any axiom of $\muPA$ is derivable in $\muHA$, where the translation on equations is justified by the fact that $\muHA \vdash \neg \neg t=s \to t=s$. We just check the axioms for  the least fixed point operator $\mu$.  On the one hand, we have $ \godelgentz{\phi}(\mufo X x \godelgentz{\phi}, y) \limp y \in \mufo X x \godelgentz{\phi}$ as an instance of (pre),  and $y \in \mufo X x \godelgentz{\phi} \to \neg \neg y \in \mufo X x \godelgentz{\phi}$ by logic. We conclude by applying transitivity of implication and the generalisation rule introducing $\forall$. On the other hand, let us assume $\forall x(\godelgentz{\phi}(\godelgentz{\psi}, x) \limp \godelgentz{\psi}(x))$. By (ind) we obtain $y \in \mufo X x \godelgentz{\phi}  \limp \godelgentz{\psi}(y)$. By logic we have 
 $\neg \neg y \in \mufo X x \godelgentz{\phi}  \limp \neg \neg \godelgentz{\psi}(y)$ and $ \neg \neg \godelgentz{\psi}(y)\to \godelgentz{\psi}(y) $ (\Cref{lem:godel}.\ref{lem:godel2}). We conclude by applying transitivity of implication and deduction theorem, using~\Cref{lem:godel}.\ref{lem:godel1}. 
\end{proof}

We now use a standard trick called \emph{Friedman's translation} to infer our conservativity result from the above lemma. Let $\rho$ be a fixed formula of $\muHA$. For any formula $\phi$ of $\muHA$ we define $\friedman{\phi}$ as follows:
\[
\begin{array}{rcll}
    \friedman{(t=u)}  &  \dfn  & (t=u) \vee \rho\\
    \friedman{(t<u)}  &  \dfn  & (t<u) \vee \rho\\
    \friedman{(t \in X)}  &  \dfn  & (t \in X) \vee \rho\\
    \friedman{(t \in \mu X \lambda x\phi)}  &  \dfn  & (t \in \mu X\lambda x\friedman{\phi}) \vee \rho\\
   \friedman{(\neg \phi)}  & \dfn & \neg \friedman{\phi}\\
    \friedman{( \phi \circ \psi)}  & \dfn & ( \friedman{\phi}\circ \friedman{\psi})& \qquad \circ \in \{\wedge, \vee\}\\
    \friedman{\sigma x\phi} &\dfn & (\sigma x. \friedman{\phi}) &\qquad \sigma \in \{\forall, \exists\}
 \end{array}
\]
 The replacement must be done without variable capture, so whenever we consider, e.g., $\friedman{(\forall x \phi)}$, we must assume that $x$ is not free in $\rho$. 

Some straightforward properties of $\friedman{(\_)}$:
 \begin{lemma}{\ }\label{lem:friedman}
 \begin{enumerate}
     \item \label{lem:friedman1} $ \friedman{(\phi(\psi))}=\friedman{\phi}(\friedman{\psi})$
     \item \label{lem:friedman2}$\rho \vdash \friedman{\phi}$
 \end{enumerate}
 \end{lemma}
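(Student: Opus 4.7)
The plan is to prove parts (1) and (2) by a simultaneous induction on the structure of $\phi$. Part (2) is the more fundamental statement: each connective in the translation either introduces a disjunct $\vee\,\rho$ at atomic positions (which trivially absorbs $\rho$) or commutes with the logical connectives. Part (1) is a substitution lemma, which at the atomic base case picks up an extra $\vee\,\rho$ when a predicate variable is substituted, and this discrepancy is absorbed using (2). Thus the equality in (1) should be read as $\muHA$-provable equivalence, just as the analogous Lemma~\ref{lem:godel}(1) for the G\"odel--Gentzen translation is read up to $\muHA$-provable equivalence (cf.\ the use of $\neg\neg\godelgentz\psi(y) \to \godelgentz\psi(y)$ in the proof of Lemma~\ref{lem:muPA-in-muHA}).

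For (2), the base cases are the atomic formulas $t = u$, $t < u$, $t \in X$ and $t \in \mufo X x \phi$, each translated to a disjunction with $\rho$; so $\rho \vdash \friedman\phi$ follows by right-injection. The inductive cases for $\wedge, \vee, \forall, \exists$ follow directly from the IH and intuitionistic logic (assuming the usual freshness of $x$ in $\rho$ for quantifier cases). For implication (equivalently, the $\neg$ case construed as $\to \bot$ with $\friedman \bot = \rho$), we have $\friedman{(\phi \limp \psi)} = \friedman\phi \limp \friedman\psi$, and from the IH $\rho \vdash \friedman\psi$ we obtain $\rho \vdash \friedman\phi \limp \friedman\psi$ by weakening.

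For (1), the non-atomic inductive cases are routine: both substitution and translation commute past each connective, so, e.g., $\friedman{((\phi_1 \wedge \phi_2)[\psi/X])} = \friedman{\phi_1[\psi/X]} \wedge \friedman{\phi_2[\psi/X]}$, which by IH equals $\friedman{\phi_1}[\friedman\psi/X] \wedge \friedman{\phi_2}[\friedman\psi/X] = \friedman{(\phi_1 \wedge \phi_2)}[\friedman\psi/X]$. The critical case is the atomic $\phi = Xt$, where $\friedman{((Xt)[\psi/X])} = \friedman{\psi[t/x]} = \friedman\psi[t/x]$, whereas $\friedman{Xt}[\friedman\psi/X] = \friedman\psi[t/x] \vee \rho$. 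The spurious disjunct is absorbed by invoking part (2) on $\psi$: since $\rho \vdash \friedman\psi[t/x]$, we have $\friedman\psi[t/x] \leftrightarrow \friedman\psi[t/x] \vee \rho$ in $\muHA$.

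The main obstacle is twofold: first, correctly formulating (1) up to $\muHA$-provable equivalence rather than syntactic equality, since strict equality fails already at the atomic case; second, handling the $\mu$-binder case of (1) with proper variable-capture discipline, so that the substitution $\friedman\psi/X$ passes correctly under the binders $\mu Y \lambda y$ to match the recursion on the translated subformula. The interdependence of (1) and (2) is strictly directed along subformulas (part (2) on $\psi$ is used to justify part (1) on $\phi$ containing $X$), so the simultaneous induction goes through without circularity.
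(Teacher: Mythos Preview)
The paper omits the proof entirely, labelling these as ``straightforward properties'', so your detailed simultaneous structural induction is the natural fleshing-out and is correct. You are right on both subtleties you flag: part~(1) must be read as $\muHA$-provable equivalence rather than literal syntactic equality (the base case $Xt$ picks up a spurious $\vee\,\rho$, absorbed via part~(2)), and the $\neg$-clause only yields part~(2) if $\friedman{(\neg\phi)}$ is understood as $\friedman\phi \to \rho$ rather than the literal $\neg\friedman\phi$ --- the paper's own subsequent computation of $\friedman{(\neg\neg\rho)}$ as $(\friedman\rho\to\rho)\to\rho$ confirms this is the intended reading.
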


\begin{lemma}[Friedman's translation, adapted]
If $\muHA \vdash \phi$ then $\muHA \vdash \friedman{\phi}$.
\end{lemma}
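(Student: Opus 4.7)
The plan is to proceed by induction on the derivation of $\phi$ in $\muHA$. The intuitionistic inference rules all preserve the Friedman translation since $\friedman{(\_)}$ commutes with $\wedge$, $\limp$, $\forall$, $\exists$ (and, where they appear, $\vee$ and $\neg$); the translation of any instance of modus ponens, generalisation, etc.~is itself an instance of the same rule. For the purely arithmetical axioms: the Robinson axioms and defining equations for primitive recursive function symbols (which are universal closures of equations $t=u$) translate to universal closures of $t=u \vee \rho$, immediately derivable by $\vee$-introduction from the originals. The induction schema on $\nat$, namely $\phi(0) \wedge \forall x(\phi(x) \to \phi(\succ x)) \to \forall x \phi(x)$, becomes the very same schema applied to $\friedman{\phi}$, again an axiom of $\muHA$.

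The substantive content is in the fixed point axioms. The crucial syntactic observation is that every atom $Xt$ in $\friedman{\phi}$ is already flanked by a $\vee \rho$, so substituting either $\mu X\lambda x \friedman{\phi}$ or $\friedman{(\mu\phi)}=\lambda y.((\mu X\lambda x\friedman{\phi})(y) \vee \rho)$ for $X$ yields logically equivalent formulas, since the additional disjunct is absorbed into the one already present. Writing $M := \mu X\lambda x\friedman{\phi}$, the translated pre-fixed axiom $\forall x (\friedman{\phi}(\friedman{(\mu\phi)},x) \to \friedman{(\mu\phi)}\,x)$, via Lemma~\ref{lem:friedman}.\ref{lem:friedman1}, reduces to $\forall x (\friedman{\phi}(M,x) \to M\,x \vee \rho)$, which is immediate from the $\muHA$ pre-fixed axiom for $M$ composed with $\vee$-introduction.

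The $\mu$-induction axiom is the main obstacle. For $\forall x (\phi(\psi,x) \to \psi(x)) \to \forall x (\mu\phi\,x \to \psi(x))$, I would instantiate the $\muHA$ induction axiom for $M$ with invariant $\friedman{\psi}$, yielding $\forall x (\friedman{\phi}(\friedman{\psi},x) \to \friedman{\psi}(x)) \to \forall x (M\,x \to \friedman{\psi}(x))$. The translated axiom, however, asks for the stronger conclusion with $M\,x \vee \rho$ in place of $M\,x$. The extra disjunct is handled by Lemma~\ref{lem:friedman}.\ref{lem:friedman2}: since $\rho$ alone entails $\friedman{\psi}(x)$, an appeal to $\vee$-elimination yields $\forall x (M\,x \vee \rho \to \friedman{\psi}(x))$, i.e.\ the translated axiom. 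The delicate point is thus the interplay between the commutation of $\friedman{(\_)}$ with predicate substitution (Lemma~\ref{lem:friedman}.\ref{lem:friedman1}) and the absorption of the $\rho$ disjunct at the atomic level — once these are settled, every remaining axiom and rule of $\muHA$ admits a routine translation.
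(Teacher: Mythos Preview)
Your proof is correct and follows essentially the same approach as the paper: induction on the derivation, with the substantive work in the $\preaxiom$ and $\indaxiom$ axioms handled via the $\muHA$ axioms for $M=\mu X\lambda x\,\friedman{\phi}$ together with Lemma~\ref{lem:friedman}.\ref{lem:friedman2} for the extra $\rho$-disjunct. You are in fact more explicit than the paper about the absorption step (that substituting $M$ versus $\lambda y.(M\,y\vee\rho)$ into $\friedman{\phi}$ yields equivalent formulas because every $Xt$ already sits under $\vee\rho$), which the paper folds into its appeal to Lemma~\ref{lem:friedman}.\ref{lem:friedman1}.
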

\begin{proof}
       It is easy to check that $\Gamma \vdash \phi$ implies $\friedman{\Gamma}\vdash \friedman{\phi}$, where $\friedman{\Gamma}\dfn\{\friedman{\gamma} \ \vert \ \gamma \in \Gamma\}$, observing that $\rho \vdash \friedman{\phi}$ by Lemma~\ref{lem:friedman}.\ref{lem:friedman2}. So, Friedman's translation preserves derivability. Therefore, to conclude, it suffices to show that $\muHA$ proves Friedman's translation of $\muHA$'s axioms. We only consider the cases of the axioms for the least fixed point operator $\mu$.  On the one hand, by instantiating (pre), we have $\friedman{\phi}(\mufo X x \friedman{\phi}, y) \limp y \in \mufo X x \friedman{\phi}$, with $\friedman{\phi}(\mufo X x \friedman{\phi}, y)= \friedman{(\phi(\mufo X x\phi, y))}$ (Lemma~\ref{lem:friedman}.\ref{lem:friedman1}), so that we can conclude  $\friedman{\forall y (\phi(\mufo X x \phi, y) \limp y \in \mufo X x \phi)}$. On the other hand, we have to prove that $\forall x (\friedman{\phi}(\friedman{\psi}, x) \limp \friedman{\psi}(x)) $ implies $y \in \mufo X x \friedman{\phi} \vee \rho  \limp \friedman{\psi}(y)$.  By instantiating (ind) we have that the former implies $y \in \mufo X x \friedman{\phi}   \limp \friedman{\psi}(y)$. Also, from Lemma~\ref{lem:friedman}.\ref{lem:friedman2} we infer $\rho \to \friedman{\psi}(y)$, so $\forall x (\friedman{\phi}(\friedman{\psi}, x) \limp \friedman{\psi}(x))$ implies $(y \in \mufo X x \friedman{\phi}) \vee \rho  \limp \friedman{\psi}(y)$. This concludes the proof.
\end{proof}

We can finally prove Proposition~\ref{thm:muPA-pi02-cons-muHA}:

\begin{proof}[Proof of Proposition~\ref{thm:muPA-pi02-cons-muHA}]
Assume $\muPA \vdash \forall x\exists yA(x,y)$ for $A$ quantifier-free.  There is a primitive recursive function symbol $f$ such that $\vdash A(x,y)\leftrightarrow f(x,y)=0$ in both $\muPA$ and $\muHA$, so it suffices to show $\muHA \vdash \forall x\exists yf(x,y)=0$. We set $\rho\dfn \exists y .f(x,y)=0$.  Of course, $\muPA \vdash \rho$, so $\muHA \vdash \neg \neg \rho$ by Lemma~\ref{lem:muPA-in-muHA}. By Lemma~\ref{lem:friedman} we have  $\muHA \vdash (\friedman{\rho}\to \rho) \to \rho$, where $\friedman{\rho}= \exists y(f(x,y)=0 \vee \exists y.(x,y)=0)$, which is clearly equivalent to $\rho$. Thus, we have $\muHA \vdash (\rho\to \rho)\to \rho$, and so $\muHA\vdash \rho$. We conclude by generalising over $x$.
\end{proof}

\end{document}

% where ``key'' is the longest alphanumerical key expected to occur.

% Optionally, appendices can be inserted after the bibliography by
% means of

% \end{thebibliography}

% \appendix
% \section{} % Appendix A
% \section{} % Appendix B
% % etc.
% \end{document}